\documentclass[pdflatex,sn-mathphys-num]{sn-jnl}

\usepackage{graphicx}%
\usepackage{multirow}%
\usepackage{amsmath,amssymb,amsfonts}%
\usepackage{amsthm}%
\usepackage{mathrsfs}%
\usepackage[title]{appendix}%
\usepackage{xcolor}%
\usepackage{textcomp}%
\usepackage{manyfoot}%
\usepackage{booktabs}%
\usepackage{algorithm}%
\usepackage{algorithmicx}%
\usepackage{algpseudocode}%
\usepackage{listings}%
\usepackage{stmaryrd}
\usepackage{mathtools}
\usepackage{bm}
\usepackage{enumitem}
\usepackage{geometry}
\usepackage{braket}
\usepackage{array}
\usepackage{longtable}
\AtBeginDocument{\fontsize{9.5}{11}\selectfont}

\theoremstyle{thmstyleone}
\newtheorem{theorem}{Theorem}[section]
\newtheorem{proposition}[theorem]{Proposition}

\theoremstyle{thmstyletwo}
\newtheorem{lemma}[theorem]{Lemma}
\newtheorem{corollary}[theorem]{Corollary}

\newtheorem{remark}[theorem]{Remark}

\theoremstyle{thmstylethree}
\newtheorem{definition}[theorem]{Definition}

\DeclareMathOperator{\Spec}{Spec}

\DeclareMathOperator{\Tr}{Tr}

\DeclareMathOperator{\End}{End}

\DeclareMathOperator{\rank}{rank}

\begin{document}

\title{Protected Reconstruction from Codazzi Defects: Equianharmonic Quantization and a Minimal Standard-Model Carrier}

\author*[1]{\fnm{Piotr} \sur{Ogonowski}}\email{piotrogonowski@kozminski.edu.pl}
\affil*[1]{\orgname{Kozminski University}, \orgaddress{\street{Jagiellonska 57/59}, \city{Warsaw}, \postcode{03-301}, \country{Poland}}}

\abstract{
Protected reconstruction is formulated for a codimension-three timelike Codazzi defect in a four-dimensional Lorentzian branch. The oriented rank-three normal geometry yields a primitive projective link and, after scalar traces are removed, a graded source $V_1[1]\oplus V_2[2]$. Its principal $SU(2)$ content identifies compact $A_2$, while Borel-Weil visibility and faithful central marking reconstruct the minimal $3+2$ carrier. Determinant and Clifford closure give $S(U(3)\times U(2))$, its global $\mathbb Z_6$ form, and $k_Y=5/3$. The equianharmonic Coxeter lattice supplies the three-point family torsor and the first primitive norm-seven commensurate class. In an Alena-Codazzi collar, structural reconstruction is separated from realization by projector, Riesz-response, Rees-Picard, Hodge, and neutral-relay residuals. The selected projector has an intrinsic polynomial certificate and reconstructs the packet $(1,4,5;7)$; the finite first two responses determine the primitive Rees geometry, while a global $2{:}3{:}5$ principal lift links family determinant holonomy with the projected weak-leg rate. After the marked operator gates are passed, one universal dimensionless scalar $u$ is fixed once and the remaining CKM, neutral, and PMNS quantities are evaluated without retuning. On the frozen reference branch, the downstream readings include $\delta_{\rm CKM}=65.36^\circ$ and $\Delta m_{21}^2/\Delta m_{31}^2=0.029655$. These values are retained as falsifiers of the completed branch rather than structural inputs. Absolute normalization and the global Riesz-Callias attachment remain separate.
}

\keywords{Codazzi defects; Lorentzian geometry; protected reconstruction; equianharmonic quantization; finite internal geometry; Eisenstein lattice; Schur-Feshbach reduction; flavor mixing; Standard Model; Alena Tensor}

\maketitle
\clearpage

\section{Introduction}
\label{sec:introduction}

The geometrization of interactions is commonly obtained by enlarging the geometric structure. Gauge variables are represented by higher-dimensional or bundle-metric data in Kaluza-Klein mechanisms \cite{Helein2022KaluzaKleinMechanisms}. Forced dynamics may be represented as geodesic motion on an extended space through the Eisenhart-Duval lift \cite{Cariglia2015EisenhartDidactical}. Randers and Finsler descriptions encode charged trajectories by an effective geometry \cite{Silva2021RandersFinslerFieldTheory}, while finite-resolution curvature defects provide a related local comparison class \cite{Czuchry2026}. A narrower inverse problem is considered here: which finite internal data are forced by a local defect when only its resolved boundary observables, their source grading, and their closure relations are retained? In each of the constructions above the connection belongs to the assumed geometric data. Here it does not. The retained datum is a local defect together with the grading of its boundary source; the carrier, its structure algebra, and the transport acting on it are selected by the same minimality, so that the connection appears as the projected transport of the isolated sector rather than as an added bundle. Enlargement of the ambient geometry is thereby replaced by closure of the retained boundary description.\\
~\\
The local datum is a codimension-three timelike defect in a four-dimensional Lorentzian branch. Its oriented rank-three normal bundle has a two-sphere link, read in the optical Codazzi setting as a projective spinor line $\mathbb{CP}^1_\Gamma$. After the scalar trace has been separated, a normal source of transverse order at most two has two principal non-scalar components: an order-one current moment and an order-two trace-free Codazzi moment. The aim is to determine the finite carrier generated by these graded boundary data and to separate this structural result from its geometric realization and subsequent low-energy completion.\\
~\\
The organizing principle is self-reconstruction. A physical configuration determines an observable description, and the retained description is required to reconstruct the same physical configuration:
\begin{equation}
\Phi
\longmapsto
\left(
\mathcal A_\Phi,
\mathfrak S(\mathcal A_\Phi),
\delta_\Phi,
\operatorname{Sec}_{\rm iso}(\mathcal A_\Phi),
\nabla_\Phi
\right)
\longmapsto
\widehat\Phi .
\label{eq:intro-self-reconstruction-loop}
\end{equation}
Here $\mathcal A_\Phi$ is the generated observable algebra, $\mathfrak S(\mathcal A_\Phi)$ is its state space, $\delta_\Phi$ is the retained dynamics, $\operatorname{Sec}_{\rm iso}(\mathcal A_\Phi)$ denotes the isolated stable sectors, and $\nabla_\Phi$ denotes their natural transport data. Equality of $\Phi$ and $\widehat\Phi$ is understood on the physical quotient, after gauge, diffeomorphism, and unitary equivalences have been removed.\\
~\\
Under the intersection-closure hypothesis of Section~\ref{sec:filtered-self-reconstruction}, minimal reconstruction is the least protected Moore closure; structural, realization, and completed depths are separated before low-high elimination. Standard $SU(2)$ representation theory \cite{FultonHarris1991RepresentationTheory} identifies the two non-scalar second-jet types as $V_1\oplus V_2$. The oriented normal metric places this module in the principal compact $A_2$ algebra, whose Coxeter plane gives the Eisenstein elliptic datum $E_\omega$. Independently, Borel-Weil visibility and faithful central marking reconstruct the normal projective carrier. The invariant theta degrees three and six give ranks $2$ and $3$; their agreement with the projective ranks is retained only as an equianharmonic-projective consistency check.\\
~\\
The reconstructed $2+3$ carrier supplies the centered determinant line, the Levi stabilizer, and the even exterior package. Twistor-space Standard-Model constructions provide a representation-theoretic comparison \cite{Woit2021}. The quotient stack $[E_\omega/\mu_6]$ has tubular type $(6,3,2)$, and its weight-two orbit $E_\omega[2]\setminus\{0\}$ gives the rank-three family torsor before the Clifford parity linearization is selected. Finite permutation and Clifford family models provide comparisons \cite{GresnigtGourlayVarma2023}. The Eisenstein norm spectrum and the degree-seven theta sector then organize the commensurate and neutral responses without introducing an additional continuous structural parameter.\\
~\\
The analysis is organized in five layers. The Lorentzian defect geometry first fixes the primitive graded source. Protected reconstruction then fixes the finite carrier and its equianharmonic and torsor data. The Alena-Codazzi collar tests realization through the isolated projector, the first two Riesz responses, intrinsic Rees geometry, and the marked Hodge bridge. After the common $2{:}3{:}5$ lift and neutral relay have passed their operator tests, one universal dimensionless scalar $u$ is fixed once and the remaining local CKM, neutral, and PMNS outputs are evaluated without retuning. Absolute normalization and global Riesz-Callias topology are treated separately. This order prevents a downstream numerical reading from changing an upstream structural selection; compatible deformation-retract data give the usual homological-transfer reading of the Schur expansion.
\section{Protected Minimal Reconstruction and Gapped Transfer}
\label{sec:filtered-self-reconstruction}

The self-reconstruction loop of \eqref{eq:intro-self-reconstruction-loop} is used with an ordered definability filtration. Structural residuals are closed before analytic attachment and completed low-energy residuals, so a later completion cannot change an earlier finite selection. Spectral projections are understood in the sense of \cite{Kato1995}; the open Fredholm comparison uses \cite{Callias1978} and its geometric form \cite{Anghel1993Callias}. A retained description at depth $i$ is $X_{\leq i}(\Phi)=(\mathcal A_\Phi;\mathfrak A_{\leq i}^{\rm rec},\pi_i,\mathcal H_i,\mathcal M_{\leq i})$. Observation and reconstruction satisfy $\mathscr R\circ\mathscr O=\operatorname{id}$ on the physical quotient, and the resulting split idempotent fixes the reconstructible descriptions.\\
~\\
For a residual $R_i$ with represented metric $G_i$, the local quadratic form at a closed stratum is
\begin{equation}
H_i^{\rm Gram}=D R_i^\dagger G_iD R_i.
\label{eq:fsr-residual-gram-hessian}
\end{equation}
Closures are evaluated lexicographically on the preceding exact stratum. Within the completed Schur depth the retained order is
\begin{equation}
\mathfrak E_{\rm dir,ch}\prec\mathfrak E_{\rm dir,det}\prec\mathfrak E_{\rm dir,spec}\prec\mathfrak E_{\rm dir,tr}\prec\mathfrak E_{\rm neu}\prec\mathfrak E_{\rm mix}.
\label{eq:fsr-schur-internal-filtration}
\end{equation}
Standard variational, curvature, boundary-charge, determinant, and odd-superconnection closures are used without changing their usual content; the curvature convention is represented by \cite{Besse2007}, boundary superselection by \cite{DoplicherHaagRoberts1971LocalObservablesI}, determinant families by \cite{BismutFreed1986EllipticFamiliesII}, differential index refinement by \cite{FreedLott2010DifferentialKIndex}, and odd finite insertions by \cite{Quillen1985Superconnections}.\\
~\\
For active principal types $\Lambda_{\rm pr}$, the marked sector map is
\begin{equation}
\zeta_F:\bigoplus_{a\in\Lambda_{\rm pr}}\mathbb Ce_a\longrightarrow Z(\mathfrak A_F),\qquad \zeta_F(e_a)=p_a,
\label{eq:fsr-charge-reconstruction-map}
\end{equation}
with orthogonal central images. For equivariant coefficient maps $\iota_{F,a}$ and reconstructions $\rho_{F,a}$, the fidelity defect is
\begin{equation}
\delta_{\rm fid}(F)=\dim\ker\zeta_F+\sum_{a\in\Lambda_{\rm pr}}\left(\dim\ker\iota_{F,a}+\|\rho_{F,a}\circ\iota_{F,a}-\operatorname{id}_{V_a}\|_{\rm HS}^2\right)\geq0.
\label{eq:fsr-charge-fidelity-defect}
\end{equation}
Reconstructive completeness is the commutant condition
\begin{equation}
\pi_i(\mathfrak A_{\leq i}^{\rm rec})'=\pi_i\!\left(Z(\mathfrak A_{\leq i}^{\rm rec})\right).
\label{eq:fsr-commutant-closure}
\end{equation}

\begin{proposition}[Protected minimality under intersection closure]
\label{def:fsr-minimal-reconstruction-initial-object}
Fix an admissible ambient envelope and assume that admissible marking-preserving protected subextensions are closed under intersections. Their intersection is the least admissible extension. On the fixed poset of protected subobjects the resulting closure is extensive, monotone, and idempotent.
\end{proposition}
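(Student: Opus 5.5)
The argument is the standard Moore-family argument carried out on the fixed poset of protected subobjects of the ambient envelope, and it has three steps.

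First, the ambient envelope itself is an admissible marking-preserving protected subextension of itself. So the family $\mathcal F$ of admissible protected subextensions is nonempty, and by hypothesis it is closed under (arbitrary, or at least the relevant) intersections. Put $M_0=\bigcap_{F\in\mathcal F}F$. By intersection closure, $M_0\in\mathcal F$, and $M_0\subseteq F$ for every $F\in\mathcal F$. Hence $M_0$ is the least admissible extension, and it is unique, since two least elements are contained in each other. Here one has to check that the marking is preserved under intersection, namely that the central images $\zeta_F(e_a)=p_a$ of \eqref{eq:fsr-charge-reconstruction-map} survive in $M_0$. I would read this as part of the intersection-closure hypothesis rather than prove it separately.

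Second, I relativize to define the closure operator. For a protected subobject $S$ of the envelope, set
\begin{equation*}
\operatorname{cl}(S)=\bigcap\{F\in\mathcal F : S\subseteq F\}.
\end{equation*}
This index set is nonempty because it contains the envelope, so $\operatorname{cl}(S)\in\mathcal F$ by intersection closure.

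The three properties then follow directly from this definition:
\begin{itemize}[leftmargin=*]
\item \emph{Extensivity.} $S\subseteq\operatorname{cl}(S)$, because $S$ lies in every set being intersected.
\item \emph{Monotonicity.} If $S\subseteq T$, then every $F$ containing $T$ also contains $S$. The intersection defining $\operatorname{cl}(S)$ is therefore taken over a larger family, so $\operatorname{cl}(S)\subseteq\operatorname{cl}(T)$.
\item \emph{Idempotence.} Since $\operatorname{cl}(S)\in\mathcal F$ and $\operatorname{cl}(S)$ contains itself, it appears among the $F$ in the intersection defining $\operatorname{cl}(\operatorname{cl}(S))$. This gives $\operatorname{cl}(\operatorname{cl}(S))\subseteq\operatorname{cl}(S)$, and extensivity gives the reverse inclusion.
\end{itemize}
The least admissible extension is then $M_0=\operatorname{cl}(\varnothing)$, or $\operatorname{cl}$ of the minimal marked datum, which ties the two claims together.

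The main obstacle is not the lattice argument but the bookkeeping: one must make sure that ``intersection'' is computed inside the fixed poset of protected subobjects, so that the intersection of protected subobjects is again protected. That is, the commutant condition \eqref{eq:fsr-commutant-closure} and a vanishing fidelity defect \eqref{eq:fsr-charge-fidelity-defect} must be inherited. If the poset only has finite meets, I would instead need a descending-chain argument, using finite dimensionality of the finite algebra $\mathfrak A_F$, to reduce arbitrary intersections to finite ones. I would handle this by noting that each $F$ is a finite-dimensional subalgebra, so any intersection stabilizes after finitely many steps. Beyond this, the statement is exactly that $\mathcal F$ is a Moore family, and the closure properties are formal.
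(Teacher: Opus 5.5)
Your proposal is correct and is essentially the argument the paper intends. The paper gives no separate proof of this proposition; it treats it as the standard fact, phrased in the Introduction as ``minimal reconstruction is the least protected Moore closure,'' with the intersection-closure hypothesis absorbing the inheritance of the marking, the commutant condition, and vanishing fidelity that you flag, and your closure $\operatorname{cl}(S)=\bigcap\{F\in\mathcal F:S\subseteq F\}$ together with the three one-line checks of extensivity, monotonicity, and idempotence is exactly what that phrase abbreviates.
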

The intersection-closure hypothesis is part of the admissible reconstruction class used below; all minimality statements are relative to that class.

\begin{definition}[Finite reconstruction core]
\label{def:fsr-finite-reconstruction-core}
The finite reconstruction core is the faithful marked representation of the least protected extension satisfying \eqref{eq:fsr-commutant-closure}, modulo marking-preserving unitary equivalence.
\end{definition}

\begin{proposition}[Finite-core uniqueness and protected descent]
\label{prop:fsr-finite-core-uniqueness}
\label{prop:fsr-no-unsourced-multiplicity}
\label{prop:fsr-protected-output-descent}
For $\mathfrak A\simeq\bigoplus_aM_{n_a}(\mathbb C)$, a faithful representation satisfying \eqref{eq:fsr-commutant-closure} has multiplicity one on every labeled simple block and is unique up to marking-preserving unitary equivalence. An irreducible copy not generated by the source or required by a protected downstream operation is absent whenever its deletion remains admissible. Presentation data may be removed after every invariant needed downstream has been exported.
\end{proposition}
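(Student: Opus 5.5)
The plan is to reduce everything to the structure theory of finite-dimensional representations of a finite direct sum of full matrix algebras and then read off the commutant. Write $\mathfrak A\simeq\bigoplus_a M_{n_a}(\mathbb C)$ with minimal central projections $p_a$. Every faithful $*$-representation $\pi$ on a finite-dimensional Hilbert space is unitarily equivalent to $\bigoplus_a \mathrm{id}_{M_{n_a}}\otimes \mathbf 1_{m_a}$ acting on $\bigoplus_a \mathbb C^{n_a}\otimes\mathbb C^{m_a}$. Faithfulness forces $m_a\geq1$ for every $a$. The commutant is then $\pi(\mathfrak A)'=\bigoplus_a \mathbf 1_{n_a}\otimes M_{m_a}(\mathbb C)$. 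The centre is mapped to $\pi(Z(\mathfrak A))=\bigoplus_a \mathbb C\,\mathbf 1_{n_a}\otimes\mathbf 1_{m_a}$. Condition \eqref{eq:fsr-commutant-closure} therefore says $M_{m_a}(\mathbb C)=\mathbb C$ for every $a$, that is, $m_a=1$. This gives multiplicity one on every labeled block.

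For uniqueness, two such representations are both equivalent to $\bigoplus_a\mathbb C^{n_a}$ with the standard action. The intertwining unitary can be chosen block-diagonal, so it maps $\pi(p_a)$ to $\pi'(p_a)$ for each label. It therefore preserves the marking $\zeta_F(e_a)=p_a$ of \eqref{eq:fsr-charge-reconstruction-map}. The only residual freedom is a phase on each block. This is the sense in which Definition~\ref{def:fsr-finite-reconstruction-core} is well-defined.

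For the no-unsourced-multiplicity clause, suppose an irreducible summand is neither generated by the source nor required by a protected downstream operation, and deleting it stays admissible. Then the extension with that summand removed is an admissible marking-preserving protected subextension. By Proposition~\ref{def:fsr-minimal-reconstruction-initial-object}, the least admissible extension is the intersection of all such subextensions, so it is contained in the reduced one. The summand therefore does not appear in the minimal object. The only point to check is that deletion keeps \eqref{eq:fsr-commutant-closure} and zero fidelity defect \eqref{eq:fsr-charge-fidelity-defect}. This holds because removing an unlabeled block, or a redundant copy, does not alter $\zeta_F$, $\iota_{F,a}$ or $\rho_{F,a}$ on the active labels.

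Protected descent is a quotient statement. Once all invariants used downstream have been exported, they factor through the reconstructed core and are independent of the presentation, by the uniqueness just shown. Forgetting the presentation data is then the split idempotent $\mathscr R\circ\mathscr O$ restricted to the physical quotient.

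The main obstacle is the second clause. The notions "generated by the source" and "required by a protected downstream operation" are only informally specified. I would make them precise as membership in every admissible subextension and then argue via the intersection property, rather than attempting a direct characterization.
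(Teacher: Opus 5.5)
Your proposal is correct and follows the standard argument the paper relies on; the paper states this proposition without a written proof. The commutant of $\bigoplus_a M_{n_a}(\mathbb C)\otimes\mathbf 1_{m_a}$ is $\bigoplus_a\mathbf 1_{n_a}\otimes M_{m_a}(\mathbb C)$, so \eqref{eq:fsr-commutant-closure} forces $m_a=1$ (and also rules out a degenerate null summand), while the deletion clause follows, exactly as you argue, from the intersection-closure minimality of Proposition~\ref{def:fsr-minimal-reconstruction-initial-object}. One small slip: the split idempotent that forgets presentation data acts on descriptions and is $\mathscr O\circ\mathscr R$, since $\mathscr R\circ\mathscr O$ is already the identity on the physical quotient.
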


\begin{remark}[Integral reconstruction refinement]
\label{def:fsr-integral-mge-refinement}
If the protected interface carries a canonical integral or cyclotomic form, admissibility may be imposed before complexification. Exact preservation is retained when compatible with the closed residuals; a commensurate downstream corner retains the least admissible finite-index enlargement.
\end{remark}

\begin{lemma}[Gapped compression and transfer comparison]
\label{lem:fsr-penalty-descent}
After protected descent, elimination of an invertible complementary block gives
\begin{equation}
H_i^{\rm red}=H_{xx}-H_{xh}H_{hh}^{-1}H_{hx},
\label{eq:fsr-reduced-penalty-hessian}
\end{equation}
and, for an isolated splitting $P+Q={\bf1}$ with $QKQ$ invertible,
\begin{equation}
\mathfrak F_P(K)=PKP-PKQ(QKQ)^{-1}QKP.
\label{eq:fsr-feshbach-reduction}
\end{equation}
For a two-dimensional block $K=\left(\begin{smallmatrix}0&c\\ b&t\end{smallmatrix}\right)$ with $P$ on the first coordinate, $t\neq0$, and $\det K=n$, one has $bc=-n$ and \eqref{eq:fsr-feshbach-reduction} gives $\mathfrak F_P(K)=n/t$. In particular, the Eisenstein companion block $C_\xi=\left(\begin{smallmatrix}0&1\\-N_E(\xi)&\operatorname{Tr}\xi\end{smallmatrix}\right)$ gives $\mathfrak F_P(C_\xi)=N_E(\xi)/\operatorname{Tr}\xi$ when $\operatorname{Tr}\xi\neq0$. This reading uses the marked zero-diagonal cyclic splitting of the companion block. Galois conjugacy alone does not select that splitting: in an orthonormal real presentation of multiplication by $\xi$, the corresponding one-step Schur value is $2N_E(\xi)/\operatorname{Tr}\xi$ when $\operatorname{Tr}\xi\neq0$.
If the splitting independently extends to a deformation retract whose contracting homotopy on the eliminated sector is represented by $Q(QKQ)^{-1}Q$, the perturbative expansion of \eqref{eq:fsr-feshbach-reduction} has the standard homological-transfer organization; the general HPL construction is represented by \cite{DoubekJurcoPulmann2019HPL}. No differential graded structure is required for the finite Schur statements used here. The range/cokernel splitting \eqref{eq:alena-microscopic-gram-riesz-identity}, inverse-Sylvester response \eqref{eq:bl-first-order-family-basis-change}, higher-moment suppression \eqref{eq:bl-higher-moment-schur-suppression}, and neutral Schur complement \eqref{eq:bl-neutral-schur-complement} are the corresponding bundle-motion, first-response, perturbative, and symmetric-block readings.
\end{lemma}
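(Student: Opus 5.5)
The proof is mostly finite linear algebra, so I will argue block by block, in the order of the statement.

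\textbf{Step 1: the Schur complement.} For \eqref{eq:fsr-reduced-penalty-hessian} I would use the Gram form \eqref{eq:fsr-residual-gram-hessian} after protected descent (Proposition~\ref{prop:fsr-finite-core-uniqueness}). This leaves a finite-dimensional Hermitian form, split into retained coordinates $x$ and complementary coordinates $h$, with $H_{hh}$ invertible by hypothesis. Completing the square is the natural route. The stationarity condition in $h$ is solved by $h=-H_{hh}^{-1}H_{hx}x$. Substituting this back gives the quadratic form $x^\dagger(H_{xx}-H_{xh}H_{hh}^{-1}H_{hx})x$. Equivalently, I would write the block $LDU$ factorization
\[
H=\begin{pmatrix}{\bf1}&H_{xh}H_{hh}^{-1}\\0&{\bf1}\end{pmatrix}\begin{pmatrix}H_i^{\rm red}&0\\0&H_{hh}\end{pmatrix}\begin{pmatrix}{\bf1}&0\\H_{hh}^{-1}H_{hx}&{\bf1}\end{pmatrix}.
\]

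\textbf{Step 2: the Feshbach map.} For \eqref{eq:fsr-feshbach-reduction} I would repeat the same elimination with $x=Pv$ and $h=Qv$. Invertibility of $QKQ$ is understood on $\operatorname{Ran}Q$, and the isolated splitting guarantees $P+Q={\bf1}$ with $PQ=0$. Solving $QK(Pv+Qv)=0$ for $Qv$ and projecting with $P$ gives $\mathfrak F_P(K)$.

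\textbf{Step 3: the two-dimensional and companion cases.} These are direct substitutions.
\begin{itemize}
\item For the general block, $PKP=0$, $PKQ=c$, $QKQ=t$ and $QKP=b$. Hence $\mathfrak F_P(K)=-bc/t$. Since $\det K=-bc=n$, this equals $n/t$.
\item For $C_\xi$, take $c=1$, $b=-N_E(\xi)$ and $t=\operatorname{Tr}\xi$, with $\det C_\xi=N_E(\xi)$. This gives $N_E(\xi)/\operatorname{Tr}\xi$.
\item For the orthonormal real presentation, write $\xi=x+iy$. Multiplication by $\xi$ is then $\left(\begin{smallmatrix}x&-y\\ y&x\end{smallmatrix}\right)$. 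The one-step Schur value on the first coordinate is $x+y^2/x=N_E(\xi)/x$. Because $x=\operatorname{Tr}\xi/2$, this equals $2N_E(\xi)/\operatorname{Tr}\xi$.
\end{itemize}
The two presentations are conjugate and are both Galois-compatible, yet they give different Schur values. This shows that the value depends on the marked splitting and not on the conjugacy class, which is exactly the caveat in the statement.

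\textbf{Step 4: the homological-transfer comparison (main obstacle).} This is the only non-mechanical part, because it is conditional. I would expand $(QKQ)^{-1}$ perturbatively about a reference invertible piece, using a Neumann series $K=K_0+\delta$. I would then check that, once $h=Q(QKQ)^{-1}Q$ is a contraction in a deformation retract $(P,\iota,h)$ with side conditions, the terms match the basic perturbation lemma series $\sum_k P\delta(h\delta)^kP$ term by term, following \cite{DoubekJurcoPulmann2019HPL}. The risk is verifying the side conditions $h^2=0$, $hP=0$ and $Ph=0$. Here I would use $PQ=0$ and the fact that $h$ is supported in $\operatorname{Ran}Q$. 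The hypothesis that the splitting ``independently extends'' to a retract is exactly what supplies these conditions, so nothing further needs to be proved there.

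The final cross-references to later equations are interpretive labels rather than claims. I would only note that each of them is an instance of \eqref{eq:fsr-feshbach-reduction} or \eqref{eq:fsr-reduced-penalty-hessian}.
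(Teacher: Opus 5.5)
Your Steps 1--3 are correct and contain the whole substance of the lemma. The paper gives no separate proof; the content is the block elimination (your $LDU$ factorization and effective-equation derivation) plus the three substitutions you carry out. One small addition makes the last claim well posed. Multiplication by $\xi$ commutes with rotations, and a reflected orthonormal frame only replaces $y$ by $-y$, so $x+y^2/x=2N_E(\xi)/\operatorname{Tr}\xi$ holds in \emph{every} orthonormal real frame. Your remark that the two presentations are conjugate fails for rational $\xi$: there $C_\xi$ is a nontrivial Jordan block and the multiplication matrix is scalar. The two Schur values still differ, which is all the caveat needs.

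Step 4, however, contains a step that would fail as written. Support of $h$ in $\operatorname{Ran}Q$ together with $PQ=0$ gives $hP=Ph=0$, but not $h^2=0$. For $h=Q(QKQ)^{-1}Q$ one has $h^2=Q(QKQ)^{-2}Q\neq0$ whenever $Q\neq0$. A square-zero homotopy, and a square-zero differential, can only come from extra graded structure. An example is $h$ being the degree-lowering part of the inverse, as $d^\dagger\Delta^{-1}$ is of $(d+d^\dagger)^{-1}$. This is why the statement makes the retract an independent hypothesis and says ``represented by''.

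Second, if you want a perturbative series, the homotopy cannot be built from the full $QKQ$, since nothing would be left to expand. Take a reference $K_0$ with $PK_0Q=QK_0P=0$, moving any off-diagonal part into the perturbation. Put $h_0=Q(QK_0Q)^{-1}Q$ and $\delta=K-K_0$ with $\|h_0\delta\|<1$. The Neumann series then gives
\[
\mathfrak F_P(K_0+\delta)=PK_0P+\sum_{k\geq0}P\delta(-h_0\delta)^kP.
\]
This is the basic-perturbation-lemma correction $\sum_kp\delta(h\delta)^ki$ with $h=-h_0$, the sign being fixed by the homotopy convention. The identity is purely formal and uses no side conditions; the retract hypothesis is needed only to read it homologically. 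With these corrections your proposal covers everything the lemma asserts.
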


For a graded source $Q=\bigoplus_{\ell\in\Lambda}Q_\ell$ and reconstructed level operator $N_F$, degree fidelity is
\begin{equation}
R_{\rm fil}(F,\zeta)=\left(\zeta(e_\ell)-{\bf1}_{\{\ell\}}(N_F)\right)_{\ell\in\Lambda}=0.
\label{eq:fsr-filtered-degree-fidelity}
\end{equation}
It protects the source-grade projections and prevents a removed earlier coordinate from reappearing as an independent later channel.\\
~\\
Every finite core obeys the Robertson-Schrödinger bound \cite{Robertson1929Uncertainty}; the companion formulation is represented by \cite{Schrodinger1930Uncertainty}. For a trace-preserving conditional expectation $\mathbb E_X$ in the sense of \cite{Takesaki1972ConditionalExpectations}, the relaxation $\mathbb E_X+e^{-\gamma t}(\operatorname{id}-\mathbb E_X)$ is a standard unital completely positive semigroup \cite{Lindblad1976QuantumDynamicalSemigroups}, and data processing gives entropy monotonicity \cite{Lindblad1975EntropyInequalities}. With the finite-dimensional subalgebra convention of \cite{GaoJungeLaRacuente2020SubalgebraEntropy},
\begin{equation}
\mathcal C_X(\rho)=S(\mathbb E_{X,*}\rho)-S(\rho)\geq0,
\label{eq:fsr-subalgebra-entropy-difference}
\end{equation}
and, for faithful $\sigma$ fixed by $\mathbb E_{X,*}$,
\begin{equation}
D(\rho\Vert\sigma)=\mathcal C_X(\rho)+D(\mathbb E_{X,*}\rho\Vert\sigma).
\label{eq:fsr-subalgebra-pythagorean-identity}
\end{equation}
Approximate recovery is represented by \cite{JungeRennerSutterWildeWinter2018Recovery}; entropy decay with a nontrivial fixed algebra is developed in \cite{GaoJungeLaRacuenteLi2025EntropyDecay}, with the von Neumann algebra setting treated in \cite{Wirth2025ExponentialEntropyDecay}. Only the expectation and its entropy/Dirichlet forms are used below; the rate $\gamma$ remains dynamical.

\section{$A_2$-Equianharmonic Structure from a Primitive Graded Defect}
\label{sec:primitive-gauss-local-self-reconstruction}

The structural reconstruction is applied to a codimension-three timelike defect whose resolved link carries a primitive positive line and two nonzero principal normal grades. Three comparison directions are retained: the codimension of the spherical link, the positive-line degree, and the first higher principal-harmonic source order. For a spherical link $S^{r-1}$, the integral second cohomology class used by the line reconstruction is nonzero only for $r=3$. The oriented rank-three normal metric then identifies the two active non-scalar grades with the principal decomposition of the compact $A_2$ algebra. Its root lattice supplies the equianharmonic elliptic datum used below, while faithful central marking retains the two source labels as distinct finite sectors.\\
~\\
On the primitive branch, Borel-Weil visibility and central faithfulness reconstruct the normal $2+3$ carrier, while the equianharmonic invariant theta spaces give an independent rank agreement. The Kronecker exceptional pair supplies the projective multiplication witness. Clifford and determinant closure are applied to the reconstructed carrier. The elliptic quotient stack $[E_\omega/\mu_6]$ supplies the structural weight-two family orbit and the canonical two-linearization order-six envelope; its physical $\mu_2$ linearization is conditional. Conditions on an Alena-Codazzi collar which realize the source and transport data are given in Section~\ref{sec:alena-codazzi-realization}. Masses, mixing matrices, Pfaffian data, running, and absolute scales belong to the transferred and completed low problem.

\subsection{Primitive graded defect datum and reconstruction assumptions}
\label{subsec:primitive-gauss-local-data}

Let $\Gamma$ be a smooth embedded future-directed timelike worldline in an oriented and time-oriented four-dimensional Lorentzian branch. Its orthogonal normal bundle $N\Gamma=(T\Gamma)^\perp$ carries the induced orientation and Euclidean metric and has rank three. The real blow-up along $\Gamma$ is $\widetilde X=[M;\Gamma]$ with $\partial\widetilde X=S(N\Gamma)$ and boundary fiber $S(N\Gamma)_t\simeq S^2_\Gamma$.
In an optical Codazzi branch, the two-eigenvalue splitting gives the corresponding pair of shear-free geodesic null directions in the Lorentzian principal plane. The Riemannian two-eigenvalue mechanism is represented by \cite{DerdzinskiShen1983Codazzi}, while Lorentzian Codazzi structures and the related $2+2$ optical geometry are represented by \cite{ManticaMolinari2023CodazziSpaceTimes} and \cite{FerrandoSaez2007ShearFreeGeodesic}. The projective-spinor reading identifies the link with
\begin{equation}
S^2_\Gamma\simeq\mathbb{CP}^1_\Gamma .
\label{eq:pgl-projective-link}
\end{equation}
The spinor conventions are those of \cite{PenroseRindler1984v1} and \cite{PenroseRindler1986v2}. The twistor comparison is supplied by \cite{WardWells1990}.\\
~\\
The local complement has $\mathbb R^4\setminus\mathbb R\simeq\mathbb R\times(\mathbb R^3\setminus\{0\})$ and $H^2(\mathbb R^4\setminus\mathbb R;\mathbb Z)\simeq\mathbb Z$.
A positive resolved transverse class therefore determines
\begin{equation}
c_1(L_\Gamma)=n,
\qquad
L_\Gamma\simeq\mathcal O(n),
\qquad
n>0 .
\label{eq:pgl-positive-line}
\end{equation}
Positive classes form a monoid under tensor product. Atomicity means that the class is not a sum of two nonzero positive classes. Hence the atomic positive representative is
\begin{equation}
c_1(L_\Gamma)=1,
\qquad
L_\Gamma\simeq\mathcal O(1).
\label{eq:pgl-primitive-line}
\end{equation}
Higher positive degrees define composite filtered link data. The regularity and removable-singularity background is represented by \cite{Uhlenbeck1982RemovableSingularities}; the local elliptic convention is the one of \cite{GilbargTrudinger2001Elliptic}.\\
~\\
Let $N_\Gamma$ be the oriented rank-three normal fiber. A natural scalar-sector source of transverse order at most two has associated graded in $\operatorname{Sym}^0(N_\Gamma^\ast)\oplus\operatorname{Sym}^1(N_\Gamma^\ast)\oplus\operatorname{Sym}^2(N_\Gamma^\ast)$. After the scalar traces have been separated, its non-scalar part is
\begin{equation}
\operatorname{gr}_{>0}\mathcal J_{\perp,{\rm ns}}^{\leq2}
\simeq
V_1[1]\oplus V_2[2].
\label{eq:pgl-two-jet-type}
\end{equation}
The brackets record normal order. For a particular second-order source $J$, put $I(J)=\{\ell\in\{1,2\}:Q_\ell\neq0\}$.
For the comparison branches below, $I$ denotes the finite set of retained principal harmonic grades and $n=c_1(L_\Gamma)>0$ the positive line degree. The marked harmonic reconstruction is therefore organized by $(n,I)$; atomicity later fixes $n=1$, while the Standard-Model branch is $I=\{1,2\}$. The spherical tensor convention is that of \cite{Varshalovich1988QuantumAngularMomentum}; the multipole terminology follows \cite{Thorne1980MultipoleExpansions}.\\
~\\
A principal boundary charge is read by a surface pairing on a linking sphere and is independent of the linking radius in a source-free annulus. In a curved collar the pairing is transported by the frozen collar connection and the corresponding adjoint link mode.

\begin{definition}[Relative Gauss-local observable algebra]
\label{def:pgl-relative-gauss-local-observable-algebra}
Let $\mathcal Q_\partial$ be the finite boundary-charge algebra generated by the retained principal charges and their quadratic Casimir $\mathcal C_\partial$. A relative Gauss-local observable algebra $\mathcal A_{\rm loc}$ is generated by annulus-supported observables $A$ satisfying $[A,Q]=0$ for every $Q\in\mathcal Q_\partial$ and by the marked spectral projections of $\mathcal C_\partial$. The boundary data are fixed and no singular charge density is present in the annulus interior.
\end{definition}

\begin{definition}[Primitive graded defect datum]
\label{def:pgl-primitive-gauss-local-defect}
A possibly degenerate primitive graded defect datum consists of the link \eqref{eq:pgl-projective-link}, a positive line as in \eqref{eq:pgl-positive-line}, the graded source \eqref{eq:pgl-two-jet-type}, charges $Q_1\in V_1$ and $Q_2\in V_2$ which may vanish, and a relative Gauss-local algebra in the sense of Definition~\ref{def:pgl-relative-gauss-local-observable-algebra}. It is two-channel when $I(J)=\{1,2\}$. It is primitive after finite reconstruction when the positive class is atomic and its finite structural representation is the core of Definition~\ref{def:fsr-finite-reconstruction-core}.
\end{definition}

The boundary-charge superselection language is compared with \cite{DoplicherRoberts1990}. In the present reconstruction, faithful central marking of the retained charge types is part of structural entry B below; Gauss locality supplies its boundary-charge realization. In Section~\ref{sec:alena-codazzi-realization}, $Q_1$ is supplied by the dipole-type first normal current moment and $Q_2$ by the trace-free quadrupole-type second Codazzi-Gauss moment.

\begin{lemma}[Second-jet reduction]
\label{lem:pgl-second-jet-reduction}
For a natural scalar-sector source of transverse order at most two, removal of the scalar traces gives \eqref{eq:pgl-two-jet-type}.
\end{lemma}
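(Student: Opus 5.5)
The plan is to reduce the statement to the classical decomposition of symmetric powers of the defining representation of $SO(3)$. Since the normal fiber $N_\Gamma$ is oriented, Euclidean and of rank three, its structure group reduces to $SO(N_\Gamma)\simeq SO(3)$, whose double cover is $SU(2)$. Under the metric identification $N_\Gamma^\ast\simeq N_\Gamma\simeq V_1$ (spin one), a natural scalar-sector source of transverse order at most two has associated graded
\[
\operatorname{Sym}^0(N_\Gamma^\ast)\oplus\operatorname{Sym}^1(N_\Gamma^\ast)\oplus\operatorname{Sym}^2(N_\Gamma^\ast),
\]
filtered by normal order. Here "natural" is read as $SO(N_\Gamma)$-equivariant and compatible with the jet filtration. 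Passing to the associated graded removes the lower-order mixing, so each normal order $k$ can be decomposed separately.

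The degree-by-degree computation goes as follows.
\begin{itemize}
\item $\operatorname{Sym}^0=\mathbb R\simeq V_0$, which is purely scalar and is discarded.
\item $\operatorname{Sym}^1=V_1$ is irreducible and carries no scalar trace, so it survives as $V_1[1]$.
\item For $\operatorname{Sym}^2(V_1)$, use the Clebsch--Gordan decomposition $V_1\otimes V_1=V_0\oplus V_1\oplus V_2$ \cite{FultonHarris1991RepresentationTheory}. The antisymmetric part is $\Lambda^2V_1\simeq V_1$, which the orientation identifies via Hodge star. Hence $\operatorname{Sym}^2V_1=V_0\oplus V_2$. The $V_0$ summand is spanned by the metric trace $\delta^{ij}$, and the complement is the trace-free symmetric tensors, of dimension $5$.
\end{itemize}
Removing the scalar traces, that is the $V_0$ summands in orders $0$ and $2$, leaves exactly $V_1[1]\oplus V_2[2]$, which is \eqref{eq:pgl-two-jet-type}. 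The brackets record the normal order, so that a $V_1$ in order two could not be confused with the order-one current moment.

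The step requiring care is the meaning of "removal of the scalar traces." This is the one point where the argument must say something beyond the standard representation theory. Concretely, it is the orthogonal projection onto the isotypic complement of $V_0$ in each graded piece, which is canonical because the $V_0$-isotypic component is canonical. I also need that no further non-scalar summand enters through the filtration. I would argue this by noting that the associated graded of a filtered $SO(3)$-module is determined up to isomorphism and that the symmetric (rather than full tensor) structure of jets excludes the antisymmetric $V_1$ in order two.

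I expect the main obstacle to be justifying that the scalar-sector source is genuinely symmetric in its normal indices. Symmetry is what prevents the antisymmetric $V_1$ from appearing in order two. It follows from commuting partial derivatives in the Taylor expansion of a scalar source, so I would invoke that fact directly.
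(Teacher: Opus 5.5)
Your proposal is correct and follows the paper's proof. The paper places the associated graded in $\operatorname{Sym}^0\oplus\operatorname{Sym}^1\oplus\operatorname{Sym}^2$ of the normal cotangent fiber, reads off the $SU(2)$ types $V_0$, $V_1$, $V_0\oplus V_2$, and discards the scalar summands; your additional remarks on the canonicity of the $V_0$-isotypic projection and on symmetry of the normal indices only spell out what the paper builds into its setup.
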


\begin{proof}
The associated graded is contained in the first three symmetric powers of the normal cotangent fiber. Their $\operatorname{Spin}(3)\simeq SU(2)$ types are $V_0$, $V_1$, and $V_0\oplus V_2$, respectively. Removing the scalar summands gives the stated module.
\end{proof}

\begin{corollary}[Second-order branch exhaustion]
\label{cor:pgl-second-order-branch-exhaustion}
The possible active sets are $I(J)\in\{\varnothing,\{1\},\{2\},\{1,2\}\}$.
\end{corollary}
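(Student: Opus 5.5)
The corollary follows directly from Lemma~\ref{lem:pgl-second-jet-reduction} together with the definition $I(J)=\{\ell\in\{1,2\}:Q_\ell\neq0\}$. The plan is to show two things: $I(J)$ is always a subset of $\{1,2\}$, and each of the four subsets actually occurs for some admissible source. Combined, these give the stated list.

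The inclusion is the first step. By Lemma~\ref{lem:pgl-second-jet-reduction}, a natural scalar-sector source of transverse order at most two has non-scalar associated graded $V_1[1]\oplus V_2[2]$. Its $\operatorname{Spin}(3)$-isotypic decomposition therefore contains only the types $\ell=1$ and $\ell=2$. The principal components $Q_\ell$ are the isotypic projections of the source's non-scalar part. Any grade $\ell\geq3$ would need a symmetric power of order at least three, and the order bound excludes this; grade $\ell=0$ has been removed with the scalar traces. So $I(J)\subseteq\{1,2\}$, and $I(J)$ lies in the power set $\{\varnothing,\{1\},\{2\},\{1,2\}\}$.

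For realizability, I would use the fact that the two isotypic components are independent linear coordinates on $V_1\oplus V_2$. Definition~\ref{def:pgl-primitive-gauss-local-defect} explicitly allows degenerate data in which $Q_1$ or $Q_2$ vanishes. Choosing $(Q_1,Q_2)$ equal to $(0,0)$, $(v,0)$, $(0,w)$ and $(v,w)$, with $v\in V_1$ and $w\in V_2$ nonzero, then realizes each subset. Concretely, one takes a source whose order-one part is a nonzero linear form and/or whose order-two part is a nonzero trace-free quadratic form, and adds arbitrary scalar traces, which do not affect $I$.

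The only point that needs care is whether naturality constrains the pair $(Q_1,Q_2)$ jointly. I expect it does not, since the decomposition is a direct sum of modules and the grades sit in different normal orders. If naturality is read as requiring a specific geometric source, the fallback is to note that the corollary concerns the possible values of the formal label $I(J)$, for which the inclusion above already suffices, with realizability supplied by the admitted degenerate data.
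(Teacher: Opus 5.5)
Your proposal is correct and takes the same route as the paper, which treats the corollary as immediate from Lemma~\ref{lem:pgl-second-jet-reduction} together with the definition $I(J)=\{\ell\in\{1,2\}:Q_\ell\neq0\}$. Your explicit realizability step, using the degenerate data allowed in Definition~\ref{def:pgl-primitive-gauss-local-defect}, is a harmless addition that the paper leaves implicit.
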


\begin{remark}[Principal harmonic grade]
\label{rem:pgl-principal-harmonic-grade}
For the higher-order comparison below, the principal harmonic grade at normal order $\ell$ means the symmetric trace-free summand $V_\ell[\ell]\subset\operatorname{Sym}^\ell(N_\Gamma^\ast)$. Lower-spin trace descendants are not counted as additional principal grades in that comparison. This convention does not alter the second-order reduction of Lemma~\ref{lem:pgl-second-jet-reduction}.
\end{remark}

\begin{proposition}[$A_2$ normal-source identification]
\label{prop:pgl-source-a2-identification}
Let $N_\Gamma$ carry its oriented Euclidean metric and identify $\operatorname{Sym}^2_0(N_\Gamma)$ with the trace-free symmetric endomorphisms. The map
\begin{equation}
\Psi_\Gamma:V_1\oplus V_2\longrightarrow\mathfrak{su}(N_\Gamma^{\mathbb C}),\qquad
\Psi_\Gamma(u,B)=[u]_\times+iB
\label{eq:pgl-source-a2-identification}
\end{equation}
is an $SO(3)$-equivariant real isomorphism. Under the principal $SU(2)$ action, the right-hand side has the $V_1\oplus V_2$ decomposition associated with the exponents $1,2$ of $A_2$ \cite{Kostant1959PrincipalThreeDimensional}. For the simply connected compact group $G_\Gamma=SU(N_\Gamma^{\mathbb C})$, a maximal torus is fixed up to conjugacy and its exponential lattice is the $A_2$ coroot lattice. With the Coxeter-compatible complex structure on the oriented Cartan plane, $Q^\vee(A_2)=Q(A_2)$ may be written
\begin{equation}
\Lambda_E:=Q(A_2)\simeq\mathbb Z[\omega],\qquad E_\omega:=\mathbb C/\Lambda_E.
\label{eq:pgl-a2-eisenstein-lattice}
\end{equation}
The metric on $\mathbb Z[\omega]$ is $2N_E$. The weight-lattice model is homothetic to \eqref{eq:pgl-a2-eisenstein-lattice} and defines the same elliptic isomorphism class.
\end{proposition}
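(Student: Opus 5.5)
The plan has four steps: a dimension count with an explicit equivariance check for $\Psi_\Gamma$, Kostant's principal decomposition, the standard lattice facts for simply connected $SU(3)$, and the Eisenstein identification.

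\textbf{Step 1: $\Psi_\Gamma$ is an equivariant isomorphism.} Fix an oriented orthonormal frame of $N_\Gamma$, so that $SO(N_\Gamma)\simeq SO(3)$. Regard $V_1$ as $N_\Gamma$ itself, with $u\mapsto[u]_\times$ its image in $\mathfrak{so}(3)$, and $V_2$ as the trace-free symmetric endomorphisms.

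\begin{itemize}
\item For $g\in SO(3)$ one has $[gu]_\times=g[u]_\times g^{-1}$; this is the point where orientation enters, since for $\det g=-1$ a sign appears.
\item Conjugation $B\mapsto gBg^{-1}$ preserves the trace-free symmetric endomorphisms.
\item Hence $\Psi_\Gamma$ intertwines the action on $V_1\oplus V_2$ with conjugation on $\mathfrak{su}(N_\Gamma^{\mathbb C})$.
\item The image is traceless and anti-Hermitian: $[u]_\times$ is real antisymmetric, and $iB$ is $i$ times a real symmetric traceless matrix.
\item $\Psi_\Gamma$ is injective, because real and imaginary parts separate.
\item Dimensions agree: $3+5=8=\dim\mathfrak{su}(3)$.
\end{itemize}

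Together these give a real isomorphism. This is just the Cartan decomposition $\mathfrak{su}(3)=\mathfrak{so}(3)\oplus i\,\mathfrak p$ for the symmetric pair $(SU(3),SO(3))$.

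\textbf{Step 2: the principal $SU(2)$.} I would identify the $SO(3)$ acting through $\Psi_\Gamma$ with the principal $SO(3)\subset SU(3)$. In dimension three the spin-one representation $N_\Gamma^{\mathbb C}\simeq V_1$ is irreducible, so the embedding $\mathfrak{so}(3)\subset\mathfrak{su}(3)$ is the image of $\mathfrak{su}(2)$ acting irreducibly on $\mathbb C^3$. That is exactly the principal three-dimensional subalgebra \cite{Kostant1959PrincipalThreeDimensional}. Kostant's theorem then gives the adjoint decomposition $\bigoplus_j V_{m_j}$ over the exponents $m_j$, which for $A_2$ are $1,2$. Step 1 already exhibits this decomposition concretely, so the two statements agree.

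\textbf{Step 3: the lattice.} This is standard for the simply connected group $G_\Gamma=SU(3)$. Take the maximal torus $T$ of diagonal unitary matrices of determinant one. Its exponential kernel
\[
\ker\bigl(\exp:\mathfrak t\to T\bigr)
\]
is $2\pi i$ times the integral traceless diagonal vectors. This is the coroot lattice, and it equals the lattice generated by the simple coroots because $G_\Gamma$ is simply connected.

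For $A_2$, which is simply laced, the coroots and roots coincide after normalizing by the Killing form so that roots have length $\sqrt2$. Hence $Q^\vee=Q$. Choosing the Coxeter element, rotation by $2\pi/3$ of the Cartan plane, to act as multiplication by $\omega$ fixes the complex structure; the orientation of $N_\Gamma$ picks $\omega$ rather than $\bar\omega$. Sending $\alpha_1\mapsto1$ and $\alpha_2\mapsto\omega$ then gives $Q(A_2)\simeq\mathbb Z[\omega]$.

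\textbf{Step 4: metric and weight lattice.} For the metric, observe that $|a+b\omega|^2=a^2-ab+b^2=N_E(a+b\omega)$. The root norm is $2$ while $N_E(1)=1$, so the Killing-normalized form is $2N_E$.

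For the weight-lattice statement, note that the weight lattice $P=Q^*$ satisfies
\[
P=\tfrac{1}{\sqrt{-3}}\,\mathbb Z[\omega]\cdot\text{unit},
\]
that is, $P$ equals $(1-\omega)^{-1}Q$ up to a rotation. So $P$ is homothetic to $Q$, and $\mathbb C/P\simeq\mathbb C/Q$ as elliptic curves. Both have $j=0$.

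\textbf{Main obstacle.} I expect the main difficulty to be in Steps 2 and 3: making precise that the complex structure is canonical up to conjugacy. The choice of torus and Weyl chamber is only defined up to conjugation and the Weyl group, so the Coxeter element is determined only up to conjugacy. I would handle this by noting that the Weyl group acts by automorphisms of $\mathbb Z[\omega]$, namely $\mu_6$ and conjugation. Such changes alter the identification by at most an automorphism or a complex conjugation of $E_\omega$. The orientation removes the conjugation, and the isomorphism class is unchanged.
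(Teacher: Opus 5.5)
Your proposal is correct and follows essentially the paper's route: the paper likewise reads $\Psi_\Gamma$ as the Cartan decomposition of $\mathfrak{su}(3)$ relative to $\mathfrak{so}(N_\Gamma)$, then invokes the standard simply laced $A_2$ root, coroot, and Coxeter-plane identifications, and absorbs changes of Cartan or simple-root marking into conjugacy or lattice automorphisms; you supply the details it leaves implicit.

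One caveat: the orientation of $N_\Gamma$ does not by itself orient a Cartan plane. A $\pi$-rotation about an axis orthogonal to $e$ lies in $SO(3)$, normalizes $\mathfrak t_e=\operatorname{span}([e]_\times,iQ_e)$, and acts on it as a reflection. So the orientation selecting $\omega$ over $\bar\omega$ should be taken as given data, as the statement does ("oriented Cartan plane"), rather than derived from $N_\Gamma$. This costs nothing, since $\overline{\mathbb Z[\omega]}=\mathbb Z[\omega]$ and the isomorphism class of $E_\omega$ is unchanged.
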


\begin{proof}
The cross product identifies $N_\Gamma$ with $\mathfrak{so}(N_\Gamma)$, while multiplication by $i$ identifies trace-free symmetric endomorphisms with the complementary skew-Hermitian trace-free subspace. This gives \eqref{eq:pgl-source-a2-identification}. The remaining statements are the standard simply laced $A_2$ root, coroot, and Coxeter-plane identifications; a change of Cartan or simple-root marking acts by conjugacy or a lattice automorphism and does not change the elliptic isomorphism class.
\end{proof}

\begin{lemma}[Marked Gauss-local boundary centrality]
\label{lem:pgl-gauss-local-superselection}
In a source-free relative Gauss-local annulus, the marked Casimir projections of Definition~\ref{def:pgl-relative-gauss-local-observable-algebra} are central in every represented local algebra.
\end{lemma}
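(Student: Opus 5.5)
The plan is to reduce the claim to two facts: the marked Casimir projections commute with the generators of any represented local algebra, and the boundary charges act through a finite-dimensional, surface-independent algebra. Recall from Definition~\ref{def:pgl-relative-gauss-local-observable-algebra} that $\mathcal A_{\rm loc}$ is generated by two kinds of elements. The first kind is annulus-supported observables $A$ with $[A,Q]=0$ for every $Q\in\mathcal Q_\partial$. The second kind is the marked spectral projections of $\mathcal C_\partial$. Under any representation $\pi$, commutation relations among generators are preserved. It therefore suffices to show two things:
\begin{enumerate}
\item each marked projection $p=\mathbf 1_{\{c\}}(\mathcal C_\partial)$ commutes with every generator;
\item the projections commute with one another.
\end{enumerate}
Centrality in $\pi(\mathcal A_{\rm loc})$ then follows, because the commutant of a generating set equals the commutant of the generated algebra.

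The first step is to place each projection inside $\mathcal Q_\partial$. The Casimir $\mathcal C_\partial$ is a quadratic polynomial in the retained principal charges and so lies in $\mathcal Q_\partial$. Since $\mathcal Q_\partial$ is finite, $\mathcal C_\partial$ has finite spectrum in every represented local algebra. Each marked spectral projection is then a polynomial in $\mathcal C_\partial$, obtained by Lagrange interpolation on the spectrum. Hence $p\in\mathcal Q_\partial$.

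The second step treats the annulus observables. For a generator $A$ of this kind, $[A,Q]=0$ for all $Q\in\mathcal Q_\partial$, and in particular $[A,p]=0$. The projections commute with each other because they are functions of the single operator $\mathcal C_\partial$. This already gives centrality in the algebra generated abstractly.

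The main obstacle is justifying that the charges, and hence $\mathcal C_\partial$, are well-defined elements of the represented algebra, independent of the linking sphere used to read them. We cannot simply assume this. Here I would use the source-free hypothesis. A principal charge is a surface pairing on a linking sphere. In a source-free annulus, with no singular charge density in its interior, Stokes' theorem shows that the pairing does not depend on the linking radius; this is the radius independence stated before Definition~\ref{def:pgl-relative-gauss-local-observable-algebra}. In a curved collar the pairing is first transported by the frozen collar connection. Because of this independence, $Q$ may be evaluated on a sphere disjoint from the support of any given annulus observable.

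The representation-level commutation then needs separate care. Locality gives $[\pi(A),\pi(Q)]=0$ when $A$ and the chosen sphere are disjoint. Radius independence extends this to every representative of $Q$. The commutant condition \eqref{eq:fsr-commutant-closure}, together with Proposition~\ref{prop:fsr-finite-core-uniqueness}, guarantees that the images $\pi(p)$ are well-defined and mutually orthogonal central idempotents, with multiplicity one on the labeled blocks. Taken together, these give centrality in every represented local algebra.
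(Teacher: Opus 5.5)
Your proposal is correct and follows the paper's proof: Gauss closure makes the annulus observables commute with $\mathcal Q_\partial$, functional calculus (your Lagrange interpolation) extends this commutation to the marked spectral projections of $\mathcal C_\partial$, and the Stokes/Green radius independence identifies one boundary-charge algebra on all linking spheres. Your closing appeal to \eqref{eq:fsr-commutant-closure} and Proposition~\ref{prop:fsr-finite-core-uniqueness} is not needed, and it would only cover reconstruction-complete representations rather than every represented algebra. Well-definedness and orthogonality of $\pi(p)$ are automatic, and your generator argument has already established centrality.
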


\begin{proof}
Gauss closure gives commutation with $\mathcal Q_\partial$, and functional calculus gives commutation with the marked spectral projections already retained in the local algebra. The Stokes or Green identity identifies the same boundary-charge algebra on linking spheres in the annulus.
\end{proof}

On $V_1\oplus V_2$, the Casimir eigenvalues are $2$ and $6$. The type projections are
\begin{equation}
\Pi_1
=
\frac{6{\bf1}-\mathcal C_\partial}{4},
\qquad
\Pi_2
=
\frac{\mathcal C_\partial-2{\bf1}}{4}.
\label{eq:pgl-principal-casimir-projectors}
\end{equation}
The source Euler operator is $\mathsf N_\partial=\Pi_1+2\Pi_2=(\mathcal C_\partial+2{\bf1})/4$.
Under marked Gauss reconstruction, $\Pi_a$ is represented by $p_a=\zeta_F(e_a)$ in \eqref{eq:fsr-charge-reconstruction-map}.

\begin{lemma}[Reconstructive charge-sector separation]
\label{lem:pgl-reconstructive-charge-sector-separation}
Assume that both principal charges are retained and that the fidelity defect \eqref{eq:fsr-charge-fidelity-defect} vanishes. Then $p_1$ and $p_2$ are nonzero orthogonal central projections and
\begin{equation}
\iota_{F,a}(V_a)
\subset
p_a\mathfrak A_Fp_a,
\qquad
a=1,2 .
\label{eq:pgl-marked-charge-sector-support}
\end{equation}
\end{lemma}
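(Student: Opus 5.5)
The proof only needs to unpack the vanishing of the fidelity defect \eqref{eq:fsr-charge-fidelity-defect}. Each summand there is nonnegative, so $\delta_{\rm fid}(F)=0$ forces three things at once. First, $\ker\zeta_F=0$. Second, $\ker\iota_{F,a}=0$ for $a=1,2$. Third, $\rho_{F,a}\circ\iota_{F,a}=\operatorname{id}_{V_a}$ for $a=1,2$. Since both principal charges are retained, $\Lambda_{\rm pr}=\{1,2\}$.

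Injectivity of $\zeta_F$ on $\mathbb Ce_1\oplus\mathbb Ce_2$ gives $p_a=\zeta_F(e_a)\neq0$. By the definition of the marked sector map \eqref{eq:fsr-charge-reconstruction-map}, the images are orthogonal central elements of $Z(\mathfrak A_F)$, and they are projections because they represent the type projections $\Pi_a$ of \eqref{eq:pgl-principal-casimir-projectors}. Hence $p_1p_2=0$, each $p_a$ is self-adjoint and idempotent, and each is central. If one wants an independent reason for centrality, Lemma~\ref{lem:pgl-gauss-local-superselection} supplies it in the Gauss-local realization. This settles the first claim.

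For the support statement \eqref{eq:pgl-marked-charge-sector-support}, I would use that the coefficient maps $\iota_{F,a}$ are equivariant and compatible with the marking. Concretely, the marking carries the source-grade projection $\Pi_a$ to $p_a$, and degree fidelity \eqref{eq:fsr-filtered-degree-fidelity} says that $\zeta(e_\ell)$ equals the spectral projection ${\bf1}_{\{\ell\}}(N_F)$ of the level operator. So for $v\in V_a$, $\iota_{F,a}(v)$ lies in the $a$-th level eigenspace. In other words, $p_a\,\iota_{F,a}(v)=\iota_{F,a}(v)=\iota_{F,a}(v)\,p_a$, where the second equality uses centrality of $p_a$. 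It follows that $\iota_{F,a}(v)=p_a\iota_{F,a}(v)p_a\in p_a\mathfrak A_Fp_a$.

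The main obstacle is precisely this identification of where $\iota_{F,a}(V_a)$ lies. The fidelity defect by itself only guarantees that $\iota_{F,a}$ is injective and split by $\rho_{F,a}$; it does not say where the image sits. I would therefore state explicitly that the reconstruction class requires the marking to intertwine $\Pi_a$ with $p_a$, equivalently that degree fidelity holds. I would also add a remark that without this the image could have a component $p_b\iota_{F,a}(v)$ with $b\neq a$. Such a component would make $\zeta_F$ mark a sector not generated by the source, and this is excluded by Proposition~\ref{prop:fsr-no-unsourced-multiplicity}.
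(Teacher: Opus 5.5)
Your argument matches the paper's: $p_a\neq0$ follows from $\ker\zeta_F=0$, orthogonality and the projection property come from the marked sector map (the paper phrases this as $\zeta_F$ being a $\ast$-homomorphism and takes centrality from Lemma~\ref{lem:pgl-gauss-local-superselection}), and the support inclusion \eqref{eq:pgl-marked-charge-sector-support} is not derived at all but is, in the paper's words, part of the marked encoding of the coefficient maps in \eqref{eq:fsr-charge-fidelity-defect}, which is where you end up.

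Drop the claim that this is equivalent to degree fidelity \eqref{eq:fsr-filtered-degree-fidelity}. That condition constrains $\zeta_F$ and $N_F$, not $\iota_{F,a}$. It is also not a hypothesis of the lemma, and the paper later lists it separately, ``together with'' \eqref{eq:pgl-marked-charge-sector-support}. Likewise drop the appeal to Proposition~\ref{prop:fsr-no-unsourced-multiplicity}: a cross-sector component $p_b\iota_{F,a}(v)$ of a sourced coefficient is not an unsourced irreducible copy, so that proposition does not exclude it.
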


\begin{proof}
Centrality follows from Lemma~\ref{lem:pgl-gauss-local-superselection}. Orthogonality follows because $\zeta_F$ is a $\ast$-homomorphism, while $\ker\zeta_F=0$ follows from vanishing fidelity. The coefficient support is part of the marked encoding used in \eqref{eq:fsr-charge-fidelity-defect}.
\end{proof}

\begin{corollary}[Exclusion of the one-block alias]
\label{cor:pgl-one-block-alias-exclusion}
Suppose that both retained principal labels are represented on a single $E_3$ block with $\mathfrak A_F=M_3(\mathbb C)$. Then no faithful marked sector map \eqref{eq:fsr-charge-reconstruction-map} exists for the two labels.
\end{corollary}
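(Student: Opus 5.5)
The plan is to argue by contradiction from the structure of the centre of $M_3(\mathbb C)$. Suppose a faithful marked sector map $\zeta_F$ as in \eqref{eq:fsr-charge-reconstruction-map} existed for the two retained labels $a=1,2$, with $\mathfrak A_F=M_3(\mathbb C)$. The first step is to record that $Z(M_3(\mathbb C))=\mathbb C{\bf1}$, so the only central projections are $0$ and ${\bf1}$.

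The second step invokes Lemma~\ref{lem:pgl-reconstructive-charge-sector-separation}. Faithfulness means the fidelity defect \eqref{eq:fsr-charge-fidelity-defect} vanishes, and both labels are retained by hypothesis. The lemma then requires $p_1=\zeta_F(e_1)$ and $p_2=\zeta_F(e_2)$ to be nonzero, orthogonal, central projections. By the first step, each $p_a$ would have to equal ${\bf1}$. Then $p_1p_2={\bf1}\neq0$, which contradicts orthogonality.

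Equivalently, $\zeta_F$ maps the two-dimensional space $\mathbb Ce_1\oplus\mathbb Ce_2$ into the one-dimensional space $Z(\mathfrak A_F)$. Hence $\dim\ker\zeta_F\geq1$, so $\delta_{\rm fid}(F)>0$, again contradicting faithfulness. I would state this dimension count as the primary argument, since it does not rely on orthogonality. The orthogonality version then serves as the $\ast$-homomorphism reading.

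The only point needing care is the phrase ``represented on a single $E_3$ block''. One has to make sure the marking is taken in $Z(\mathfrak A_F)$ and not in some larger commutant of a non-faithful representation. This is settled by the commutant condition \eqref{eq:fsr-commutant-closure}. It forces $\pi(\mathfrak A_F)'=\pi(Z(\mathfrak A_F))=\mathbb C{\bf1}$, so no additional central or commutant projections are available to separate the labels. Apart from this point the argument is immediate.
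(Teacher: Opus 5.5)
Your proposal is correct and follows the paper's proof. The paper argues that $Z(M_3(\mathbb C))=\mathbb C{\bf1}$ contains no pair of nonzero orthogonal projections, which contradicts Lemma~\ref{lem:pgl-reconstructive-charge-sector-separation}, so $\ker\zeta_F\neq0$ and $\delta_{\rm fid}(F)>0$. Your dimension count only makes the $\ker\zeta_F\neq0$ step explicit, and the appeal to the commutant condition is unnecessary, since the codomain of $\zeta_F$ is $Z(\mathfrak A_F)$ by definition.
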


\begin{proof}
The center of $M_3(\mathbb C)$ is $\mathbb C{\bf1}$ and contains no pair of nonzero orthogonal projections. Lemma~\ref{lem:pgl-reconstructive-charge-sector-separation} requires such a pair when both principal charges are retained with vanishing fidelity defect. Hence $\ker\zeta_F\neq0$, and \eqref{eq:fsr-charge-fidelity-defect} gives $\delta_{\rm fid}(F)>0$.
\end{proof}

The central separation used below is the faithful marked image of the distinct Casimir spectral projections in \eqref{eq:pgl-principal-casimir-projectors}; its central marking belongs to entry B. Borel-Weil visibility reconstructs the least marked normal $SU(2)$ support, while the equianharmonic theta construction gives an independent rank witness.\\
~\\
Toeplitz visibility is understood equivariantly on the marked projective support. Since $p_a$ reconstructs the Casimir projection $\Pi_a$, each marked sector projection is fixed by the Borel-Weil $SU(2)$ action.\\
~\\
The structural data used below are collected in Table~\ref{tab:pgl-structural-reconstruction-assumptions}. Minimal reconstruction is fixed by Proposition~\ref{def:fsr-minimal-reconstruction-initial-object} and is not an additional structural datum. Proposition~\ref{prop:pgl-source-a2-identification} is derived from the oriented normal metric and the active source types. Projective degree fidelity is derived from Borel-Weil reconstruction; the equianharmonic rank agreement is a separate structural check. The rank-three weight-two orbit is fixed before its physical parity linearization is selected.

\begin{table}[!htbp]
\caption{Inputs of the structural reconstruction.}
\label{tab:pgl-structural-reconstruction-assumptions}
\centering
\small
\setlength{\tabcolsep}{3pt}
\renewcommand{\arraystretch}{1.10}
\begin{tabular}{p{0.08\textwidth}p{0.36\textwidth}p{0.32\textwidth}p{0.15\textwidth}}
\toprule
Entry & Mathematical content & Role & Status\\
\midrule
A & $\mathbb{CP}^1_\Gamma$, atomic $L_\Gamma$, oriented $N_\Gamma$, and nonzero $V_1[1]\oplus V_2[2]$ & primitive link and $A_2$-graded source datum & structural input\\
B & relative Gauss locality and faithful central label and coefficient reconstruction & marked separation and protected finite support & structural input\\
C & commutant-closed Clifford completion and determinant/top-form closure & chiral finite module and compact determinant carrier & structural input\\
D & marked identification of the spin-lift $\mu_2$ with the weight-two family stabilizer linearization compatible with $\Gamma_{\rm lock}$ & physical parity attachment of the derived rank-three family module & conditional lock\\
\bottomrule
\end{tabular}
\end{table}

The nearby alternatives remain useful rigidity checks. If the spherical-link codimension is changed from three, the integral line class used by the reconstruction is absent. A single $E_3$ block can make both source types Toeplitz-visible but fails the faithful two-label center by Corollary~\ref{cor:pgl-one-block-alias-exclusion}. The one-channel sets $\{1\}$ and $\{2\}$ retain only the rank-two and rank-three projective blocks, respectively, while the sets $\{1,3\}$ and $\{2,3\}$ lie outside the second-order source exhaustion. Adding the first higher principal grade gives $E_2\oplus E_3\oplus E_4$ with only the order-two finite intersection of Corollary~\ref{prop:pgl-third-order-shadow-comparison}. At fixed $I=\{1,2\}$, a composite positive degree $n>1$ gives ranks $(2n+1,n+1)$ and the finite factors of \eqref{eq:pgl-positive-degree-finite-shadow-factorization}; the $W$-side factor is then not exhausted by the primitive order-two parity comparison of Corollary~\ref{cor:pgl-parity-exhausted-degree-selection}. These comparisons are made before anomaly coefficients or physical family rank are used.

\subsection{Equianharmonic and projective carrier}
\label{subsec:universal-rank-five-support}

Let $\rho$ be the order-three Coxeter action on the lattice in \eqref{eq:pgl-a2-eisenstein-lattice}. In the equianharmonic model $E_\omega:y^2=x^3-1$ it is represented by $\rho(x,y)=(\omega x,y)$.

\begin{theorem}[Equianharmonic quotient and theta rank witness]
\label{thm:pgl-equianharmonic-theta-realization}
Put $L_3:=\mathcal O_E(3O)$ and $W_\theta:=H^0(E,L_3)^\rho$. The invariant linear system gives $q:E_\omega\to\mathbb P(W_\theta^\ast)$, represented in the basis $\{1,y\}$ by $q=(1:y)$, and identifies $E_\omega/\langle\rho\rangle\simeq\mathbb P(W_\theta^\ast)$. With $C_\theta:=H^0(E,L_3^{\otimes2})^\rho$, multiplication gives
\begin{equation}
C_\theta\simeq\operatorname{Sym}^2W_\theta,\qquad \dim W_\theta=2,\qquad \dim C_\theta=3.
\label{eq:pgl-equianharmonic-theta-carrier}
\end{equation}
The $\rho$-character multiplicities in $H^0(E,3O)$, $H^0(E,6O)$, and $H^0(E,7O)$ are respectively $(2,1,0)$, $(3,2,1)$, and $(3,2,2)$. A projective marking $\mathbb P(W_\theta^\ast)\simeq\mathbb{CP}^1_\Gamma$ identifies the positive line with $L_\Gamma$ and gives $q^\ast L_\Gamma\simeq L_3$; it does not transport the normal $SU(2)$ action to the theta section spaces.
\end{theorem}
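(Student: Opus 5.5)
The plan is to work entirely in the Weierstrass model $E_\omega:y^2=x^3-1$ with origin $O$ at infinity, where $\rho(x,y)=(\omega x,y)$ fixes $O$. First I will write down explicit bases of Riemann--Roch spaces. The space $H^0(E,\mathcal O(nO))$ is spanned by the monomials $x^ay^b$ with $b\in\{0,1\}$ and pole order $2a+3b\le n$, with the single exception $n=1$. Since $\rho^\ast(x^ay^b)=\omega^a x^ay^b$, each monomial is a $\rho$-eigenvector of character $\omega^{a}$, indexed by $a\bmod 3$. The multiplicities then follow by listing monomials:
\begin{itemize}
\item For $n=3$, the monomials are $\{1,x,y\}$, with characters $(1,\omega,1)$. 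This gives $(2,1,0)$.
\item For $n=6$, adding $x^2$ and $xy$ gives $\{1,x,y,x^2,xy,x^3\}$, with characters $(1,\omega,1,\omega^2,\omega,1)$. This gives $(3,2,1)$.
\item For $n=7$, adding $x^2y$ (character $\omega^2$) gives $(3,2,2)$.
\end{itemize}
The ordering convention is trivial, $\omega$, $\omega^2$. In each case the total is $n$, which agrees with Riemann--Roch.

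From this list, $W_\theta=\langle 1,y\rangle$ and $C_\theta=\langle 1,y,x^3\rangle$. Using $x^3=y^2+1$, we get $C_\theta=\langle 1,y,y^2\rangle$. The multiplication map $\operatorname{Sym}^2W_\theta\to C_\theta$ sends $1,\,y,\,y^2$ to linearly independent sections, so it is an isomorphism, and the dimensions are $2$ and $3$.

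Next I turn to the map $q=(1:y)$. The system $\langle1,y\rangle$ has no base points: $y$ has a triple pole exactly at $O$, and $1$ does not vanish there. So $q$ is a morphism of degree $3$, because $y=c$ meets $E$ in the three points $(x,c)$ with $x^3=c^2+1$, counted with multiplicity. These three points form exactly one $\langle\rho\rangle$-orbit. They are distinct when $c^2\neq-1$; when $c^2=-1$ the fiber is a single fixed point of $\rho$, and $c=\infty$ gives the fixed point $O$. Hence $q$ is $\rho$-invariant and factors through $E/\langle\rho\rangle$. The induced map $E/\langle\rho\rangle\to\mathbb P^1$ has degree $3/3=1$ between smooth curves, and is therefore an isomorphism.

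Finally, by construction $q^\ast\mathcal O_{\mathbb P^1}(1)$ is the line bundle carried by the base-point-free system $W_\theta\subset H^0(L_3)$, namely $L_3$. Under a projective marking $\mathbb P(W_\theta^\ast)\simeq\mathbb{CP}^1_\Gamma$, equation \eqref{eq:pgl-primitive-line} identifies $\mathcal O(1)$ with $L_\Gamma$, which gives $q^\ast L_\Gamma\simeq L_3$. For the last clause I will show that the marking does not transport the normal $SU(2)$ action. The only automorphisms of $E_\omega$ fixing $O$ form $\mu_6$, and translations do not preserve $3O$ up to linear equivalence unless the translation is by a $3$-torsion point. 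So the group acting compatibly on the section spaces is finite, and cannot carry a nontrivial action of the connected group $SU(2)$.

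The main obstacle is keeping the character bookkeeping consistent with the convention $\rho(x,y)=(\omega x,y)$, so that invariance picks out $\{1,y\}$ rather than $\{1,x\}$. I expect the remaining steps to be routine.
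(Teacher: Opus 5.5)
Your argument is correct and is essentially the paper's: explicit Weierstrass monomial bases with $\rho^\ast(x^ay^b)=\omega^a x^ay^b$, the relation $x^3=y^2+1$ giving $C_\theta=\langle 1,y,y^2\rangle\simeq\operatorname{Sym}^2W_\theta$, and the extra monomial $x^2y$ for degree seven. The only variation is in identifying $E_\omega/\langle\rho\rangle$: the paper notes that the invariant function field is $\mathbb C(y)$, while you use the equivalent degree count $3/3=1$.

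Your extra argument for the non-transport clause, which the paper states without proof, works, with two small corrections. First, lifting automorphisms of $(E_\omega,[3O])$ to $L_3$ adds a central $\mathbb C^\ast$, so the relevant group is finite only modulo scalars; this is harmless, because $SU(2)$ is connected and perfect. Second, in your base-point-freeness step, as sections of $L_3$ it is $y$, not $1$, that is nonvanishing at $O$.
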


\begin{proof}
The invariant function field is $\mathbb C(y)$ and $y$ has a pole of order three at $O$. The bases $\{1,x,y\}$ and $\{1,x,y,x^2,xy,x^3\}$ give the invariant parts $\operatorname{span}\{1,y\}$ and $\operatorname{span}\{1,y,y^2\}$. Multiplication identifies the latter with $\operatorname{Sym}^2\operatorname{span}\{1,y\}$, and adding $x^2y$ gives the degree-seven multiplicities. The theta-section interpretation is standard \cite{Mumford1966AbelianVarietiesI}; changes of theta frame are presentation data under Proposition~\ref{prop:fsr-protected-output-descent}.
\end{proof}

For the primitive projective link, retain independently $\mathcal R_\Gamma=\bigoplus_{k\geq0}R_{\Gamma,k}$ with $R_{\Gamma,k}=H^0(\mathbb{CP}^1_\Gamma,L_\Gamma^k)$ and $\mathsf E_\Gamma|_{R_{\Gamma,k}}=k{\bf1}$.
The source and section Euler gradings remain distinct until the marked Borel-Weil carrier is fixed.\\
~\\
With the Borel-Weil indexing, $E_q=H^0(\mathbb{CP}^1_\Gamma,L_\Gamma^{q-1})\simeq\operatorname{Sym}^{q-1}\mathbb C^2$ and $\dim E_q=q$. The homogeneous-bundle statement is represented by \cite{Bott1957HomogeneousVectorBundles}; the finite harmonic cutoff is $\operatorname{End}_0(E_q)=\bigoplus_{\ell=1}^{q-1}V_\ell$, so $V_\ell$ first occurs on $E_{\ell+1}$, as in the $\mathbb{CP}^1$ Berezin-Toeplitz cutoff \cite{BordemannMeinrenkenSchlichenmaier1994}. The finite-level convention follows \cite{Schlichenmaier2010BerezinToeplitzReview}. Since $S^2=SO(3)/SO(2)$ has isotropy Euler class two, an $SO(3)$-homogeneous line of stabilizer weight $k$ has even Chern class $c_1=\pm2k$. The primitive line $L_\Gamma\simeq\mathcal O(1)$ therefore requires the lift $SU(2)\to SO(3)$. On this lift, $E_2$ is the defining spinor, $E_3\simeq\operatorname{Sym}^2E_2$ is the complexified vector representation, and $E_3\oplus E_2\simeq N_\Gamma^{\mathbb C}\oplus S(N_\Gamma)$.\\
~\\
For a finite positive support $F=\bigoplus_qm_qE_q$, put $\mathsf N_F=\sum_{q\geq1}(q-1)P_q^F$ and measure source-degree fidelity by
\begin{equation}
R_{\rm deg}(F,\zeta_F)=\left(\zeta_F(e_\ell)-{\bf1}_{\{\ell\}}(\mathsf N_F)\right)_{\ell\in\Lambda_{\rm pr}}.
\label{eq:pgl-jet-borel-weil-degree-residual}
\end{equation}
This is the primitive-link instance of \eqref{eq:fsr-filtered-degree-fidelity} together with \eqref{eq:pgl-marked-charge-sector-support}.

\begin{lemma}[Marked Borel-Weil cutoff]
\label{lem:pgl-marked-borel-weil-rank-bound}
If an $SU(2)$-invariant marked block carries $V_\ell$ in its endomorphism algebra, its rank is at least $\ell+1$; equality gives $E_{\ell+1}$.
\end{lemma}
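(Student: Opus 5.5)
The plan is to decompose the marked block into $SU(2)$-irreducibles and read off which spins can occur in its endomorphism algebra. Let the block be a finite-dimensional $SU(2)$-representation $B$ of rank $r$. Since $B$ sits on the marked projective support, Borel-Weil visibility lets us write it as $B\simeq\bigoplus_q m_qE_q$ with $E_q\simeq\operatorname{Sym}^{q-1}\mathbb C^2$ of spin $j_q=(q-1)/2$. Then
\[
\operatorname{End}(B)\simeq\bigoplus_{q,q'}m_qm_{q'}\,E_q\otimes E_{q'}^\ast .
\]
By the Clebsch-Gordan rule, $E_q\otimes E_{q'}^\ast$ contains $V_\ell$ exactly when $|q-q'|/2\leq\ell\leq(q+q'-2)/2$, with $\ell$ of the right parity. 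If $V_\ell$ occurs, some pair $(q,q')$ therefore satisfies $q+q'\geq2\ell+2$, and so $\max(q,q')\geq\ell+1$ or $q+q'\geq 2\ell+2$ with both summands present.

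In the rank bound, the case $q=q'$ is immediate: $V_\ell\subset\operatorname{End}_0(E_q)=\bigoplus_{k=1}^{q-1}V_k$ forces $q\geq\ell+1$, hence $r\geq\ell+1$. For $q\neq q'$ both summands lie in $B$, so $r\geq q+q'\geq2\ell+2>\ell+1$. In every case $r\geq\ell+1$.

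For the equality case, suppose $r=\ell+1$. The cross-term case is then excluded, since it would give $r\geq2\ell+2>\ell+1$ for $\ell\geq1$. So $V_\ell$ must come from a diagonal term $\operatorname{End}(E_q)$ with $q\geq\ell+1$. Since $\dim E_q=q\leq r=\ell+1$, this forces $q=\ell+1$ with multiplicity one and no other summands. Hence $B\simeq E_{\ell+1}$, and the marking is unique up to marking-preserving unitary equivalence by Proposition~\ref{prop:fsr-finite-core-uniqueness}.

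The main obstacle is the first step: justifying that a "marked block" is a genuine $SU(2)$-representation rather than a projective $SO(3)$ one. I would handle it through the lift $SU(2)\to SO(3)$ required by $L_\Gamma\simeq\mathcal O(1)$, noted just before the lemma. One might worry that the block need not be irreducible, but the cross-term estimate above covers the reducible case, so irreducibility need not be assumed.
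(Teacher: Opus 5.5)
Your proof is correct and is essentially the paper's argument: the lemma rests on the finite harmonic cutoff $\operatorname{End}_0(E_q)=\bigoplus_{k=1}^{q-1}V_k$ stated just before it (the Clebsch--Gordan/$\mathbb{CP}^1$ Berezin--Toeplitz cutoff, on the $SU(2)$ lift forced by $L_\Gamma\simeq\mathcal O(1)$), so $V_\ell$ first appears on $E_{\ell+1}$. Your cross-term bound $r\geq q+q'\geq 2\ell+2$ makes explicit the reducible case that the paper leaves implicit, and the closing appeal to Proposition~\ref{prop:fsr-finite-core-uniqueness} is unnecessary, since $B\simeq E_{\ell+1}$ already follows from the dimension count.
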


\begin{theorem}[Minimal marked projective carrier]
\label{thm:pgl-rank-five-support}
\label{thm:pgl-minimal-self-reconstructing-boundary-carrier}
Let $\Gamma$ satisfy entries A and B of Table~\ref{tab:pgl-structural-reconstruction-assumptions}. The least faithful marked Borel-Weil support is
\begin{equation}
V_\Gamma=C_\Gamma\oplus W_\Gamma,\qquad C_\Gamma=E_3=R_{\Gamma,2},\qquad W_\Gamma=E_2=R_{\Gamma,1}.
\label{eq:pgl-rank-five-carrier}
\end{equation}
Hence $\rank V_\Gamma=5$ and \eqref{eq:pgl-jet-borel-weil-degree-residual} vanishes.
\end{theorem}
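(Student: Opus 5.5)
The proof combines a lower bound on each labeled block with a construction that attains it. I will invoke entry A, which gives two nonzero principal grades $V_1[1]$ and $V_2[2]$, and entry B, which gives faithful central marking with vanishing fidelity defect. By Lemma~\ref{lem:pgl-reconstructive-charge-sector-separation}, the marked images $p_1,p_2$ are nonzero orthogonal central projections. Each coefficient space $\iota_{F,a}(V_a)$ lies in its own corner $p_a\mathfrak A_Fp_a$. So any admissible support $F$ splits as $F=p_1F\oplus p_2F$, and each corner must carry $V_a$ in its endomorphism algebra. Corollary~\ref{cor:pgl-one-block-alias-exclusion} rules out placing both labels in a single $E_3$. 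This is the step that converts "both types visible" into "two distinct labeled blocks".

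For the lower bound, I apply Lemma~\ref{lem:pgl-marked-borel-weil-rank-bound} to each corner. The projections $p_a$ are $SU(2)$-fixed, as noted after Lemma~\ref{lem:pgl-reconstructive-charge-sector-separation}, so each corner is an $SU(2)$-invariant marked block. The Lemma then gives $\operatorname{rank}(p_1F)\ge2$ and $\operatorname{rank}(p_2F)\ge3$. I would argue the equality cases separately. Rank exactly $\ell+1$ with $V_\ell\subset\operatorname{End}_0$ forces the irreducible $E_{\ell+1}$, because a reducible rank-$(\ell+1)$ $SU(2)$-module $\bigoplus m_qE_q$ with all $q\le\ell$ has $\operatorname{End}$ containing only spins up to $\ell-1$. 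Multiplicity one on each labeled simple block follows from Proposition~\ref{prop:fsr-finite-core-uniqueness}. By the same proposition, any extra unsourced copy (for instance an $E_1$ summand) is absent, since deleting it keeps admissibility. This yields $\operatorname{rank}V_\Gamma\ge5$, with equality exactly for $E_2\oplus E_3$, unique up to marking-preserving unitary equivalence. Proposition~\ref{def:fsr-minimal-reconstruction-initial-object} identifies this least support with the protected closure.

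For attainment, the Borel-Weil indexing gives $E_q=H^0(\mathbb{CP}^1_\Gamma,L_\Gamma^{q-1})$. Atomicity, $L_\Gamma\simeq\mathcal O(1)$, then gives $E_2=R_{\Gamma,1}$ and $E_3=R_{\Gamma,2}$. The harmonic cutoff $\operatorname{End}_0(E_q)=\bigoplus_{\ell<q}V_\ell$ shows that $V_1\subset\operatorname{End}_0(E_2)$ and $V_2\subset\operatorname{End}_0(E_3)$. Setting $p_1=P_{E_2}$, $p_2=P_{E_3}$ and taking the $\iota_{F,a}$ to be these inclusions, with their orthogonal-projection reconstructions, makes $\delta_{\rm fid}=0$. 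The degree residual is then checked directly. Since $\mathsf N_F=\sum(q-1)P_q^F$, it equals $1$ on $E_2$ and $2$ on $E_3$, so ${\bf1}_{\{\ell\}}(\mathsf N_F)=P_{E_{\ell+1}}=p_\ell=\zeta_F(e_\ell)$ and \eqref{eq:pgl-jet-borel-weil-degree-residual} vanishes.

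I expect the main obstacle to be the equality case. One must exclude a rank-three corner for label $2$ that is reducible, or a larger corner that is nonminimal but still has vanishing fidelity. The cleanest route is the highest-spin argument: $V_\ell$ appears in $\operatorname{End}(\bigoplus m_qE_q)$ only if some pair of summands has $q+q'-2\ge\ell$, and this needs $\max q\ge\ell+1$. Combined with the rank bound, this forces the irreducible block. Removal of any further summands then falls to Proposition~\ref{prop:fsr-finite-core-uniqueness}.
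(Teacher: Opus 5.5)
Your argument is essentially the paper's: Lemma~\ref{lem:pgl-marked-borel-weil-rank-bound} gives the ranks $2$ and $3$; Corollary~\ref{cor:pgl-one-block-alias-exclusion}, via the central separation lemma, forbids merging the two labels; and Proposition~\ref{prop:fsr-finite-core-uniqueness} strips the unmarked remainder and multiplicity, with the Borel--Weil attainment and the degree-residual check written out where the paper leaves them implicit. One slip to fix: in your last paragraph the condition for $V_\ell\subset\operatorname{Hom}(E_q,E_{q'})$ should be $q+q'-2\ge 2\ell$ (top spin $(q+q'-2)/2\ge\ell$), not $q+q'-2\ge\ell$; only the former gives $\max q\ge\ell+1$, and it is the condition your second paragraph correctly uses.
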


\begin{proof}
Lemma~\ref{lem:pgl-marked-borel-weil-rank-bound} gives the least ranks $2$ and $3$ carrying the retained $V_1$ and $V_2$ types. The two labels require nonzero orthogonal central sectors by Corollary~\ref{cor:pgl-one-block-alias-exclusion}, so they cannot be merged into one simple block. Initiality and Proposition~\ref{prop:fsr-finite-core-uniqueness} remove unmarked remainder and representation multiplicity.
\end{proof}
The independently obtained theta ranks in \eqref{eq:pgl-equianharmonic-theta-carrier} agree with \eqref{eq:pgl-rank-five-carrier}; this agreement is retained as an equianharmonic-projective consistency check.

On the least separated support, the pair $\mathcal O(1)\oplus\mathcal O(2)$ is a twist of the standard exceptional pair on $\mathbb P^1$ \cite{Beilinson1978CoherentSheaves}. Multiplication by a basis of $W$ gives the two-arrow Kronecker witness $W\rightrightarrows C$ with dimension vector $(w,c)=(2,3)$, Euler value $w^2+c^2-2wc=1$, and stability line generated by $(-c,w)$ in the convention of \cite{King1994Moduli}. The same primitive branch is characterized by $\operatorname{Ext}^1(\mathcal O(2n),\mathcal O(n))\simeq H^1(\mathbb P^1,\mathcal O(-n))=0$ only at positive degree $n=1$. The Euler value is independently recovered by the tubular comparison of Lemma~\ref{lem:pgl-tubular-two-block-rigidity}; the multiplication and resolution maps are retained for the low-high comparison below.

\begin{corollary}[Equianharmonic-projective witness descent]
\label{cor:pgl-borel-weil-witness-descent}
On the minimal marked stratum, presentations inducing the same split, level operator, positive line class, order-six degree shift, and exported coefficient invariants are protected-equivalent. Section frames, coefficient embeddings, and multiplication intertwiners not retained downstream are removed by Proposition~\ref{prop:fsr-protected-output-descent}.
\end{corollary}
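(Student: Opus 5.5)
The plan is to treat the corollary as an instance of protected descent, Proposition~\ref{prop:fsr-protected-output-descent}, applied on the minimal marked stratum fixed by Theorem~\ref{thm:pgl-rank-five-support}. First I will fix what a ``presentation'' of the minimal stratum consists of. It is the carrier $V_\Gamma=C_\Gamma\oplus W_\Gamma$ together with auxiliary choices: section frames of $R_{\Gamma,1},R_{\Gamma,2}$, coefficient embeddings $\iota_{F,a}$, the Kronecker multiplication intertwiners $W\rightrightarrows C$, and a theta frame for $W_\theta,C_\theta$ from Theorem~\ref{thm:pgl-equianharmonic-theta-realization}. I will then separate the data exported downstream from the rest. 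The exported data are the split, the level operator $\mathsf N_F$, the class $c_1(L_\Gamma)=1$, the order-six degree shift, and the coefficient invariants.

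Next I will show that two presentations with equal exported data differ by a marking-preserving unitary. By Proposition~\ref{prop:fsr-finite-core-uniqueness}, each labeled block $E_2$ and $E_3$ occurs with multiplicity one. Equality of the split and of $\mathsf N_F$ identifies the central projections $p_1,p_2$ through \eqref{eq:pgl-jet-borel-weil-degree-residual}. Equality of the line class identifies each block as $H^0(\mathbb{CP}^1_\Gamma,L_\Gamma^k)$ up to an $SU(2)$-equivariant isomorphism. By Schur's lemma, such an isomorphism is a scalar on each irreducible block. After unitary normalization these phases lie in the central unitary group, so they preserve the marking. Vanishing of the fidelity defect \eqref{eq:fsr-charge-fidelity-defect} forces $\rho_{F,a}\circ\iota_{F,a}=\operatorname{id}$, so the coefficient embeddings agree up to the same block phases once the coefficient invariants match.

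The remaining data are the multiplication intertwiners and the theta frames. Multiplication $W\otimes W\to C$ is the $SU(2)$-equivariant map $\operatorname{Sym}^2E_2\to E_3$, which is unique up to scalar, and that scalar is absorbed by the block phases together with a positive rescaling. The relative rescaling is an invariant only if it is exported, and it is not, so it counts as presentation data. The theta side is covered by the remark already made in the proof of Theorem~\ref{thm:pgl-equianharmonic-theta-realization}: frame changes act on $W_\theta$ by $GL_2$ and induce the symmetric-square action on $C_\theta$. These changes preserve the rank agreement and the order-six shift $L_3\mapsto L_3^{\otimes2}$.

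Finally I will apply Proposition~\ref{prop:fsr-protected-output-descent} to delete all data not exported downstream: section frames, embeddings and intertwiners. Since every retained invariant has already been matched, the presentations are protected-equivalent.

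The main obstacle is the order-six degree shift. It lives on the theta side, and the marking $\mathbb P(W_\theta^\ast)\simeq\mathbb{CP}^1_\Gamma$ explicitly does not transport the $SU(2)$ action. I would handle it by working only with $q^\ast L_\Gamma\simeq L_3$ and the multiplication isomorphism $C_\theta\simeq\operatorname{Sym}^2W_\theta$. Both of these are preserved by any frame change, so the shift is frame-independent without any appeal to equivariance.
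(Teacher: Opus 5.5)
Your route is the paper's. The corollary has no separate proof there: it follows from both halves of Proposition~\ref{prop:fsr-finite-core-uniqueness}, namely multiplicity-one uniqueness up to marking-preserving unitary equivalence (already used in Theorem~\ref{thm:pgl-rank-five-support}) and removal of presentation data once the downstream invariants are exported. Your conclusion is right, but one intermediate step fails as written.

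\textbf{Coefficient embeddings.} The block phases you get from Schur's lemma are central unitaries $u=e^{i\alpha}p_2+e^{i\beta}p_1$. Conjugation by $u$ is the identity on $p_a\mathfrak A_Fp_a$, so these phases cannot relate two coefficient embeddings at all. The freedom that actually remains after equivariance is the fidelity rescaling
\[
(\iota_{F,a},\rho_{F,a})\mapsto(\lambda_a\iota_{F,a},\lambda_a^{-1}\rho_{F,a}),\qquad \lambda_a\in\mathbb C^\times .
\]
It is exhausted by $\lambda_a$ because $V_a$ occurs once in $\operatorname{End}_0(E_{a+1})$. Vanishing of \eqref{eq:fsr-charge-fidelity-defect} does not constrain $\lambda_a$, since the defect only sees $\rho_{F,a}\circ\iota_{F,a}$. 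The paper records exactly this freedom near \eqref{eq:alena-faithful-principal-hessian-image}. So the embeddings are not matched by a marking-preserving unitary of the carrier. Either the exported coefficient invariants pin them, or they are presentation data discarded by Proposition~\ref{prop:fsr-protected-output-descent}, which is what the corollary's second sentence says. Drop the claim that equal exported data already yield a carrier unitary matching the embeddings, and let the final descent step do this work. For the multiplication map $W\otimes W\to C$ the block phases do act, multiplying it by $e^{i(\alpha-2\beta)}$, so that part of your argument is fine.

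\textbf{The order-six degree shift.} You have misread this datum. It is not the theta-side doubling $L_3\mapsto L_3^{\otimes2}$ from degree $3$ to degree $6$. It is the order-six datum of the later tubular discussion: the generator $[L_\Gamma]$ of $\operatorname{Pic}(\mathbb{CP}^1_\Gamma)/6\operatorname{Pic}(\mathbb{CP}^1_\Gamma)$ sent to $\vec\omega_{\mathbb X}$, which has order six. Equivalently, it is the $\mu_6$ shadow of $z^{H_\Gamma^{\rm deg}}$ with its oriented six-cycle shift $S_6$. Its one non-derived ingredient is the orientation choice (``after the primitive degree orientation is fixed''), which is why the corollary lists it among the matched data. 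In the corollary it is a hypothesis to be matched, not a quantity whose frame-independence you must prove. Your ``main obstacle'' paragraph is therefore unnecessary: theta frames are simply presentation data, as the proof of Theorem~\ref{thm:pgl-equianharmonic-theta-realization} already states.
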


\begin{proposition}[Integer-source block parity and low-block rigidity]
\label{prop:pgl-low-block-rigidity}
The mixed block $\operatorname{Hom}(E_2,E_3)\simeq V_{1/2}\oplus V_{3/2}$ contains no integer principal type, while $\operatorname{End}_0(E_2\oplus E_3)$ contains no $V_\ell$ with $\ell\geq3$. Retained integer principal channels therefore act block-diagonally and every higher principal moment has zero direct compression to \eqref{eq:pgl-rank-five-carrier}.
\end{proposition}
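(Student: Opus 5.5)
The plan is to argue entirely through Clebsch--Gordan decompositions for the normal $SU(2)$ action on the Borel--Weil blocks. In the indexing already fixed, $E_2\simeq V_{1/2}$ and $E_3\simeq\operatorname{Sym}^2E_2\simeq V_1$.

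For the mixed block I will write $\operatorname{Hom}(E_2,E_3)\simeq E_2^\ast\otimes E_3\simeq V_{1/2}\otimes V_1$, using self-duality of $SU(2)$ representations. Clebsch--Gordan then gives $V_{1/2}\oplus V_{3/2}$. Both summands have half-integer spin, so neither is an integer principal type $V_\ell$. The same holds for $\operatorname{Hom}(E_3,E_2)$. As a consistency check I will count dimensions: $6=2+4$.

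For the diagonal blocks I will use the finite harmonic cutoff $\operatorname{End}_0(E_q)=\bigoplus_{\ell=1}^{q-1}V_\ell$ stated before Lemma~\ref{lem:pgl-marked-borel-weil-rank-bound}. This gives $\operatorname{End}_0(E_2)\simeq V_1$ and $\operatorname{End}_0(E_3)\simeq V_1\oplus V_2$. The full algebra $\operatorname{End}(E_2\oplus E_3)$ is the sum of the two diagonal endomorphism algebras and the two mixed blocks. Its integer-spin content is therefore $2V_0\oplus2V_1\oplus V_2$; the trace-free part drops one scalar and keeps one $V_0$ for the relative trace. In every case the highest integer spin is $2$, so no $V_\ell$ with $\ell\geq3$ occurs.

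The consequences follow by Schur's lemma. An $SU(2)$-equivariant coupling of an integer principal type $V_\ell$ into $\operatorname{End}(E_2\oplus E_3)$ must land in the $V_\ell$-isotypic component. That component lies inside the block-diagonal part, because the off-diagonal blocks carry only half-integer spins. Hence retained integer channels act block-diagonally. For $\ell\geq3$ the isotypic component is zero, so the direct compression $P\,\iota(V_\ell)\,P$ onto the carrier \eqref{eq:pgl-rank-five-carrier} vanishes.

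The main obstacle is interpretive rather than computational. I must make precise that "direct compression" means the equivariant first-order projection onto $\operatorname{End}(V_\Gamma)$, as distinct from second-order Schur/Feshbach contributions through eliminated sectors. I will state this explicitly and defer the higher-order terms to the Feshbach reduction of Lemma~\ref{lem:fsr-penalty-descent}.
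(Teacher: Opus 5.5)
Your Clebsch--Gordan argument is correct and is essentially the one the statement itself encodes: the paper gives no separate proof. It relies only on the decomposition $V_{1/2}\otimes V_1\simeq V_{1/2}\oplus V_{3/2}$ of the mixed block and the harmonic cutoff $\operatorname{End}_0(E_q)=\bigoplus_{\ell=1}^{q-1}V_\ell$, combined with Schur's lemma exactly as you do. Your reading of ``direct compression'' as the first-order equivariant term $PT_\ell P$, with second-order contributions left to the Feshbach reduction, also matches how the paper uses the result in Proposition~\ref{prop:bl-higher-moment-schur-suppression}.
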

\subsection{The centered carrier line and degree calculus}

After equianharmonic-projective witness descent put $C:=C_\Gamma$, $W:=W_\Gamma$, $V:=C\oplus W$, and $(c,w)=(3,2)$. The retained level is $D_\Gamma=2P_C+P_W$, and the centered degree is
\begin{equation}
H_\Gamma^{\rm deg}=5D_\Gamma-8{\bf1}=2P_C-3P_W,
\label{eq:pgl-centered-borel-weil-degree}
\end{equation}
Let $(s_1,s_2)$ be an orthonormal basis of $W=R_{\Gamma,1}$ and let $A_i:W\to C\simeq\operatorname{Sym}^2W$ be multiplication by $s_i$ in $\mathcal R_\Gamma$. For the induced symmetric-power metric,
\begin{equation}
\sum_{i=1}^2A_iA_i^\dagger={\bf1}_C,\qquad
\sum_{i=1}^2A_i^\dagger A_i=\frac32{\bf1}_W,\qquad
\mu_K:=P_C-\frac32P_W=\frac12H_\Gamma^{\rm deg}.
\label{eq:pgl-kronecker-moment-frame}
\end{equation}
Thus the determinant-centered Rees line, the Kronecker stability line, and the normalized multiplication moment direction coincide as one real line; \eqref{eq:pgl-kronecker-moment-frame} fixes the metric normalization of the last reading. Only this exported line is retained after witness descent, while the ordered multiplication arrows are presentation data. Hence
\begin{equation}
\widehat R_\Gamma(t)=\left(\det t^{D_\Gamma}\right)^{-1/5}t^{D_\Gamma}=t^{H_\Gamma^{\rm deg}/5}.
\label{eq:pgl-unimodular-rees-flow}
\end{equation}
Stabilization of the split top line gives the compact Levi stabilizer
\begin{equation}
\operatorname{Aut}_{\rm u,gr,det}(V,D_\Gamma)=S(U(C)\times U(W)),
\label{eq:pgl-graded-determinant-automorphism-group}
\end{equation}
and hence
\begin{equation}
S(U(C)\times U(W))\simeq S(U(3)\times U(2)).
\label{eq:pgl-split-unimodular-group}
\end{equation}

\begin{proposition}[Centered marked grading calculus]
\label{prop:pgl-centered-marked-grading}
\label{prop:pgl-two-block-degree-calculus}
Let $V=\bigoplus_jV_j$ be a marked Hermitian carrier with $d_j=\dim V_j$ and distinct integral levels $\ell_j$. Put $r=\sum_jd_j$, $T=\sum_jd_j\ell_j$, $h_j=r\ell_j-T$, and divide $(h_j)$ by its common divisor to obtain primitive integers $a_j$. Then $H=\sum_ja_jP_j$ is the primitive centered cocharacter generator, the determinant-normalized Rees flow is generated by the same line, and $z\mapsto\sum_jz^{a_j}P_j$ meets $\prod_jSU(V_j)$ in a cyclic group of order $\gcd_j(d_j|a_j|)$. For a coprime two-block split of levels $(2,1)$ and ranks $(c,w)$ this gives $H=wP_C-cP_W$, finite order $N=cw$, $\operatorname{Tr}_V(H^2)=N(c+w)$, $\dim\Lambda^{\rm even}V=2^{c+w-1}$, rational degree $-H/N$ with block weights $-1/c,1/w$, and Abelian-to-non-Abelian trace ratio $2(c+w)/N$.
\end{proposition}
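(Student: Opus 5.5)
The argument is finite linear algebra on the marked carrier $V=\bigoplus_jV_j$. I would handle the general statements first and then specialize to two blocks.

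\emph{Centering and primitivity.} The operator $L=\sum_j\ell_jP_j$ has trace $T$. Hence $rL-T{\bf1}=\sum_jh_jP_j$ is traceless, and it is the unique traceless element of the real span of $L$ and ${\bf1}$ up to scale. Dividing by $g=\gcd_j h_j$ gives integral exponents $a_j$. Since the $\ell_j$ are distinct, not all $h_j$ vanish, and $\sum_j d_ja_j=0$. A one-parameter subgroup $z\mapsto\sum_j z^{b_j}P_j$ of the centralizer of the marking is a cocharacter lying in the traceless line exactly when $(b_j)$ is an integer multiple of $(a_j)$. So $H$ is the primitive generator.

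\emph{Rees flow.} I would compute as in \eqref{eq:pgl-unimodular-rees-flow}. Since $\det t^{L}=t^{T}$, we get $(\det t^L)^{-1/r}t^L=t^{(rL-T)/r}=t^{gH/r}$, which is a flow generated by the same line.

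\emph{Intersection order.} Here $\sum_j z^{a_j}P_j$ lies in $\prod_jSU(V_j)$ iff $z^{d_ja_j}=1$ for every $j$. Blocks with $a_j=0$ impose nothing. The solution set in $U(1)$ is therefore $\mu_{m}$ with $m=\gcd_j(d_j|a_j|)$, taking the gcd over the nonzero entries. I would note this convention explicitly, since this is the step where the edge case has to be handled. Injectivity of $z\mapsto\sum_j z^{a_j}P_j$ is not needed, because the statement concerns the image meeting the subgroup.

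\emph{Two-block specialization.} Take levels $(2,1)$ and ranks $(c,w)$ with $\gcd(c,w)=1$. Then $r=c+w$ and $T=2c+w$, so $h_C=2r-T=w$ and $h_W=r-T=-c$. Coprimality makes these primitive, so $H=wP_C-cP_W$.

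The remaining quantities follow directly:
\begin{itemize}
\item The order is $\gcd(cw,wc)=cw=N$.
\item $\operatorname{Tr}H^2=cw^2+wc^2=N(c+w)$.
\item $\dim\Lambda^{\rm even}V=2^{c+w-1}$ holds for $c+w\geq1$.
\item $-H/N=-P_C/c+P_W/w$ has block weights $-1/c$ and $1/w$.
\end{itemize}

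For the trace ratio, I need to fix the normalization, and I expect this to be the one genuine obstacle. I would read ``Abelian'' as the $U(1)$ generator $-H/N$ and ``non-Abelian'' as the standard fundamental trace normalization $\tfrac12$ of $\mathfrak{su}(c)$ and $\mathfrak{su}(w)$. The Abelian trace is $\operatorname{Tr}(H/N)^2=(c+w)/N$. Dividing by $\tfrac12$ gives $2(c+w)/N$.

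I would state this convention explicitly, matching the later use $k_Y=5/3$ at $(c,w)=(3,2)$.
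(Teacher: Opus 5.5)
Your route is the direct finite computation the paper relies on. The paper states this proposition without a separate proof and reads its ingredients off the same way: the Rees-flow identity \eqref{eq:pgl-unimodular-rees-flow}, and the exterior-character traces behind $k_Y$. Your centering, Rees-flow, two-block, $\operatorname{Tr}_V(H^2)$ and block-weight computations are all correct.

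\textbf{The wrong step.} Your remark that injectivity is not needed is backwards. What you computed is the preimage $\lambda^{-1}\bigl(\prod_jSU(V_j)\bigr)=\mu_m$, where $\lambda(z)=\sum_jz^{a_j}P_j$. The statement, and its later use $\lambda_\Gamma(U(1))\cap(SU(C)\times SU(W))\simeq\mu_6$, concerns the image. That intersection is $\lambda(\mu_m)\simeq\mu_m/(\mu_m\cap\ker\lambda)$. Primitivity is exactly what makes its order equal to $m$.

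The preimage count alone fails for a non-primitive generator. The map $z\mapsto\sum_jz^{2a_j}P_j$ has the same image torus, so the same intersection of order $m$, but its preimage is $\mu_{2m}$.

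Replace the remark with the one-line argument. Since every $P_j\neq0$, $\lambda(z)={\bf1}$ iff $z^{a_j}=1$ for all $j$. Because $\gcd_ja_j=1$, you can write $1=\sum_jx_ja_j$, and then $z=\prod_j(z^{a_j})^{x_j}=1$. So $\ker\lambda$ is trivial and $\mu_m$ maps isomorphically onto the intersection.

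\textbf{Two smaller points.}
\begin{itemize}
\item Not all $h_j$ vanish only if there are at least two blocks. With one block $h_1=0$ and the gcd division is undefined, so state this assumption.
\item The paper's $k_Y$ is an exterior-trace ratio on $\Lambda^{\rm even}V$: it compares $\operatorname{Tr}(Y^2)=10/3$ with $I_2=I_3=2$. Your $V$-level convention gives the same number, but you should derive that rather than assert it. For a traceless $A\in\mathfrak u(V)$ with $r=\dim V\geq3$, factorizing over modes gives $\operatorname{Tr}_{\Lambda^{\rm even}V}(d\Gamma(A)^2)=2^{r-3}\operatorname{Tr}_V(A^2)$. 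The factor $2^{r-3}$ is common to the Abelian generator $-H/N$ and to the non-Abelian generators, so it cancels in the ratio. Adding this line turns your normalization choice into a proof of the $2(c+w)/N$ exterior-trace normalization.
\end{itemize}
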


For the selected levels $(2,1)$ set $r=c+w$, $T=2c+w$, $T^\ast=c+2w$, and $N=cw$; hence $(r,T,T^\ast,N)=(5,8,7,6)$. The centered line is used with the multiplication normalization \eqref{eq:pgl-kronecker-moment-frame}, the determinant-centered relative-phase normalization with denominator $r$, and the charge normalization with denominator $N$. The latter two are read in \eqref{eq:alena-determinant-centered-phase-weights} and \eqref{eq:pgl-hypercharge-degree}. The level reversal $(2,1)\mapsto(1,2)$ sends the centered generator to its negative and exchanges $T$ with $T^\ast$.\
~\\
For the selected split the covering action gives
\begin{equation}
S(U(3)\times U(2))\simeq\frac{SU(3)_c\times SU(2)_L\times U(1)_Y}{\mathbb Z_6}.
\label{eq:pgl-standard-model-global-form}
\end{equation}
This is the effective global carrier form. The finite $\mathbb Z_6$ kernel is the global-form datum controlling the embedding of electric and magnetic line-operator lattices; it is not the full cocharacter lattice. Its line-operator sensitivity is discussed in \cite{AharonySeibergTachikawa2013LineOperators}, while the Standard-Model cocharacter reading is compared with \cite{Tong2017LineOperators}. The filtered-fiber-functor comparison of \eqref{eq:pgl-unimodular-rees-flow} is represented by \cite{Ziegler2015FilteredFiberFunctors}. If $t\in\mathbb Z_3$ is color triality and $p\in\mathbb Z_2$ weak parity, descent through \eqref{eq:pgl-standard-model-global-form} is $Y\equiv p/2-t/3\pmod{\mathbb Z}$.

\subsection{Fock completion and Levi-spinor readings}

Let $c$ be an odd action of $V\oplus V^*$ on a finite graded module. The CAR residual $R_{\rm ext}(c)=(\{c(v),c(w)\},\{c(\alpha),c(\beta)\},\{c(v),c(\alpha)\}-\alpha(v){\bf1})$ closes to $\operatorname{Cl}(V\oplus V^*)$.
\begin{proposition}[Commutant-closed Clifford completion]
\label{prop:pgl-reduced-clifford-completion}
Every finite graded solution is $\Lambda^\bullet V\otimes\mathbb C^m$; \eqref{eq:fsr-commutant-closure} fixes $m=1$.
\end{proposition}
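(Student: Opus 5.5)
The plan is to reduce the statement to the uniqueness of irreducible representations of the Clifford algebra of a split quadratic space, and then to apply the commutant condition \eqref{eq:fsr-commutant-closure}.

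First, the vanishing of $R_{\rm ext}(c)$ means that the pairing $\langle v+\alpha,w+\beta\rangle=\alpha(w)+\beta(v)$ is realized by anticommutators: $\{c(v),c(w)\}=0$, $\{c(\alpha),c(\beta)\}=0$ and $\{c(v),c(\alpha)\}=\alpha(v){\bf1}$. Hence any odd action of $V\oplus V^\ast$ with vanishing residual factors through $\operatorname{Cl}(V\oplus V^\ast)$, with this nondegenerate split form of dimension $2r$, $r=\dim V=5$. For the exterior model we take $c(v)=v\wedge\cdot$ and $c(\alpha)=\iota_\alpha$ on $\Lambda^\bullet V$. Counting dimensions, $2^{2r}=(2^r)^2$, and the standard irreducibility of the Fock module then give
\[
\operatorname{Cl}(V\oplus V^\ast)\simeq\operatorname{End}(\Lambda^\bullet V).
\]
Irreducibility follows because the vacuum ${\bf1}\in\Lambda^0V$ is cyclic and is the unique line annihilated by all $c(\alpha)$.

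Second, we treat an arbitrary finite graded module $M$. The algebra $\operatorname{Cl}(V\oplus V^\ast)\simeq M_{2^r}(\mathbb C)$ is simple, so $M$ is semisimple and isomorphic to $\Lambda^\bullet V\otimes\mathbb C^m$. The explicit identification uses the vacuum space $M_0=\bigcap_\alpha\ker c(\alpha)$, which is nonzero because $c(\alpha)$ generate a nilpotent algebra: products of more than $r$ of them vanish. One has $m=\dim M_0$, and $\Lambda^\bullet V\otimes M_0\to M$, $v_1\wedge\dots\wedge v_k\otimes x\mapsto c(v_1)\cdots c(v_k)x$, is an isomorphism. The grading compatibility is the subtle point. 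If $c$ is odd for the $\mathbb Z_2$ grading, $M_0$ is a graded subspace, so $\mathbb C^m$ inherits a grading $\mathbb C^{m_+|m_-}$. The decomposition is therefore graded after a possible parity shift on each multiplicity summand. We would state the result as a graded isomorphism with the multiplicity space carrying its induced grading.

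Third, we impose \eqref{eq:fsr-commutant-closure}. The represented algebra $\pi(\operatorname{Cl})=\operatorname{End}(\Lambda^\bullet V)\otimes{\bf1}_m$ has commutant ${\bf1}\otimes M_m(\mathbb C)$, while its center is $\mathbb C{\bf1}$. Equality of commutant and center forces $M_m(\mathbb C)=\mathbb C$, that is $m=1$. This is the single-block case of Proposition~\ref{prop:fsr-finite-core-uniqueness}, which can be cited directly.

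The main obstacle is the second step, in the graded setting. One must check that commutant closure is taken for the ungraded algebra, or else, if graded commutants are meant, that odd endomorphisms of $\mathbb C^{m_+|m_-}$ are also excluded. In either case the commutant contains $\operatorname{End}(\mathbb C^{m_+|m_-})$, so $m_++m_-=1$, and the parity shift is presentation data removed by Proposition~\ref{prop:fsr-protected-output-descent}.
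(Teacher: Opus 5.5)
Your proposal is correct and follows the route the paper intends. The paper gives no separate proof: it relies on the Atiyah--Bott--Shapiro classification (the split Clifford algebra $\operatorname{Cl}(V\oplus V^\ast)\simeq M_{32}(\mathbb C)$ has the Fock module $\Lambda^\bullet V$ as its unique irreducible module) and then on the single-block case of Proposition~\ref{prop:fsr-finite-core-uniqueness} to obtain $m=1$ from \eqref{eq:fsr-commutant-closure}; your vacuum-space construction and your remark on parity-shifted multiplicity summands $\mathbb C^{m_+|m_-}$ only make explicit what the paper leaves implicit.
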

The convention is that of \cite{AtiyahBottShapiro1964CliffordModules}. Fixing even chirality gives
\begin{equation}
F_\Gamma^{\rm even}=\Lambda^{\rm even}V,
\qquad
\dim F_\Gamma^{\rm even}=16.
\label{eq:pgl-even-exterior-package}
\end{equation}
This is the standard $S(U(3)\times U(2))\subset SU(5)\subset\operatorname{Spin}(10)$ exterior-spinor package represented by \cite{BaezHuerta2010GUTAlgebra}. Its normalized CAR trace is
\begin{equation}
\frac18\operatorname{Tr}_{F_\Gamma^{\rm even}}\!\left(\varepsilon(e_i)^\dagger\varepsilon(e_j)\right)=\frac18\operatorname{Tr}_{F_\Gamma^{\rm even}}\!\left(\iota(e_i^*)^\dagger\iota(e_j^*)\right)=\delta_{ij},
\label{eq:pgl-canonical-clifford-trace-metric}
\end{equation}
with vanishing mixed trace.\
~\\
The charge normalization of the same centered line gives
\begin{equation}
Y=d\Gamma\!\left(-\frac16H_\Gamma^{\rm deg}\right)=-\frac13N_C+\frac12N_W,
\label{eq:pgl-hypercharge-degree}
\end{equation}
with $Q=T_3+Y$. With $N_{\rm F}=N_C+N_W$,
\begin{equation}
X_{\rm F}=5{\bf1}-2N_{\rm F}=5(B-L)-4Y,
\label{eq:pgl-spin10-x-degree}
\end{equation}
where $B-L=1-2N_C/3$.

The even bidegree character is
\begin{equation}
\chi_+(u,v)=\frac12\left((1+u)^c(1+v)^w+(1-u)^c(1-v)^w\right),
\label{eq:pgl-even-exterior-character}
\end{equation}
with $D_C=u\partial_u$ and $D_W=v\partial_v$. On the selected split it expands as $1+3u^2+6uv+v^2+3u^2v^2+2u^3v$; the charge operators above are affine functions of $D_C,D_W$.\
~\\
\begin{table}[!htbp]
\caption{Even exterior package and canonical Abelian gradings.}
\centering
\small
\setlength{\tabcolsep}{4pt}
\renewcommand{\arraystretch}{1.08}
\begin{tabular}{p{0.29\textwidth}p{0.20\textwidth}p{0.10\textwidth}p{0.11\textwidth}p{0.11\textwidth}}
\toprule
Summand & $SU(3)\times SU(2)$ type & $Y$ & $B-L$ & $X_{\rm F}$\\
\midrule
$\Lambda^0V$ & $(1,1)$ & $0$ & $1$ & $5$\\
$\Lambda^2C$ & $(\bar3,1)$ & $-\frac23$ & $-\frac13$ & $1$\\
$C\otimes W$ & $(3,2)$ & $\frac16$ & $\frac13$ & $1$\\
$\Lambda^2W$ & $(1,1)$ & $1$ & $1$ & $1$\\
$\Lambda^2C\otimes\Lambda^2W$ & $(\bar3,1)$ & $\frac13$ & $-\frac13$ & $-3$\\
$\Lambda^3C\otimes W$ & $(1,2)$ & $-\frac12$ & $-1$ & $-3$\\
\bottomrule
\end{tabular}
\end{table}

The separate readings are $B=N_C(3-N_C)(3-2N_C)/6$ and $L=B-(B-L)$; evaluation of \eqref{eq:pgl-even-exterior-character} gives
\begin{equation}
I_3=I_2=2,
\label{eq:pgl-nonabelian-index-traces}
\end{equation}
\begin{equation}
\operatorname{Tr}_{\Lambda^{\rm even}V}(Y^2)=\frac{10}{3},\qquad k_Y=\frac53=\frac{2r}{N}.
\label{eq:pgl-hypercharge-normalization-five-thirds}
\end{equation}
For a coprime two-block split, Proposition~\ref{prop:pgl-two-block-degree-calculus} gives $k_Y=2r/N=2(1/c+1/w)$; the selected $(3,2)$ Levi therefore has the canonical $5/3$ exterior-trace normalization.

\begin{proposition}[Finite-trace Abelian plane and determinant kernel]
\label{prop:pgl-top-form-determinant-kernel}
The exterior character gives
\begin{equation}
\operatorname{Tr}(Y(B-L))=\frac83,\qquad \operatorname{Tr}(YX_{\rm F})=\operatorname{Tr}(X_{\rm F})=0,\qquad \operatorname{Tr}(X_{\rm F}^2)=80,
\label{eq:pgl-abelian-orthogonal-traces}
\end{equation}
and the common connected determinant kernel is
\begin{equation}
\mathfrak g_\Gamma=s(u(C)\oplus u(W)).
\label{eq:pgl-local-anomaly-kernel}
\end{equation}
Its central line is generated by \eqref{eq:pgl-centered-borel-weil-degree}.
\end{proposition}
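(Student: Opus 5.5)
Both parts of the statement are computed on the even exterior package $\Lambda^{\rm even}V$, $V=C\oplus W$ with $(c,w)=(3,2)$, using the bidegree character \eqref{eq:pgl-even-exterior-character}. The Abelian operators are affine in $D_C,D_W$ by \eqref{eq:pgl-hypercharge-degree} and \eqref{eq:pgl-spin10-x-degree}: $Y=-\tfrac13N_C+\tfrac12N_W$, $B-L=1-\tfrac23N_C$, and $X_{\rm F}=5-2N_C-2N_W$. Every trace of a product of two of them is therefore a weighted sum over the six bidegree monomials of $\chi_+$, namely $1+3u^2+6uv+v^2+3u^2v^2+2u^3v$. This is the same evaluation the table already performs.

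\emph{Trace identities.} I would first read off the eigenvalues on the six summands (the table gives $Y$, $B-L$, $X_{\rm F}$ with multiplicities $1,3,6,1,3,2$) and then sum. For $\operatorname{Tr}(Y(B-L))$ the contributions are $3\cdot\tfrac29+6\cdot\tfrac1{18}+1+3\cdot(-\tfrac19)+2\cdot\tfrac12$, which total $\tfrac83$. For $\operatorname{Tr}X_{\rm F}$ we get $5+3+6+1-9-6=0$. For $\operatorname{Tr}(X_{\rm F}^2)$ we get $25+3+6+1+27+18=80$.

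For $\operatorname{Tr}(YX_{\rm F})$ I would avoid recomputing term by term. Instead use $X_{\rm F}=5(B-L)-4Y$ together with $\operatorname{Tr}(Y^2)=\tfrac{10}{3}$ from \eqref{eq:pgl-hypercharge-normalization-five-thirds}. This gives $5\cdot\tfrac83-4\cdot\tfrac{10}3=0$, and checking it against the direct sum provides a consistency test. Structurally, $\operatorname{Tr}(YX_{\rm F})=0$ says that $Y$, which is $SU(5)$-traceless on $V$, is orthogonal to the $U(1)$ commuting with $SU(5)$. I would note this as the conceptual reason.

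\emph{Determinant kernel.} The stabilizer of the marked split together with the top form is $S(U(C)\times U(W))$ by \eqref{eq:pgl-graded-determinant-automorphism-group}. Its Lie algebra is the connected kernel of the determinant character restricted to $u(C)\oplus u(W)$, which is $s(u(C)\oplus u(W))$. The word ``common'' needs a separate argument: the other Abelian gradings must impose no further condition. I would argue that the kernel of $\det$ on the block-diagonal algebra already contains the traceless parts $su(3)\oplus su(2)$, which are forced by Proposition~\ref{prop:pgl-low-block-rigidity} and the marked centrality. Its center is one-dimensional, consisting of elements $aP_C+bP_W$ with $3a+2b=0$, so it is spanned by $2P_C-3P_W=H_\Gamma^{\rm deg}$, as in \eqref{eq:pgl-centered-borel-weil-degree}. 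Proposition~\ref{prop:pgl-centered-marked-grading} identifies this line as the primitive centered generator.

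\emph{Main obstacle.} The arithmetic is routine. The delicate point is interpreting ``common connected determinant kernel'': I have to show that the $X_{\rm F}$ direction, the scalar $\mathbf 1$ on $V$, is excluded by the determinant closure while $Y\propto H_\Gamma^{\rm deg}$ survives. I would handle this by noting that $\operatorname{tr}_V(\mathbf 1)=5\neq0$ and $\operatorname{tr}_V(H_\Gamma^{\rm deg})=0$. The finite-trace orthogonality $\operatorname{Tr}(YX_{\rm F})=0$ then presents the Abelian plane as an orthogonal sum of the kernel line and its determinant complement.
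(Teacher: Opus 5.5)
\textbf{Trace identities.} Your trace computations are correct and follow the paper's route. Each identity in \eqref{eq:pgl-abelian-orthogonal-traces} is read off the bidegree character \eqref{eq:pgl-even-exterior-character}. Deriving $\operatorname{Tr}(YX_{\rm F})=0$ from $X_{\rm F}=5(B-L)-4Y$ and $\operatorname{Tr}(Y^2)=10/3$ is a legitimate shortcut. Your conceptual reason also holds: $X_{\rm F}$ is scalar on each $\Lambda^kV$, while $Y$ is traceless there.

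\textbf{Determinant kernel: a genuine gap.} Here ``common'' means common to two conditions; see the label of \eqref{eq:pgl-local-anomaly-kernel} and the sentence after the proposition. One is the top-form condition, the determinant character on $\Lambda^5V$. The other is the local chiral-determinant (anomaly) obstruction of Proposition~\ref{prop:pgl-determinant-descent}.
\begin{itemize}
\item Citing \eqref{eq:pgl-graded-determinant-automorphism-group} proves only the top-form part.
\item Your substitute reading (``the other Abelian gradings impose no further condition'') is not the required claim.
\item Proposition~\ref{prop:pgl-low-block-rigidity} concerns compression of principal moments. It says nothing about anomaly freedom of $su(3)\oplus su(2)$.
\end{itemize}

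The missing step is the exterior-character evaluation on $Y_{a,b}=aN_C+bN_W$:
\begin{itemize}
\item $\operatorname{Tr}Y_{a,b}=8(3a+2b)$;
\item the $SU(3)^2Y$ and $SU(2)^2Y$ coefficients are both $3a+2b$;
\item $\operatorname{Tr}Y_{a,b}^3=4(3a+2b)(9a^2+6ab+5b^2)$;
\item $SU(3)^3$ cancels by exterior duality: the $\bar 3$'s $\Lambda^2C$ and $\Lambda^2C\otimes\Lambda^2W$ cancel the two triplets in $C\otimes W$;
\item $SU(2)$ has no cubic invariant.
\end{itemize}
The positive-definite cofactor $9a^2+6ab+5b^2$ is the crux. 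The cubic condition is not linear, so agreement with the top-form kernel is not automatic. Only because it has no real zeros off $3a+2b=0$ does the local kernel coincide with the top-form kernel. That gives $s(u(C)\oplus u(W))$, with center $\mathbb R(2P_C-3P_W)$.

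\textbf{Convention caution.} This test must be run on the unshifted family $Y_{a,b}$. The spinor-shifted operators $X_{\rm F}$ and $B-L$ are themselves anomaly-free on $\Lambda^{\rm even}V$; for example, $\operatorname{Tr}X_{\rm F}=\operatorname{Tr}X_{\rm F}^3=0$. So your exclusion of the scalar direction via $\operatorname{tr}_V\mathbf 1=5$ is only a top-form statement. Its local counterpart comes from the case $a=b$ of the computation above, where $8(3a+2b)=40a\neq0$.
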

For $Y_{a,b}=aN_C+bN_W$, the top-form and local determinant conditions share the factor $3a+2b$; the cubic obstruction has the remaining positive quadratic factor. Evaluation of the same exterior character gives vanishing $SU(3)^3$, $SU(3)^2Y$, $SU(2)^2Y$, $\mathrm{grav}^2Y$, and $Y^3$ coefficients, as well as the corresponding $B-L$ anomalies; the locked even chirality gives the vanishing $SU(2)$ global parity class. The cubic non-Abelian cancellation follows from exterior duality and is not a rank discriminator; the one-block $I(J)=\{2\}$ alternative has nonzero cubic $SU(3)^3$ coefficient.

\begin{proposition}[Determinant trivialization]
\label{prop:pgl-determinant-descent}
A gauge-invariant connection-compatible trivialization of the chiral determinant line removes its local vertical curvature obstruction and gauge-loop holonomy.
\end{proposition}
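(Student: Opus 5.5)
The plan is to treat the chiral determinant line $\mathcal L\to\mathcal A/\mathcal G$, with its Bismut-Freed Quillen metric and compatible connection $\nabla^{\mathcal L}$ \cite{BismutFreed1986EllipticFamiliesII}, as a Hermitian line bundle with connection on the space of gauge orbits of the $S(U(3)\times U(2))$ carrier. I will use two standard facts about this connection. First, its curvature is the degree-two component of the family-index form, and its restriction to vertical (gauge-orbit) directions gives the local consistent anomaly cocycle. Second, its holonomy around a gauge loop gives the global anomaly. A trivialization $s$ that is gauge-invariant and connection-compatible means a nowhere-vanishing section of $\mathcal L$ that is $\mathcal G$-equivariant and satisfies $\nabla^{\mathcal L}s=\alpha\otimes s$ with $\alpha$ a basic (gauge-horizontal) one-form. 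I first normalize $s$ to unit Quillen norm. Then $\alpha$ is purely imaginary and basic, and replacing $\nabla^{\mathcal L}$ by $\nabla^{\mathcal L}-\alpha$ produces the flat trivial connection along orbits.

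For the local statement, I compute the curvature in the frame $s$: $F^{\mathcal L}=d\alpha$. Contracting with vertical vector fields $\xi_X$ generated by $X\in\operatorname{Lie}\mathcal G$, basicness gives $\iota_{\xi_X}\alpha=0$ and $\mathcal L_{\xi_X}\alpha=0$. Cartan's formula then gives $\iota_{\xi_X}d\alpha=0$. So the vertical curvature vanishes, which is exactly the vanishing of the local anomaly cocycle. This is consistent with, and independent of, the trace vanishings already obtained from \eqref{eq:pgl-even-exterior-character} and Proposition~\ref{prop:pgl-top-form-determinant-kernel}.

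For the global statement, I take a loop $\gamma$ in a gauge orbit, parametrized by a loop of gauge transformations $g_t$ acting on a fixed $A$. Parallel transport of $s(A)$ along $\gamma$ satisfies $\tfrac{d}{dt}\log(\text{hol})=-\alpha(\dot\gamma)=0$ by verticality, and gauge invariance gives $s(g_1A)=s(A)$. Together these show that the holonomy is trivial. The main obstacle is to make sure these two inputs are not circular, since existence of such an $s$ already presupposes that the anomaly is trivial. For this reason I will present the statement only as the sufficiency direction, taking the existence of $s$ as the hypothesis. I will then note that the locked even chirality and the vanishing $SU(2)$ parity class recorded before Proposition~\ref{prop:pgl-determinant-descent} are what make the hypothesis realizable on the selected carrier, with the differential-index refinement \cite{FreedLott2010DifferentialKIndex} identifying the resulting flat section.
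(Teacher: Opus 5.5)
Your local step is sound: with $s$ invariant and $\alpha$ horizontal and invariant, Cartan's formula gives $\iota_{\xi_X}F=\iota_{\xi_X}d\alpha=0$.

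The gap is in the holonomy step. It comes from weakening ``connection-compatible'' to $\nabla s=\alpha\otimes s$ with $\alpha$ merely basic. You only treat loops $g_tA$ that lie inside one orbit, i.e.\ gauge transformations joined to the identity by a path $g_t$. Such loops project to constant loops in $\mathcal A/\mathcal G$, so this step only restates the horizontality of $\alpha$; and $s(g_1A)=s(A)$ is vacuous when $g_1=1$. The gauge-loop holonomy that carries the global obstruction is different. This is the Bismut--Freed holonomy, e.g.\ Witten's $SU(2)$ class, which the paper invokes just before the statement. It is attached to a loop in $\mathcal A/\mathcal G$ whose lift is a path $\gamma$ from $A$ to $gA$, with $g$ outside the identity component of $\mathcal G$. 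For generic $A$ such a path cannot stay in the orbit, so $\alpha(\dot\gamma)\neq0$ in general. Under your hypothesis the holonomy is $\exp(-\int_\gamma\alpha)=\exp(-\oint\bar\alpha)$ for the descended form $\bar\alpha$, and nothing forces this to equal $1$. A counterexample: let $\mathbb Z$ act on $\mathbb R$ by translations, and take $s\equiv1$ and $\alpha=ic\,dx$. These satisfy all your conditions, yet the holonomy around the quotient circle is $e^{-ic}$. Your remark that $\nabla-\alpha$ is flat does not help, because it changes the very connection whose curvature and holonomy are in question.

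The repair is to use the hypothesis the paper intends. In Proposition~\ref{def:bl-closed-generated-determinant-completion} the same statement is invoked for a gauge-invariant \emph{parallel} trivialization, $\nabla s=0$. Then $F=0$ identically. Moreover, for every $g\in\mathcal G$ (in any component) and any path $\gamma$ from $A$ to $gA$, parallel transport carries $s(A)$ to $s(gA)=\tilde g\,s(A)$, so every gauge-loop holonomy is trivial. This is the essentially definitional content the paper takes from \cite{BismutFreed1986EllipticFamiliesII}.

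Two smaller points. First, place $\mathcal L$ on $\mathcal A$ with its $\mathcal G$-action rather than on $\mathcal A/\mathcal G$. On the orbit space there are no vertical directions, and a curvature pulled back from $\mathcal A/\mathcal G$ is basic with no hypothesis at all. Second, drop the claim that the locked chirality and the vanishing $SU(2)$ class make the hypothesis realizable. These are only necessary conditions, and the paper keeps existence of the trivialization as a hypothesis. It also notes in the proposition just cited that the chirality orientation of the finite Clifford module does not by itself supply such a global structure.
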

The determinant-family interpretation is that of \cite{BismutFreed1986EllipticFamiliesII}; characteristic and residual flat/torsion conventions are represented by \cite{MilnorStasheff1974} and \cite{ChernSimons1974}.

\subsection{Elliptic quotient stacks and algebraic family module}

For comparison with nonprimitive positive-degree branches, let $L_\Gamma\simeq\mathcal O(n)$ and let $I$ be the active principal-harmonic set. Put $m=|I|$, $S_1=\sum_{\ell\in I}\ell$, and $S_2=\sum_{\ell\in I}\ell^2$. Proposition~\ref{prop:pgl-centered-marked-grading} gives $d_\ell=n\ell+1$, $r_{n,I}=nS_1+m$, $T_{n,I}=nS_2+S_1$, and raw centered weights $h_\ell=r_{n,I}\ell-T_{n,I}$. For a two-grade set $I=\{p,q\}$ with $p<q$, the primitive cocharacter is $-(d_q/g_0)P_p+(d_p/g_0)P_q$, where $g_0=\gcd(d_p,d_q)$, and its finite intersection has order $\operatorname{lcm}(d_p,d_q)$.

\begin{corollary}[Harmonic cocharacter comparisons]
\label{prop:pgl-positive-degree-finite-shadow}
\label{prop:pgl-third-order-shadow-comparison}
For $I=\{1,2\}$ one obtains
\begin{equation}
D_\Gamma^{(n)}=2P_C+P_W,\qquad H_{\Gamma,(n)}^{\rm deg}=(n+1)P_C-(2n+1)P_W,
\label{eq:pgl-positive-degree-centered-degree}
\end{equation}
\begin{equation}
\mu_{N_n}\simeq\mu_{2n+1}\times\mu_{n+1},\qquad N_n=(2n+1)(n+1).
\label{eq:pgl-positive-degree-finite-shadow-factorization}
\end{equation}
For $(n,I)=(1,\{1,2,3\})$ the primitive weights are $(-11,-2,7)$ on $E_2\oplus E_3\oplus E_4$, and the finite intersection has order $\gcd(22,6,28)=2$.
\end{corollary}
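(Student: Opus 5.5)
The plan is to apply Proposition~\ref{prop:pgl-centered-marked-grading} directly to the harmonic Borel-Weil supports. At positive degree $n$, the grade-$\ell$ block is $H^0(\mathbb{CP}^1_\Gamma,\mathcal O(n\ell))$, so it has rank $d_\ell=n\ell+1$. Its level is the source grade $\ell$, which is the marked level operator fixed by degree fidelity \eqref{eq:pgl-jet-borel-weil-degree-residual}. For $I=\{1,2\}$, label the grade-two block $C$ and the grade-one block $W$. The retained level is then $D^{(n)}_\Gamma=2P_C+P_W$ for every $n$, which gives the first half of \eqref{eq:pgl-positive-degree-centered-degree}.

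Next I compute the centered weights. The general data are $r_{n,I}=nS_1+m$ and $T_{n,I}=nS_2+S_1$. For $I=\{1,2\}$ these give $r=3n+2$ and $T=5n+3$. Hence
\[
h_2=2r-T=n+1,\qquad h_1=r-T=-(2n+1).
\]
These are already primitive, because $\gcd(2n+1,n+1)=\gcd(n,n+1)=1$. The primitive centered generator is therefore $H^{\rm deg}_{\Gamma,(n)}=(n+1)P_C-(2n+1)P_W$, which completes \eqref{eq:pgl-positive-degree-centered-degree}.

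For the finite intersection, Proposition~\ref{prop:pgl-centered-marked-grading} gives the order $\gcd_j(d_j|a_j|)$. Here
\[
d_C|a_C|=(2n+1)(n+1),\qquad d_W|a_W|=(n+1)(2n+1),
\]
so both terms equal $N_n$, and the group is $\mu_{N_n}$. This agrees with the stated two-grade formula $\operatorname{lcm}(d_p,d_q)$ because the two ranks are coprime. The same coprimality and the Chinese remainder theorem give $\mu_{N_n}\simeq\mu_{2n+1}\times\mu_{n+1}$. I would make the factors explicit: $\mu_{2n+1}$ is the part acting by scalars on $C$, and $\mu_{n+1}$ the part acting on $W$. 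That establishes \eqref{eq:pgl-positive-degree-finite-shadow-factorization}.

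For the third-order comparison $(n,I)=(1,\{1,2,3\})$, the ranks are $2,3,4$, so the support is $E_2\oplus E_3\oplus E_4$. Then $r=9$ and $T=2+6+12=20$. The raw weights are $h_\ell=9\ell-20$, namely $(-11,-2,7)$. These have greatest common divisor $1$, so they are already primitive. The order is then
\[
\gcd(2\cdot11,\;3\cdot2,\;4\cdot7)=\gcd(22,6,28)=2.
\]

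The only step that needs care is the justification of the order formula. Proposition~\ref{prop:pgl-centered-marked-grading} is stated as a gcd, whereas the two-grade remark phrases the order as an lcm. I would add a one-line check: $z^{a_j}P_j$ lies in $SU(V_j)$ exactly when $z^{d_ja_j}=1$. The intersection is therefore the group of common roots, of order $\gcd_j(d_j|a_j|)$. In the two-grade case this gcd equals $d_pd_q/g_0^2\cdot g_0=\operatorname{lcm}(d_p,d_q)$.
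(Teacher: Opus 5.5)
Your proposal is correct and follows the paper's route: the corollary comes from substituting $d_\ell=n\ell+1$, $r_{n,I}=nS_1+m$ and $T_{n,I}=nS_2+S_1$ into the centered marked grading calculus of Proposition~\ref{prop:pgl-centered-marked-grading}, exactly as you do for $I=\{1,2\}$ and for $(n,I)=(1,\{1,2,3\})$. In your check of the order formula, note that $z\mapsto\sum_j z^{a_j}P_j$ is injective because the $a_j$ are primitive; this is what makes the intersection have order exactly $\gcd_j(d_j|a_j|)$.
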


\begin{corollary}[Parity-exhausted degree comparison]
\label{cor:pgl-parity-exhausted-degree-selection}
If, within the positive-degree two-channel comparison, the $W$-side factor in \eqref{eq:pgl-positive-degree-finite-shadow-factorization} is exhausted by an order-two parity factor, then $n=1$, $N_n=6$, and the two factors are $\mu_3$ and $\mu_2$.
\end{corollary}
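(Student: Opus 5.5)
The plan is to read the statement as an arithmetic consequence of Corollary~\ref{prop:pgl-positive-degree-finite-shadow}, and to argue with the factorization \eqref{eq:pgl-positive-degree-finite-shadow-factorization}. For the two-channel branch $I=\{1,2\}$ at positive degree $n$, Proposition~\ref{prop:pgl-centered-marked-grading} gives block ranks $d_1=n+1$ on the $W$-side and $d_2=2n+1$ on the $C$-side. First I would check that these two ranks are coprime: $\gcd(2n+1,n+1)=\gcd(n,n+1)=1$. This coprimality is what makes $g_0=1$, so that the finite intersection of the primitive cocharacter $-(2n+1)P_W+(n+1)P_C$ with $SU(V_C)\times SU(V_W)$ has order $\operatorname{lcm}=N_n=(2n+1)(n+1)$. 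By the Chinese remainder theorem it then splits canonically as $\mu_{2n+1}\times\mu_{n+1}$.

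The next step is to identify which factor is the $W$-side factor. On the $W$-block the cocharacter acts by $z^{-(2n+1)}$, and determinant closure on $W$ requires $z^{-(2n+1)(n+1)}=1$. The part of the kernel that acts on $W$ through its centre $\mu_{\dim W}$ is therefore the $\mu_{n+1}$ factor. By the same argument, the $C$-side factor is $\mu_{2n+1}$, acting through the centre of $SU(C)$.

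The hypothesis says this $W$-side factor is exhausted by an order-two parity factor, which means $n+1=2$. Since $n>0$ is already imposed by \eqref{eq:pgl-positive-line}, this gives $n=1$. Substituting back gives $N_1=3\cdot2=6$ with factors $\mu_3$ on $C$ and $\mu_2$ on $W$. This agrees with the global form \eqref{eq:pgl-standard-model-global-form} and with the selected $(c,w)=(3,2)$.

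The main obstacle is the phrase ``exhausted by an order-two parity factor''. It has to be read as equality of the $W$-side cyclic factor with $\mu_2$, not merely as containing a $\mu_2$. Under the weaker reading, every odd $n$ would have $\mu_2\subset\mu_{n+1}$ and the conclusion would fail. I would fix the exhaustion reading first, consistent with the rigidity discussion preceding the subsection, where composite $n>1$ is rejected precisely because the $W$-side factor is not exhausted. After that, the argument is the two-line computation above.
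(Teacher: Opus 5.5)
Your proposal is correct and follows the paper's essentially implicit argument: the $W$-side factor of $\mu_{N_n}\simeq\mu_{2n+1}\times\mu_{n+1}$ is $\mu_{n+1}$, so exhaustion by an order-two parity factor forces $n+1=2$, which gives $n=1$, $N_1=6$, and the factors $\mu_3$ and $\mu_2$. Reading ``exhausted'' as equality rather than mere containment is the right choice, and it is the reading the paper's rigidity remark on degrees $n>1$ presupposes.
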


The Eisenstein scalar unit group is $\mathbb Z[\omega]^\times\simeq\mu_6$. Its action gives the quotient stacks $\mathbb X_6:=[E_\omega/\mu_6]$ and $\mathbb X_3:=[E_\omega/\mu_3]$. The stabilizers of the $\mu_6$ action are $\mu_6$ at $O$, $\mu_3$ on $E[\varpi]\setminus\{O\}$, and $\mu_2$ on $E[2]\setminus\{O\}$; hence $\mathbb X_6$ has tubular type $(6,3,2)$. The $\mu_3$ action fixes the three points of $E[\varpi]$, so $\mathbb X_3$ has type $(3,3,3)$. These elliptic quotient descriptions agree with the tubular equivariantization of \cite{ChenChen2017Tubular}. The Smith forms give $\operatorname{Tor}L(6,3,2)\simeq\mathbb Z_6$ and $\operatorname{Tor}L(3,3,3)\simeq\mathbb Z_3^2$; no symplectic identification of the latter with the level-three theta phase space is assumed. The weight-two orbit $E[2]\setminus\{O\}$ and the ramified fixed subgroup $E[\varpi]$ are retained separately below.\\
~\\
Reduction of Eisenstein units modulo $\varpi=1-\omega$ gives $1\to\mu_3\to\mu_6\to\mathbb F_3^\times\simeq\mu_2\to1$. Its unique primary subgroups give $\mu_6\simeq\mu_3\times\mu_2$, matching $Z(SU(C))\times Z(SU(W))$ and the primary decomposition of $\operatorname{Tor}L(6,3,2)$. If $x_{(6)},x_{(3)},x_{(2)}$ are its exceptional generators, their degrees are $(1,2,3)$. The degree-zero elements $u_2=x_{(2)}-3x_{(6)}$ and $u_3=x_{(3)}-2x_{(6)}$ have orders $2$ and $3$, and $\vec\omega_{\mathbb X}=-u_2-u_3$. Thus $\operatorname{ord}(\vec\omega_{\mathbb X})=6$; for the Auslander-Reiten translation $\tau_{\mathbb X}$ and derived Serre functor $\mathbb S_{\mathbb X}=\tau_{\mathbb X}[1]$, one has $\tau_{\mathbb X}^6=\operatorname{id}$ and $\mathbb S_{\mathbb X}^6=[6]$. The color and parity factors, and the primary shifts used below, are readings of the same order-six datum.\\
~\\
\begin{lemma}[Tubular exceptional-degree relation]
\label{lem:pgl-tubular-exceptional-degree-relation}
In $L(6,3,2)$ one has
\begin{equation}
x_{(6)}+x_{(3)}=x_{(2)}-\vec\omega_{\mathbb X}.
\label{eq:pgl-tubular-exceptional-degree-relation}
\end{equation}
Hence $\mathcal O(x_{(6)}+x_{(3)})\simeq\tau_{\mathbb X}^{-1}\mathcal O(x_{(2)})$, whose degree shadow is $1+2=3$.
\end{lemma}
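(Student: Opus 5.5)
We work in the Geigle-Lenzing group $L(6,3,2)$, presented as the abelian group on $x_{(6)},x_{(3)},x_{(2)}$ modulo $6x_{(6)}=3x_{(3)}=2x_{(2)}=:c$. The plan is to verify \eqref{eq:pgl-tubular-exceptional-degree-relation} directly from this presentation and then read off the sheaf statement and its degree shadow.

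\textbf{Step 1: dualizing element.} With the usual Geigle-Lenzing convention (three weights) the dualizing element is $\vec\omega_{\mathbb X}=c-x_{(6)}-x_{(3)}-x_{(2)}$. This should agree with the expression already recorded in the text, $\vec\omega_{\mathbb X}=-u_2-u_3$. Expanding gives $-u_2-u_3=-x_{(2)}-x_{(3)}+5x_{(6)}$. Since $5x_{(6)}=c-x_{(6)}$, this equals $c-x_{(6)}-x_{(3)}-x_{(2)}$, so the two descriptions coincide.

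\textbf{Step 2: the relation.} Substituting the formula for $\vec\omega_{\mathbb X}$ gives
\[
x_{(2)}-\vec\omega_{\mathbb X}=2x_{(2)}+x_{(3)}+x_{(6)}-c=x_{(6)}+x_{(3)},
\]
using $2x_{(2)}=c$. This is exactly \eqref{eq:pgl-tubular-exceptional-degree-relation}.

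\textbf{Step 3: sheaf statement.} On a weighted projective line the Auslander-Reiten translation acts on line bundles by $\tau_{\mathbb X}\mathcal O(\vec x)=\mathcal O(\vec x+\vec\omega_{\mathbb X})$. Hence $\tau_{\mathbb X}^{-1}\mathcal O(x_{(2)})=\mathcal O(x_{(2)}-\vec\omega_{\mathbb X})=\mathcal O(x_{(6)}+x_{(3)})$.

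\textbf{Step 4: degree shadow.} Take the degree homomorphism normalized as in the text, with $\deg x_{(6)}=1$, $\deg x_{(3)}=2$, $\deg x_{(2)}=3$ and $\deg c=6$. This map is well defined on $L(6,3,2)$, since each relation maps to $6$. Additivity then gives $\deg(x_{(6)}+x_{(3)})=1+2=3$. The same value follows from $\deg x_{(2)}-\deg\vec\omega_{\mathbb X}=3-0$, because $\deg\vec\omega_{\mathbb X}=6-1-2-3=0$, consistent with the tubular (elliptic) case.

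\textbf{Main obstacle.} The only delicate point is fixing conventions: the sign convention for $\vec\omega_{\mathbb X}$ and whether $\tau$ shifts line bundles by $+\vec\omega$ or $-\vec\omega$. I would settle both by the cross-check in Step 1 against $\vec\omega_{\mathbb X}=-u_2-u_3$, together with the standard Geigle-Lenzing formula $\tau=-\otimes\mathcal O(\vec\omega)$.
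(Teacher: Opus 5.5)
Your proof is correct and follows essentially the same route as the paper, which writes $\vec\omega_{\mathbb X}=5x_{(6)}-x_{(2)}-x_{(3)}$, uses $2x_{(2)}=6x_{(6)}$ to obtain the relation, and then applies $\tau_{\mathbb X}E=E(\vec\omega_{\mathbb X})$ for the line-bundle statement. Your additional checks are consistent with this and are left implicit in the paper: matching the standard form $c-x_{(6)}-x_{(3)}-x_{(2)}$ against $-u_2-u_3$, and confirming that the degree map is well defined with $\deg\vec\omega_{\mathbb X}=0$.
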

\begin{proof}
The definitions above give $\vec\omega_{\mathbb X}=5x_{(6)}-x_{(2)}-x_{(3)}$. Using the string-group relation $2x_{(2)}=6x_{(6)}$ gives \eqref{eq:pgl-tubular-exceptional-degree-relation}; the line-bundle statement follows from $\tau_{\mathbb X}E=E(\vec\omega_{\mathbb X})$.
\end{proof}
~\\
The centered carrier gives the same order through $\lambda_\Gamma(z):=z^{H_\Gamma^{\rm deg}}$, since Proposition~\ref{prop:pgl-two-block-degree-calculus} gives $\lambda_\Gamma(U(1))\cap(SU(C)\times SU(W))\simeq\mu_6$. The projective reduction $\operatorname{Pic}(\mathbb{CP}^1_\Gamma)/6\operatorname{Pic}(\mathbb{CP}^1_\Gamma)$ is cyclic with generator $[L_\Gamma]$; after the primitive degree orientation is fixed, $[L_\Gamma]\mapsto\vec\omega_{\mathbb X}$ gives the full six-cycle presentation $\mathcal T_\Gamma^{(6)}$. Let $S_6$ denote its oriented shift and put $\mathcal A_6:=\operatorname{End}(\ell^2(\mathcal T_\Gamma^{(6)}))$.

\begin{lemma}[Tubular two-block rigidity comparison]
\label{lem:pgl-tubular-two-block-rigidity}
Let a coprime two-block split with ranks $c>w\geq2$ have centered finite shadow of order $N=cw$ as in Proposition~\ref{prop:pgl-two-block-degree-calculus}. If this cyclic shadow is required to agree with an elliptic scalar automorphism group and the associated signature $(N,c,w)$ is tubular, then $(N,c,w)=(6,3,2)$ and the two-arrow Kronecker Euler value is $1$.
\end{lemma}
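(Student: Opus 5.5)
The plan is to reduce the statement to two finite classifications and intersect them. First, the elliptic scalar automorphism groups: for a complex elliptic curve the automorphism group fixing the origin is $\mu_2$, $\mu_4$ (for $j=1728$) or $\mu_6$ (for $j=0$). By Proposition~\ref{prop:pgl-two-block-degree-calculus} the centered finite shadow is cyclic of order $N=cw$, and $c>w\geq2$ forces $N\geq6$. Agreement with an elliptic scalar group then leaves only $N=6$, the group $\mu_6=\mathbb Z[\omega]^\times$. Already at this point $cw=6$ with $c>w\geq2$ and $\gcd(c,w)=1$ forces $(c,w)=(3,2)$.

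Second, I would check the tubular condition, both as a consistency statement and because the lemma asks for the signature to be tubular. The weighted projective lines of tubular type are those with weight triples $(3,3,3)$, $(2,4,4)$, $(2,3,6)$, together with the quadruple $(2,2,2,2)$. Equivalently, they are the triples satisfying $\sum 1/p_i=1$. Setting $(p_1,p_2,p_3)=(N,c,w)$ with $N=cw$, the condition reads
\[
\frac{1}{cw}+\frac1c+\frac1w=1,
\]
that is, $1+w+c=cw$, or $(c-1)(w-1)=2$. With $c>w\geq2$ this gives $w-1=1$ and $c-1=2$, so again $(c,w)=(3,2)$ and $N=6$. The triple $(6,3,2)$ agrees with the stabilizer computation for $\mathbb X_6=[E_\omega/\mu_6]$ given before Lemma~\ref{lem:pgl-tubular-exceptional-degree-relation}, with stabilizer orders $\mu_6$, $\mu_3$, $\mu_2$. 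Either condition alone therefore already pins down the answer, and the two are mutually consistent. I would note that coprimality is automatic in the tubular route, since $(c-1)(w-1)=2$ has only this solution.

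Finally, the Kronecker Euler value for dimension vector $(w,c)=(2,3)$ is $w^2+c^2-2wc=(c-w)^2=1$, matching the computation stated after Theorem~\ref{thm:pgl-rank-five-support}.

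The main obstacle is interpretive rather than computational. One has to fix precisely what ``the associated signature $(N,c,w)$ is tubular'' means, so that it is the weight triple $\sum1/p_i=1$ rather than some other convention, and what ``agree with an elliptic scalar automorphism group'' means. I would take the latter to be an isomorphism of the cyclic group of order $N$ with $\operatorname{Aut}(E,O)$, citing the standard classification of $j$-invariants. I would state both readings explicitly before running the Diophantine step.
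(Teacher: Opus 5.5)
Your proof is correct. Its tubular step is exactly the paper's proof: the paper substitutes $N=cw$ into $N^{-1}+c^{-1}+w^{-1}=1$, obtains $(c-1)(w-1)=2$, reads off $(c,w)=(3,2)$ and $N=6$, and then evaluates the Kronecker Euler form as $(c-w)^2=1$.

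The real difference is your first route. The paper's proof never uses the elliptic-agreement hypothesis. You derive the conclusion from that hypothesis alone:
\begin{itemize}
\item $c>w\geq2$ gives $N=cw\geq6$.
\item The scalar automorphism groups $\operatorname{Aut}(E,O)$ of complex elliptic curves are $\mu_2$, $\mu_4$, $\mu_6$.
\item So $N=6$, and $cw=6$ with $c>w\geq2$ forces $(3,2)$. Coprimality is not even needed here.
\end{itemize}
This route is robust to how ``agree'' is read. If it means embedding into $\operatorname{Aut}(E,O)$ rather than isomorphism, every subgroup order is still at most $6$, so the bound $N\geq6$ gives the same result.

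What each approach buys: the paper's argument is purely elementary and needs only one hypothesis. Yours shows that each hypothesis is separately sufficient, which is a useful redundancy check, at the cost of citing the $j$-invariant classification.

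One small imprecision: the quadruple $(2,2,2,2)$ is irrelevant for a three-weight signature. The condition you actually need, and use, is the three-weight tubular condition $\sum 1/p_i=1$.
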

\begin{proof}
Tubularity gives $N^{-1}+c^{-1}+w^{-1}=1$. Substitution of $N=cw$ gives $(c-1)(w-1)=2$, hence $(c,w)=(3,2)$ and $N=6$. The two-arrow Kronecker Euler form is then $(c-w)^2=1$.
\end{proof}
For the principal $A_2$ datum, the exponents $(1,2)$, the Weyl invariant degrees $(2,3)$, the Borel-Weil ranks $(2,3)$, and $|W(A_2)|=6$ therefore form the same rank-degree-order package. Conversely, the rank-two reflection degrees $(2,m)$ with $|I_2(m)|=2m$ satisfy the tubular relation only for $m=3$. The same signature has the Coxeter-star reading $T(2,3,6)=\widetilde E_8$: its arm lengths are $(w-1,c-1,N-1)=(1,2,5)$, and deletion of the central node gives $A_1\oplus A_2\oplus A_5$. This is used only as an affine root-system comparison; no gauge extension is inferred.

\begin{proposition}[Weight-two family orbit and two-linearization envelope]
\label{prop:pgl-finite-core-full-shadow-representation}
\label{prop:pgl-parity-character-corner}
Since $2$ is inert in $\mathbb Z[\omega]$, $E[2]\simeq\mathbb F_4$ and the Coxeter action is simply transitive on $E[2]^\times:=E[2]\setminus\{O\}$. Under the full $\mu_6$ action, every point of $E[2]^\times$ has stabilizer $\mu_2$, so $E[2]^\times\simeq\mu_6/\mu_2$. For a parity character $\varepsilon$ of the stabilizer,
\begin{equation}
\mathcal H_{\rm fam}^{(\varepsilon)}:=\operatorname{Ind}_{\mu_2}^{\mu_6}\varepsilon\simeq\ell^2(E[2]^\times;\varepsilon),\qquad \dim\mathcal H_{\rm fam}^{(\varepsilon)}=3.
\label{eq:pgl-induced-family-module}
\end{equation}
Here $\ell^2(E[2]^\times;\varepsilon)$ denotes sections of the equivariant line associated with the stabilizer character. Retaining both stabilizer characters gives the canonical two-linearization envelope
\begin{equation}
\mathcal H_\Gamma^{(6)}:=\mathcal H_{\rm fam}^{(+)}\oplus\mathcal H_{\rm fam}^{(-)}\simeq\ell^2(\mathcal T_\Gamma^{(6)})\simeq\mathbb C[\mathbb Z_6],
\label{eq:pgl-zsix-master-carrier}
\end{equation}
with covariant algebra $\mathcal A_6\simeq M_6(\mathbb C)$. For the canonical primary shifts $S_3=S_6^2$ and $S_2=S_6^3$,
\begin{equation}
\mathcal H_\Gamma^{(6)}\simeq\mathbb C[\mathbb Z_3]\otimes\mathbb C[\mathbb Z_2],
\label{eq:pgl-zsix-carrier-factorization}
\end{equation}
and $P_\varepsilon=({\bf1}+\varepsilon S_2)/2$ gives
\begin{equation}
P_\varepsilon\mathcal A_6P_\varepsilon\simeq M_3(\mathbb C),\qquad P_\varepsilon\mathcal H_\Gamma^{(6)}\simeq\mathcal H_{\rm fam}^{(\varepsilon)}.
\label{eq:pgl-parity-character-corner}
\end{equation}
\end{proposition}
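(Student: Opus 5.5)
The plan is to work entirely with the Eisenstein model $\Lambda_E=\mathbb Z[\omega]$ of \eqref{eq:pgl-a2-eisenstein-lattice}, on which $\mu_6=\{\pm1,\pm\omega,\pm\omega^2\}$ acts by multiplication. First I identify $E[2]=\tfrac12\Lambda_E/\Lambda_E\simeq\Lambda_E/2\Lambda_E$. Since $x^2+x+1$ is irreducible modulo $2$, the prime $2$ is inert, so $\Lambda_E/2\Lambda_E\simeq\mathbb F_4$. Under this identification the unit action becomes multiplication by the image of $\mu_6$ in $\mathbb F_4^\times$. The element $-1$ maps to $1$ modulo $2$, so $\mu_2=\{\pm1\}$ acts trivially on $E[2]$. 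The quotient $\mu_6/\mu_2\simeq\mu_3$ maps isomorphically onto $\mathbb F_4^\times\simeq\mathbb Z_3$, generated by $\bar\omega$. Multiplication by $\mathbb F_4^\times$ on $\mathbb F_4\setminus\{0\}$ is simply transitive. Hence the Coxeter action $\rho$ is simply transitive on $E[2]^\times$, and every point has $\mu_6$-stabilizer exactly $\mu_2$. This gives $E[2]^\times\simeq\mu_6/\mu_2$ as $\mu_6$-sets, consistent with the stabilizer list preceding Lemma~\ref{lem:pgl-tubular-exceptional-degree-relation}.

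Next, for a character $\varepsilon$ of $\mu_2$, I use Frobenius reciprocity and the standard model of induction from a transitive $G$-set. Concretely, $\operatorname{Ind}_{\mu_2}^{\mu_6}\varepsilon$ is the space of functions $f:\mu_6\to\mathbb C$ with $f(gh)=\varepsilon(h)^{-1}f(g)$. Equivalently, it is the space of sections of the associated equivariant line over $\mu_6/\mu_2=E[2]^\times$, which gives \eqref{eq:pgl-induced-family-module} with dimension $[\mu_6:\mu_2]=3$. Summing over both characters and using $\operatorname{Ind}_{\mu_2}^{\mu_6}(\mathbf 1\oplus\operatorname{sgn})=\operatorname{Ind}_{\mu_2}^{\mu_6}\mathbb C[\mu_2]=\mathbb C[\mu_6]$ gives the regular representation $\mathbb C[\mathbb Z_6]$. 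I then identify $\mathbb C[\mathbb Z_6]$ with $\ell^2(\mathcal T_\Gamma^{(6)})$ through the six-cycle presentation, with the generator sent to the oriented shift $S_6$. The full endomorphism algebra of a $6$-dimensional space is $M_6(\mathbb C)=\mathcal A_6$. The factorization \eqref{eq:pgl-zsix-carrier-factorization} is the Chinese remainder isomorphism $\mathbb Z_6\simeq\mathbb Z_3\times\mathbb Z_2$, realized by the primary shifts $S_3=S_6^2$ and $S_2=S_6^3$. These shifts commute, have orders $3$ and $2$, and jointly generate the same group.

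Finally, $S_2$ is a unitary involution that generates the $\mu_2$ factor. Thus $P_\varepsilon=(\mathbf 1+\varepsilon S_2)/2$ is the orthogonal projection onto its $\varepsilon$-eigenspace. In the tensor model this is $\mathbb C[\mathbb Z_3]\otimes(\text{$\varepsilon$-line of }\mathbb C[\mathbb Z_2])$, which has dimension $3$, so $P_\varepsilon\mathcal A_6P_\varepsilon\simeq\operatorname{End}(\mathbb C^3)=M_3(\mathbb C)$. It remains to match this eigenspace with $\mathcal H_{\rm fam}^{(\varepsilon)}$. In the function model, the $\varepsilon$-eigenspace of $S_2$, which is translation by $-1\in\mu_6$, consists exactly of the functions transforming by $\varepsilon$ under the stabilizer $\mu_2$, that is, the induced module.

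The main obstacle is this last identification, together with the convention about whether the isotypic condition uses $\varepsilon$ or $\varepsilon^{-1}$. Because $\mu_2$ is self-dual this sign is harmless, but I will check it explicitly. I will also verify that the translation by the element of order two in $\mathbb Z_6$ is indeed $S_6^3$ under the marking $[L_\Gamma]\mapsto\vec\omega_{\mathbb X}$. This holds because $\vec\omega_{\mathbb X}$ has order six, so its third multiple generates the primary $\mathbb Z_2$, in agreement with the decomposition $\vec\omega_{\mathbb X}=-u_2-u_3$.
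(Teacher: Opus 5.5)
Your proposal is correct and follows the same route the paper indicates inline; the statement has no separate proof beyond its opening sentences. That route is: inertness of $2$ gives $E[2]\simeq\mathbb F_4$ with $\mu_2=\{\pm1\}$ acting trivially and $\mu_6/\mu_2\simeq\mathbb F_4^\times$ acting simply transitively, the rank-three modules are induced from this stabilizer, their sum is the regular representation split by the Chinese remainder theorem, and $P_\varepsilon$ is the $\varepsilon$-eigenprojection of $S_2=S_6^3$. Your explicit checks fill in details the paper leaves implicit: that $S_6^3$ is translation by $-1$ for either choice of generator of $\mu_6$, and that the $\varepsilon$ versus $\varepsilon^{-1}$ convention is immaterial.
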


The orbit and each rank-three module in \eqref{eq:pgl-induced-family-module} are structural. The six-dimensional envelope \eqref{eq:pgl-zsix-master-carrier} retains both stabilizer linearizations as a canonical presentation object and is not the physical family core before entry D. Entry D selects the $\mu_2$ linearization compatible with the physical Clifford parity; light-sector attachment additionally requires finite-shadow covariance and the Callias-Schur gap below. For parity-factor phase $\theta_2$ and edge weight $\nu_2>0$, the parity Laplacian has eigenvalues $2\nu_2(1\mp\cos\theta_2)$; at $e^{i\theta_2}=\varepsilon$ the selected character has eigenvalue $0$ and the complementary character has eigenvalue $4\nu_2$.

For the selected split, Proposition~\ref{prop:pgl-two-block-degree-calculus} and \eqref{eq:pgl-even-exterior-package} show that $r,T,T^\ast,d_F,k_Y$ are dependent readings: $r=N-1$, $T^\ast=N+1$, $T=N+2$, $d_F=2^{N-2}$, and $k_Y=2(N-1)/N$. The family rank is fixed independently by \eqref{eq:pgl-induced-family-module}. For later completed readings their distinct realizations are retained in the structural dictionary
\begin{equation}
(c,w,r,T,T^\ast,N,f,d_F,k_Y)=(3,2,5,8,7,6,3,16,5/3),\qquad f:=\dim\mathcal H_{\rm fam}^{(\varepsilon)},\quad d_F:=\dim F_\Gamma^{\rm even}.
\label{eq:pgl-structural-integer-dictionary}
\end{equation}

\subsection{Primitive graded Standard-Model reconstruction theorem}
\label{subsec:pgl-master-reconstruction-theorem}

\begin{theorem}[Primitive graded Standard-Model reconstruction]
\label{thm:pgl-primitive-graded-standard-model-reconstruction}
On the primitive two-grade branch, let $\Gamma$ satisfy the structural inputs A and B of Table~\ref{tab:pgl-structural-reconstruction-assumptions}. Proposition~\ref{prop:pgl-source-a2-identification} fixes the $A_2$-equianharmonic datum. Theorem~\ref{thm:pgl-rank-five-support} independently reconstructs the rank-five marked projective carrier \eqref{eq:pgl-rank-five-carrier} and its degree fidelity, while Theorem~\ref{thm:pgl-equianharmonic-theta-realization} gives the matching theta ranks. Under the structural closure C, the determinant-preserving carrier group is \eqref{eq:pgl-split-unimodular-group}, the chiral finite module is \eqref{eq:pgl-even-exterior-package}, and the canonical Abelian gradings are \eqref{eq:pgl-hypercharge-degree} and \eqref{eq:pgl-spin10-x-degree}. The global form is \eqref{eq:pgl-standard-model-global-form}. Proposition~\ref{prop:pgl-finite-core-full-shadow-representation} gives the structural rank-three family torsor $E[2]^\times$ and the canonical two-linearization envelope \eqref{eq:pgl-zsix-master-carrier}. Under entry D, one $\mu_2$ parity linearization is selected for physical attachment.
\end{theorem}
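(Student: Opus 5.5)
The theorem is an assembly statement, so I will prove it by chaining the earlier results in the order of the structural filtration. Doing it in that order ensures that no later reading feeds back into an earlier selection.

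\textbf{Step 1 (source and $A_2$ datum).} Entry A together with Lemma~\ref{lem:pgl-second-jet-reduction} gives the graded source $V_1[1]\oplus V_2[2]$. On the two-grade branch $I(J)=\{1,2\}$, Proposition~\ref{prop:pgl-source-a2-identification} identifies this source equivariantly with $\mathfrak{su}(N_\Gamma^{\mathbb C})$. That fixes $A_2$, its coroot lattice $\mathbb Z[\omega]$, and $E_\omega$. The choices made here (Cartan, marking) are removed by lattice automorphism, as stated in that proof.

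\textbf{Step 2 (carrier).} Entry B enters through Lemma~\ref{lem:pgl-gauss-local-superselection} and Lemma~\ref{lem:pgl-reconstructive-charge-sector-separation}. Combined with Corollary~\ref{cor:pgl-one-block-alias-exclusion} and the Borel--Weil cutoff (Lemma~\ref{lem:pgl-marked-borel-weil-rank-bound}), these feed Theorem~\ref{thm:pgl-rank-five-support}, which yields $V_\Gamma=E_3\oplus E_2$ with vanishing degree residual. Theorem~\ref{thm:pgl-equianharmonic-theta-realization} is cited separately for $\dim W_\theta=2$ and $\dim C_\theta=3$. I will stress that it is a consistency check and not an input, since its proof uses only the function field of $E_\omega$.

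\textbf{Step 3 (closure C).} Stabilizing $D_\Gamma=2P_C+P_W$ together with the top form gives \eqref{eq:pgl-graded-determinant-automorphism-group}, hence \eqref{eq:pgl-split-unimodular-group}. Proposition~\ref{prop:pgl-two-block-degree-calculus} with $(c,w)=(3,2)$ gives the cocharacter $2P_C-3P_W$, whose intersection with $SU(3)\times SU(2)$ is $\mu_6$. The covering $SU(3)\times SU(2)\times U(1)\to S(U(3)\times U(2))$, $(g,h,z)\mapsto(z^2g,z^{-3}h)$, then has kernel $\mathbb Z_6$, which yields \eqref{eq:pgl-standard-model-global-form}. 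Proposition~\ref{prop:pgl-reduced-clifford-completion} with the commutant condition fixes $m=1$, and even chirality gives \eqref{eq:pgl-even-exterior-package}. The gradings \eqref{eq:pgl-hypercharge-degree} and \eqref{eq:pgl-spin10-x-degree} are the charge-normalized readings of the same centered line. I will verify them against the character \eqref{eq:pgl-even-exterior-character}.

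\textbf{Step 4 (family data and entry D).} Proposition~\ref{prop:pgl-finite-core-full-shadow-representation} supplies the torsor $E[2]^\times\simeq\mu_6/\mu_2$, where simple transitivity follows from 2 being inert. It also gives the rank-three induced module and the envelope $\mathbb C[\mathbb Z_6]$. Entry D is then applied only as the selection of $P_\varepsilon$ from \eqref{eq:pgl-parity-character-corner}, and I will note explicitly that it is conditional.

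The main obstacle is the claim of independence between Step 2 and the theta ranks, and between the family rank and the carrier ranks. The difficulty is showing that no hidden identification, such as the projective marking $\mathbb P(W_\theta^\ast)\simeq\mathbb{CP}^1_\Gamma$, is used to derive ranks. I would handle this by noting that Theorem~\ref{thm:pgl-equianharmonic-theta-realization} states explicitly that the marking does not transport the $SU(2)$ action. The carrier ranks therefore come only from Borel--Weil, while the theta ranks and the $\mu_6$ stabilizer data come only from $E_\omega$.
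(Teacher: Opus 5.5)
Your proposal is correct and follows essentially the same assembly as the paper's proof, which cites Theorem~\ref{thm:pgl-rank-five-support}, Theorem~\ref{thm:pgl-equianharmonic-theta-realization}, \eqref{eq:pgl-graded-determinant-automorphism-group}, Proposition~\ref{prop:pgl-reduced-clifford-completion}, Proposition~\ref{prop:pgl-top-form-determinant-kernel}, and Proposition~\ref{prop:pgl-finite-core-full-shadow-representation}, with entry D fixing the physical linearization. Your explicit covering $(g,h,z)\mapsto(z^2g,z^{-3}h)$ with kernel $\mathbb Z_6$ correctly unpacks the global-form step, which the paper states in the running text rather than in the proof.
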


\begin{proof}
The projective carrier and its minimality are Theorem~\ref{thm:pgl-rank-five-support}; the theta-rank agreement is Theorem~\ref{thm:pgl-equianharmonic-theta-realization}. The carrier group, Clifford module, and determinant statements follow from \eqref{eq:pgl-graded-determinant-automorphism-group}, Proposition~\ref{prop:pgl-reduced-clifford-completion}, and Proposition~\ref{prop:pgl-top-form-determinant-kernel}. Proposition~\ref{prop:pgl-finite-core-full-shadow-representation} gives the weight-two family orbit and its two parity linearizations; entry D fixes the physical one.
\end{proof}

\begin{proposition}[Rational elliptic-surface and Mordell-Weil realization]
\label{prop:pgl-s632-mw-realization}
Over $[u:v]\in\mathbb P^1$, let $S_{632}$ be the relatively minimal elliptic surface associated with
\begin{equation}
y^2=x^3+uv^2(u-v)^3.
\label{eq:pgl-s632-surface}
\end{equation}
Its singular fibers are $II$, $IV$, and $I_0^\ast$ at $u=0$, $v=0$, and $u=v$, respectively, so $e(S_{632})=12$ and $T_{\rm fib}\simeq A_2\oplus D_4$. The surface is rational and is the relatively minimal desingularization of a diagonal order-six product quotient of $E_\omega\times E_\omega$. Moreover, $\operatorname{MW}(S_{632})\simeq\mathbb Z[\omega]P_0$, $\operatorname{MWL}(S_{632})\simeq\frac16A_2$, and a minimal section satisfies $h(P_0)=1/3$.
\end{proposition}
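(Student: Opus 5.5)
The plan is to work directly with the Weierstrass model over $\mathbb P^1$ and use Tate's algorithm together with the Shioda--Tate formula and the Oguiso--Shioda classification.

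\emph{Step 1: singular fibres and rationality.} In the affine chart $v=1$ the equation is $y^2=x^3+u(u-1)^3$. The coefficients are $a=0$ and $b(u)=u(u-1)^3$. In homogeneous form $b$ has degree $6$ in $(u,v)$, so the Weierstrass model has index $\chi=1$. Hence the minimal model is rational, with $e=12\chi=12$, once minimality is checked ($\operatorname{ord}b<6$ at every place). The discriminant is $\Delta=27b^2$, with orders $2,6,4$ at $u=0$, $u=v$, $v=0$ respectively. The order at $v=0$ is $4$ because $b=v^2u(u-v)^3$ with $a=0$. Tate's algorithm for $a=0$ reads the Kodaira type off $\operatorname{ord}b$. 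So $\operatorname{ord}b=1$ gives $II$, $\operatorname{ord}b=2$ gives $IV$, and $\operatorname{ord}b=3$ gives $I_0^\ast$. This yields the stated fibres, and the Euler numbers $2+4+6=12$ confirm the count. The trivial lattice root part is $T_{\rm fib}=A_2\oplus D_4$, since $II$ contributes nothing.

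\emph{Step 2: Mordell--Weil.} By Shioda--Tate, $\rho=10$ for a rational surface. This gives $\operatorname{rank}\operatorname{MW}=10-2-6=2$. I will then look up the Oguiso--Shioda table entry for $T=A_2\oplus D_4$. For this primitive embedding into $E_8$ the orthogonal complement is expected to be $A_2^\ast$, i.e. $\frac13$ of a hexagonal lattice of minimal norm $2/3$, up to the scaling convention. Torsion should vanish because $\det(A_2\oplus D_4)=12$ while $\det(\text{complement})\cdot|\text{tors}|^2$ must match $12$. The $\mu_6$ (really $\mu_3$) automorphism $x\mapsto\omega x$, combined with $y\mapsto -y$, acts on $\operatorname{MW}$. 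This makes it a rank-one $\mathbb Z[\omega]$-module, hence $\mathbb Z[\omega]P_0$.

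To pin down $h(P_0)=1/3$ I will exhibit a section explicitly. Then I compute the height by the formula
\[
h(P)=2\chi+2(P\cdot O)-\sum_v \operatorname{contr}_v(P),
\]
using the local contributions $2/3$ for $IV$ and $1$ for $I_0^\ast$. A section of the form $x=\alpha(u-v)$-multiple, $y$ polynomial, meeting nonidentity components of both reducible fibres gives $2-2/3-1=1/3$. This agrees with $\frac16A_2$, whose minimal norm is $\frac16\cdot2=\frac13$. Lattice determinant bookkeeping, $\det\operatorname{MWL}\cdot|\text{tors}|^2=\det T/\det NS$, checks consistency: $\det(\tfrac16A_2)=3/36=1/12=1/\det T$.

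\emph{Step 3: product-quotient description.} I will take $E_\omega\times E_\omega$ with a diagonal $\mu_6$ acting by $(\zeta,\zeta^{k})$, chosen so that the quotient is rational. Projecting to $E_\omega/\mu_6\simeq\mathbb P^1$ gives a fibration with branch points of stabilizer orders $6,3,2$. The resulting isotrivial $j=0$ fibres over the three branch points are $II$, $IV$ and $I_0^\ast$, according to the local monodromy orders $6,3,2$, with an appropriate twist exponent. The twisting form $u v^2(u-v)^3$ is the explicit $6$-th root cocycle, and this matches the equation.

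The main obstacle is this last identification together with the exact Mordell--Weil generator. Choosing the correct twist character and verifying that the section is not divisible in $\mathbb Z[\omega]$ is where care is needed. I would first try a section over $\mathbb Q(\omega)$ with $x$ of degree $2$ and $y$ of degree $3$ in $(u,v)$, and verify minimality via the height.
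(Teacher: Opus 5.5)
Your outline reaches the right conclusions and agrees with the paper on four points. For $a=0$ the Kodaira type is read off $\operatorname{ord}b$; rationality follows from $\deg b=6$; Shioda--Tate gives rank two; and the lattice comes from the Oguiso--Shioda $D_4\oplus A_2$ entry.

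Your route differs from the paper's in two places.
\begin{itemize}
\item For $\operatorname{MW}\simeq\mathbb Z[\omega]P_0$, the paper works on the product model. There sections become $\mu_6$-equivariant maps $E_\omega\to E_\omega$, and equivariance sends the fixed point to $O$, so $\operatorname{MW}=\operatorname{End}(E_\omega)=\mathbb Z[\omega]$, with torsion-freeness included. You use the fibrewise CM automorphism $(x,y)\mapsto(\omega x,-y)$ and the PID property. This is more elementary, but it needs torsion-freeness as a separate input.
\item For the product quotient, the paper goes from the equation to the product. The base change $s^6=t(t-1)^3$ with $x=s^2X$, $y=s^3Y$ gives $Y^2=X^3+1$. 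Riemann--Hurwitz gives genus one, and an order-six deck map with a fixed point makes the cover $E_\omega$, acting by $\zeta$ on both factors. You go in the opposite direction, from the quotient to the equation.
\end{itemize}

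Three points need repair.
\begin{itemize}
\item The orthogonal complement of $A_2\oplus D_4$ in $E_8$ is $A_2(2)$, not $A_2^\ast$. It sits inside $E_8$, so it is integral, of determinant $12$. The Mordell--Weil lattice is its dual, $A_2(2)^\ast\simeq\frac16A_2$. The lattice $A_2^\ast$ has minimal norm $2/3$ and would contradict the height $1/3$ you then obtain.
\item The correct relation is $\det\operatorname{MWL}=|\operatorname{MW}_{\rm tors}|^2/\det T_{\rm fib}$; your displayed version misplaces the factors. On its own this relation does not force trivial torsion. Your explicit-section idea closes the gap without the table:
\begin{itemize}
\item The section $P_0=(-(u-v)v,\,(u-v)^2v)$ satisfies the equation and misses $O$.
\item It passes through the singular points of the $IV$ and $I_0^\ast$ Weierstrass fibres, so $h(P_0)=2-\frac23-1=\frac13$.
\item Heights are invariant under $[\zeta_6]$, and $[\zeta_6]^2=[\zeta_6]-1$, which give $\langle P_0,[\zeta_6]P_0\rangle=\frac16$.
\item Hence $\mathbb Z[\omega]P_0$ has determinant $\frac1{12}$. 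This forces trivial torsion, index one, and $\operatorname{MWL}\simeq\frac16A_2$ at once.
\end{itemize}
\item In Step~3 you must take $k=1$; the choice $k=5$ gives the K3 surface with fibres $II^\ast,IV^\ast,I_0^\ast$. The final identification should then be stated as a rigidity argument rather than asserted. For $j=0$ each fibre type fixes $\operatorname{ord}_pb\bmod 6$. So once the branch points of orders $6,3,2$ are placed at $u=0$, $v=0$, $u=v$, the minimal form is forced to be $y^2=x^3+uv^2(u-v)^3$, up to scaling $x$ and $y$.
\end{itemize}
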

\begin{proof}
The stated Kodaira fibers follow from the vanishing orders in \eqref{eq:pgl-s632-surface} by the standard elliptic-surface calculus \cite{SchuettShioda2010EllipticSurfaces}; their Euler numbers sum to $12$. On the chart $v=1$, with $t=u/v$, the cyclic base change $s^6=t(t-1)^3$, followed by $x=s^2X$ and $y=s^3Y$, reduces \eqref{eq:pgl-s632-surface} to $Y^2=X^3+1$. The smooth compactification of the covering curve has genus one and an order-six deck transformation with a fixed point, hence is equianharmonic. For a deck generator $s\mapsto\zeta s$, the constant-fiber coordinates transform as $(X,Y)\mapsto(\zeta^{-2}X,\zeta^{-3}Y)$, which is multiplication by $\zeta$ in the uniformization of $E_\omega$. Thus $S_{632}$ is the relatively minimal desingularization of the diagonal order-six product quotient, and sections are $\mu_6$-equivariant maps $E_\omega\to E_\omega$. Equivariance fixes $O$, and $\operatorname{End}(E_\omega)=\mathbb Z[\omega]$, giving $\operatorname{MW}(S_{632})\simeq\mathbb Z[\omega]P_0$. The $D_4\oplus A_2$ entry of the rational elliptic-surface classification \cite{OguisoShioda1991MordellWeil} independently gives trivial torsion and $\operatorname{MWL}(S_{632})\simeq\frac16A_2$, fixing $h(P_0)=1/3$; the height scales by the Eisenstein norm.
\end{proof}

\begin{lemma}[Split $I_0^\ast$ family torsor]
\label{lem:pgl-i0star-family-torsor}
For the split $I_0^\ast$ fiber of $S_{632}$, the four outer components are indexed by $E_\omega[2]$, with the zero-section component corresponding to $O$. The remaining three are therefore indexed by $E_\omega[2]^\times$, and the order-three CM action $\rho$ induces the triality cycle on them.
\end{lemma}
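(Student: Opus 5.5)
The proof will use the base-change description from the proof of Proposition~\ref{prop:pgl-s632-mw-realization}. Near $u=v$, put $t=u/v$ and $\tau=t-1$. The Weierstrass equation $y^2=x^3+t\tau^3$ has vanishing order $3$ of the constant term, so the fiber there is of type $I_0^\ast$. A local quadratic base change $\tau=\sigma^2$ followed by the twist $x=\sigma^2X$, $y=\sigma^3Y$ (equivalently, the local restriction of the order-six cover $s^6=t(t-1)^3$) turns the fiber into the smooth curve $Y^2=X^3+t$. At $t=1$ this is $Y^2=X^3+1\simeq E_\omega$. Thus near $u=v$ the surface is, up to desingularization, the quotient of $E_\omega\times\Delta$ by the involution $(P,\sigma)\mapsto(-P,-\sigma)$.

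The first step is the standard Kodaira resolution of this involution quotient. The fixed points on the central fiber are exactly the four points of $E_\omega[2]$, and each gives an $A_1$ singularity. Blowing these up produces four exceptional curves of multiplicity one, namely the outer components. The image of the central fiber, $E_\omega/\pm1\simeq\mathbb P^1$, is the double central component. The zero section passes through the resolved point over $O$, so the component it meets is indexed by $O$. This gives the indexing of the four outer components by $E_\omega[2]$, and removing the $O$-component leaves the three components indexed by $E_\omega[2]^\times$.

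The second step addresses splitness. I read "split" as meaning that all four outer components are defined over the ground field and not permuted by monodromy. This holds because $E_\omega[2]$ is rational over the constant curve: its nonzero points are $(-1,0)$, $(-\omega,0)$, $(-\omega^2,0)$, which are fixed by the twist. As a consistency check, the component group of $I_0^\ast$ is $(\mathbb Z/2)^2$, in agreement with $E[2]\simeq\mathbb F_4$ from Proposition~\ref{prop:pgl-finite-core-full-shadow-representation}.

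The third step transports the CM action. The automorphism $\rho(X,Y)=(\omega X,Y)$ of the constant fiber commutes with $-1$ and with the base-change involution. It therefore descends to the local quotient and lifts to the resolution. On the outer components it acts by its permutation of $E_\omega[2]$. It fixes $O$ and sends $(-1,0)\mapsto(-\omega,0)\mapsto(-\omega^2,0)$. This is a $3$-cycle, which is simple transitivity of the Coxeter action from Proposition~\ref{prop:pgl-finite-core-full-shadow-representation}.

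I expect the main obstacle to be a global point. The order-three action comes from $\operatorname{End}(E_\omega)=\mathbb Z[\omega]$ acting fiberwise, that is, by $\omega$-multiplication on $\operatorname{MW}\simeq\mathbb Z[\omega]P_0$, rather than by an automorphism of the fixed fiber alone. I would handle this by checking that the global automorphism $(x,y)\mapsto(\omega x,y)$ of \eqref{eq:pgl-s632-surface} is defined over the base and preserves the fibration. Its restriction near $u=v$ is then the local $\rho$ above, so the induced action on the components of the $I_0^\ast$ fiber is the permutation computed in the third step.
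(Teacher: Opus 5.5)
Your proposal is correct and follows the same route as the paper. The paper also takes the local quotient of the constant equianharmonic fiber by the order-two subgroup $\{\pm1\}\subset\mu_6$ (your involution $(P,\sigma)\mapsto(-P,-\sigma)$), resolves its fixed points $E_\omega[2]$ into the four outer $D_4$ components, and lets $\rho$ cycle the three nonzero two-torsion points. You simply make explicit what the paper leaves implicit: the quadratic base change and twist, the $A_1$ resolution with its multiplicities, the zero-section check, and the descent of the fiberwise automorphism $(x,y)\mapsto(\omega x,y)$.
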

\begin{proof}
The local quotient uses the order-two subgroup $\{\pm1\}\subset\mu_6$, whose fixed set on the constant equianharmonic fiber is $E_\omega[2]$. Resolution gives the four outer $D_4$ components, while $\rho$ acts simply transitively on the three nonzero two-torsion points.
\end{proof}

\begin{remark}[Isotropy parity and triality reflection]
The isotropy subgroup $\mu_2$ fixes $E_\omega[2]$ pointwise and acts through the character $\varepsilon$ on the equivariant line in \eqref{eq:pgl-induced-family-module}. Complex conjugation $\kappa$ instead induces a nontrivial reflection of the three nonidentity arms; together with $\rho$ its permutation action generates $\operatorname{Out}(D_4)\simeq S_3$. Entry D therefore remains the independent conditional parity linearization.
\end{remark}

\begin{remark}[Three $A_2$ readings]
The $A_2$ datum of Proposition~\ref{prop:pgl-source-a2-identification} is the principal root datum reconstructed from the normal source, the $A_2$ summand of $T_{\rm fib}$ is the root lattice of the $IV$ fiber, and $\operatorname{MWL}(S_{632})\simeq\frac16A_2\simeq(A_2(2))^\vee$ is the scaled dual height lattice. In the standard $E_8$ frame of the rational elliptic-surface classification \cite{OguisoShioda1991MordellWeil}, $(A_2\oplus D_4)^\perp\simeq A_2(2)$; hence $A_2\oplus D_4\oplus A_2(2)\subset E_8$ has index $12$ and quotient $\mathbb Z_2\oplus\mathbb Z_6$. This quotient contains two cyclic order-six subgroups exchanged by an automorphism, so it does not canonically select the physical $\mu_6$ without an additional marking. The fiber and narrow Mordell-Weil lattices remain complementary pieces of the same $E_8$ frame, while the principal $A_2$ is reconstructed independently.
\end{remark}
~\\

\subsection{Finite torsor and Heisenberg response geometry}
\label{subsec:pgl-finite-response-geometry}

The response geometry below is applied after Theorem~\ref{thm:pgl-primitive-graded-standard-model-reconstruction} has fixed the carrier, global form, structural rank-three family torsor, and canonical two-linearization order-six envelope. It supplies finite transport and spectral response data without changing the structural selection.\\
~\\
The projective-color torsor has the intrinsic elliptic realization $\mathcal T_\Gamma:=E[2]^\times$. After the retained orientation marking, the Picard projection $q_3(\mathcal T_\Gamma^{(6)})$ gives the same $\mathbb Z_3$ torsor presentation, with translation induced by the Coxeter action:
\begin{equation}
\mathcal T_\Gamma=E[2]^\times\simeq q_3(\mathcal T_\Gamma^{(6)}),\qquad \tau_\Gamma=\rho|_{E[2]^\times}.
\label{eq:pgl-projective-color-torsor}
\end{equation}
Its finite-core representation is multiplicity free by Proposition~\ref{prop:fsr-finite-core-uniqueness}, hence
\begin{equation}
\mathcal H_{\rm cen}=\ell^2(E[2]^\times)\simeq\mathbb C[\mathbb Z_3]\simeq\mathcal H_{\rm fam}^{(\varepsilon)}
\label{eq:pgl-central-response-carrier}
\end{equation}
up to the one-dimensional parity linearization. Here central refers to commutation with the retained charge and exterior algebra before completed family blocks are inserted.\\
~\\
The underlying real regular module splits as
\begin{equation}
\mathbb R[\mathbb Z_3]=\mathbb R{\bf1}\oplus A_{2,\mathbb R},\qquad A_{2,\mathbb R}:=\left\{(x_0,x_1,x_2)\in\mathbb R^3:\sum_ax_a=0\right\}.
\label{eq:pgl-a2-phase-plane}
\end{equation}
The second summand is the real Coxeter plane of the $A_2$ datum fixed in Proposition~\ref{prop:pgl-source-a2-identification}; its integral lift is $I\simeq A_2\simeq\mathbb Z[\omega]$, the augmentation lattice in $L_{\rm fam}:=\mathbb Z[C_3]$. The projections $P_{\bf1}$ and $P_{\rm T}$, the Hodge quotient, the factor $\sqrt2$, and the real Fourier frame are readings of this decomposition; the low-symbol map \eqref{eq:pgl-vone-vtwo-low-symbol} fixes its complex finite response without a microscopic transport normalization.
\begin{proposition}[Regular family lattice and central cycle lift]
\label{prop:pgl-regular-family-central-lift}
The extension of the invariant line by $I$ in $L_{\rm fam}$ is the non-split unimodular class: for $t=e_0+e_1+e_2$, $a=e_0-e_1$, $b=e_1-e_2$, the split lattice $L_0=\mathbb Zt\oplus\mathbb Za\oplus\mathbb Zb$ has index three and $L_{\rm fam}=L_0+\mathbb Z(t+2a+b)/3$. After complexification every unitary lift of the marked projective regular cycle is $T_\mu=\mu S_{\rm reg}$, $\mu\in U(1)$, with $T_\mu^3=(\det T_\mu){\bf1}$; the fixed discrete Fourier transform sends it to $\mu Z^{\pm1}$ in the character frame. Thus the integral $C_3$ cycle and a non-torsion central determinant phase are compatible parts of one lift.
\end{proposition}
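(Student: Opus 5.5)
The proof splits into an integral part, concerning the lattice $L_{\rm fam}=\mathbb Z[C_3]=\mathbb Z e_0\oplus\mathbb Z e_1\oplus\mathbb Z e_2$, and a complex part, concerning unitary lifts of the regular cycle.

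\emph{Integral part.} We use the standard form $\langle e_a,e_b\rangle=\delta_{ab}$, which is unimodular. With $t=e_0+e_1+e_2$, $a=e_0-e_1$, $b=e_1-e_2$, the invariant line is $\mathbb Z t$ and $I=\mathbb Z a\oplus\mathbb Z b$ is the augmentation lattice. The Gram matrix of $I$ is $\left(\begin{smallmatrix}2&-1\\-1&2\end{smallmatrix}\right)$, which is $A_2$. The plan has four steps.
\begin{enumerate}
\item The change-of-basis matrix from $(e_0,e_1,e_2)$ to $(t,a,b)$ has determinant $\pm3$, so $[L_{\rm fam}:L_0]=3$.
\item Solving $t+2a+b=3e_0$ gives $e_0=(t+2a+b)/3\in L_{\rm fam}$. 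Since this element lies outside $L_0$ and the index is the prime $3$, we get $L_{\rm fam}=L_0+\mathbb Z(t+2a+b)/3$.
\item For non-splitness, suppose $L_{\rm fam}=\mathbb Z t'\oplus I$ with $t'$ invariant. The invariant sublattice is $L_{\rm fam}^{C_3}=\mathbb Z t$, so $t'=\pm t$, and then $L_{\rm fam}=L_0$, which contradicts index $3$.
\item Equivalently, as a lattice gluing, the glue group $L_{\rm fam}/(\mathbb Z t\oplus I)\simeq\mathbb Z_3$ identifies the discriminant groups $\langle t/3\rangle\subset(\mathbb Zt)^\ast/\mathbb Zt\simeq\mathbb Z_3$ and $A_2^\ast/A_2\simeq\mathbb Z_3$. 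This glue is exactly what produces the unimodular overlattice $\mathbb Z^3$, so the extension class is the non-split unimodular one.
\end{enumerate}

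\emph{Complex part.} Let $S_{\rm reg}$ be the regular shift on $\mathbb C[\mathbb Z_3]\simeq\mathcal H_{\rm cen}$, as in \eqref{eq:pgl-central-response-carrier}. A unitary lift of the marked projective cycle is a unitary $T$ whose image in $PU(3)$ equals the class of $S_{\rm reg}$. Since the kernel of $U(3)\to PU(3)$ is the scalars $U(1)$, this forces $T=\mu S_{\rm reg}$ with $\mu\in U(1)$. The determinant identity follows from two facts: $S_{\rm reg}^3={\bf1}$, and $\det S_{\rm reg}=+1$ because a $3$-cycle is an even permutation. Hence $T_\mu^3=\mu^3{\bf1}=\det(T_\mu){\bf1}$. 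The finite Fourier transform $F$ diagonalizes $S_{\rm reg}$ with eigenvalues $1,\omega,\omega^2$, so $FS_{\rm reg}F^{-1}=Z^{\pm1}$, the sign depending on the Fourier convention, and therefore $FT_\mu F^{-1}=\mu Z^{\pm1}$.

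The closing compatibility statement then follows. The integral cycle is the case $\mu=1$, which preserves $L_{\rm fam}$. The map $\mu\mapsto\det T_\mu=\mu^3$ is a central $U(1)$ phase, and it is non-torsion for generic $\mu$. Both therefore lie in the single family $\{T_\mu\}$.

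The main obstacle is making ``non-split unimodular class'' precise. The cleanest route is the discriminant-glue formulation in step 4, together with the invariant-sublattice argument in step 3 to rule out any invariant complement. A second point to state explicitly is that ``lift of the marked projective cycle'' fixes the $PU(3)$ class, so the only freedom is the scalar $\mu$.
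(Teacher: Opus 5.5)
Your proposal is correct. The paper states this proposition without a separate proof, and your argument is the direct verification it leaves implicit: the determinant-three change of basis with glue vector $e_0=(t+2a+b)/3$, non-splitting because any invariant complement lies in $L_{\rm fam}^{C_3}=\mathbb Z t$ and would force $L_{\rm fam}=L_0$, and the scalar kernel of $U(3)\to PU(3)$ combined with $S_{\rm reg}^3={\bf1}$ and $\det S_{\rm reg}=1$.

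One point to sharpen is the final compatibility claim. It should say that a single lift $T_\mu=\mu S_{\rm reg}$ with non-torsion $\mu$ has the integral cycle as its projective part, while $T_\mu$ itself maps $L_{\rm fam}$ to $\mu L_{\rm fam}$ and so preserves the lattice literally only for $\mu=\pm1$. As written, you instead treat the cycle and the phase as different members ($\mu=1$ versus generic $\mu$) of the family.
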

For a Hermitian coefficient system $\mathcal E_\Gamma$, the edge complex is
\begin{equation}
0\longrightarrow C^0(\mathcal T_\Gamma,\mathcal E_\Gamma)\xrightarrow{\ d_\Gamma\ }C^1(\mathcal T_\Gamma,\mathcal E_\Gamma)\longrightarrow0,
\label{eq:pgl-torsor-complex}
\end{equation}
with
\begin{equation}
(d_\Gamma\xi)_{a+1}=U_a\xi_a-e^{i\theta_\Gamma}\xi_{a+1}.
\label{eq:pgl-torsor-differential}
\end{equation}
The associated finite Hodge module is
\begin{equation}
\mathfrak H_\Gamma^{\rm tor}=\left(C^\bullet(\mathcal T_\Gamma,\mathcal E_\Gamma),d_\Gamma,D_\Gamma^{\rm tor},\Delta_\Gamma,\mathcal A_{\rm fam}\right),\qquad \mathcal A_{\rm fam}=\operatorname{End}(\mathcal H_{\rm cen}).
\label{eq:pgl-hodge-spectral-torsor-module}
\end{equation}
The graph-Hodge convention is represented by \cite{FriedmanTillich2004GraphCalculus}, the unitary bundle-Laplacian comparison by \cite{Kenyon2011VectorBundleLaplacian}, and finite spectral-module comparisons by \cite{ChristensenIvan2006AFSpectralTriples}.\\
~\\
The transported Wilson defect is
\begin{equation}
\mathcal W_\Gamma=e^{-3i\theta_\Gamma}U_2U_1U_0.
\end{equation}
With $\mathbb E_{Z,\Gamma}$ the trace-preserving expectation onto the center of the transported charge algebra, torsor closure is
\begin{equation}
R_{\rm tor}^\Gamma=\mathcal W_\Gamma-\mathbb E_{Z,\Gamma}(\mathcal W_\Gamma).
\label{eq:pgl-torsor-closure-residual}
\end{equation}
Its flat subbranch is
\begin{equation}
R_{\rm flat}^\Gamma=\mathcal W_\Gamma-{\bf1}.
\label{eq:pgl-torsor-flat-residual}
\end{equation}
On a simple central summand and in a cycle-trivial coefficient frame, $d_\theta=S-e^{i\theta_\Gamma}{\bf1}$. The scalar Laplacian is
\begin{equation}
\Delta_\theta=2{\bf1}-e^{-i\theta_\Gamma}S-e^{i\theta_\Gamma}S^\dagger.
\label{eq:pgl-torsor-laplacian}
\end{equation}

For the scalar eigenvalues $2\nu_\Gamma(1-\cos(2\pi k/3-\theta_\Gamma))$, the unscaled elementary symmetric functions are
\begin{equation}
\sum_k\mu_k=6,\qquad \sum_{i<j}\mu_i\mu_j=9,\qquad \prod_k\mu_k=|1-\mathcal W_\Gamma|^2,
\label{eq:pgl-torsor-laplacian-symmetric-invariants}
\end{equation}
and
\begin{equation}
\operatorname{Disc}_\mu\chi_{\Delta_\theta}=27w_\Gamma(4-w_\Gamma),\qquad w_\Gamma:=|1-\mathcal W_\Gamma|^2.
\label{eq:pgl-torsor-spectral-discriminant}
\end{equation}
Thus the scalar spectrum is simple away from $\mathcal W_\Gamma=\pm1$, with endpoint spectra $(0,3,3)$ and $(1,1,4)$. On the scalar cycle-trivial branch, $\det d_\theta=1-e^{3i\theta_\Gamma}$, whereas $\det\Delta_\theta=|\det d_\theta|^2$, so the Gram determinant retains the magnitude and loses the oriented Wilson phase. For vector-bundle coefficients, let $w_j=e^{i\phi_j}$ be the eigenvalues of $\mathcal W_\Gamma$. A unitary vertex gauge reduces the complex to scalar twisted three-cycles, giving
\begin{equation}
\chi_{\Delta_\Gamma}(\mu)=\prod_j\left(\mu^3-6\mu^2+9\mu-|1-w_j|^2\right),
\qquad
\det\Delta_\Gamma=|\det({\bf1}-\mathcal W_\Gamma)|^2.
\end{equation}
Hence the Gram spectrum determines the multiset $\{|1-w_j|^2\}$ and therefore the Wilson eigenvalues up to permutation and independent conjugations $w_j\leftrightarrow\bar w_j$; these orientations remain pre-Gram data. If $\delta_j=\operatorname{Arg}w_j\in[-\pi,\pi]$, then $\lambda_{\min}(\Delta_\Gamma)=4\min_j\sin^2(|\delta_j|/6)$, so the gap closes exactly when $1\in\operatorname{Spec}\mathcal W_\Gamma$. Moreover $\dim H^0=\dim H^1=\dim\ker({\bf1}-\mathcal W_\Gamma)$.
On a central rank-$r$ summand write $\mathcal W_\Gamma=e^{i\chi_\Gamma}{\bf1}$. For a closed microscopic loop $\gamma$, put $n_a:=\operatorname{wind}_\gamma\det U_a$ and $n_\theta:=(2\pi)^{-1}\oint_\gamma d\theta_\Gamma$. Taking determinants of the Wilson product gives
\begin{equation}
n_{\rm W}:=\frac1{2\pi}\oint_\gamma d\chi_\Gamma=-3n_\theta+\frac1r\sum_{a=0}^2n_a\in\mathbb Z.
\label{eq:pgl-wilson-winding-ledger}
\end{equation}
Hence $\sum_an_a\equiv0\pmod r$ whenever $n_\theta\in\mathbb Z$; for the physical rank-three block the summed edge-determinant winding is divisible by three. The low-symbol channel in \eqref{eq:pgl-vone-vtwo-low-symbol} is traceless and therefore fixes only the projective family transport. Proposition~\ref{prop:pgl-regular-family-central-lift} separates this integral projective cycle from its central determinant coefficient, which may be supplied by determinant-line/Berry or global transport without requiring the full non-torsion operator to preserve $\mathbb Z[C_3]$ literally.

\subsection{Eisenstein norms and finite Heisenberg coefficient calculus}

Let $\Lambda_E$ be the $A_2$ lattice fixed in \eqref{eq:pgl-a2-eisenstein-lattice}. In an oriented Eisenstein basis its Gram matrix for $2N_E$ is $\left(\begin{smallmatrix}2&-1\\-1&2\end{smallmatrix}\right)$. Put $\varpi=1-\omega$. Since $\varpi^2=-3\omega$, the mod-three coefficient algebra is $\Lambda_E/3\Lambda_E\simeq\mathbb F_3[\varpi]/(\varpi^2)$ and has the canonical nonzero proper ideal $\mathfrak r_3:=\varpi\Lambda_E/3\Lambda_E$. In the oriented basis $(1,\omega)$, the alternating pairing of $a+b\omega$ and $c+d\omega$ is $ad-bc\pmod3$.\\
~\\
\begin{lemma}[Ramified Jordan line]
\label{lem:pgl-ramified-jordan-line}
Multiplication by $\varpi$ on $\Lambda_E/3\Lambda_E$ is a nonzero rank-one nilpotent $N_\varpi$ with $N_\varpi^2=0$. Hence $\operatorname{im}N_\varpi=\ker N_\varpi=\mathfrak r_3$. For $\xi=1+\varpi$, multiplication is ${\bf1}+N_\varpi$, has order three, and fixes $\mathfrak r_3$ pointwise.
\end{lemma}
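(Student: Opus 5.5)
The argument is a direct computation in the Eisenstein ring. Write $R:=\mathbb Z[\omega]$ and identify $\Lambda_E/3\Lambda_E$ with $R/3R$. Multiplication by $\varpi$ is then an $\mathbb F_3$-linear endomorphism $N_\varpi$ of the two-dimensional space $R/3R$. Since $\varpi^2=-3\omega$ lies in $3R$, we get $N_\varpi^2=0$.

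Next I would show that $N_\varpi$ is nonzero. This reduces to $\varpi\notin 3R$, which follows from comparing norms: $N_E(\varpi)=3$, while every nonzero element of $3R$ has norm divisible by $9$. Equivalently, in the oriented basis $(1,\omega)$ the element $\varpi$ has coordinates $(1,-1)$, which is not $(0,0)$ mod $3$. As a check, I would also write the matrix. We have $\varpi\cdot1=1-\omega$ and $\varpi\cdot\omega=\omega-\omega^2=1+2\omega$, using $\omega^2=-1-\omega$. So the matrix has columns $(1,-1)$ and $(1,2)\equiv(1,-1)$ mod $3$. It has rank one and its square vanishes mod $3$.

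For the image and kernel, rank–nullity on a two-dimensional space gives $\dim\ker N_\varpi=1$. The relation $N_\varpi^2=0$ gives $\operatorname{im}N_\varpi\subseteq\ker N_\varpi$. Both are lines, so they are equal. By definition $\operatorname{im}N_\varpi=\varpi\Lambda_E/3\Lambda_E=\mathfrak r_3$, which proves $\operatorname{im}N_\varpi=\ker N_\varpi=\mathfrak r_3$.

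For $\xi=1+\varpi$, multiplication by $\xi$ is ${\bf1}+N_\varpi$ by distributivity. In characteristic three, the binomial expansion together with $N_\varpi^2=0$ gives $({\bf1}+N_\varpi)^3={\bf1}+3N_\varpi+3N_\varpi^2+N_\varpi^3={\bf1}$. Since $N_\varpi\neq0$, we also have ${\bf1}+N_\varpi\neq{\bf1}$, so the order is exactly three. Finally, for $x\in\mathfrak r_3=\ker N_\varpi$ we get $({\bf1}+N_\varpi)x=x$, so $\mathfrak r_3$ is fixed pointwise.

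The only step that needs care is the nonvanishing $\varpi\notin3R$, i.e. that $3$ is ramified rather than split or inert. The norm comparison settles it immediately, so I expect no real obstacle.
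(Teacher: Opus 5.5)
Your proposal is correct and is essentially the paper's own argument. The paper gives no separate proof: it rests on $\varpi^2=-3\omega$, i.e.\ $\Lambda_E/3\Lambda_E\simeq\mathbb F_3[\varpi]/(\varpi^2)$ with $\mathfrak r_3=\varpi\Lambda_E/3\Lambda_E$, and adding the identity to your matrix of $N_\varpi$ reproduces the matrix $\left(\begin{smallmatrix}2&1\\2&0\end{smallmatrix}\right)$ recorded in Corollary~\ref{cor:pgl-norm-seven-theta-weil-reading}; your norm argument for $\varpi\notin3\mathbb Z[\omega]$ and the rank--nullity step fill in exactly what the paper leaves implicit.
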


Commensurability of the squared singular values of $d_\theta$, equivalently of the eigenvalues of $\Delta_\theta=d_\theta^\dagger d_\theta$, is the additional arithmetic condition
\begin{equation}
\mu_1=\frac{p^2}{n},\qquad \mu_2=\frac{q^2}{n},\qquad \mu_3=\frac{r^2}{n},\qquad n\in\mathbb Z_{>0},\qquad 0<p\leq q\leq r,\qquad \gcd(p,q,r)=1.
\label{eq:pgl-commensurate-torsor-spectrum}
\end{equation}
It is imposed after reconstruction of the finite cycle. Torsor closure alone does not discretize $\theta_\Gamma$: in the cycle-trivial coefficient frame $U_0=U_1=U_2={\bf1}$, \eqref{eq:pgl-torsor-closure-residual} vanishes for every $\theta_\Gamma$, while the eigenvalues of \eqref{eq:pgl-torsor-laplacian} vary continuously. A pre-carrier derivation of \eqref{eq:pgl-commensurate-torsor-spectrum} therefore requires an additional integral holonomy condition. The identification $\mathbb Z[\mathbb Z_3]/(1+S+S^2)\simeq\Lambda_E$ is the integral lift used here; over $\mathbb Q$ it is the nontrivial factor of $\mathbb Q[\mathbb Z_3]\simeq\mathbb Q\oplus\mathbb Q(\omega)$, while integrally the two factors remain glued at the prime above $3$.

\begin{proposition}[Primitive commensurate torsor spectrum]
\label{prop:pgl-primitive-commensurate-torsor-spectrum}
Under \eqref{eq:pgl-commensurate-torsor-spectrum},
\begin{equation}
r=p+q,\qquad n=\frac{p^2+pq+q^2}{3}.
\label{eq:pgl-commensurate-torsor-diophantine}
\end{equation}
With $\omega=e^{2\pi i/3}$,
\begin{equation}
\xi=\frac{p-q\omega}{1-\omega}\in\mathbb Z[\omega],\qquad N_E(\xi)=\xi\bar\xi=n,
\end{equation}
and, after cyclic labeling and orientation,
\begin{equation}
e^{i\theta_\Gamma}=-\omega^2\frac{\xi}{\bar\xi},\qquad |1-e^{i\theta_\Gamma}|^2=\frac{p^2}{n},\quad |\omega-e^{i\theta_\Gamma}|^2=\frac{q^2}{n},\quad |\omega^2-e^{i\theta_\Gamma}|^2=\frac{r^2}{n}.
\end{equation}
Writing $\tau_k:=\operatorname{Tr}_{\mathbb Q(\omega)/\mathbb Q}(\xi\omega^k)$, the same orientation gives $(\tau_0,\tau_1,\tau_2)=(r,-p,-q)$. Hence $\sum_k\tau_k=0$, $\sum_k\tau_k^2=6N_E(\xi)$, and the commensurate packet is $(|\tau_1|,|\tau_2|,|\tau_0|;N_E(\xi))$; in this parametrization $r=p+q$ is the trace shadow of $1+\omega+\omega^2=0$.
The ratio $\xi/\bar\xi$ is the Hilbert-90 parametrization of the rational norm-one torus of $\mathbb Q(\omega)/\mathbb Q$. Primitive denominators are norms of primitive Eisenstein elements modulo units and conjugation. Inert rational factors force common divisibility, while the ramified prime above $3$ is excluded by the primitive congruence in this parametrization. The first simple class therefore has norm $7$, with
\begin{equation}
\operatorname{spec}\Delta_\theta=\left\{\frac17,\frac{16}{7},\frac{25}{7}\right\}.
\label{eq:pgl-minimal-commensurate-torsor-spectrum}
\end{equation}
\end{proposition}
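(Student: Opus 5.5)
Our plan is to work entirely on the scalar cycle-trivial branch, where $\Delta_\theta$ has eigenvalues $\mu_k=|\omega^k-e^{i\theta_\Gamma}|^2$ (after relabeling $k$; the factor $2(1-\cos)$ equals this modulus squared). The symmetric invariants \eqref{eq:pgl-torsor-laplacian-symmetric-invariants} are the only inputs needed. Substituting \eqref{eq:pgl-commensurate-torsor-spectrum} gives $p^2+q^2+r^2=6n$ and $p^2q^2+q^2r^2+r^2p^2=9n^2$. Then $(p^2+q^2+r^2)^2-4(p^2q^2+\dots)=0$, i.e.\ $\sum p^4=2\sum p^2q^2$. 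This is the Heron factorization $(p+q+r)(-p+q+r)(p-q+r)(p+q-r)=0$ with signs. With $0<p\le q\le r$, the only vanishing factor is $p+q-r$, so $r=p+q$. Substituting back gives $2(p^2+pq+q^2)=6n$, which is \eqref{eq:pgl-commensurate-torsor-diophantine}. For a geometric cross-check we will note that three points $\omega^k$ on the unit circle at mutual distance $\sqrt3$ satisfy Ptolemy's relation, which yields the same $r=p+q$ directly.

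Next we verify the Eisenstein parametrization. We need $\xi=(p-q\omega)/(1-\omega)\in\mathbb Z[\omega]$. Since $p^2+pq+q^2\equiv(p-q)^2\pmod 3$ and $3\mid$ it, we get $p\equiv q\pmod 3$. Hence $p-q\omega=(p-q)+q(1-\omega)$ is divisible by $\varpi$, because $3=-\omega^2\varpi^2$. The norm is $N_E(p-q\omega)/N_E(\varpi)=(p^2+pq+q^2)/3=n$. Setting $e^{i\theta}=-\omega^2\xi/\bar\xi$, we will compute $|\omega^k-e^{i\theta}|^2=|\omega^k\bar\xi+\omega^2\xi|^2/n$ directly. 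We expect these to reduce to traces, $|\omega^k\bar\xi+\omega^2\xi|=|\operatorname{Tr}(\omega^{j}\xi)|$ for a suitable $j$, because $\omega^k\bar\xi+\omega^2\xi$ is a unit multiple of a real number $\eta+\bar\eta$. This gives the values $p^2/n,q^2/n,r^2/n$, and we identify the traces as $(\tau_0,\tau_1,\tau_2)=(r,-p,-q)$ by evaluating $\operatorname{Tr}(\xi\omega^k)$ with $\xi$ written in the basis $(1,\omega)$. Then $\sum\tau_k=\operatorname{Tr}(\xi(1+\omega+\omega^2))=0$. Also $\sum\tau_k^2=\sum|\xi\omega^k+\bar\xi\bar\omega^k|^2=6N_E(\xi)+2\operatorname{Re}\bigl(\xi^2\sum\omega^{2k}\bigr)=6n$. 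The main bookkeeping risk is the orientation and labeling of the cyclic sign. We fix it by checking against the example $(p,q)=(1,4)$.

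Finally, primitivity and the minimal class. Hilbert 90 for the norm-one torus of $\mathbb Q(\omega)/\mathbb Q$ is standard. Units and conjugation only permute the labels and the orientation. The denominator $n$ is a norm, so its primes are $3$, primes $\equiv1\pmod 3$, or squares of inert primes. An inert prime $\ell\mid n$ forces $\ell\mid\xi$, and hence $\ell\mid p,q,r$ through the trace formulas, contradicting $\gcd=1$. If $\varpi\mid\xi$, then $3\mid p-q\omega$ up to the factor $\varpi$, which forces $p\equiv q\equiv0\pmod 3$ after unwinding; we expect this to be the delicate congruence step. We will handle it by writing $\xi=a+b\omega$ and checking that $\varpi\mid\xi$ iff $a\equiv-b\pmod 3$, which makes all three traces divisible by $3$. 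The remaining denominators are products of split primes $\ge7$ (and $n=1$, giving the degenerate spectrum $\{0,1,1\}\cdot$ with $p=0$, which $p>0$ excludes). The least case is $n=7$ with $p^2+pq+q^2=21$, giving $(p,q)=(1,4)$ and $r=5$. This yields \eqref{eq:pgl-minimal-commensurate-torsor-spectrum}.
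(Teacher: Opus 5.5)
Your route is the paper's: the Heron factorization of the two symmetric invariants, the congruence $p\equiv q\pmod 3$ giving $\xi\in\mathbb Z[\omega]$, the trace reading of the phase, and the exclusion of inert and ramified primes. Your identity $|\omega^k-e^{i\theta_\Gamma}|^2=\tau_{k+1}^2/n$ is correct.

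The step that fails is your treatment of $n=1$. A norm-one $\xi$ is a unit. For $\xi=1$ you get $(\tau_0,\tau_1,\tau_2)=(2,-1,-1)$, i.e.\ $(p,q,r;n)=(1,1,2;1)$ with spectrum $\{1,1,4\}$, which is the $\mathcal W_\Gamma=-1$ endpoint. This triple satisfies $0<p\le q\le r$ and $\gcd(p,q,r)=1$, so $n=1$ is \emph{not} excluded by $p>0$. The spectrum with a zero eigenvalue is the flat endpoint $\{0,3,3\}$. It comes from $\xi=\varpi$ at $n=3$, and your ramified-prime argument already removes it. As written, your argument would prove that $7$ is the least primitive denominator, which is false.

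The missing ingredient is the word \emph{simple} in ``first simple class,'' which your proposal never uses. Since $r=p+q>q$, the spectrum is simple exactly when $p<q$. At $n=1$ the only solution of $p^2+pq+q^2=3$ is $p=q=1$, so that class is the nonsimple primitive endpoint. The first simple class therefore sits at the least split prime $7$. There, $p^2+pq+q^2=21$ with $p\le q$ forces $p\le 2$, and only $(p,q)=(1,4)$ works. With this correction your proof coincides with the paper's, which records $(1,1,2;1)$ as the nonsimple endpoint before passing to $p<q$.
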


\begin{proof}
Substitution in \eqref{eq:pgl-torsor-laplacian-symmetric-invariants} gives the vanishing Heron product; the ordering yields $r=p+q$, and then \eqref{eq:pgl-commensurate-torsor-diophantine}. Since $n\in\mathbb Z_{>0}$, \eqref{eq:pgl-commensurate-torsor-diophantine} gives $(p-q)^2\equiv0\pmod3$, hence $p\equiv q\pmod3$; this is exactly the condition $\xi\in\mathbb Z[\omega]$, whose Eisenstein norm is $n$. The phase formula is direct. The nonsimple primitive endpoint is $(1,1,2;1)$. For a simple spectrum $p<q$; the first split rational prime in the Eisenstein norm form is $7$, represented by $2-\omega$, and the corresponding primitive pair is $(1,4)$. This gives \eqref{eq:pgl-minimal-commensurate-torsor-spectrum} and $w_\Gamma=400/343$.
\end{proof}

\begin{corollary}[Mordell-Weil height and commensurate denominator]
\label{cor:pgl-mw-height-commensurate-denominator}
For a primitive commensurate class $\xi$ of Proposition~\ref{prop:pgl-primitive-commensurate-torsor-spectrum}, put $P_\xi:=\xi P_0$ in $\operatorname{MW}(S_{632})$. Then
\begin{equation}
n=N_E(\xi)=3h(P_\xi).
\label{eq:pgl-mw-height-commensurate-denominator}
\end{equation}
\end{corollary}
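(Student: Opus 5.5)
The corollary should follow from Proposition~\ref{prop:pgl-s632-mw-realization} and Proposition~\ref{prop:pgl-primitive-commensurate-torsor-spectrum}. The one real step is to show that the Néron--Tate height on $\operatorname{MW}(S_{632})\simeq\mathbb Z[\omega]P_0$ scales by the Eisenstein norm under the $\mathbb Z[\omega]$-module action. The plan is as follows.

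\textbf{Step 1: identify the module structure.} The Mordell--Weil group is a rank-one $\mathbb Z[\omega]$-module generated by $P_0$. The CM action comes from the order-six deck automorphism $(X,Y)\mapsto(\zeta^{-2}X,\zeta^{-3}Y)$, i.e.\ from the fiberwise automorphism $(x,y)\mapsto(\omega x,y)$ (and $-1$) of \eqref{eq:pgl-s632-surface}. These automorphisms act on $S_{632}$ by fiber-preserving automorphisms fixing the zero section. Hence they preserve the height pairing $\langle\cdot,\cdot\rangle$, so unit multiples of $P_0$ have height $1/3$.

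\textbf{Step 2: show the height is a Hermitian norm form.} For $\xi=a+b\omega$, I would write $h(\xi P_0)=\langle aP_0+b\,\omega P_0,\,aP_0+b\,\omega P_0\rangle$. Invariance of the pairing under $\omega$ gives $\langle P_0,\omega P_0\rangle=\langle \omega P_0,\omega^2P_0\rangle$. Combined with the relation $P_0+\omega P_0+\omega^2P_0=0$, this yields
\[
\langle P_0,\omega P_0\rangle=-\tfrac12 h(P_0).
\]
Hence
\[
h(\xi P_0)=h(P_0)\,(a^2-ab+b^2)=h(P_0)\,N_E(\xi).
\]
This is consistent with $\operatorname{MWL}(S_{632})\simeq\frac16A_2$. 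In the Gram matrix $\frac16\left(\begin{smallmatrix}2&-1\\-1&2\end{smallmatrix}\right)$, i.e.\ $2N_E$ rescaled by $1/6$, the generator has height $1/3$. The final sentence of the proof of Proposition~\ref{prop:pgl-s632-mw-realization} already records this scaling, and I would simply spell it out.

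\textbf{Step 3: substitute.} With $h(P_0)=1/3$, Step~2 gives $h(P_\xi)=N_E(\xi)/3$. Proposition~\ref{prop:pgl-primitive-commensurate-torsor-spectrum} supplies $N_E(\xi)=n$ for $\xi=(p-q\omega)/(1-\omega)$, so $n=N_E(\xi)=3h(P_\xi)$. The choice of $\xi$ modulo units and conjugation does not matter: units preserve height by Step~1, and conjugation preserves $N_E$.

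\textbf{Main obstacle.} The delicate point is the sign and normalization of the cross term, i.e.\ confirming that $\omega$ really acts as a height isometry and not merely up to scale. To settle it, I would use that the automorphism preserves the zero section and permutes the fiber components of the $II$, $IV$, $I_0^\ast$ fibers. The height formula $h(P)=2\chi+2(P\cdot O)-\sum\mathrm{contr}$ is invariant under such automorphisms because each term is intrinsic. Once that is established, everything else is direct substitution.
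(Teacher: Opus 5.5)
Your proposal is correct and follows the paper's route. The paper's proof simply combines $N_E(\xi)=n$ from Proposition~\ref{prop:pgl-primitive-commensurate-torsor-spectrum}, $h(P_0)=1/3$, and the Eisenstein-norm scaling of the height; your Step~2 only makes that scaling explicit, using the $\omega$-isometry and $1+\omega+\omega^2=0$ to get $h(\xi P_0)=N_E(\xi)\,h(P_0)$.

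One wording fix: the isometry you actually use is the fiberwise automorphism $(x,y)\mapsto(\omega x,y)$ of $S_{632}$, which fixes $O$. It is not the deck transformation of the base change, which acts trivially on the quotient.
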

\begin{proof}
The first equality is Proposition~\ref{prop:pgl-primitive-commensurate-torsor-spectrum}. The second follows from $h(P_0)=1/3$ in Proposition~\ref{prop:pgl-s632-mw-realization} and the Eisenstein-norm scaling of the Mordell-Weil height.
\end{proof}

\begin{corollary}[Rank-one Hodge-Eisenstein lift]
\label{cor:pgl-rank-one-hodge-eisenstein-lift}
The primitive rank-one lifts form the affine Eisenstein tower $\xi_m=1+m\varpi=(m+1)-m\omega$, $m\in\mathbb Z_{\geq0}$, with $N_E(\xi_m)=3m^2+3m+1$. Their Mordell-Weil realization is the specialization of \eqref{eq:pgl-mw-height-commensurate-denominator} to $P_m:=\xi_mP_0$. The corresponding ambient rank is $R=3m+2$ and the packet is $(1,3m+1,3m+2;3m^2+3m+1)$. The endpoint $m=0$ is nonsimple, so the first nontrivial simple element is $\xi_1=2-\omega$ and gives
\begin{equation}
(p_H,q_H,r_H;n_H)=(1,4,5;7).
\label{eq:bl-hodge-eisenstein-packet}
\end{equation}
Its oriented cubic phase is $e^{i\delta_{\xi_m}}=-(\xi_m/\bar\xi_m)^3$, invariant under $\xi_m\mapsto\epsilon\xi_m$ for $\epsilon\in\mu_6$; the selected $m=1$ value is used in \eqref{eq:bl-minimal-ckm-phase}.
\end{corollary}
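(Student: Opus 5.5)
The statement to prove is Corollary~\ref{cor:pgl-rank-one-hodge-eisenstein-lift}. The plan is to specialize Proposition~\ref{prop:pgl-primitive-commensurate-torsor-spectrum} to the one-parameter family $\xi_m=1+m\varpi$. The steps are: compute the norm, then the trace packet, then the ambient rank, then the simplicity threshold, then the phase invariance.

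\emph{Norm and Mordell-Weil reading.} From $\varpi=1-\omega$ we get $\xi_m=(m+1)-m\omega$. The norm form $N_E(a+b\omega)=a^2-ab+b^2$ gives
\[
N_E(\xi_m)=(m+1)^2+m(m+1)+m^2=3m^2+3m+1 .
\]
Primitivity holds because $\gcd(m+1,m)=1$. Moreover $\xi_m\equiv1\pmod\varpi$, so $\xi_m$ avoids the ramified prime above $3$; this is the primitive congruence of the proposition. The Mordell-Weil statement is then Corollary~\ref{cor:pgl-mw-height-commensurate-denominator} applied with $P_m=\xi_mP_0$.

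\emph{Packet from the traces.} I would compute the packet through the traces $\tau_k=\operatorname{Tr}(\xi_m\omega^k)$, using $\operatorname{Tr}(a+b\omega)=2a-b$:
\begin{itemize}
\item $\tau_0=2(m+1)+m=3m+2$;
\item $\xi_m\omega=m+(2m+1)\omega$, so $\tau_1=-1$;
\item $\xi_m\omega^2=-(2m+1)-(m+1)\omega$, so $\tau_2=-(3m+1)$.
\end{itemize}
By the orientation rule $(\tau_0,\tau_1,\tau_2)=(r,-p,-q)$ this gives $p=1$, $q=3m+1$, $r=3m+2=p+q$. As a check, $(p^2+pq+q^2)/3$ reproduces $3m^2+3m+1$. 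The ambient rank $R$ is then read as $r=3m+2$.

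\emph{Simplicity.} The spectrum is simple exactly when $p<q$, which holds iff $m\ge1$. At $m=0$ one gets $(1,1,2;1)$, the nonsimple endpoint named in the proposition. Taking $m=1$ gives $\xi_1=2-\omega$ and the packet $(1,4,5;7)$, in agreement with \eqref{eq:pgl-minimal-commensurate-torsor-spectrum}.

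\emph{Phase invariance.} The phase formula follows by cubing $e^{i\theta_\Gamma}=-\omega^2\xi/\bar\xi$: since $(-\omega^2)^3=-1$, we get $e^{3i\theta_\Gamma}=-(\xi/\bar\xi)^3$. For a unit $\epsilon\in\mu_6$ we have $\bar\epsilon=\epsilon^{-1}$, so $\xi/\bar\xi$ is multiplied by $\epsilon^2\in\mu_3$, whose cube is $1$. Hence the cubic phase is unit-invariant.

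\emph{Main obstacle.} The step I expect to need care is the claim that these lifts exhaust the ``primitive rank-one lifts''. The proposal is to normalize a primitive commensurate class with $p=1$ (rank one) up to units and conjugation. Then $p\equiv q\pmod 3$ forces $q=3m+1$. Since $\xi=(p-q\omega)/(1-\omega)$, substituting $p=1$ gives exactly $\xi_m$, which yields the affine tower. I would check that this orientation choice matches the one fixed in the proposition.
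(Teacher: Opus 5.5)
Your proposal is correct and takes the same route the paper intends. The paper gives no separate proof of this corollary; it is obtained by specializing Proposition~\ref{prop:pgl-primitive-commensurate-torsor-spectrum} and Corollary~\ref{cor:pgl-mw-height-commensurate-denominator} to $\xi_m$, and by cubing $e^{i\theta_\Gamma}=-\omega^2\xi/\bar\xi$ for the phase. Your individual steps all check out: the trace values $(3m+2,-1,-(3m+1))$, the nonsimple endpoint at $m=0$, and the $\mu_6$-invariance via $\epsilon/\bar\epsilon=\epsilon^2\in\mu_3$. The orientation you flagged also matches the proposition, since $(1-(3m+1)\omega)/(1-\omega)=(m+1)-m\omega=\xi_m$.
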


\begin{corollary}[Galois-rank compatibility and rigidity]
\label{cor:pgl-galois-rank-compatibility}
Suppose a commensurate Eisenstein class is compatible with the selected rank readings in the sense $\operatorname{Tr}\xi=r$ and $N_E(\xi)=T^\ast$. For $(r,T^\ast)=(5,7)$ it is unique up to conjugation, $\xi\in\{2-\omega,3+\omega\}$. More generally, if $r=c+w$, $T^\ast=c+2w$ with integers $c>w\geq2$, existence of such a class forces $(c,w)=(3,2)$ and hence $(c-1)(w-1)=2$.
\end{corollary}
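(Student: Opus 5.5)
The plan is to reduce everything to the arithmetic of $\mathbb Z[\omega]$. Write $\xi=a+b\omega$ with $a,b\in\mathbb Z$. Then
\[
\operatorname{Tr}\xi=2a-b,\qquad N_E(\xi)=a^2-ab+b^2 .
\]
The two compatibility conditions become $2a-b=r$ and $a^2-ab+b^2=T^\ast$. Eliminating $b=2a-r$ gives
\[
a^2-a(2a-r)+(2a-r)^2=T^\ast,
\]
that is,
\[
3a^2-3ra+r^2-T^\ast=0 .
\]
Equivalently, since $4N_E(\xi)=(\operatorname{Tr}\xi)^2+3(\operatorname{Im}\text{-part})^2$ with $(\xi-\bar\xi)^2=-3b^2$ in suitable form, one gets $4T^\ast-r^2=3(2a-r)^2\cdot\frac13\cdot 3=3b^2$. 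I will therefore use the cleaner identity $4N_E(\xi)-(\operatorname{Tr}\xi)^2=3b^2$, which follows from
\[
(2a-b)^2+3b^2=4(a^2-ab+b^2).
\]
Existence of $\xi$ is then equivalent to $4T^\ast-r^2$ being three times a perfect square, together with $b\equiv r\pmod 2$ so that $a=(r+b)/2$ is an integer.

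For $(r,T^\ast)=(5,7)$ this gives $4\cdot 7-25=3=3b^2$, so $b=\pm1$ and $a=(5\pm1)/2\in\{3,2\}$. The solutions are $\xi=3+\omega$ and $\xi=2-\omega$. Their product has norm
\[
(2-\omega)(2-\bar\omega)=4-2(\omega+\bar\omega)+1=7,
\]
and conjugation $\omega\mapsto\bar\omega=-1-\omega$ sends $2-\omega$ to $3+\omega$. So the two solutions are Galois conjugate and the class is unique up to conjugation. I will remark that multiplication by units other than $\pm1$ changes the trace, so the trace condition pins the representative, which is why only conjugation remains as ambiguity.

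For the general statement, substitute $r=c+w$ and $T^\ast=c+2w$:
\[
4T^\ast-r^2=4c+8w-(c+w)^2=3b^2 .
\]
The key step is to show that, for $c>w\geq2$, this forces $(c,w)=(3,2)$. I would first use positivity: $3b^2\geq0$ requires $(c+w)^2\leq4(c+2w)$. With $c\geq w+1$ and $w\geq2$ this bounds $c+w$. Since $(c+w)^2\leq 4(c+w)+4w\leq 6(c+w)$ (using $w<(c+w)/2$), we get $c+w\leq6$. This leaves the finite list $(c,w)\in\{(3,2),(4,2)\}$. Checking them: $(3,2)$ gives $28-25=3=3\cdot1^2$, which works, while $(4,2)$ gives $32-36<0$, which fails. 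Hence $(c,w)=(3,2)$ and $(c-1)(w-1)=2$, matching Lemma~\ref{lem:pgl-tubular-two-block-rigidity}.

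The main obstacle is mild: getting the bound on $c+w$ cleanly, and confirming the parity condition $a\in\mathbb Z$ in the surviving case. Both are routine once the norm–trace identity is in place.
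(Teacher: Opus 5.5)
Your proposal is correct and follows the paper's argument. Both use $b=2a-r$, equivalently your identity $4N_E(\xi)-(\operatorname{Tr}\xi)^2=3b^2$, to get $\xi\in\{2-\omega,3+\omega\}$ at $(5,7)$, and both use $(\operatorname{Tr}\xi)^2\le 4N_E(\xi)$ for the general rigidity; the paper splits into the cases $w\ge3$ and $w=2$, while you bound $c+w\le 6$ and check $(3,2)$ and $(4,2)$ directly. Your aside about units is unnecessary, and slightly off since $-1$ also changes the trace, because your explicit solution set already shows the two solutions form exactly one conjugate pair.
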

\begin{proof}
For $\xi=a+b\omega$, the trace condition gives $b=2a-r$ and the norm condition gives $3a^2-3ar+r^2=T^\ast$; at $(r,T^\ast)=(5,7)$ this gives $a=2,3$. In general, $(\operatorname{Tr}\xi)^2\leq4N_E(\xi)$ gives $(c+w)^2\leq4(c+2w)$. If $w\geq3$, then $c\geq w+1$ makes the left-hand side minus the right-hand side at least $4w^2-8w-3>0$. Thus $w=2$, and then $c^2\leq12$ with $c>2$, so $c=3$.
\end{proof}
For an odd trace $t=\operatorname{Tr}(a+b\omega)$ one has $N_E(a+b\omega)=(t^2+3b^2)/4$, so the minimal primitive norm is $(t^2+3)/4$, unique up to conjugation. The selected trace sequence therefore begins $5\mapsto7\mapsto13$; in particular, $13$ is the unique minimal primitive norm at trace $7$ and on the selected split satisfies $13=r^2-2cw=c^2+w^2$.

\begin{lemma}[Minimal-trace cyclic companion]
\label{lem:pgl-minimal-trace-cyclic-companion}
Let $A\in M_2(\mathbb Z)$ have odd trace $t$ and determinant $n=(t^2+3)/4$. Then there is a primitive $v\in\mathbb Z^2$ for which $(v,Av)$ is an integral basis. In that basis $A$ is, up to orientation and transpose convention, the companion matrix $\left(\begin{smallmatrix}0&-n\\1&t\end{smallmatrix}\right)$. Thus an integral minimal-trace cell supplies the marked zero-diagonal cyclic splitting used in Lemma~\ref{lem:fsr-penalty-descent}.
\end{lemma}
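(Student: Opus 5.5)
The plan is to read the characteristic polynomial of $A$ as that of an Eisenstein element and then apply the standard correspondence between cyclic vectors and ideal classes. Since $t^2-4n=-3$, the characteristic polynomial $x^2-tx+n$ has discriminant $-3$. Its roots therefore lie in $\mathbb Q(\omega)$, and because $t$ is odd they are of the form $\alpha=(t\pm\sqrt{-3})/2$, which lie in $\mathbb Z[\omega]$. Via $x\mapsto A$, the lattice $\mathbb Z^2$ becomes a module over $\mathbb Z[x]/(x^2-tx+n)$. This ring is isomorphic to $\mathbb Z[\alpha]$, and since the discriminant $-3$ is fundamental, $\mathbb Z[\alpha]=\mathbb Z[\omega]$, the maximal order $\Lambda_E$ of Proposition~\ref{prop:pgl-source-a2-identification}.

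Next we check that $\mathbb Z^2$ is a faithful module of rank one. The characteristic polynomial is irreducible over $\mathbb Q$, because its discriminant is not a square. Hence $\mathbb Q^2$ is a one-dimensional $\mathbb Q(\omega)$-vector space, and $\mathbb Z^2$ is a torsion-free $\mathbb Z[\omega]$-module of rank one. Such a module is isomorphic to a fractional ideal. Since $\mathbb Z[\omega]$ is a PID (class number one), it is free of rank one: $\mathbb Z^2=\mathbb Z[\omega]\,v$ for some $v$. Then $\{v,\alpha v\}=\{v,Av\}$ is a $\mathbb Z$-basis, because $\{1,\alpha\}$ is a $\mathbb Z$-basis of $\mathbb Z[\omega]$. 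The vector $v$ is primitive, since it is part of a basis. This is the only genuinely nontrivial step. It rests on two facts: discriminant $-3$ forces the order $\mathbb Z[A]$ to be maximal, so there is no conductor obstruction; and $h(-3)=1$. I would state both explicitly, citing the Latimer--MacDuffee correspondence. As an elementary cross-check, one can also reduce via $\operatorname{GL}_2(\mathbb Z)$-conjugacy classes, which correspond to binary quadratic forms of discriminant $-3$; there is exactly one reduced form, $x^2+xy+y^2$.

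In the basis $(v,Av)$ we have $A\cdot v=Av$, and Cayley--Hamilton gives $A\cdot Av=A^2v=-nv+tAv$. The matrix is therefore $\left(\begin{smallmatrix}0&-n\\1&t\end{smallmatrix}\right)$. Transposing, or reversing the orientation of the basis, gives the other conventions, including the form $C_\xi$ of Lemma~\ref{lem:fsr-penalty-descent} with $N_E(\xi)=n$ and $\operatorname{Tr}\xi=t$.

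This basis has zero $(1,1)$ entry. It is therefore exactly the marked zero-diagonal cyclic splitting required in Lemma~\ref{lem:fsr-penalty-descent}, with $P$ the projection onto the first coordinate, and that lemma returns $\mathfrak F_P=n/t$. For example, $t=5$ gives $n=7$ and $A\sim C_{2-\omega}$, consistent with Corollary~\ref{cor:pgl-galois-rank-compatibility}.
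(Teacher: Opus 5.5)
Your proof is correct, but it finds the cyclic vector by a different mechanism than the paper.

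The paper works with the binary quadratic form $q_A(v)=\det(v,Av)=cx^2+(d-a)xy-by^2$. Its discriminant is $t^2-4n=-3$, so it is automatically primitive and definite, and it is therefore $SL_2(\mathbb Z)$-equivalent to $\pm(x^2+xy+y^2)$. Representing $\pm1$ then gives $v$ with $\det(v,Av)=\pm1$, and Cayley--Hamilton finishes. You instead make $\mathbb Z^2$ a module over $\mathbb Z[A]\simeq\mathbb Z[\alpha]=\mathbb Z[\omega]$, identify it with a fractional ideal, and use that $\mathbb Z[\omega]$ is a PID to write $\mathbb Z^2=\mathbb Z[\omega]v$.

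Both arguments rest on class number one at discriminant $-3$; they are two faces of the Latimer--MacDuffee correspondence, and your ``cross-check'' is essentially the paper's proof.

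The paper's route is shorter and more elementary. It also makes the orientation caveat transparent: because $q_A$ is definite, the sign of $\det(v,Av)$ is fixed by $A$. A positively oriented cyclic basis therefore exists only when $q_A$ is positive definite, which is why the statement says ``up to orientation.''

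Your route exposes more structure. It shows that the odd-trace hypothesis is exactly what makes $\alpha=(t+1)/2+\omega$ integral with $\mathbb Z[\alpha]$ maximal, so there is no conductor obstruction. It realizes $A$ as multiplication by an Eisenstein element with $\operatorname{Tr}\alpha=t$ and $N_E(\alpha)=n$, which ties the lemma directly to the $C_\xi$ reading in Lemma~\ref{lem:fsr-penalty-descent}. Finally, it shows that the cyclic vectors are precisely the module generators, i.e.\ the six $\mu_6$-multiples of one $v$, matching the six representations of $\pm1$ by the reduced form.
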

\begin{proof}
For $A=\left(\begin{smallmatrix}a&b\\c&d\end{smallmatrix}\right)$ and $v=(x,y)^T$, the form $q_A(v):=\det(v,Av)=cx^2+(d-a)xy-by^2$ has discriminant $(\operatorname{Tr}A)^2-4\det A=-3$. It is primitive, and the primitive definite integral form of discriminant $-3$ is, up to sign and $SL_2(\mathbb Z)$ equivalence, $x^2+xy+y^2$; hence $q_A$ represents $\pm1$. The resulting cyclic basis gives the stated matrix by Cayley-Hamilton.
\end{proof}
The arithmetic rigidity is independent of the normal-carrier reconstruction. Corollary~\ref{cor:pgl-norm-seven-theta-weil-reading} supplies the coefficient-side rank witness $(3,2)$ directly from the Galois/theta datum; no identification of its character spaces with the normal $SU(2)$ modules is used. For the selected orientation $\xi=2-\omega=1+\varpi$, Lemma~\ref{lem:pgl-ramified-jordan-line} gives the order-three fixed line directly, while the same corollary gives the corresponding theta and ramified Weil readings. Torsor closure fixes the central Wilson defect without identifying this coefficient-lattice symmetry with the microscopic $U_a$. The integral status still requires preservation of the protected commensurability class in Remark~\ref{def:fsr-integral-mge-refinement}.
~\\
For $H_\Gamma=\nu_\Gamma\Delta_\theta$, the spectral entropy starts as $S_{\beta,\Gamma}=\log3-(\beta\nu_\Gamma)^2+(\beta\nu_\Gamma)^3(w_\Gamma-2)/3+O((\beta\nu_\Gamma)^4)$, tends to $0$ at the flat endpoint and to $\log2$ at $\mathcal W_\Gamma=-1$.

The mod-three Heisenberg reduction is obtained by $Ze_a=\omega^ae_a$ and $ZS=\omega SZ$. In the oriented basis above, the monomial commutator is $\omega^{ad-bc}$, and
\begin{equation}
\mathcal W_3:=\mathbb C_\omega[\Lambda_E/3\Lambda_E]\simeq\mathbb C_\omega[\mathbb Z_3^2]\simeq\mathcal A_{\rm fam}=M_3(\mathbb C),\qquad T=\sum_{a,b\in\mathbb Z_3}t_{ab}Z^aS^b.
\end{equation}
This is the standard irreducible finite-Heisenberg representation. Put $\mathcal A_a=\operatorname{span}\{Z^aS^b:b\in\mathbb Z_3\}$, so $\mathcal W_3=\mathcal A_0\oplus\mathcal A_1\oplus\mathcal A_2$ and $\mathcal A_a\mathcal A_b\subset\mathcal A_{a+b}$. The vertex and transport MASAs are the two complementary finite polarizations and the discrete Fourier transform is their Weil intertwiner.

The degree-three theta group on $E_\omega$ gives the irreducible finite-Heisenberg module used below, with ramified polarization $E[\varpi]$. A cyclic level-six theta structure presents the full coefficient module \eqref{eq:pgl-zsix-master-carrier} and its two parity linearizations; the theta-group representation is standard \cite{Mumford1966AbelianVarietiesI}.

\begin{corollary}[Norm-seven CM and ramified Weil reading]
\label{cor:pgl-norm-seven-theta-weil-reading}
For the least simple class $\xi=2-\omega$ of Proposition~\ref{prop:pgl-primitive-commensurate-torsor-spectrum}, one has $N_E(\xi)=7$ and $[\xi]^\ast\mathcal O_E(O)\simeq L_7:=\mathcal O_E(7O)$, and
\begin{equation}
q_\ast L_7\simeq\mathcal O(2)\oplus\mathcal O(1)_\omega\oplus\mathcal O(1)_{\omega^2}.
\label{eq:pgl-norm-seven-theta-pushforward}
\end{equation}
where the subscripts denote the $\rho$-characters. Hence $H^0(E,L_7)$ has the marked decomposition $C\oplus W_\sigma\oplus W_{-\sigma}$ with section ranks $3|2|2$. On the coefficient side this gives the ordered rank witness $(c_\xi,w_\xi):=(3,2)$, satisfying $\operatorname{Tr}\xi=c_\xi+w_\xi$ and $N_E(\xi)=c_\xi+2w_\xi$ for the selected class. After one orientation is retained, \eqref{eq:pgl-norm-seven-theta-pushforward} leaves projective degrees $2$ and $1$; under the marking of Theorem~\ref{thm:pgl-equianharmonic-theta-realization} these agree with the levels $R_{\Gamma,2}=E_3$ and $R_{\Gamma,1}=E_2$ in \eqref{eq:pgl-rank-five-carrier}. This is a level compatibility and does not identify the theta character spaces with the normal $SU(2)$ modules. Modulo three, multiplication by $\xi$ is the order-three symplectic transvection fixing $\mathfrak r_3$. In the symmetrized Weyl frame $W(a,b):=\omega^{ab}Z^aS^b$, the oriented fixed vector $(1,2)$ and any Weil lift $U_\xi$ satisfy
\begin{equation}
W(1,2)=\omega^2ZS^2,\qquad ZU_\xi Z^{-1}U_\xi^{-1}=\omega Z^2S.
\label{eq:pgl-ramified-theta-weil-readings}
\end{equation}
Thus the commensurate phase of Proposition~\ref{prop:pgl-primitive-commensurate-torsor-spectrum} and the discrete neutral Weyl factor in \eqref{eq:pgl-ramified-theta-weil-readings} are the archimedean and ramified readings of the same Eisenstein element.
\end{corollary}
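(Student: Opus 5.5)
The plan is to prove the four readings in order: the isogeny pullback, the equivariant pushforward, the mod-three transvection, and the Weyl-frame identities.

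\emph{Pullback.} Write $[\xi]$ for the endomorphism of $E_\omega$ given by $\xi=2-\omega$. It is an isogeny of degree $N_E(\xi)=7$, and it is separable because $7$ is prime to the characteristic. Since $[\xi]$ commutes with $[-1]$, the line $[\xi]^\ast\mathcal O_E(O)$ is symmetric of degree $7$. We need its class to equal $7[O]$ rather than another symmetric class $7[O]+[\text{2-torsion}]$. We use that $\mathcal O_E(O)$ is the canonical symmetric theta line, so its pullback by an endomorphism $\alpha$ is algebraically equivalent to $\mathcal O_E(\deg\alpha\,O)$. The divisor $[\xi]^{-1}(O)=\ker[\xi]$ has sum zero in the group law, since the sum of the points of a finite subgroup of odd order is $0$. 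Abel's theorem then gives $[\xi]^\ast\mathcal O_E(O)\simeq\mathcal O_E(7O)=L_7$.

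\emph{Pushforward.} The map $q$ is the quotient by $\langle\rho\rangle$ with $q=(1:y)$ (Theorem~\ref{thm:pgl-equianharmonic-theta-realization}). Because $O$ is $\rho$-fixed, $L_7$ is $\rho$-linearized. So $q_\ast L_7$ is a rank-three bundle on $\mathbb P^1$ that splits into three $\rho$-eigen line bundles $\mathcal O(a_0)\oplus\mathcal O(a_1)_\omega\oplus\mathcal O(a_2)_{\omega^2}$. On $\mathbb P^1$, $h^0(\mathcal O(a))=a+1$ when $a\ge -1$. Theorem~\ref{thm:pgl-equianharmonic-theta-realization} gives the character multiplicities $(3,2,2)$ on $H^0(E,7O)$. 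The basis is $1,x,y,x^2,xy,y^2,x^2y$, where $x$ has weight $\omega$ and $y$ weight $1$. These multiplicities force $(a_0,a_1,a_2)=(2,1,1)$. As a consistency check, Riemann--Hurwitz and Grothendieck--Riemann--Roch give $\sum a_i=\deg L_7-\tfrac12\deg R+\dots$, and the total $h^0=7$ matches $3+2+2$. This proves \eqref{eq:pgl-norm-seven-theta-pushforward} and the $3|2|2$ decomposition. The identities $\operatorname{Tr}\xi=5=3+2$ and $N_E(\xi)=7=3+2\cdot2$ are direct. Retaining one orientation deletes one $\mathcal O(1)$ summand and leaves degrees $2,1$. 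The level agreement with $R_{\Gamma,2},R_{\Gamma,1}$ then follows through the marking $q^\ast L_\Gamma\simeq L_3$, with no $SU(2)$ identification, exactly as stated.

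\emph{Mod-three transvection.} Since $\xi=1+\varpi$, Lemma~\ref{lem:pgl-ramified-jordan-line} gives $\mathbf 1+N_\varpi$: of order three, fixing $\mathfrak r_3$. It is symplectic because $\det=N_E(\xi)=7\equiv1\pmod 3$.

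\emph{Weyl-frame identities.} In the basis $(1,\omega)$, multiplication by $\xi$ sends $1\mapsto 2-\omega$ and $\omega\mapsto 2\omega-\omega^2=1+3\omega$. Modulo $3$ this is the matrix $\left(\begin{smallmatrix}2&1\\2&0\end{smallmatrix}\right)$ (columns are images), and one checks that it fixes the oriented vector $\varpi\equiv(1,2)$. Then $W(1,2)=\omega^{2}ZS^2$ is immediate from the definition of $W(a,b)$. For the commutator, a Weil lift satisfies $U_\xi W(v)U_\xi^{-1}=W(\xi v)$ (Egorov property). Hence
\[
ZU_\xi Z^{-1}U_\xi^{-1}=W(1,0)\,W(\xi\cdot(1,0))^{-1}.
\]
With $\xi\cdot(1,0)=(2,2)$, we compute $W(1,0)W(2,2)^{-1}$ using $ZS=\omega SZ$, aiming at $\omega Z^2S$.

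The main obstacle is convention-tracking in this last computation. The Weil lift is defined only up to scalar, but the commutator is lift-independent. The component ordering (row versus column action) and the sign of the exponent in $W$ must be fixed so that the phase comes out as exactly $\omega$. I would fix $ZS=\omega SZ$ and test both action conventions, keeping the one consistent with $(1,2)$ being fixed.

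Finally, the closing ``two readings'' sentence follows by combining the phase $e^{i\theta_\Gamma}=-\omega^2\xi/\bar\xi$ of Proposition~\ref{prop:pgl-primitive-commensurate-torsor-spectrum} with the above.
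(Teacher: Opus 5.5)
Your proposal is correct and follows essentially the paper's route: the character count on the basis of $H^0(E,7O)$ fixes the eigen-summand degrees, the explicit mod-three matrix in the basis $(1,\omega)$ gives the fixed line $(1,2)$, and the Weyl product rule plus Weil covariance gives the commutator. Your Abel-theorem argument for $[\xi]^\ast\mathcal O_E(O)\simeq L_7$ fills a step the paper only asserts, and the convention worry at the end is harmless: with $U_\xi W(v)U_\xi^{-1}=W(\xi v)$ one gets $W(2,2)^{-1}=W(1,1)=\omega ZS$, hence $ZU_\xi Z^{-1}U_\xi^{-1}=\omega Z^2S$, exactly as claimed.
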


\begin{proof}
The degree-seven CM class has $h^0(L_7)=7$. A basis of $H^0(E,L_7)$ is $\{1,x,y,x^2,xy,x^3,x^2y\}$; the $\rho$-character multiplicities are $3,2,2$. The three eigensheaves of $q_\ast L_7$ have rank one and total degree four, so their section dimensions give \eqref{eq:pgl-norm-seven-theta-pushforward}. In the basis $(1,\omega)$, multiplication by $\xi$ modulo three is $\left(\begin{smallmatrix}2&1\\2&0\end{smallmatrix}\right)$, whose fixed line is generated by $(1,2)$. The two identities in \eqref{eq:pgl-ramified-theta-weil-readings} follow from the Weyl product rule and Weil covariance; the projective scalar of $U_\xi$ cancels in the commutator.
\end{proof}

\begin{proposition}[Finite Weyl-Fourier projection calculus]
\label{prop:pgl-finite-weyl-calculus}
The trace-preserving expectations onto the complementary transport and vertex MASAs are
\begin{equation}
\mathbb E_{\rm circ}(T)=\frac13\sum_{r=0}^2S^rTS^{-r},\qquad \mathbb E_{\rm vert}(T)=\frac13\sum_{r=0}^2Z^rTZ^{-r}.
\label{eq:pgl-circulant-conditional-expectation}
\end{equation}
Their non-circulant Dirichlet form is
\begin{equation}
\Pi_{\rm noncirc}(T)=T-\mathbb E_{\rm circ}(T)=\sum_{a=1,2}\sum_b t_{ab}Z^aS^b,
\qquad
\operatorname{Tr}_{\rm cen}([d_\theta,T]^\dagger[d_\theta,T])=3\|\Pi_{\rm noncirc}(T)\|_{\rm HS}^2.
\label{eq:pgl-mixed-curvature-norm}
\end{equation}
The dual identity uses $Z$ and $\mathbb E_{\rm vert}$. The discrete Fourier transform intertwines the two MASAs; its canonical real frame is $f_0,\sqrt2\operatorname{Re}f_1,\sqrt2\operatorname{Im}f_1$ for $f_k=3^{-1/2}(\omega^{kr})_r$. For real $x$, $P_{\bf1}=({\bf1}+S+S^2)/3$, $P_{\rm T}={\bf1}-P_{\bf1}$, and $\Sigma_{\rm H}=\log(\|P_{\bf1}x\|^2/\|P_{\rm T}x\|^2)$, Parseval gives $\mathcal Q_{\rm H}(x):=\|x\|^2/|\sum_ax_a|^2=(1+e^{-\Sigma_{\rm H}})/3$. The transport and vertex seminorm kernels are $\mathbb C[S]$ and $\mathbb C[Z]$, respectively. Weyl monomials have zero trace unless both degrees vanish modulo three; adjunction reverses both degrees up to phase, while transposition preserves the $\mathcal A_a$ grading and reverses the shift exponent.
\end{proposition}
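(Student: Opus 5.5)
The plan is to carry out every computation in the Weyl basis $\{Z^aS^b\}_{a,b\in\mathbb Z_3}$ of $\mathcal W_3\simeq M_3(\mathbb C)$, using only $ZS=\omega SZ$ and $Z^3=S^3={\bf1}$. I would check the claims in the following order.

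\textbf{Expectations.} Conjugation gives
\[
S^r(Z^aS^b)S^{-r}=\omega^{-ar}Z^aS^b,\qquad Z^r(Z^aS^b)Z^{-r}=\omega^{br}Z^aS^b .
\]
Averaging over $r\in\mathbb Z_3$ and using $\sum_r\omega^{kr}=3\delta_{k,0}$ then leads to the following.
\begin{itemize}
\item $\mathbb E_{\rm circ}$ keeps exactly the $a=0$ terms, that is, $\mathcal A_0=\mathbb C[S]$. This is the commutant of $S$, the transport MASA.
\item $\mathbb E_{\rm vert}$ keeps exactly the $b=0$ terms, that is, $\mathbb C[Z]$, the vertex MASA.
\item Both maps are unital idempotent bimodule maps and preserve the trace, since $\operatorname{Tr}(Z^aS^b)=3\delta_{a0}\delta_{b0}$. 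This trace formula also proves the closing claim about monomials.
\end{itemize}
The formula for $\Pi_{\rm noncirc}$ follows at once.

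\textbf{Dirichlet identity.} On a simple central summand in the cycle-trivial frame, $d_\theta=S-e^{i\theta_\Gamma}{\bf1}$, so $[d_\theta,T]=[S,T]$. For each $a$-component:
\begin{itemize}
\item $[S,Z^aS^b]=(1-\omega^{a})SZ^aS^b$, up to the ordering convention.
\item Its squared modulus is $|1-\omega^a|^2=3$ for $a\neq0$ and $0$ for $a=0$.
\end{itemize}
Different $(a,b)$ monomials are HS-orthogonal, because the Weyl monomials form an orthogonal basis with $\|Z^aS^b\|_{\rm HS}^2=3$. Summing the components gives
\[
\operatorname{Tr}([S,T]^\dagger[S,T])=3\sum_{a\neq0,\,b}|t_{ab}|^2\|Z^aS^b\|_{\rm HS}^2=3\|\Pi_{\rm noncirc}T\|_{\rm HS}^2 .
\]
I expect the main obstacle to be normalization bookkeeping: whether $\operatorname{Tr}_{\rm cen}$ is the trace on $\mathcal H_{\rm cen}$ or a normalized one, and whether the phase $e^{i\theta}$ truly drops out. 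The scalar commutes with $T$, so it does. I would fix $\operatorname{Tr}_{\rm cen}$ to be the ordinary trace on $\mathcal H_{\rm cen}$, which makes the constant exactly $3$. The seminorm $T\mapsto\|[d_\theta,T]\|$ vanishes iff $t_{ab}=0$ for all $a\neq0$, so its kernel is $\mathbb C[S]$. Exchanging the roles of $S$ and $Z$ gives the dual identity with $[Z,\cdot]$, the same constant $|1-\omega^b|^2=3$, and kernel $\mathbb C[Z]$.

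\textbf{Fourier frame and Parseval.} The DFT $F$ with columns $f_k$ is unitary and satisfies $F^\dagger SF=Z^{\pm1}$. Hence it maps $\mathbb C[S]$ onto $\mathbb C[Z]$ and intertwines the two expectations. For the real frame:
\begin{itemize}
\item $f_2=\bar f_1$, so $\sqrt2\operatorname{Re}f_1$ and $\sqrt2\operatorname{Im}f_1$ are real orthonormal vectors spanning the real Coxeter plane $A_{2,\mathbb R}$.
\item Together with $f_0$ they give the canonical real frame.
\end{itemize}
For real $x$, $P_{\bf1}x=\bigl(\tfrac13\sum_ax_a\bigr)(1,1,1)$, so $\|P_{\bf1}x\|^2=|\sum_ax_a|^2/3$. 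Parseval gives $\|x\|^2=\|P_{\bf1}x\|^2+\|P_{\rm T}x\|^2$. Therefore
\[
\mathcal Q_{\rm H}(x)=\frac{\|x\|^2}{3\|P_{\bf1}x\|^2}=\frac{1+e^{-\Sigma_{\rm H}}}{3}.
\]

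\textbf{Adjunction and transposition.}
\begin{itemize}
\item $(Z^aS^b)^\dagger=S^{-b}Z^{-a}=\omega^{\ast}Z^{-a}S^{-b}$, so adjunction reverses both degrees up to a phase.
\item In the vertex basis $Z$ is diagonal, so $Z^T=Z$, while $S^T=S^{-1}$. Hence $(Z^aS^b)^T=S^{-b}Z^a=\omega^{\ast}Z^aS^{-b}$.
\end{itemize}
Transposition therefore preserves the $\mathcal A_a$ grading and reverses the shift exponent.
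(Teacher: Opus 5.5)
Your proposal is correct. The paper states this proposition without a separate proof, as a routine finite Weyl computation, and your argument is exactly that verification: character averaging gives the two expectations, $[S,Z^aS^b]=(\omega^{-a}-1)Z^aS^{b+1}$ together with HS-orthogonality of the monomials gives the Dirichlet identity, and your DFT, real-frame, Parseval, adjoint and transpose checks all hold.

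On the normalization you flagged as the main obstacle, your choice of the unnormalized $\operatorname{Tr}_{\rm cen}$ paired with the unnormalized HS norm is the one the paper needs. With it the later entropy Hessian \eqref{eq:pgl-family-entropy-hessian} is consistent, since $\mathcal C_\Gamma({\bf1}/3+\varepsilon X)\approx\tfrac32\varepsilon^2\|\Pi_{\rm noncirc}X\|_{\rm HS}^2=\tfrac{\varepsilon^2}{2}\mathcal L_\Gamma(X)^2$.
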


The vertex frame is used for the charged-lepton baseline, the transport frame for the common quark baseline and circulant neutral denominator, and the canonical real Fourier frame for the leading resolved neutral basis.

The Hermitian generators used later are
\begin{equation}
K_Z=\frac{Z-Z^\dagger}{2i},\qquad H_S(\vartheta)=e^{-i\vartheta}S+e^{i\vartheta}S^\dagger,\qquad G=\frac{S-S^{\mathsf T}}{\sqrt3}.
\label{eq:pgl-hermitian-clock-shift}
\end{equation}
Their commutator is
\begin{equation}
[K_Z,H_S(\vartheta)]=\frac{\sqrt3}{2}\left(-\omega e^{-i\vartheta}ZS+\omega^2e^{i\vartheta}ZS^2-\omega^2e^{-i\vartheta}Z^2S+\omega e^{i\vartheta}Z^2S^2\right).
\label{eq:pgl-hermitian-detector-commutator}
\end{equation}

Choose an order-three element $g_3\in SU(2)$ whose action on $E_3$ has characters $1,\omega,\omega^2$. The induced conjugation grading on $V_1\oplus V_2\simeq\operatorname{End}_0(E_3)$ is written $V_1\oplus V_2=\mathcal H_0\oplus\mathcal H_1\oplus\mathcal H_2$, with $\mathcal H_r$ of degree $r$. Let $\sigma_{\rm low}:V_1\oplus V_2\oplus V_3\to V_1\oplus V_2$ be the $SU(2)$-equivariant projection and let $\jmath_\Gamma:E_3\to\mathcal H_{\rm cen}$ be the calibrated $\mu_3$-equivariant unitary identification of the regular character modules.
\begin{proposition}[$V_1/V_2$ low-symbol decomposition]
The low-symbol map satisfies
\begin{equation}
\kappa_\Gamma=\operatorname{Ad}_{\jmath_\Gamma}\circ\sigma_{\rm low},\qquad \ker\kappa_\Gamma=V_3,
\label{eq:pgl-vone-vtwo-low-symbol}
\end{equation}
with
\begin{equation}
\kappa_\Gamma(\mathcal H_0)=\mathcal A_0\cap\operatorname{End}_0(\mathcal H_{\rm cen}),\qquad \kappa_\Gamma(\mathcal H_r)=\mathcal A_r,\quad r=1,2.
\label{eq:pgl-vone-vtwo-low-symbol-degrees}
\end{equation}
\end{proposition}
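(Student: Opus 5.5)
The plan is to verify the two claims separately: first that $\ker\kappa_\Gamma=V_3$, then the degree-matching statement \eqref{eq:pgl-vone-vtwo-low-symbol-degrees}. Both reduce to finite bookkeeping once the decomposition $\operatorname{End}_0(E_3)\simeq V_1\oplus V_2$ is in place.

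For the kernel, I would start from the $SU(2)$ decomposition $V_1\oplus V_2\oplus V_3$ on which $\sigma_{\rm low}$ acts. By Schur's lemma, $\sigma_{\rm low}$ is the identity on the isotypic summand $V_1\oplus V_2$ and kills $V_3$. This matches the Borel-Weil cutoff of Lemma~\ref{lem:pgl-marked-borel-weil-rank-bound}: $V_3$ first occurs on $E_4$ and cannot act on $E_3$. Since $\jmath_\Gamma$ is unitary, $\operatorname{Ad}_{\jmath_\Gamma}$ is an isomorphism $\operatorname{End}_0(E_3)\to\operatorname{End}_0(\mathcal H_{\rm cen})$. Hence $\ker\kappa_\Gamma=\ker\sigma_{\rm low}=V_3$, and $\kappa_\Gamma$ restricted to $V_1\oplus V_2$ is injective, with image of dimension $8$.

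For the degrees, I would use that $\jmath_\Gamma$ is $\mu_3$-equivariant. It therefore intertwines conjugation by $g_3$ on $E_3$ with conjugation by the clock operator $Z$ on $\mathcal H_{\rm cen}$; this is where the calibration fixes the character matching $1,\omega,\omega^2\mapsto$ the eigenbasis $Ze_a=\omega^ae_a$. It follows that $\operatorname{Ad}_{\jmath_\Gamma}$ carries the $g_3$-conjugation grading $\mathcal H_r$ to the $\operatorname{Ad}_Z$-eigenspace with eigenvalue $\omega^r$, or $\omega^{-r}$ depending on orientation.

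Next I would identify these eigenspaces with the Weyl grading. From $ZS=\omega SZ$,
\[
Z(Z^aS^b)Z^{-1}=\omega^{b}Z^aS^b ,
\]
so the $\operatorname{Ad}_Z$-eigenspace indexed by a character is spanned by monomials of fixed shift degree $b$. The paper's convention $\mathcal A_a=\operatorname{span}\{Z^aS^b\}$ grades by the clock exponent instead. To obtain $\kappa_\Gamma(\mathcal H_r)=\mathcal A_r$ literally, I would choose the calibration of $\jmath_\Gamma$ so that $g_3$ corresponds to the shift $S$ rather than $Z$. This is legitimate: the discrete Fourier transform is the Weil intertwiner exchanging the two MASAs (Proposition~\ref{prop:pgl-finite-weyl-calculus}), so either polarization can be calibrated. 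With that choice, $\operatorname{Ad}_S$ has eigenvalue $\omega^{-a}$ on $Z^aS^b$, and the $\operatorname{Ad}_{\jmath_\Gamma(g_3)}$-eigenspaces are exactly the $\mathcal A_a$ after fixing orientation.

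Dimensions then close the argument. $\dim\mathcal A_r=3$ for each $r$, and $\dim\operatorname{End}_0=8$. On $E_3$ the conjugation characters are $\chi\bar\chi'$, and the characters $1,\omega,\omega^2$ each occur three times. Removing the trace leaves $\dim\mathcal H_0=2$ and $\dim\mathcal H_1=\dim\mathcal H_2=3$. Since $\mathcal A_0$ contains $\mathbf 1$ and the trace vanishes only on that monomial, $\mathcal A_0\cap\operatorname{End}_0$ has dimension $2$. Injectivity of $\kappa_\Gamma$ on $V_1\oplus V_2$ together with equal dimensions gives equality in each degree.

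The main obstacle I expect is the calibration and orientation convention: confirming that the $\mu_3$-equivariance of $\jmath_\Gamma$ is compatible with the clock-exponent grading $\mathcal A_a$ rather than the shift grading. I would handle it by stating the Fourier calibration explicitly, and treating orientation reversal $r\mapsto -r$ as absorbed by the labeling freedom.
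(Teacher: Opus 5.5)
Your argument is correct and is essentially the paper's proof (the Clebsch--Gordan kernel $V_3$ plus the regular order-three character decomposition of $E_3$), written out with the eigenspace and dimension bookkeeping. Two small fixes: the $\mu_3$-action on $\mathcal H_{\rm cen}=\ell^2(E[2]^\times)$ is the torsor translation, so $\jmath_\Gamma$ intertwines $g_3$ with $S^{\pm1}$ from the outset rather than by a free calibration choice (consistent with $\mathcal A_0=\mathbb C[S]$ later in the paper), and your earlier sentence pairing $g_3$ with $Z$ should be dropped; also, among the Weyl monomials the trace is nonzero only on $\mathbf 1$, not zero only there.
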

This follows from the Clebsch-Gordan kernel $V_3$ and the regular order-three character decomposition of $E_3$. Thus \eqref{eq:pgl-vone-vtwo-low-symbol} is the retained low-symbol bridge between the principal $V_1\oplus V_2$ decomposition and the Weyl-graded finite response. The integral $A_2$ lattice is already fixed by \eqref{eq:pgl-a2-eisenstein-lattice}; the calibration of $\jmath_\Gamma$ remains presentation data.

Multiplication by $S^b$ preserves every $\mathcal A_r$; for invertible $X\in\mathcal A_1$, the Weyl relation gives $XSX^{-1}S^{-1}=\omega{\bf1}$.
The normalized torsor trace gives
\begin{equation}
\operatorname{tr}_{\rm cen}(\Delta_\theta)\operatorname{Tr}_{\Lambda^{\rm even}V}(Y^2)=\frac{20}{3}.
\label{eq:pgl-torsor-determinant-trace}
\end{equation}

The family entropy is $\mathcal C_\Gamma(\rho)=D(\rho\Vert\mathbb E_{\rm circ}\rho)$ with the finite-dimensional convention of \cite{GaoJungeLaRacuente2020SubalgebraEntropy}. Proposition~\ref{prop:pgl-finite-weyl-calculus} and \eqref{eq:fsr-subalgebra-entropy-difference} give $0\leq\mathcal C_\Gamma\leq\log3$ and
\begin{equation}
\mathcal C_\Gamma({\bf1}/3+\varepsilon X)=\frac{\varepsilon^2}{2}\mathcal L_\Gamma(X)^2+O(\varepsilon^3),
\label{eq:pgl-family-entropy-hessian}
\end{equation}
where $\mathcal L_\Gamma^2$ is the left-hand side of \eqref{eq:pgl-mixed-curvature-norm}. The corresponding Dirichlet form is $\gamma\mathcal L_\Gamma^2/3$. On the full $\mathbb Z_6$ shadow, the commuting parity and projective-color expectations obey \eqref{eq:fsr-subalgebra-pythagorean-identity}; parity compression removes the active $\log2$ contribution and leaves the projective-color capacity.
\section{Alena-Codazzi Realization of the $A_2$-Graded Reconstruction Data}
\label{sec:alena-codazzi-realization}

The structural theorem of Section~\ref{sec:primitive-gauss-local-self-reconstruction} starts from a primitive projective link, a degree-resolved two-channel normal source, relative Gauss-local charges, and the reconstruction inputs of Table~\ref{tab:pgl-structural-reconstruction-assumptions}. Proposition~\ref{prop:pgl-source-a2-identification} is algebraic once the oriented normal metric and the two active source types are present; the present section supplies the geometric source, charges, and conditional transport from a current-residual Alena-Codazzi collar. The faithful central coefficient reconstruction in entry B and the Clifford/determinant closure in entry C remain reconstruction inputs.\\
~\\
Split conservation, scalar nondegeneracy, multiplier closure, thin-core regularity, torsor admissibility, and the spectral gap are hypotheses of the realization layer. Under them, the compact-leaf construction gives the source and thin-core worldline, the moment expansion gives the Euler-graded $V_1\oplus V_2$ pair, the Gauss identities give its two boundary charges, and torsor-admissible transport gives the finite edge data. When the realization criterion is satisfied, the source Euler grading is preserved by the moment map and agrees with the projective carrier level on the minimal marked support by Theorem~\ref{thm:pgl-rank-five-support}. The transferred matrix elements are treated in Section~\ref{sec:bl-filtered-schur-completion}.

\subsection{Current-Codazzi collar and multiplier closure}
\label{subsec:alena-primitive-realization-class}

The residual part of the Alena-type current branch is written as
\begin{equation}
\mathcal L_{\rm cr}
=
\varphi p_\Lambda,
\qquad
\varphi
=
1-\zeta^2-\mu_\zeta R_\omega .
\label{eq:alena-realization-residual-lagrangian}
\end{equation}
Here $\zeta$ is the translational-current amplitude, $\mu_\zeta=\mu_\zeta(\rho_\zeta)$ is positive, and $R_\omega$ is the normalized vorticity response. The branch-stress interpretation is supplied by the Alena Tensor identification \cite{AT1}. The current and vorticity slots are those of \cite{ogonowski2025halo}. The continuum, variational, and branch-potential inputs are represented by \cite{AT2}, \cite{AT3}, and \cite{ATNew}. Only the current, vorticity, stress, and Codazzi coefficients are used in the realization below.\\
~\\
The translational current is
\begin{equation}
J^\mu_{\rm tr}
=
p_\Lambda\zeta^2U^\mu,
\qquad
\nabla^{(k)}_\mu J^\mu_{\rm tr}=0 .
\label{eq:alena-realization-current}
\end{equation}
The frozen scalar block is assumed non-degenerate:
\begin{equation}
V''_\zeta(\rho_\zeta)
+
R_\omega\mu''_\zeta(\rho_\zeta)
>
0 .
\label{eq:alena-realization-amplitude-nondegeneracy}
\end{equation}
This condition gives local persistence of the scalar collar inside the selected current branch.\\
~\\
The product-rule Hilbert response of \eqref{eq:alena-realization-residual-lagrangian} is decomposed as $T^{\rm cr}_{\mu\nu}=\Xi_{\mu\nu}+\varphi Y_{\mu\nu}$.
The first term records the response of the residual scalar density. The second carries the branch-response coefficient associated with $p_\Lambda$. The punctured collar is split-conserved when
\begin{equation}
\nabla^{(k)}_\mu\Xi^{\mu\nu}=0,
\qquad
\nabla^{(k)}_\mu(\varphi Y^{\mu\nu})=0 .
\label{eq:alena-realization-split-conservation}
\end{equation}
The split is used before the normal two-jet is extracted.\\
~\\
Let $\tau=k^{\mu\nu}Y_{\mu\nu}$ and put $B_{\mu\nu}=Y_{\mu\nu}-\tau k_{\mu\nu}/3$.
The indices in this subsection range over the four-dimensional collar. The trace-adjusted $B$ is regarded as a symmetric tensor with coefficients in a real multiplier line. In a local trivialization $B\mapsto e^fB$ and $\theta\mapsto\theta-df$; for the multiplier below $\theta=d\log|\varphi|$, so this coefficient-line connection is locally flat. A nonzero scalar multiplier gives $A_{\mu\nu}=\varphi B_{\mu\nu}$, with $C^B_{\alpha\mu\nu}=\nabla^{(k)}_\alpha B_{\mu\nu}-\nabla^{(k)}_\mu B_{\alpha\nu}$ and $\theta=d\log|\varphi|$.
The corresponding line-valued Codazzi equation is
\begin{equation}
C^B_{\alpha\mu\nu}+\theta_\alpha B_{\mu\nu}-\theta_\mu B_{\alpha\nu}=0.
\label{eq:alena-realization-multiplier-equation}
\end{equation}
If $B$ is invertible, contraction fixes the local connection form
\begin{equation}
\theta^B_\alpha=-\frac13\beta^{\mu\nu}C^B_{\alpha\mu\nu},
\label{eq:alena-realization-theta-b}
\end{equation}
where the coefficient is $(\dim\Omega-1)^{-1}$ on the four-dimensional collar.

\begin{proposition}[Current-Codazzi closure]
\label{prop:alena-realization-current-codazzi-closure}
On a connected non-degenerate source-free collar, a scalar Codazzi multiplier exists exactly when \eqref{eq:alena-realization-multiplier-equation} holds with $\theta=\theta^B$ and the multiplier-line connection is exact. If $\theta^B=df_B$, then
\begin{equation}
\varphi=C_\varphi e^{f_B},\qquad \zeta^2=1-\mu_\zeta R_\omega-C_\varphi e^{f_B},
\label{eq:alena-realization-scalar-reduction}
\end{equation}
and current conservation reduces to
\begin{equation}
\begin{split}
U(\mu_\zeta R_\omega)+C_\varphi e^{f_B}\theta^B(U)={}&\left(1-\mu_\zeta R_\omega-C_\varphi e^{f_B}\right)\\
&\times\left(\nabla^{(k)}_\mu U^\mu+U(\log p_\Lambda)\right).
\end{split}
\label{eq:alena-realization-current-compatibility}
\end{equation}
\end{proposition}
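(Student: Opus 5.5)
The plan is to treat \eqref{eq:alena-realization-multiplier-equation} as the integrability condition for a scalar multiplier. \emph{Necessity}: if $A=\varphi B$ is Codazzi for some nonvanishing $\varphi$, expand $\nabla^{(k)}_\alpha(\varphi B_{\mu\nu})-\nabla^{(k)}_\mu(\varphi B_{\alpha\nu})=0$ by the product rule and divide by $\varphi$. This gives \eqref{eq:alena-realization-multiplier-equation} with $\theta=d\log|\varphi|$. That $\theta$ is exact on the connected collar, since $\varphi$ has no zeros there by non-degeneracy and so has constant sign.

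To identify $\theta$ with $\theta^B$, contract \eqref{eq:alena-realization-multiplier-equation} with the inverse $\beta^{\mu\nu}$ of $B$. Using $\beta^{\mu\nu}B_{\mu\nu}=4$ and $\beta^{\mu\nu}B_{\alpha\nu}=\delta^\mu_\alpha$, we get $\beta^{\mu\nu}C^B_{\alpha\mu\nu}+4\theta_\alpha-\theta_\alpha=0$, which is \eqref{eq:alena-realization-theta-b} with coefficient $(\dim\Omega-1)^{-1}=1/3$. So any multiplier connection is forced to equal $\theta^B$.

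\emph{Sufficiency}: suppose \eqref{eq:alena-realization-multiplier-equation} holds with $\theta^B=df_B$. Set $\varphi=C_\varphi e^{f_B}$ with $C_\varphi\neq0$. Reversing the product-rule computation shows that $\varphi B$ is Codazzi. Uniqueness up to the constant $C_\varphi$ follows because two multipliers have ratio $\psi$ with $d\log|\psi|=0$ on a connected domain. Substituting into $\varphi=1-\zeta^2-\mu_\zeta R_\omega$ from \eqref{eq:alena-realization-residual-lagrangian} gives \eqref{eq:alena-realization-scalar-reduction} directly.

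For \eqref{eq:alena-realization-current-compatibility}, expand current conservation as
\[
\nabla^{(k)}_\mu(p_\Lambda\zeta^2U^\mu)=p_\Lambda\left(U(\zeta^2)+\zeta^2\left(\nabla^{(k)}_\mu U^\mu+U(\log p_\Lambda)\right)\right)=0.
\]
Then divide by $p_\Lambda$, which is nonzero on the non-degenerate branch. Next compute $U(\zeta^2)$ from \eqref{eq:alena-realization-scalar-reduction}:
\[
U(\zeta^2)=-U(\mu_\zeta R_\omega)-C_\varphi e^{f_B}\,df_B(U),
\qquad df_B(U)=\theta^B(U).
\]
Substituting this and the expression for $\zeta^2$, then moving terms across, yields \eqref{eq:alena-realization-current-compatibility}.

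The main obstacle is the logical status of ``exactly when''. Local existence of $\varphi$ needs only closedness of $\theta^B$, whereas a global multiplier on the collar needs exactness. The collar is punctured, since it retracts onto $S^2$, so $H^1$ vanishes, but the statement keeps the exactness hypothesis explicitly. I would therefore state the equivalence as: a multiplier exists if and only if \eqref{eq:alena-realization-multiplier-equation} holds with $\theta=\theta^B$ and $\theta^B$ is exact. I would also note that \eqref{eq:alena-realization-multiplier-equation} with $\theta=\theta^B$ already forces $\theta^B$ to be closed where needed.

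One further care point is that invertibility of $B$ is used in the contraction step. The split conservation \eqref{eq:alena-realization-split-conservation} is not needed for this proposition beyond motivating the decomposition, and the scalar non-degeneracy \eqref{eq:alena-realization-amplitude-nondegeneracy} enters only to keep the scalar collar and $\varphi\neq0$ persistent.
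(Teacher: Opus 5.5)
Your proposal is correct and follows essentially the argument the paper sets up just before the statement. The product rule for $\varphi B$ gives the line-valued Codazzi equation with $\theta=d\log|\varphi|$, contraction with $\beta^{\mu\nu}$ forces $\theta=\theta^B$ with the factor $1/3$, exactness gives $\varphi=C_\varphi e^{f_B}$, and substituting into $\varphi=1-\zeta^2-\mu_\zeta R_\omega$ and the expanded current law yields both displayed identities.

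Your closing asides are unnecessary and not accurate as stated. The compact-leaf collar with leaf $\Sigma_g$ has $H^1\neq0$ even after puncturing, which is why the paper keeps the period condition on $\theta^B$. Also, the multiplier equation alone only gives $(d\theta^B)_{[\beta\alpha}B_{\mu]\nu}$ equal to a curvature term, so closedness needs an extra argument outside a flat tangent model. Since you retain exactness as a hypothesis, neither point affects the proof itself.
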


At principal normal level, current conservation contributes the scalar symbol $u\mapsto\langle\xi,u\rangle$ from $V_1$ to $V_0$. The tensor-valued $V_1\to V_1\oplus V_2$ completion required by the finite Clifford response must therefore be supplied by the tensorial Hilbert/stress residual; its constitutive identification is not assumed here.\\
~\\
\begin{proposition}[Principal Clifford rigidity]
\label{prop:alena-principal-clifford-rigidity}
Use the normal metric to identify $N_\Gamma^\ast\simeq N_\Gamma$, identify $V_1\simeq\mathfrak{so}(N_\Gamma)$ by $u\mapsto[u]_\times$, and put $V_2=\operatorname{Sym}^2_0(N_\Gamma)$ with the Frobenius metrics. For $0\neq\xi\in N_\Gamma$, let the Hodge-dual principal Codazzi block be $A_\xi b=[\xi]_\times b$. Every bilinear $SO(3)$-equivariant tensor-valued completion, linear in $\xi$ and $u$, has the form $C_\xi u=a[\xi\times u]_\times+c(\xi\odot u)_0$, where $\xi\odot u=(\xi\otimes u+u\otimes\xi)/2$. If the target metric is $G_{h,\beta}=hP_{V_1}+\beta P_{V_2}$ with $h,\beta>0$, isotropy of the Codazzi Gram is equivalent to $h=3\beta$. Under this condition, $C_\xi(V_1)\perp A_\xi(V_2)$ is equivalent to $6a+c=0$. Requiring the same Clifford constant on the standard Frobenius source metrics then gives $c^2=3$. Hence, up to the common sign,
\begin{equation}
C_\xi(u)=\pm\sqrt3\left[(\xi\odot u)_0-\frac16[\xi\times u]_\times\right],
\qquad
\Sigma_\xi(u,b):=C_\xi(u)+A_\xi b,
\label{eq:alena-principal-clifford-completion}
\end{equation}
and
\begin{equation}
\Sigma_\xi^\dagger G_{3\beta,\beta}\Sigma_\xi=\beta|\xi|^2{\bf1},
\qquad
\Sigma_\xi^\dagger G_{3\beta,\beta}\Sigma_\eta+\Sigma_\eta^\dagger G_{3\beta,\beta}\Sigma_\xi=2\beta\langle\xi,\eta\rangle{\bf1}.
\label{eq:alena-principal-clifford-relation}
\end{equation}
\end{proposition}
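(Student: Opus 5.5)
The plan is to reduce everything to $SO(3)$-equivariance plus a few explicit computations at a fixed unit normal $\xi$, using the residual $SO(2)$ stabilizer of $\xi$ to cut the isotropy and orthogonality conditions down to finitely many scalar identities.

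\emph{Step 1 (form of the completion).} A bilinear equivariant completion is an equivariant map $N_\Gamma\otimes V_1\to V_1\oplus V_2$. Since $V_1\simeq N_\Gamma$, Lemma~\ref{lem:pgl-second-jet-reduction}-type Clebsch--Gordan reasoning gives $N_\Gamma\otimes N_\Gamma\simeq V_0\oplus V_1\oplus V_2$. By Schur's lemma each target summand then receives a one-dimensional space of such maps. These are realized by the antisymmetric part, $\xi\times u\mapsto[\xi\times u]_\times$, and by the trace-free symmetric part, $(\xi\odot u)_0$. This yields $C_\xi u=a[\xi\times u]_\times+c(\xi\odot u)_0$.

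\emph{Step 2 (isotropy of the Codazzi Gram and orthogonality).} Fix $|\xi|=1$. Decompose $V_2$ under $SO(2)_\xi$ into weights $0,1,2$, with representatives $\xi\otimes\xi-\tfrac13{\bf1}$, $(\xi\odot e)$ for $e\perp\xi$, and $(e\otimes e-f\otimes f)$.

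For each representative $b$, split the matrix $[\xi]_\times b$ into its skew part (in $V_1$) and its trace-free symmetric part (in $V_2$). Compute the Frobenius norms of the two parts. The weight-$0$ vector is killed on the skew side, while the other weights contribute in different proportions. Requiring that $h\|\mathrm{skew}\|^2+\beta\|\mathrm{sym}_0\|^2$ be independent of the weight gives a linear relation between $h$ and $\beta$, which I expect to be $h=3\beta$.

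Next, $\langle C_\xi u,A_\xi b\rangle_{G}$ is $SO(2)_\xi$-invariant, so only matching weights pair. Among these only weight $1$ ($u\perp\xi$, $b=\xi\odot e$) can produce a term involving both $a$ and $c$. Setting it to zero gives a single linear relation, which I expect to be $6a+c=0$. I would also check that the weight-$0$ pairing vanishes identically.

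\emph{Step 3 (Clifford constant and polarization).} With $a=-c/6$, compute $C_\xi^\dagger G_{3\beta,\beta}C_\xi$ on $u\parallel\xi$ (only the $c$-term survives) and on $u\perp\xi$ (both terms survive). Impose equality with $\beta|\xi|^2$ relative to the Frobenius metric on $V_1$, remembering $\|[u]_\times\|^2=2|u|^2$. This should give $c^2=3$ on both weights; the consistency of the two weights is a nontrivial check of Step 2. The diagonal identity $\Sigma_\xi^\dagger G\Sigma_\xi=\beta|\xi|^2{\bf1}$ follows from the two isotropies together with orthogonality of the cross blocks. Since $\Sigma_\xi$ is linear in $\xi$, polarizing $\xi\mapsto\xi+\eta$ gives the anticommutator relation~\eqref{eq:alena-principal-clifford-relation}.

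The main obstacle is bookkeeping of normalizations: the factor $2$ in $\|[u]_\times\|^2$, the trace removal in $(\xi\odot u)_0$, and the precise reading of the Hodge-dual block $A_\xi$. Any slip there shifts the constants $3$, $6$, $\sqrt3$. I would first verify these three numbers in the explicit frame $\xi=e_3$ before writing the invariant argument.
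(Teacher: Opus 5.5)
Your route is correct, but it organizes the computation differently from the paper. The paper uses closed-form identities valid for all $b$ and $u$. From $[\xi]_\times b+b[\xi]_\times=-[b\xi]_\times$ for trace-free symmetric $b$, it gets $\|A_\xi b\|_{h,\beta}^2=\beta|\xi|^2\|b\|_{\rm F}^2+\frac{h-3\beta}{2}|b\xi|^2$. It then computes $\langle C_\xi u,A_\xi b\rangle=-\frac{\beta}{2}(6a+c)\langle\xi\times u,b\xi\rangle$ and, once $6a+c=0$, $\|C_\xi u\|^2=(\beta c^2/3)|\xi|^2\|[u]_\times\|_{\rm F}^2$, so isotropy, orthogonality and $c^2=3$ are read off at once.

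Your $SO(2)_\xi$ reduction additionally needs two facts you use only implicitly. First, distinct weights are $G$-orthogonal. Second, an invariant symmetric form on a nonzero-weight real plane is scalar. Together these make each Gram one number per weight. What the weight route buys is a structural explanation of why only a few scalar conditions arise. The paper's identities instead give formulas for arbitrary $h,\beta$, of the kind needed in Corollary~\ref{cor:alena-principal-clifford-spectrum}.

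Two points in your Step 2 need correcting before you compute.

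\emph{The weight-zero case is backwards.} For $|\xi|=1$ one has $A_\xi(\xi\otimes\xi-\frac13{\bf1})=-\frac13[\xi]_\times$, which is purely skew. So the weight-zero line is killed on the \emph{symmetric} side. It is the weight-two plane whose skew part vanishes, necessarily, since $V_1$ has no weight two. The per-weight Gram ratios are $h/3$, $(h+\beta)/4$ and $\beta$ for weights $0,1,2$, and they agree exactly when $h=3\beta$. This also explains why the weight-zero cross pairing vanishes: $C_\xi\xi=c(\xi\otimes\xi-\frac13{\bf1})\in V_2$, while $A_\xi$ of the weight-zero vector lies in $V_1$.

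\emph{The weight-one test needs the right vector.} You must pair $C_\xi u$ with $A_\xi(\xi\odot(\xi\times u))$. With $b=\xi\odot u$ the pairing vanishes identically, because for $b=\xi\odot v$ it is proportional to $\langle\xi\times u,v\rangle$. That choice would make the orthogonality condition look vacuous. With the correct choice and $h=3\beta$, the pairing is $-\frac{\beta}{4}(6a+c)$.

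Your Step 3 and the polarization argument are then fine. With $6a+c=0$, both weights give $\|C_\xi u\|_G^2=(\beta c^2/3)\|[u]_\times\|_{\rm F}^2$ for unit $\xi$. Matching $\beta\|[u]_\times\|_{\rm F}^2$ forces $c^2=3$, and the agreement of the two weights is the consistency check you anticipated.
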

\begin{proof}
For $b\in V_2$,
\begin{equation}
\|A_\xi b\|_{h,\beta}^2
=
\beta|\xi|^2\|b\|_{\rm F}^2
+
\frac{h-3\beta}{2}|b\xi|^2,
\label{eq:alena-principal-codazzi-isotropy}
\end{equation}
so isotropy gives $h=3\beta$. In that metric,
$\langle C_\xi u,A_\xi b\rangle=-\frac{\beta}{2}(6a+c)\langle\xi\times u,b\xi\rangle$, while $6a+c=0$ gives $\|C_\xi u\|_{3\beta,\beta}^2=(\beta c^2/3)|\xi|^2\|[u]_\times\|_{\rm F}^2$. Matching \eqref{eq:alena-principal-codazzi-isotropy} gives $c^2=3$, and polarization gives \eqref{eq:alena-principal-clifford-relation}.
\end{proof}

\begin{corollary}[Principal Clifford spectrum and conditioning]
\label{cor:alena-principal-clifford-spectrum}
For $0\neq\xi$, write $R_i=P_{V_i}\Sigma_\xi$. An adapted orthonormal frame gives orthogonal source projections $E_1+E_2={\bf1}$ with $\rank E_1=3$, $\rank E_2=5$, $R_1^\dagger R_1=|\xi|^2E_1/3$, and $R_2^\dagger R_2=|\xi|^2E_2$. Hence, for arbitrary $h,\beta>0$,
\begin{equation}
N_{\rm pr}(h,\beta)=|\xi|^2\left(\frac h3E_1+\beta E_2\right),
\qquad
\kappa\!\left(G_{h,\beta}^{1/2}\Sigma_\xi\right)
=
\sqrt{\frac{\max\{h/3,\beta\}}{\min\{h/3,\beta\}}}.
\end{equation}
Thus the principal tensor symbol is invertible for $\xi\neq0$, and the Clifford ratio $h=3\beta$ is exactly the isotropic, optimally conditioned point with the single singular value $\sqrt\beta|\xi|$.
\end{corollary}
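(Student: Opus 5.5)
The plan is to reduce everything to the single isotropy identity of Proposition~\ref{prop:alena-principal-clifford-rigidity} together with one explicit computation on the rank-three target $V_1$. Because $G_{h,\beta}=hP_{V_1}+\beta P_{V_2}$ and $R_i=P_{V_i}\Sigma_\xi$, one has for all $h,\beta>0$
\begin{equation}
\Sigma_\xi^\dagger G_{h,\beta}\Sigma_\xi=h\,R_1^\dagger R_1+\beta\,R_2^\dagger R_2 ,
\end{equation}
with adjoints taken for the Frobenius metrics. Specializing to $h=3\beta$, \eqref{eq:alena-principal-clifford-relation} gives the linear relation
\begin{equation}
3R_1^\dagger R_1+R_2^\dagger R_2=|\xi|^2{\bf1}
\end{equation}
on the eight-dimensional source $V_1\oplus V_2$. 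It therefore suffices to show that $E_1:=3|\xi|^{-2}R_1^\dagger R_1$ is an orthogonal projection of rank three. Then $E_2:={\bf1}-E_1$ is the complementary projection of rank five, the relation above yields $R_2^\dagger R_2=|\xi|^2E_2$, and the formula for $N_{\rm pr}(h,\beta)$ follows for arbitrary $h,\beta$.

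To prove that $E_1$ is a projection, I would use $(R_1^\dagger R_1)^2=R_1^\dagger(R_1R_1^\dagger)R_1$, which reduces the question to the identity $R_1R_1^\dagger=\tfrac13|\xi|^2{\bf1}_{V_1}$ on the three-dimensional target. Granting this identity, $R_1$ is surjective, so $\rank E_1=\rank R_1=3$. For the computation I would choose the adapted frame $\xi=|\xi|e_3$ and decompose source and target under the stabilizer $SO(2)_\xi$, using the weights $0,\pm1$ on $V_1$ and $0,\pm1,\pm2$ on $V_2$. By equivariance, $R_1R_1^\dagger$ is diagonal in these weights on $V_1$, so only the weight-$0$ and weight-$1$ entries need to be checked.

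The $V_1$-component of $\Sigma_\xi(u,b)$ is the sum of two pieces:
\begin{itemize}
\item $\mp\tfrac{\sqrt3}{6}[\xi\times u]_\times$, coming from \eqref{eq:alena-principal-clifford-completion};
\item the skew part of the matrix product $[\xi]_\times b$, which in vector form is $-\tfrac12 b\xi$ (up to the sign convention of $[\cdot]_\times$).
\end{itemize}
Hence $R_1R_1^\dagger=\tfrac1{12}\,[\xi\times]^\dagger[\xi\times]+\tfrac14\,\beta_\xi$, where $\beta_\xi$ is the adjoint square of $b\mapsto b\xi$ on trace-free symmetric matrices. On the weight-$0$ line $\mathbb R e_3$ the first term vanishes, so the claim is checked directly on the second term. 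On the weight-$\pm1$ plane both terms contribute, and the claim is checked by adding them.

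This normalization check is the step I expect to be the main obstacle. The Frobenius normalization of $[u]_\times$, with $\|[u]_\times\|_{\rm F}^2=2|u|^2$, and the factor $\tfrac12$ in $\xi\odot u$ must be tracked consistently so that both weight sectors give exactly $\tfrac13|\xi|^2$. As a fallback, the trace already fixes the average: $\operatorname{Tr}(R_1^\dagger R_1)=\tfrac13|\xi|^2\cdot3$ follows from $\operatorname{Tr}$ of the isotropy relation combined with the norm identity \eqref{eq:alena-principal-codazzi-isotropy}. The check then reduces to comparing just the two $SO(2)$ sectors.

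The conditioning statement is then immediate. $G_{h,\beta}^{1/2}\Sigma_\xi$ has squared singular values $h|\xi|^2/3$ on $E_1$ and $\beta|\xi|^2$ on $E_2$, both nonzero, so the symbol is invertible for $\xi\neq0$. The condition number is the square root of the ratio of the larger to the smaller of $h/3$ and $\beta$. It equals $1$ exactly when $h=3\beta$, and there the single singular value is $\sqrt\beta|\xi|$.
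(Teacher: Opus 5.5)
Your proposal is correct and is essentially the paper's argument: the corollary is read off from an adapted-frame computation combined with the isotropy identity \eqref{eq:alena-principal-clifford-relation} of Proposition~\ref{prop:alena-principal-clifford-rigidity}, and reducing the work to checking $R_1R_1^\dagger=\tfrac13|\xi|^2{\bf1}_{V_1}$ on the three-dimensional target is a clean way to organize it. The normalization step you flag does close once Frobenius metrics are used on both source and target, so that $\|[w]_\times\|_{\rm F}^2=2|w|^2$ on the target $V_1$: the $u$-part then contributes $\tfrac1{12}(|\xi|^2{\bf1}-\xi\otimes\xi)$ and the $b$-part contributes $\tfrac14|\xi|^2{\bf1}+\tfrac1{12}\xi\otimes\xi$, which sum to $\tfrac13|\xi|^2{\bf1}$ (with a Euclidean target metric the $b$-part would be off by a factor of two and isotropy would fail).
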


\begin{corollary}[Scalar-current Clifford obstruction]
\label{cor:alena-scalar-current-clifford-obstruction}
Suppose the tensor block of Proposition~\ref{prop:alena-principal-clifford-rigidity} is enlarged by an independent scalar residual $j_\xi(u)=\kappa\langle\xi,u\rangle$ with positive weight $g_J$. Its source quadratic form acquires the term $g_J|\kappa|^2|\langle\xi,u\rangle|^2$. For $\kappa\neq0$ this changes only the longitudinal eigenvalue and therefore cannot belong to the same isotropic Clifford Gram. The scalar current equation may instead be imposed in defining the physical slice.
\end{corollary}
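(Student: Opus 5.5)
The plan is to treat the enlarged symbol as the pair $\widetilde\Sigma_\xi(u,b):=(\Sigma_\xi(u,b),\,j_\xi(u))$, with values in the target $(V_1\oplus V_2)\oplus\mathbb C$ and target metric $\widetilde G:=G_{h,\beta}\oplus g_J$. I would then compute its source Gram $\widetilde\Sigma_\xi^\dagger\widetilde G\,\widetilde\Sigma_\xi$ directly. The target decomposition is orthogonal and $j_\xi$ vanishes on $V_2$, so the Gram splits additively:
\begin{equation}
\widetilde\Sigma_\xi^\dagger\widetilde G\,\widetilde\Sigma_\xi=\Sigma_\xi^\dagger G_{h,\beta}\Sigma_\xi+g_J\,j_\xi^\dagger j_\xi .
\end{equation}
The second term is the quadratic form $u\mapsto g_J|\kappa|^2|\langle\xi,u\rangle|^2$. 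This is the first assertion of the statement.

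Next, write $\langle\xi,u\rangle=|\xi|\langle\hat\xi,u\rangle$ and let $P_\xi$ be the rank-one projection in $V_1\simeq N_\Gamma$ onto the line spanned by $\xi$. Then
\begin{equation}
g_J\,j_\xi^\dagger j_\xi=g_J|\kappa|^2|\xi|^2P_\xi ,
\end{equation}
and $P_\xi\le E_1$ in the notation of Corollary~\ref{cor:alena-principal-clifford-spectrum}. At the isotropic point $h=3\beta$, Proposition~\ref{prop:alena-principal-clifford-rigidity} gives $\Sigma_\xi^\dagger G_{3\beta,\beta}\Sigma_\xi=\beta|\xi|^2{\bf1}$. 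The enlarged Gram is therefore
\begin{equation}
\beta|\xi|^2({\bf1}-P_\xi)+(\beta+g_J|\kappa|^2)|\xi|^2P_\xi .
\end{equation}
Its spectrum is $\beta|\xi|^2$ with multiplicity $7$ and $(\beta+g_J|\kappa|^2)|\xi|^2$ on the longitudinal line. For $\kappa\ne0$ and $g_J>0$ these values differ, so only the longitudinal eigenvalue moves.

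To show that no choice of weights restores a Clifford Gram, I would use the general form from Corollary~\ref{cor:alena-principal-clifford-spectrum}. For arbitrary $h,\beta$ the enlarged Gram equals $|\xi|^2\bigl(\tfrac h3E_1+\beta E_2\bigr)+g_J|\kappa|^2|\xi|^2P_\xi$. Its restriction to the three-dimensional block $E_1$ has eigenvalues $\tfrac h3$ on $E_1-P_\xi$ (times $|\xi|^2$) and $\tfrac h3+g_J|\kappa|^2$ on $P_\xi$. Since $\rank E_1=3>1$, this restriction is never a multiple of ${\bf1}$, so the total Gram is never a multiple of $|\xi|^2{\bf1}$.

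Polarizing in $\xi$ gives the same conclusion for the anticommutator. The extra term contributes $g_J|\kappa|^2\bigl(\xi\langle\eta,\cdot\rangle+\eta\langle\xi,\cdot\rangle\bigr)$ in symmetrized form. This is not proportional to $\langle\xi,\eta\rangle{\bf1}$; for instance it is nonzero for orthogonal $\xi,\eta$. Hence the second identity in \eqref{eq:alena-principal-clifford-relation} fails, and the enlarged symbol lies outside the isotropic Clifford Gram.

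The final sentence of the statement is a remark rather than a claim. Imposing $\langle\xi,u\rangle$-type current conservation as a constraint restricts the source to the slice $P_\xi u=0$. On that slice the extra term vanishes identically and \eqref{eq:alena-principal-clifford-relation} holds unchanged.

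The main obstacle is the weight-independence step: ruling out that some rescaling of $h$ and $\beta$ absorbs the scalar term. The argument relies on the rank-one perturbation sitting strictly inside the rank-three block $E_1$, which follows from $j_\xi$ vanishing on $V_2$. That block structure has to be checked against the adapted frame of Corollary~\ref{cor:alena-principal-clifford-spectrum}.
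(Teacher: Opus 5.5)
Your additive splitting of the enlarged Gram and the isotropic-point spectrum are correct. That computation is all the statement asks for, and it is the same reasoning the paper gives: the paper has no separate proof, and the corollary's own sentences are exactly this argument. One cosmetic point concerns normalization. Relative to the Frobenius source metric $\|[u]_\times\|_{\rm F}^2=2|u|^2$, which is what makes $\Sigma_\xi^\dagger G_{3\beta,\beta}\Sigma_\xi=\beta|\xi|^2{\bf1}$, the added operator is $\tfrac12g_J|\kappa|^2|\xi|^2P_\xi$. The shifted longitudinal value is therefore $(\beta+\tfrac12g_J|\kappa|^2)|\xi|^2$; nothing qualitative changes.

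Your weight-independence step, however, rests on a false identification, exactly at the point you flagged as the main obstacle.

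\begin{itemize}
\item The projections $E_1,E_2$ of Corollary~\ref{cor:alena-principal-clifford-spectrum} are not the source-grade projections. They are the supports of $R_1=P_{V_1}\Sigma_\xi$ and $R_2=P_{V_2}\Sigma_\xi$, and these mix $u$ and $b$.
\item Concretely, for trace-free $b$ the skew part of $[\xi]_\times b$ is $-\tfrac12[b\xi]_\times$, so $R_1(u,b)=a[\xi\times u]_\times-\tfrac12[b\xi]_\times$.
\item In particular $R_1(\xi,0)=a[\xi\times\xi]_\times=0$. The longitudinal line therefore lies in $\ker R_1=\operatorname{Ran}E_2$, i.e.\ $P_\xi\le E_2$, not $E_1$. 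The fact that $j_\xi$ vanishes on $V_2$ does not bear on this.
\end{itemize}

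The conclusion survives once the line is relocated. The vector $(\xi,0)$ is a $\beta|\xi|^2$-eigenvector of $N_{\rm pr}(h,\beta)$ for all $h,\beta$. Hence $P_\xi$ commutes with $N_{\rm pr}$, and the scalar term moves only that eigenvalue. This also justifies ``only the longitudinal eigenvalue'' away from $h=3\beta$. Meanwhile the rank-five block $E_2$ keeps $\beta|\xi|^2$ with multiplicity four, so it is never scalar.

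A basis-free variant avoids the block bookkeeping altogether. $N_{\rm pr}(h,\beta)$ has eigenvalue multiplicities $(8)$ or $(3,5)$. By contrast, $\lambda{\bf1}-tP_\xi$ with $t\neq0$ has multiplicities $(7,1)$. So no choice of $h,\beta$ absorbs the scalar residual.
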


\begin{lemma}[Self-adjoint constitutive first-jet normal form]
\label{lem:alena-constitutive-first-jet-normal-form}
Let $m\in V_1\simeq N_\Gamma$ and $M\in V_2=\operatorname{Sym}^2_0(N_\Gamma)$. Put $S_mu=(m\odot u)_0$, $B_Mu=Mu$, $C_Mb=(Mb+bM)_0$, and $R_Mu=[M,[u]_\times]$; their adjoints are $S_m^\dagger b=bm$ and $R_M^\dagger b=2\operatorname{ax}([M,b])$, where $\operatorname{ax}$ is inverse to $u\mapsto[u]_\times$. Every $SO(3)$-equivariant formally self-adjoint zero/first frozen-moment endomorphism of $V_1\oplus V_2$ is uniquely of the form
\begin{equation}
Z_{\rm H}(m,M)=
\begin{pmatrix}
z_1{\bf1}+b_{11}B_M & aS_m^\dagger+b_{12}R_M^\dagger\\
aS_m+b_{12}R_M & z_2{\bf1}+b_{22}C_M
\end{pmatrix},
\label{eq:alena-constitutive-first-jet-normal-form}
\end{equation}
with six real coefficients $z_1,z_2,a,b_{11},b_{12},b_{22}$.
\end{lemma}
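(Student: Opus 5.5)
The plan is to decompose the space of $SO(3)$-equivariant linear maps $V_1\otimes\mathcal M\to\operatorname{End}(V_1\oplus V_2)$, where $\mathcal M=V_0\oplus V_1\oplus V_2$ is the frozen-moment space (zeroth moment scalar, first moment $m\in V_1$, second moment $M\in V_2$). We work block by block and count multiplicities with Clebsch--Gordan. Here $\operatorname{End}(V_1)\simeq V_0\oplus V_1\oplus V_2$, $\operatorname{End}(V_2)\simeq V_0\oplus V_1\oplus V_2\oplus V_3\oplus V_4$, and $\operatorname{Hom}(V_1,V_2)\simeq V_1\oplus V_2\oplus V_3$.

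The scalar moment then contributes one equivariant map into each diagonal block, namely $z_1{\bf1}$ and $z_2{\bf1}$. It contributes nothing off-diagonal, because $\operatorname{Hom}(V_1,V_2)$ has no $V_0$ summand. The moment $m\in V_1$ gives one map into $\operatorname{End}(V_1)$, namely $u\mapsto[m]_\times u=m\times u$, and one map into $\operatorname{End}(V_2)$, namely $b\mapsto[[m]_\times,b]$. It also gives one map into $\operatorname{Hom}(V_1,V_2)$, namely $S_m$. The moment $M\in V_2$ likewise gives one map into each block: $B_M$, $C_M$, and $R_M$.

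The next step is to impose formal self-adjointness with respect to the Frobenius metrics, taking the coefficients real. The diagonal $V_1$-type maps $u\mapsto m\times u$ and $b\mapsto[[m]_\times,b]$ are skew-adjoint: the first because $[m]_\times$ is skew, the second because $\operatorname{ad}$ of a skew matrix is skew for the Frobenius form. They are therefore excluded. The maps $B_M$ (multiplication by a symmetric $M$) and $C_M$ (the trace-free anticommutator) are self-adjoint, which gives $b_{11}$ and $b_{22}$.

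Off-diagonal, self-adjointness forces the lower-left block $X$ and the upper-right block $Y$ to satisfy $Y=X^\dagger$. Writing $X=aS_m+b_{12}R_M$ with complex coefficients, reality fixes $a,b_{12}\in\mathbb R$. A further equivariant real structure would rule out $i\cdot$ versions, although those are not self-adjoint anyway once $Y=X^\dagger$ is imposed with real operators.

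It remains to verify the stated adjoint formulas. For $S_m^\dagger b=bm$: since $b$ is trace-free symmetric, $\langle (m\odot u)_0,b\rangle_F=\langle m\odot u,b\rangle=u^\top bm$. For $R_M^\dagger b=2\operatorname{ax}([M,b])$: one has $\langle[M,[u]_\times],b\rangle=\langle[u]_\times,[M,b]\rangle$, the bracket $[M,b]$ is skew, and $\langle[u]_\times,[v]_\times\rangle_F=2u\cdot v$.

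Uniqueness of the coefficients follows from the multiplicity-one count in each block together with nonvanishing of each representative map. Nonvanishing can be checked by evaluation on a diagonal $M$ and a basis vector $m$.

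The main obstacle is the $V_2\to V_2$ multiplicity count. Since $V_2$ appears in $\operatorname{End}(V_2)$ exactly once, the maps $b\mapsto(Mb+bM)_0$ and any other candidate, such as $(M b M)$-type terms, which are quadratic and hence excluded, must be checked to span. One must also confirm that $C_M$ is nonzero and self-adjoint. A secondary point is to confirm that "zero/first frozen-moment" means linear in $(1,m,M)$, so that no products of moments enter.
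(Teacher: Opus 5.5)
Your argument is correct and is essentially the paper's proof. Both rest on a Clebsch--Gordan multiplicity count (two scalar copies, one $m$-linear cross block, three $M$-linear $V_2$ copies), nonzero representatives, and self-adjointness pairing the cross blocks; the paper simply decomposes the symmetric part $\operatorname{Sym}^2(V_1\oplus V_2)=2V_0\oplus V_1\oplus3V_2\oplus V_3\oplus V_4$ directly, which absorbs your separate step of discarding the skew $V_1$-type diagonal maps. Work over $\mathbb R$ throughout and drop the detour through complex coefficients: over $\mathbb C$, the operators $i[m]_\times$, $i\,\mathrm{ad}_{[m]_\times}$ and the off-diagonal pair $iS_m$, $-iS_m^\dagger$ would also be Hermitian, and the count would no longer be six.
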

\begin{proof}
The decomposition $\operatorname{Sym}^2(V_1\oplus V_2)=2V_0\oplus V_1\oplus3V_2\oplus V_3\oplus V_4$ leaves two scalar blocks, one $m$-linear cross block, and three $M$-linear $V_2$ copies. The maps displayed above are nonzero representatives of these multiplicity spaces, and self-adjointness pairs the two cross blocks.
\end{proof}
This normal form fixes the allowed tensor contractions but not their coefficients. Current conservation and exact multiplier Codazzi closure do not determine them: in a frozen flat tangent model, $A_{\mu\nu}:=\partial_\mu\partial_\nu f$ is Codazzi for arbitrary smooth $f$, while $\varphi Y_{\mu\nu}:=A_{\mu\nu}-(\operatorname{tr}A)k_{\mu\nu}$ is divergence-free. The same pointwise freedom remains on a curved collar after choosing $df=0$ at the base point. A constitutive Hilbert/stress relation, or its mixed variation, is therefore required.
~\\
\begin{proposition}[Cross-block isotropy rigidity]
\label{prop:alena-cross-block-isotropy-rigidity}
Let $m\neq0$, put $r=|m|$, $e=m/r$, $Q_e=e\otimes e-\frac13{\bf1}$, and $X=aS_m+b_{12}R_M$ with $ab_{12}\neq0$. Then $X^\dagger X=\sigma^2{\bf1}_{V_1}$ holds if and only if $M=\nu Q_e$ for some $\nu\neq0$ and
\begin{equation}
\frac{b_{12}\nu}{ar}=\pm\frac{\sqrt3}{6},
\qquad
\sigma^2=\frac23a^2r^2.
\label{eq:alena-cross-block-isotropy-rigidity}
\end{equation}
\end{proposition}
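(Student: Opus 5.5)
The plan is to reduce the condition $X^\dagger X=\sigma^2{\bf1}_{V_1}$ to an identity between symmetric endomorphisms of $V_1\simeq N_\Gamma$ that depend polynomially on $(m,M)$. I would then exploit $SO(3)$-equivariance by working in an adapted frame with $e=e_3$. Expanding gives
\begin{equation*}
X^\dagger X=a^2S_m^\dagger S_m+ab_{12}\bigl(S_m^\dagger R_M+R_M^\dagger S_m\bigr)+b_{12}^2R_M^\dagger R_M .
\end{equation*}
The three terms are of degree $2$ in $m$, bilinear in $(m,M)$, and quadratic in $M$, respectively.

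\textbf{The $m$-only term.} Using $S_m^\dagger b=bm$ from Lemma~\ref{lem:alena-constitutive-first-jet-normal-form}, a direct computation gives
\begin{equation*}
S_m^\dagger S_m u=\tfrac12 r^2u+\tfrac16(m\cdot u)\,m,
\qquad\text{i.e.}\qquad
S_m^\dagger S_m=r^2\left(\tfrac12{\bf1}+\tfrac16 P_e\right).
\end{equation*}
So $S_m$ alone is not isotropic. Its eigenvalues are $2r^2/3$ on the axis $e$ and $r^2/2$ on the transverse plane, which is why $b_{12}\neq0$ is needed.

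\textbf{The terms involving $M$.} The cross term is $u\mapsto[M,[u]_\times]m$ together with its adjoint. I would evaluate it as $M(u\times m)-u\times(Mm)$ and then symmetrize. The quadratic term is $R_M^\dagger R_Mu=2\operatorname{ax}\bigl([M,[M,[u]_\times]]\bigr)$. Next I would write $M$ in the adapted frame with its five components, grouped by their $SO(2)_e$-weights:
\begin{itemize}
\item weight $0$: $\nu Q_e$;
\item weight $1$: the mixed components $M_{13},M_{23}$;
\item weight $2$: the transverse traceless part.
\end{itemize}
The condition then splits into a scalar part and the weight-$1$ and weight-$2$ components of the traceless part of $X^\dagger X$.

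\textbf{Necessity: forcing $M=\nu Q_e$.} The weight-$1$ and weight-$2$ components should force the non-axial parts of $M$ to vanish. The reason is that their contributions enter linearly through the cross term, with coefficient $ab_{12}r\neq0$, plus quadratically. Comparing them against the purely axial anisotropy coming from $S_m^\dagger S_m$ should leave only the trivial solution. This is the step I expect to be the main obstacle. The quadratic term mixes weights, so a clean exclusion argument needs the explicit form of $R_M^\dagger R_M$ for general $M$.

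As a first attempt, I would show that the weight-$1$ component of the traceless part reads $ab_{12}r\cdot(\text{linear in }M_{13},M_{23})$ plus products of those mixed components with the other components of $M$. I would then argue via a norm or trace inequality that the only solution is $M_{13}=M_{23}=0$. After that, the same argument applies to the weight-$2$ part. Finally, $\nu\neq0$ follows because $M=0$ would leave the non-isotropic $S_m^\dagger S_m$.

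\textbf{The axial case and the constants.} For $M=\nu Q_e$, the generator $[e]_\times$ commutes with $Q_e$, so $R_M e=0$. On the transverse plane, $R_M$ scales by an explicit multiple of $\nu$, so $R_M^\dagger R_M=k\nu^2({\bf1}-P_e)$ for a computable constant $k$. The cross term also vanishes on $e$. On the transverse plane it is a multiple of ${\bf1}-P_e$, and I expect the $V_2$-images of $S_m$ and $R_M$ to be orthogonal there, so that multiple should be zero.

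Write $t=b_{12}\nu/(ar)$. Isotropy then amounts to two equations:
\begin{itemize}
\item on the axis, $\sigma^2=\tfrac23a^2r^2$;
\item on the transverse plane, $\tfrac12a^2r^2+k\,b_{12}^2\nu^2=\tfrac23a^2r^2$, i.e. $k\,t^2=\tfrac16$.
\end{itemize}
With the value of $k$ obtained from the frame computation, this should give $t=\pm\sqrt3/6$. Conversely, these values satisfy both equations, which proves the equivalence in the statement.
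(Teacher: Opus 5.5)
\paragraph{Sufficiency.} Your sufficiency direction is essentially right. With $e=e_3$ and $N=(b_{12}/ar)M$ one has $X=ar(S_e+R_N)$. For $N=\alpha Q_e$ the constant is $k=2$, i.e.\ $R_N^\dagger R_N=2\alpha^2({\bf1}-e\otimes e)$, so the transverse equation gives $\alpha^2=1/12$.

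One justification there is wrong. The symmetrized cross term does not vanish because the $V_2$-images of $S_m$ and $R_M$ are orthogonal. For $M=\nu Q_e$ one actually has $R_M(V_1)\subset S_m(V_1)$ and $S_m^\dagger R_M=\nu r[e]_\times\neq0$. It vanishes because this operator is skew. In general $S_e^\dagger R_N+R_N^\dagger S_e=[[e]_\times,N]$, which kills exactly the weight-zero part of $N$.

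\paragraph{The gap: necessity.} The genuine gap is the necessity step, and the mechanism you sketch for it would not work. Use $N[u]_\times N=[\operatorname{adj}(N)u]_\times$ together with $\operatorname{adj}(N)=N^2-\tfrac12\operatorname{tr}(N^2){\bf1}$ for traceless $N$. This gives $R_N^\dagger R_N=4\operatorname{tr}(N^2){\bf1}-6N^2$, so isotropy is equivalent to $6(N^2)_0-[[e]_\times,N]=\tfrac16Q_e$. Write $N=\alpha Q_e+\dots$ with $z=N_{13}+iN_{23}$ and $w=\tfrac12(N_{11}-N_{22})+iN_{12}$. The weight-zero, weight-one and weight-two components of this equation are
\[
2\alpha^2-6|w|^2+3|z|^2=\tfrac16,\qquad (2\alpha-i)z+6w\bar z=0,\qquad (4\alpha+2i)w=3z^2 .
\]
The weight-zero and weight-one equations alone have solutions with $z\neq0$. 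Take $|\alpha|<1/2$, $|z|^2=(1-4\alpha^2)/9$, and $|w|=\sqrt{4\alpha^2+1}/6$ with the phase fixed by the second equation. So no norm or trace inequality drawn from the weight-one component can force $M_{13}=M_{23}=0$ before the weight-two part is used. Your sequential plan (first weight one, then ``the same argument'' for weight two) therefore stalls at its first step.

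The exclusion comes only from coupling the two equations. Substituting $w=3z^2/(4\alpha+2i)$ into the weight-one equation gives $(4\alpha^2+1+9|z|^2)z=0$. Hence $z=0$, then $w=0$, then $\alpha^2=1/12$, which is $b_{12}\nu/(ar)=\pm\sqrt3/6$ and $\sigma^2=\tfrac23a^2r^2$. The explicit quadratic form $R_N^\dagger R_N$ and this elimination are the missing ideas. The paper reaches the same conclusion by solving the five traceless entries of $(S_{e_3}+R_N)^\dagger(S_{e_3}+R_N)$ directly in the adapted frame.
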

\begin{proof}
By $SO(3)$ covariance set $e=e_3$ and absorb the relative factor into $N=(b_{12}/ar)M$. Writing $N$ as a general traceless symmetric $3\times3$ matrix, the five traceless entries of $(S_{e_3}+R_N)^\dagger(S_{e_3}+R_N)$ reduce to the conditions that all off-diagonal entries vanish, the first two diagonal entries agree, and their squared value is $1/108$. Thus $N=\pm(6\sqrt3)^{-1}\operatorname{diag}(1,1,-2)$, which is equivalent to \eqref{eq:alena-cross-block-isotropy-rigidity}.
\end{proof}

\begin{corollary}[Two-level constitutive selector and centered grading]
\label{cor:alena-two-level-constitutive-selector}
On the branch of Proposition~\ref{prop:alena-cross-block-isotropy-rigidity}, the canonical two-level representative is selected by requiring \eqref{eq:alena-constitutive-first-jet-normal-form} to have two global eigenvalues. Then
\begin{equation}
b_{22}=-3b_{11},\qquad z_2-z_1=\frac53b_{11}\nu,\qquad a^2r^2=18b_{11}^2\nu^2,\qquad \left|\frac{b_{12}}{b_{11}}\right|=\sqrt{\frac32}.
\label{eq:alena-two-level-constitutive-selector}
\end{equation}
The eigenvalue multiplicities are $3$ and $5$. If $P_3$ is the multiplicity-three spectral projection and $P_5={\bf1}-P_3$, then
\begin{equation}
P_{V_1}P_3P_{V_1}\big|_{V_1}=\frac47{\bf1}_{V_1}-\frac17e\otimes e,
\qquad
\operatorname{Spec}\!\left(P_{V_1}P_3P_{V_1}\big|_{V_1}\right)=\left\{\frac47,\frac47,\frac37\right\}.
\label{eq:alena-constitutive-projector-compression}
\end{equation}
Writing $p=b_{11}\nu$, the two eigenvalues are $\lambda_3=z_1-10p/3$ and $\lambda_5=z_1+11p/3$, and hence
\begin{equation}
Z_{\rm H}-\frac{\Tr Z_{\rm H}}8{\bf1}=\frac{7p}{8}\left(3P_5-5P_3\right).
\label{eq:alena-centered-constitutive-grading}
\end{equation}
Thus $(\lambda_5-\lambda_3)/p=7$. On a fixed sign branch, $e\mapsto P_3(e)$ is the oriented $S^2$ lift of the quadratic $e\mapsto Q_e$ orbit.
\end{corollary}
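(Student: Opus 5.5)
We work in the adapted frame $e=e_3$ of Proposition~\ref{prop:alena-cross-block-isotropy-rigidity}, where $M=\nu Q_e$ with $Q_e=\operatorname{diag}(-\frac13,-\frac13,\frac23)$. The stabilizer $SO(2)\subset SO(3)$ of $e$ then block-diagonalizes $Z_{\rm H}$ of \eqref{eq:alena-constitutive-first-jet-normal-form} by the helicity $k=J_3\in\{0,\pm1,\pm2\}$. Since $V_1$ carries helicities $0,\pm1$ and $V_2$ carries $0,\pm1,\pm2$, the matrix splits into three sectors. For $|k|=2$ there are two one-dimensional blocks, which are equal by the reflection symmetry. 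For $|k|=1$ there are two $2\times2$ blocks coupling $V_1$ and $V_2$, and these are again equal. For $k=0$ there is one $2\times2$ block. So the eight eigenvalues come as one value of multiplicity two, two pairs of multiplicity two each (from $|k|=1$), and two simple values (from $k=0$).

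The entries of each block are elementary. The diagonal entries follow from the spectra of $B_M=\nu Q_e$ on $V_1$ and of $C_M$ on $V_2$ restricted to each helicity. On $V_2$ with helicities $0,\pm1,\pm2$, the anticommutator $(Mb+bM)_0$ acts by $\nu$ times a fixed constant in each sector. The off-diagonal entries come from $aS_m+b_{12}R_M$ restricted to the same helicity. On the isotropy branch, Proposition~\ref{prop:alena-cross-block-isotropy-rigidity} already gives the cross-block singular value $\sigma^2=\frac23a^2r^2$ on each $V_1$ helicity.

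\textbf{Imposing two global eigenvalues.} The requirement is that the eigenvalues collapse to two values $\lambda_3$ (multiplicity $3$) and $\lambda_5$ (multiplicity $5$). Each $2\times2$ block with off-diagonal squared modulus $\sigma^2$ has eigenvalue sum equal to its trace and eigenvalue product equal to $d_1d_2-\sigma^2$. Requiring that the $k=0$ and $|k|=1$ blocks have the same eigenvalue pair, and that the $|k|=2$ value coincides with the upper member of that pair, gives polynomial equations in $z_1,z_2,b_{11},b_{22},a^2r^2$.

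\begin{itemize}
\item Equality of traces across the $k=0$ and $|k|=1$ blocks should force $b_{22}=-3b_{11}$.
\item The $|k|=2$ coincidence should give $z_2-z_1=\frac53 b_{11}\nu$.
\item Equality of the products should then give $a^2r^2=18b_{11}^2\nu^2$.
\item Combining with \eqref{eq:alena-cross-block-isotropy-rigidity} yields $b_{12}^2=\frac{3}{2}b_{11}^2$.
\end{itemize}

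The multiplicity count $3+5$ follows because each $2\times2$ block contributes one eigenvalue to each level, giving $1+2=3$ eigenvalues on the lower level. The upper level receives $1+2+2=5$, including the $|k|=2$ pair. Substituting the solved coefficients gives $\lambda_3=z_1-10p/3$ and $\lambda_5=z_1+11p/3$ with $p=b_{11}\nu$. Then $\Tr Z_{\rm H}=3\lambda_3+5\lambda_5$, and \eqref{eq:alena-centered-constitutive-grading} is the resulting algebraic identity, with $\lambda_5-\lambda_3=7p$.

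\textbf{The compression formula.} In each $2\times2$ block, the $V_1$ component of the normalized $\lambda_3$ eigenvector has squared modulus $\cos^2\theta_k$. The eigenvector equation gives $\cos^2\theta_k=(\lambda_5-d_{1,k})/(\lambda_5-\lambda_3)$, where $d_{1,k}$ is the $V_1$ diagonal entry. Evaluating with $d_{1,0}=z_1+\frac23p$ and $d_{1,\pm1}=z_1-\frac13p$ gives $\frac37$ on the $k=0$ (that is, $e$) direction and $\frac47$ on the two transverse directions. This is exactly \eqref{eq:alena-constitutive-projector-compression}.

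\textbf{Oriented lift.} The last claim follows from equivariance. The map $e\mapsto P_3(e)$ depends on $e$ through $M=\nu Q_e$ together with $m=re$, and the sign of $b_{12}$ fixed by the branch retains the orientation of $e$. This map therefore lifts the orbit $e\mapsto Q_e$ from $\mathbb{RP}^2$ to $S^2$.

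\textbf{Main obstacle.} The hardest step is computing the helicity constants of $C_M$ and $R_M$, and the relative phases between $S_m$ and $R_M$ in the $|k|=1$ sector. A consistent normalization of the spherical basis, matching the Frobenius metrics, is what makes the off-diagonal moduli equal $\sigma$. I would fix explicit matrices in the basis $e_3$ and check this against the $N$ computation in the proof of Proposition~\ref{prop:alena-cross-block-isotropy-rigidity}.
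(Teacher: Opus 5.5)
Your route is the paper's: split $Z_{\rm H}$ under the axial $SO(2)_e$ into one weight-zero mixed block, two identical weight-one mixed blocks and a doubled weight-two scalar. You then equate the two-level spectra and read $P_3$ off by spectral projection. Your multiplicity count and the compression step via $(\lambda_5-d_{1,k})/(\lambda_5-\lambda_3)$ are correct.

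However, two of your bullets attach relations to the wrong conditions. Both mixed blocks carry the same off-diagonal modulus $\sigma$. Equality of their products $d_1d_2-\sigma^2$ is therefore independent of $\sigma$, and it cannot produce $a^2r^2=18b_{11}^2\nu^2$. Concretely, $C_M$ acts by $\frac23\nu,\frac13\nu,-\frac23\nu$ on weights $0,\pm1,\pm2$, so the sectors are
\[
\begin{pmatrix} z_1+\frac23p&\sigma\\ \sigma&z_2+\frac23b_{22}\nu\end{pmatrix},\qquad
\begin{pmatrix} z_1-\frac13p&\sigma\\ \sigma&z_2+\frac13b_{22}\nu\end{pmatrix},\qquad
z_2-\tfrac23b_{22}\nu .
\]
\begin{itemize}
\item Trace equality gives $b_{22}=-3b_{11}$, as you say.
\item Product equality then reads $p\,(z_2-z_1-\frac53p)=0$; this, not the weight-two coincidence, is where $z_2-z_1=\frac53p$ comes from.
\item The mixed blocks then have eigenvalues $z_1+\frac16p\pm\sqrt{p^2/4+\sigma^2}$, while the weight-two value is $z_1+\frac{11}{3}p$. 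The weight-two coincidence is thus the only condition involving $\sigma$. It gives $\sigma^2=12p^2$, i.e.\ $a^2r^2=18p^2$ via $\sigma^2=\frac23a^2r^2$.
\end{itemize}
The weight-two value always equals $z_1+\frac16p+\frac72p$, so whether it is the ``upper'' member depends on the sign of $p$. With this reassignment your system solves exactly to \eqref{eq:alena-two-level-constitutive-selector} and to the stated $\lambda_3,\lambda_5$.

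Your ``main obstacle'' also dissolves without any spherical-basis phase bookkeeping.
\begin{itemize}
\item Weight one: on this real four-dimensional sector, $B_M$ and $C_M$ are scalars. The isotropy condition $X^\dagger X=\sigma^2{\bf1}_{V_1}$ makes $X$ there equal to $\sigma$ times an orthogonal map, so the sector is two copies of the $2\times2$ block with off-diagonal entry $\sigma$.
\item Weight zero: $R_Me=0$ because $Q_e$ commutes with $[e]_\times$. Hence $Xe=arQ_e$, whose norm is $ar\sqrt{2/3}=\sigma$.
\end{itemize}
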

\begin{proof}
The axial $SO(2)$ decomposition consists of one weight-zero mixed block, two identical weight-one mixed blocks, and a pure weight-two block of multiplicity two. Equality of their two-level spectra gives \eqref{eq:alena-two-level-constitutive-selector}; direct spectral projection gives \eqref{eq:alena-constitutive-projector-compression}. Centering the resulting $3|5$ spectrum gives \eqref{eq:alena-centered-constitutive-grading}.
\end{proof}

\begin{proposition}[Projector-first selector certificate]
\label{prop:alena-projector-first-selector-certificate}
Let $P$ be a rank-three orthogonal projection and put $A_P=P_{V_1}PP_{V_1}|_{V_1}$. The selected overlap packet is characterized by
\begin{equation} \mathcal R_{\rm CM}(P):=49A_P^2-49A_P+12{\bf1}=0,\qquad \Tr A_P=\frac{11}{7}. \label{eq:alena-projector-cm-certificate} \end{equation}
These conditions are equivalent to $\Spec A_P=\{3/7,4/7,4/7\}$. If a microscopic self-adjoint response has $P$ as an invariant rank-three subspace and its complementary diagonal spectra are separated by a positive gap, the selected projector is realized without requiring degeneracy inside either block. Conversely, for $\delta=\|\mathcal R_{\rm CM}(P)\|<1/4$, every eigenvalue lies within $r(\delta)=(1-\sqrt{1-4\delta})/14$ of $\{3/7,4/7\}$; if $|\Tr A_P-11/7|+3r(\delta)<1/7$, the multiplicities are $1|2$. For off-diagonal norm $\varepsilon$ and inter-block separation $g$, $\varepsilon<g/2$, the actual isolated projector differs from the reference projector by at most $\varepsilon/(g-\varepsilon)$.
\end{proposition}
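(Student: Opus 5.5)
The plan is to treat the statement as four separable claims: the polynomial characterization, realization by a gapped response, the perturbative spectral localization, and the Davis--Kahan-type projector bound.

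\emph{Polynomial characterization.} Since $A_P$ is a self-adjoint compression of $P$ to the three-dimensional space $V_1$, it satisfies $0\leq A_P\leq{\bf1}$. The polynomial $49x^2-49x+12=(7x-3)(7x-4)$ has roots exactly $3/7$ and $4/7$. Hence $\mathcal R_{\rm CM}(P)=0$ holds if and only if $\Spec A_P\subset\{3/7,4/7\}$. Writing the multiplicities as $k$ and $3-k$, the condition $\Tr A_P=(3k+4(3-k))/7=11/7$ forces $k=1$. The converse direction is immediate. Corollary~\ref{cor:alena-two-level-constitutive-selector}, via \eqref{eq:alena-constitutive-projector-compression}, shows that the selected $P_3$ attains this packet, so the characterization is consistent with the earlier selection.

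\emph{Realization by a gapped response.} Let $Z$ be self-adjoint with $PZ=ZP$. Then $Z=PZP+QZQ$ with $Q={\bf1}-P$. If $\Spec(PZP|_{P})$ and $\Spec(QZQ|_{Q})$ are separated by $g>0$, then $P$ is the Riesz projection of $Z$ for a contour enclosing only the first cluster, in the sense of \cite{Kato1995}. Nothing in this argument uses the internal spectra of either block, so degeneracy inside a block is not needed.

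\emph{Perturbative localization.} For each eigenvalue $\lambda$ of $A_P$ we have $|49\lambda^2-49\lambda+12|\leq\delta$, because the operator norm of a self-adjoint polynomial bounds its value at every eigenvalue. Put $y=\lambda-1/2$; the condition becomes $|49y^2-1/4|\leq\delta$, so $y^2\in[(1/4-\delta)/49,(1/4+\delta)/49]$. When $\delta<1/4$ the lower bound is positive, which splits $\lambda$ into two branches around $1/2\pm1/14$, that is, around $3/7$ and $4/7$. The distance to the nearest root is $\bigl||y|-1/14\bigr|$. Solving $|(7|y|-1/2)(7|y|+1/2)|\leq\delta$ with $7|y|+1/2\geq1/2+7\bigl||y|-1/14\bigr|\cdot(\text{sign})$ gives a quadratic inequality in $s=7\bigl||y|-1/14\bigr|$. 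The worst case is $s(1-s)\leq\delta$, which yields $s\leq(1-\sqrt{1-4\delta})/2$, i.e.\ distance $\leq r(\delta)$. The main bookkeeping risk is in this step: the two sign cases must be checked so that the smaller root of $s^2-s+\delta=0$ is the controlling one. The branch with $|y|<1/14$ gives $s(1-s)\leq\delta$, while the branch with $|y|>1/14$ gives $s(1+s)\leq\delta$, which is stronger.

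For the multiplicities, suppose $k$ eigenvalues lie near $3/7$. Then $\Tr A_P$ lies within $3r(\delta)$ of $(12-k)/7$. The hypothesis $|\Tr A_P-11/7|+3r(\delta)<1/7$ excludes every $k\neq1$, because distinct values of $k$ change the nominal trace by at least $1/7$.

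\emph{Projector bound.} Write $Z=Z_0+E$, where $Z_0=P_0ZP_0+Q_0ZQ_0$ is block diagonal with inter-block separation $g$, and $E$ is off-diagonal with $\|E\|=\varepsilon<g/2$. Weyl's inequality moves each spectrum by at most $\varepsilon$, so a separation of at least $g-2\varepsilon>0$ survives and the Riesz projector $P$ is well defined. The Davis--Kahan $\sin\Theta$ theorem then bounds $\|P-P_0\|$. Its residual $(Z-\lambda)$ applied to the range of $P_0$ equals $E$ restricted there, of norm $\leq\varepsilon$, against a spectral distance of at least $g-\varepsilon$. This gives $\|P-P_0\|=\|\sin\Theta\|\leq\varepsilon/(g-\varepsilon)$.

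To conclude, I expect the main obstacle to be the choice of gap in the Davis--Kahan step. The bound must be stated using the separation between the perturbed spectrum of one block and the unperturbed spectrum of the other, which is at least $g-\varepsilon$, rather than the symmetric $g-2\varepsilon$, since that choice is what produces the stated constant.
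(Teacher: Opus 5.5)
The paper gives no proof of this proposition, so there is no route to compare against. Your four-part decomposition matches what the hypotheses encode ($\delta<1/4$, the radius $r(\delta)$, $\varepsilon<g/2$, the denominator $g-\varepsilon$), and the first three parts are correct.

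In the localization step your case split is right. On the inner branch your $s$ satisfies $s\le\tfrac12$, so $s(1-s)\le\delta$ forces the smaller root of $s^2-s+\delta$, and the outer branch $s(1+s)\le\delta$ is stronger. Since $r(\delta)<1/14$, the two windows are disjoint, so the count $k$ in the trace argument is well defined.

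The last step silently uses a hypothesis you should state. The Davis--Kahan $\sin\Theta$ theorem with constant one requires interval-type separation: the spectrum of $P_0ZP_0$ lies in some $[a,b]$ and the perturbed complementary spectrum avoids $(a-\delta,b+\delta)$, or the same with roles exchanged. For merely disjoint, interlaced block spectra only the Bhatia--Davis--McIntosh constant $\pi/2$ is available, and the stated bound is then false. Here is a counterexample:
\begin{itemize}
\item Take $P_0ZP_0\simeq\operatorname{diag}(0,2,\ast)$ and $Q_0ZQ_0\simeq\operatorname{diag}(1,3,\ast,\ast,\ast)$, with the starred eigenvalues far away, so $g=1$.
\item Let the off-diagonal block $B$ have norm $\varepsilon$ and couple only the first two coordinates of each side.
\item The graph operator $K$ of $\operatorname{Ran}P$ solves $CK-KA=-B^\ast+O(\varepsilon^3)$.
\item The Schur multiplier $1/(c_i-a_j)$, applied to the rotation $\tfrac1{\sqrt2}\bigl(\begin{smallmatrix}1&-1\\1&1\end{smallmatrix}\bigr)$, gives a matrix of norm about $1.22$.
\item Hence $\|P-P_0\|\approx1.22\,\varepsilon>\varepsilon/(1-\varepsilon)$ for small $\varepsilon$.
\end{itemize}

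So state that ``inter-block separation $g$'' means an ordered (or nested) gap. This is the natural reading here, since the blocks sit near the two levels $\lambda_3,\lambda_5$ of Corollary~\ref{cor:alena-two-level-constitutive-selector}. Under that reading, your Weyl index matching gives equal ranks, which you need for $\|P-P_0\|=\|\sin\Theta\|$, and it gives spectral distance at least $g-\varepsilon$. Your realization step, by contrast, needs only disjointness, since $P=\chi_{\sigma}(Z)$ with $\sigma=\Spec(Z|_{\operatorname{Ran}P})$ by functional calculus.

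Finally, write the residual as the block residual $ZP_0-P_0ZP_0=Q_0ZP_0$, of norm $\varepsilon$. Writing it as $(Z-\lambda)P_0$ with a scalar $\lambda$ would presuppose exactly the intra-block degeneracy the statement disclaims.
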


\begin{corollary}[Selector packet and norm-seven two-projection rigidity]
\label{cor:alena-selector-hodge-eisenstein-rigidity}
Put $\Gamma_{\rm sel}=P_5-P_3$, $\Gamma_{\rm gr}=P_{V_1}-P_{V_2}$, and $U_{\rm rel}=\Gamma_{\rm sel}\Gamma_{\rm gr}$. Equation~\eqref{eq:alena-constitutive-projector-compression} gives
\begin{equation} \left\{\delta_{\rm ov},\left|\Tr(\Gamma_{\rm sel}\Gamma_{\rm gr})\right|,\Tr(P_5P_{V_2})\right\}=\left\{\frac17,\frac{16}{7},\frac{25}{7}\right\},\qquad \delta_{\rm ov}=\frac17. \label{eq:alena-selector-hodge-square-packet} \end{equation}
Thus $(p_{\rm sel},q_{\rm sel},r_{\rm sel};n_{\rm sel})=(1,4,5;7)$ and Corollary~\ref{cor:pgl-galois-rank-compatibility} gives
\begin{equation} \xi_{\rm sel}\in\{2-\omega,3+\omega\}. \label{eq:alena-selector-eisenstein-class} \end{equation}
The relative operator satisfies
\begin{equation} \chi_{U_{\rm rel}}(x)=\frac{(x+1)^2(7x^2-2x+7)(7x^2+2x+7)^2}{343}. \label{eq:alena-relative-grading-polynomial} \end{equation}
Its nontrivial eigenvalues have real parts $\pm1/7$, have infinite order, and are not integral lattice automorphisms. Clearing the denominator gives the conductor-eight order $\mathbb Z+8\mathbb Z\omega$ with integral closure $\mathbb Z[\omega]$; the positively oriented simple numerator is $\omega(2-\omega)^2$. Hence the projector contains a canonical CM arithmetic image, while the finite torsor map remains marked. In the reconstructed Eisenstein field, $7$ is also the least denominator of a primitive non-torsion rational point on the norm-one torus.
\end{corollary}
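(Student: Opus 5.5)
The plan is to reduce everything to the two-projection (Halmos) geometry of the pair $(P_3,P_{V_1})$ in the eight-dimensional space $V_1\oplus V_2$. The only input needed is the compression spectrum \eqref{eq:alena-constitutive-projector-compression}, $\operatorname{Spec}A=\{4/7,4/7,3/7\}$, where $A=P_{V_1}P_3P_{V_1}|_{V_1}$. First I obtain the trace packet by linearity. Since $\operatorname{Tr}(P_3P_{V_1})=\operatorname{Tr}A=11/7$ and $\operatorname{rank}P_3=3$, $\operatorname{rank}P_{V_1}=3$, $\operatorname{rank}P_{V_2}=5$, one gets
\[
\operatorname{Tr}(P_3P_{V_2})=\operatorname{Tr}(P_5P_{V_1})=\tfrac{10}{7},\qquad \operatorname{Tr}(P_5P_{V_2})=\tfrac{25}{7}.
\]
Expanding $\operatorname{Tr}((P_5-P_3)(P_{V_1}-P_{V_2}))$ into these four traces gives $-16/7$. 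For $\delta_{\rm ov}$ I take the overlap splitting of $A$, i.e.\ the gap $4/7-3/7=1/7$ between its two distinct eigenvalues. This yields the packet $(1,4,5;7)$ of \eqref{eq:alena-selector-hodge-square-packet}. Since $\operatorname{Tr}\xi=5=r$ and $N_E(\xi)=7=T^\ast$, Corollary~\ref{cor:pgl-galois-rank-compatibility} then applies directly and gives \eqref{eq:alena-selector-eisenstein-class}.

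For the characteristic polynomial, I decompose the space into the four intersections $\operatorname{ran}P_3\cap V_1$, $\operatorname{ran}P_3\cap V_2$, $\ker P_3\cap V_1$, $\ker P_3\cap V_2$, together with the generic part, which splits into two-dimensional invariant blocks labelled by principal angles $\theta_i$ with $\cos^2\theta_i\in\operatorname{Spec}A$.
\begin{itemize}
\item Because $1\notin\operatorname{Spec}A$, the intersection $\operatorname{ran}P_3\cap V_1$ is zero.
\item Because $P_3P_{V_1}P_3$ has three nonzero eigenvalues on the rank-three range, $\operatorname{ran}P_3\cap V_2$ is zero as well.
\item Because $0\notin\operatorname{Spec}A$, the intersection $\ker P_3\cap V_1$ is also zero.
\end{itemize}
Hence $\operatorname{ran}P_3$ lies entirely in the generic part, which therefore has dimension $6$. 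The remaining two dimensions are $\ker P_3\cap V_2$, on which $\Gamma_{\rm sel}=+1$ and $\Gamma_{\rm gr}=-1$; this produces the factor $(x+1)^2$.

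On each generic block, $U_{\rm rel}=-(2P_3-1)(2P_{V_1}-1)$ is minus a product of two reflections, i.e.\ minus a rotation by $2\theta_i$. Its eigenvalues therefore have real part $-(2\cos^2\theta_i-1)$, which equals $-1/7$ twice and $+1/7$ once. This gives the quadratic factors $(7x^2+2x+7)^2$ and $(7x^2-2x+7)$ over $7^3$. I expect the main obstacle to be the sign and orientation bookkeeping in this step. I would verify it directly on one explicit block, as a check against \eqref{eq:alena-centered-constitutive-grading}.

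For the arithmetic statements, the roots of $7x^2\mp2x+7$ are $(\pm1\pm4\sqrt{-3})/7$ with $\sqrt{-3}=1+2\omega$. These are of modulus one but are not algebraic integers, so they are neither roots of unity nor integral lattice automorphisms, and they have infinite order. Clearing the denominator gives $7x=5+8\omega$ up to sign and conjugation. The discriminant of $\mathbb Z[5+8\omega]=\mathbb Z+8\mathbb Z\omega$ is $64\cdot(-3)$, so this order has conductor $8$ and integral closure $\mathbb Z[\omega]$. A direct expansion gives $\omega(2-\omega)^2=5+8\omega$, which identifies the positively oriented simple numerator.

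Finally, $(5+8\omega)/7$ has norm one, and it is the Hilbert-90 image of $(2-\omega)$ up to a unit. By the norm argument in Proposition~\ref{prop:pgl-primitive-commensurate-torsor-spectrum}, primitive denominators of non-torsion norm-one points are norms of primitive Eisenstein elements with no inert or ramified factors. The least such norm is $7$, which gives the last claim.
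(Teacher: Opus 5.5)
Your proposal is correct, and it is essentially the two-projection argument that the paper leaves implicit; the paper itself only cites \eqref{eq:alena-constitutive-projector-compression}. Trace linearity gives the packet, the Halmos decomposition of $(P_3,P_{V_1})$ gives $\chi_{U_{\rm rel}}$, and the Hilbert-90 and split-prime reasoning gives the arithmetic claims.

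The sign bookkeeping you flag is already settled by your first step. There $\Tr U_{\rm rel}=\Tr(\Gamma_{\rm sel}\Gamma_{\rm gr})=-16/7$, which forces the doubled quadratic factor to carry real part $-1/7$; the opposite assignment would give $-12/7$. This is a more direct check than \eqref{eq:alena-centered-constitutive-grading}.
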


\begin{lemma}[Unique minimal filtered $A_2$ constitutive pencil]
\label{lem:alena-filtered-a2-constitutive-comparison}
Use the orthonormal principal embedding $\iota(u,b)=i[u]_\times/\sqrt2+b$, put $G_{\rm Cl}=3P_{V_1}+P_{V_2}$ and $\Gamma_{\rm gr}=P_{V_1}-P_{V_2}$, and for $H=\iota(m,M)$ set $\mathcal J_H:=G_{\rm Cl}^{-1/2}\{H,\cdot\}_0G_{\rm Cl}^{-1/2}$ and $\mathcal K_M:=G_{\rm Cl}^{-1/2}\Gamma_{\rm gr}i[M,\cdot]G_{\rm Cl}^{-1/2}$. In the pencil $c_J\mathcal J_H+c_K\mathcal K_M$ the noncentral coefficients satisfy $b_{11}=-c_J/3$, $b_{22}=c_J$, $a=-c_J/\sqrt3$, and $b_{12}=c_K/\sqrt6$. The selector \eqref{eq:alena-two-level-constitutive-selector} therefore gives $|c_K|=|c_J|$. Modulo a common scale, the noncentral selected pencil is $\mathcal J_H\pm\mathcal K_M$; no continuous Jordan/Lie mixing parameter remains. For $|m|=1$, compatibility fixes $|\nu|=1/\sqrt6$ and the relative grade shift is fixed by the same two-level condition.
\end{lemma}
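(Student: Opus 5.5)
The plan is to compute the two pencil operators directly in the block form \eqref{eq:alena-constitutive-first-jet-normal-form} of Lemma~\ref{lem:alena-constitutive-first-jet-normal-form} and read off the coefficients. The map $\iota$ is an isometric $SO(3)$-equivariant embedding of $V_1\oplus V_2$ into the Hermitian trace-free $3\times3$ matrices, with the Frobenius metrics; here $\|[u]_\times\|_{\rm F}^2=2|u|^2$. Hence $\{H,\cdot\}_0$ (trace-free anticommutator) and $i[M,\cdot]$ are self-adjoint equivariant endomorphisms of $V_1\oplus V_2$, linear in $(m,M)$. By the uniqueness part of Lemma~\ref{lem:alena-constitutive-first-jet-normal-form}, each conjugated operator $\mathcal J_H$, $\mathcal K_M$ is a unique combination of $B_M,C_M,S_m,R_M$ and their adjoints, plus scalars.

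The first step is the Jordan piece. For $u\in V_1$, I will expand $\{i[m]_\times/\sqrt2+M,\,i[u]_\times/\sqrt2\}_0$. Two identities do the work: $[m]_\times[u]_\times+[u]_\times[m]_\times = m\otimes u+u\otimes m-2\langle m,u\rangle{\bf1}$, whose trace-free part is $2(m\odot u)_0$, and $\{M,[u]_\times\}=[\,(\Tr M){\bf1}-M)u]_\times$, which for trace-free $M$ equals $-[Mu]_\times$. These give the $V_1\to V_2$ block proportional to $S_m$ and the $V_1\to V_1$ block proportional to $B_M$. For $b\in V_2$, the anticommutator $\{M,b\}_0$ gives $C_M$, and the $i[m]_\times$ part gives the adjoint cross term. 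I will then conjugate by $G_{\rm Cl}^{-1/2}=3^{-1/2}P_{V_1}+P_{V_2}$. This rescales the $V_1V_1$ block by $1/3$ and the cross blocks by $1/\sqrt3$, and should produce $b_{11}=-c_J/3$, $b_{22}=c_J$, $a=-c_J/\sqrt3$. Up to the sign conventions in $B_M$ and the factor $\sqrt2$ from $\iota$, I expect these to check.

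The second step is the Lie piece. The commutator $i[M,\cdot]$ maps $V_1\leftrightarrow V_2$ and kills the diagonal blocks, since the commutator of two symmetric matrices is skew and the commutator of a symmetric matrix with a skew one is symmetric. On $i[u]_\times/\sqrt2$ it gives $-[M,[u]_\times]/\sqrt2=-R_Mu/\sqrt2$. Applying $\Gamma_{\rm gr}$ flips the sign on the $V_1$-target block, which restores self-adjointness because $i[M,\cdot]$ alone is skew-paired on the cross blocks. The conjugation contributes a further $1/\sqrt3$, giving $b_{12}=c_K/\sqrt6$.

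The final step substitutes into \eqref{eq:alena-two-level-constitutive-selector}. The condition $b_{22}=-3b_{11}$ holds automatically, which is a consistency check on the Jordan pencil. The condition $|b_{12}/b_{11}|=\sqrt{3/2}$ becomes $(|c_K|/\sqrt6)/(|c_J|/3)=\sqrt{3/2}$, that is, $|c_K|=|c_J|$. So after a common scale only the sign remains, which is the pencil $\mathcal J_H\pm\mathcal K_M$, and no continuous mixing parameter is left. With $|m|=r=1$, the relation $a^2r^2=18b_{11}^2\nu^2$ gives $c_J^2/3=2c_J^2\nu^2$, so $|\nu|=1/\sqrt6$. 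The cross-isotropy relation of Proposition~\ref{prop:alena-cross-block-isotropy-rigidity} is then consistent, since $b_{12}\nu/(ar)=\pm\sqrt3/6$ follows from the values above. The scalar shift $z_2-z_1=\tfrac53 b_{11}\nu$ is fixed by the same corollary.

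The main obstacle I expect is sign and normalization bookkeeping. This includes the $\sqrt2$ in $\iota$, the sign of $B_M$ versus $-[Mu]_\times$, and whether $\Gamma_{\rm gr}$ is applied on the correct side. I would fix these by checking a single axial example with $m=e_3$ and $M=\nu Q_{e_3}$ against the weight-zero block used in the proof of Corollary~\ref{cor:alena-two-level-constitutive-selector}.
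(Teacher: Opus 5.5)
Your proposal is correct and follows the paper's route: the trace-free anticommutator supplies the $B_M$, $C_M$, $S_m$ blocks (explicitly, $\{H,\cdot\}_0$ has blocks $-B_M$, $-S_m$, $-S_m^\dagger$, $C_M$ before whitening), the grading-twisted commutator supplies $R_M$, Clifford whitening by $G_{\rm Cl}^{-1/2}$ fixes $b_{11},b_{22},a,b_{12}$, and the selector \eqref{eq:alena-two-level-constitutive-selector} then forces $|c_K|=|c_J|$ and $|\nu|=1/\sqrt6$. Two harmless slips: $\Gamma_{\rm gr}=P_{V_1}-P_{V_2}$ acting on the left flips the $V_2$-target block, turning $-R_M/\sqrt2$ into $+R_M/\sqrt2$ and giving $b_{12}=+c_K/\sqrt6$ (flipping the $V_1$-target block, as you wrote, would give $-c_K/\sqrt6$), and $i[M,\cdot]$ is skew- rather than self-adjoint before the twist; neither affects the conclusion, since only $|b_{12}|$ enters the selector.
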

\begin{proof}
The anticommutator gives the $B_M$, $C_M$, and $S_m$ blocks, while the grading-twisted commutator gives the $R_M$ block. Clifford whitening fixes their relative coefficients, and \eqref{eq:alena-two-level-constitutive-selector} fixes the remaining ratio.
\end{proof}

On the literal degree-fidelity subbranch $M=\mu(m\odot m)_0$, the selected relation $M=\nu Q_e$ gives $|\nu|=|\mu||m|^2$. Homogeneity of the filtered selector gives $|\nu|=|m|/\sqrt6$, so every nonzero selected source lies on the normalized shell $|m|=(\sqrt6|\mu|)^{-1}$. This removes one source-amplitude modulus without fixing an absolute action unit.\\
~\\
\begin{proposition}[Topological separation of the constitutive projector]
\label{prop:alena-constitutive-projector-topological-separation}
Let $\mathcal E_3=\operatorname{Ran}P_3\to S^2$. The map $J_e=P_{V_1}|_{\operatorname{Ran}P_3(e)}$ has squared singular values $3/7,4/7,4/7$ by \eqref{eq:alena-constitutive-projector-compression}, and is invertible for every $e$. Hence $\mathcal E_3\simeq_{SO(3)}S^2\times V_1$; after complexification its determinant has $c_1=0$. Since $P_3$ and $dP_3$ are real symmetric, $\Tr(P_3dP_3\wedge dP_3)=0$ pointwise. Consequently a rank-preserving complex gap homotopy on $S^2$ cannot identify the complexified constitutive bundle with a rank-three Callias bundle satisfying \eqref{eq:bl-callias-determinant-line-index}. Moreover, an $SO(3)$-equivariant integral rank-two local system over $S^2=SO(3)/SO(2)$ is trivial, since every continuous homomorphism $SO(2)\to GL_2(\mathbb Z)$ is trivial. Integral CM monodromy therefore requires discrete boundary, torsor, theta, or singular-monodromy data.
\end{proposition}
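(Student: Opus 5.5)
The argument splits into three independent parts: the bundle trivialization and $c_1=0$, the vanishing of the curvature trace together with the Callias obstruction, and the triviality of the integral local system.

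\emph{Trivialization.} From \eqref{eq:alena-constitutive-projector-compression} we have $J_e^\dagger J_e=P_3P_{V_1}P_3|_{\operatorname{Ran}P_3}$. Its nonzero spectrum agrees with that of $P_{V_1}P_3P_{V_1}|_{V_1}$, namely $\{3/7,4/7,4/7\}$. Since $\rank P_3=\dim V_1=3$, the map $J_e:\operatorname{Ran}P_3(e)\to V_1$ is a linear isomorphism for every $e$. It depends continuously on $e$ because $P_3(e)$ does, and it is $SO(3)$-equivariant because $P_3(ge)=gP_3(e)g^{-1}$ and $P_{V_1}$ commutes with $SO(3)$.

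It follows that $J:\mathcal E_3\to S^2\times V_1$ is an equivariant bundle isomorphism onto the trivial bundle. Complexifying gives $\mathcal E_3^{\mathbb C}\simeq S^2\times V_1^{\mathbb C}$, so $\det\mathcal E_3^{\mathbb C}$ is trivial and $c_1=0$.

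\emph{Curvature trace.} I will show the trace vanishes by a transpose argument. For real symmetric $P$ and $dP$, transposition gives
\[
\Tr(PdP\wedge dP)=\Tr\big((dP\wedge dP)^{\mathsf T}P\big).
\]
Transposing a wedge of matrix-valued one-forms reverses the order and produces a sign, so $(dP\wedge dP)^{\mathsf T}=-dP\wedge dP$. Cyclicity of the trace, using that $dP\wedge dP$ is an even form, then gives $\Tr(PdP\wedge dP)=-\Tr(PdP\wedge dP)$. Hence the trace is zero, which independently confirms $c_1=0$ via the Chern--Weil formula $c_1=\frac{i}{2\pi}\int\Tr(PdPdP)$.

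\emph{Callias obstruction.} A rank-preserving complex gap homotopy on $S^2$ preserves the isomorphism class of the spectral bundle, by Kato continuity of isolated spectral projections \cite{Kato1995}. It therefore preserves $c_1$ of the determinant line. The Callias bundle of \eqref{eq:bl-callias-determinant-line-index} has nonzero determinant degree, which I assume is what that index identity encodes, so no such homotopy can identify it with $\mathcal E_3^{\mathbb C}$.

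\emph{Integral local system.} An $SO(3)$-equivariant local system over $SO(3)/SO(2)$ is induced from a homomorphism of the isotropy group $SO(2)\to GL_2(\mathbb Z)$. Any continuous such homomorphism is constant, because $SO(2)$ is connected and $GL_2(\mathbb Z)$ is discrete, and its image contains the identity. Hence the local system is $SO(3)\times_{SO(2)}\mathbb Z^2$ with trivial action, which is trivial. This matches $\pi_1(S^2)=0$. Nontrivial integral CM monodromy can therefore only arise from the discrete data the statement lists.

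The step I expect to be the main obstacle is the Callias comparison, since it depends on the exact content of \eqref{eq:bl-callias-determinant-line-index}, which appears later in the paper. I will treat that display as asserting a nonzero first Chern number of the determinant line, and reduce the claim to homotopy invariance of $c_1$ under gapped deformations.
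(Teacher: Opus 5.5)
Your proposal is correct and takes the same route as the paper, which justifies the proposition inline by the same steps. These are: invertibility of $J_e$ from the nonzero compression spectrum, vanishing of $\Tr(P_3\,dP_3\wedge dP_3)$ from real symmetry (your transpose argument spells this out), invariance of the determinant degree under a gapped Riesz deformation, and triviality of continuous homomorphisms from the connected group $SO(2)$ to the discrete group $GL_2(\mathbb Z)$. Your reading of \eqref{eq:bl-callias-determinant-line-index} is also right: it gives $\det E_+\simeq L_\Gamma^{\otimes3}$ with $L_\Gamma\simeq\mathcal O(1)$, so $c_1(\det E_+)=3\neq0$.
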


For an oriented unit source axis $e$, $g_e^\pm=\exp(\pm2\pi[e]_\times/3)$ has spectrum $\{1,\omega,\omega^2\}$ and order three. An orientation-preserving marked identification with the family cycle selects the branch in \eqref{eq:alena-selector-eisenstein-class}. After the trivial-character line and determinant are fixed, a $C_3$-equivariant unitary identification is $\operatorname{diag}(1,e^{i\phi},e^{-i\phi})$, exactly the spin-one action of the axial stabilizer $SO(2)_e$. If the calibration extends to the protected regular lattice $L_{\rm fam}\simeq\mathbb Z[C_3]$, then $\mathbb Z[C_3]^\times=\{\pm1,\pm S,\pm S^2\}\simeq C_6$; fixing the positive trivial line leaves only $C_3$ translation, i.e.\ a torsor-origin change. Existence and preservation of the corresponding source-side regular lattice remain part of the boundary/Riesz realization, as does physical parity attachment.
At cubic order, $\operatorname{Sym}^2V_2=V_0\oplus V_2\oplus V_4$ contains no $V_1$. A same-field $SO(3)$-invariant scalar Hessian therefore has $b_{12}=0$, as does the pure symmetric-cone Jordan Hessian. The completed adjoint high carrier has $\mathfrak a_W=\operatorname{End}_0(E_2)\simeq V_1$ and $\mathfrak a_C=\operatorname{End}_0(E_3)\simeq V_1\oplus V_2$, so Schur-Feshbach elimination can generate the unique $M$-linear $V_1\leftrightarrow V_2$ tensor line. Its scalar coefficient remains dependent on the microscopic couplings and denominators; the ratio in \eqref{eq:alena-two-level-constitutive-selector} is not fixed by mediator multiplicity alone.
~\\
The covariance $R_B(e^fB,\theta-df)=e^fR_B(B,\theta)$ is the change-of-trivialization law of this flat coefficient line. The residual is affine-linear in $B$ plus the bilinear term $\theta\wedge B$, so its third differential vanishes. Consequently this principal multiplier sector has no independent intrinsic cubic vertex; cubic reduced terms arise through low-high elimination, path curvature, residual-metric variation, or other nonlinear residual components. On the principal normal fiber, Hodge duality sends the mixed second differential to $(\eta,b)\mapsto[\eta]_\times b$ from $V_1\otimes V_2$ to $\operatorname{End}_0(V_1)\simeq V_1\oplus V_2$. The map is nonzero on the $V_1$ and $V_2$ summands; multiplicity one in the Clebsch-Gordan decomposition therefore gives kernel exactly $V_3$, the same quotient as \eqref{eq:pgl-vone-vtwo-low-symbol}. Thus $(\delta B,\delta\theta)=(fB,-df)$ is the presentation-rescaling direction removed from the stationary normal block. Global multiplier existence is the trivial-holonomy condition, equivalently the vanishing of the periods of $\theta^B$.

\begin{proposition}[Principal multiplier source-Hodge sequence]
\label{prop:alena-principal-multiplier-schur-incidence}
Let $B_0\in\operatorname{Sym}^2_0(N_\Gamma)$ be non-degenerate and $0\neq\xi\in N_\Gamma^\ast$. Define
\begin{equation} (\mathscr A_\xi b)_{ijk}=\xi_i b_{jk}-\xi_j b_{ik}, \qquad (\mathscr M_{B_0}\eta)_{ijk}=\eta_i(B_0)_{jk}-\eta_j(B_0)_{ik}. \label{eq:alena-principal-multiplier-symbols} \end{equation}
Both maps are injective and
\begin{equation} \operatorname{Im}\mathscr A_\xi\cap\operatorname{Im}\mathscr M_{B_0} = \operatorname{span}\{\mathscr A_\xi B_0\} = \operatorname{span}\{\mathscr M_{B_0}\xi\}. \label{eq:alena-principal-multiplier-overlap} \end{equation}
Put $\mathcal S_{\xi,B_0}=\operatorname{Im}\mathscr A_\xi+\operatorname{Im}\mathscr M_{B_0}$ and $\Phi_{\xi,B_0}(b,\eta)=\mathscr A_\xi b+\mathscr M_{B_0}\eta$. Then $\dim\mathcal S_{\xi,B_0}=7$, $\ker\Phi_{\xi,B_0}=\mathbb R(B_0,-\xi)$, and
\begin{equation} \begin{gathered} 0\longrightarrow\mathbb R\longrightarrow V_2\oplus V_1\xrightarrow{\ \Phi_{\xi,B_0}\ }\mathcal S_{\xi,B_0}\longrightarrow0,\\ 0\longrightarrow\mathbb RB_0\longrightarrow V_2\longrightarrow\mathcal S_{\xi,B_0}/\operatorname{Im}\mathscr M_{B_0}\longrightarrow0. \end{gathered} \label{eq:alena-source-hodge-sequence} \end{equation}
For every positive residual metric, orthogonal reduction modulo $\operatorname{Im}\mathscr M_{B_0}$ gives
\begin{equation} \rank(\Pi_{B_0}^\perp\mathscr A_\xi)=4, \qquad \ker(\Pi_{B_0}^\perp\mathscr A_\xi)=\mathbb RB_0. \label{eq:alena-principal-reduced-codazzi-rank} \end{equation}
Thus the source-level lost/surviving/ambient ranks are $(1,4,5)$ inside a seven-dimensional combined symbol space, with lost-to-surviving ratio $1:4$. The quotient is real rank four; comparison with a complex rank-four finite Hodge flag is therefore made through its complexification and must preserve the induced real structure. If $B_0=\beta Q_e$ on the selected branch, $\beta\neq0$, then the quotient is $\mathcal P_{H,e}^{(2)}=V_2/\mathbb RQ_e$. With $P_\perp={\bf1}-e\otimes e$, it has the stabilizer decomposition
\begin{equation} V_2=\mathbb RQ_e\oplus W_1(e)\oplus W_2(e), \qquad \dim W_1=\dim W_2=2, \label{eq:alena-source-hodge-stabilizer-split} \end{equation}
where $W_1(e)=\{e\otimes v+v\otimes e:v\perp e\}$ and $W_2(e)=\{B:Be=0,\ \operatorname{tr}_{e^\perp}B=0\}$. Rotations about $e$ act on these planes with weights one and two.
\end{proposition}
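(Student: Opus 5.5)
The argument is finite-dimensional linear algebra on the normal fiber. Use an orthonormal frame of $N_\Gamma\simeq\mathbb R^3$ and view the targets of $\mathscr A_\xi$ and $\mathscr M_{B_0}$ as tensors $T_{ijk}$ antisymmetric in $(i,j)$. \emph{Injectivity.} If $\mathscr A_\xi b=0$, contract with any $v$ satisfying $\langle\xi,v\rangle\neq0$ in the slot $i$. This gives $\xi(v)b_{jk}=\xi_j(bv)_k$, so $b=\xi\otimes w$ for some $w$. Symmetry forces $b=\lambda\,\xi\otimes\xi$, and tracelessness forces $\lambda=0$. If $\mathscr M_{B_0}\eta=0$, contracting over $i=k$ gives $(B_0\eta)_j-\eta_j\operatorname{tr}B_0=B_0\eta=0$, and non-degeneracy of $B_0$ gives $\eta=0$. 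The two maps are evidently exchanged by $(\xi,b)\leftrightarrow(\eta,B_0)$. It follows that $\mathscr A_\xi B_0=\mathscr M_{B_0}\xi$, so $(B_0,-\xi)\in\ker\Phi_{\xi,B_0}$.

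\emph{Kernel of $\Phi$.} Suppose $\mathscr A_\xi b=-\mathscr M_{B_0}\eta$, that is, $\xi\wedge b=\eta\wedge B_0$ as $\mathbb R^3$-valued two-forms. Contracting over $i=k$ gives $b\xi=B_0\eta$, using $\operatorname{tr}b=\operatorname{tr}B_0=0$. Contracting the identity with a vector $v$ orthogonal to $\xi$ in the first slot gives $-\xi_j(bv)_k=\eta(v)(B_0)_{jk}-\eta_j(B_0v)_k$. Since $B_0$ has rank $3$, this can only hold if $\eta(v)=0$ for all $v\perp\xi$, so $\eta=\lambda\xi$. Substituting back gives $\xi\wedge(b+\lambda B_0)=0$, and injectivity yields $b=-\lambda B_0$. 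Hence $\ker\Phi=\mathbb R(B_0,-\xi)$. Rank–nullity then gives $\dim\mathcal S_{\xi,B_0}=5+3-1=7$, and the overlap of the two images is one-dimensional, spanned by $\mathscr A_\xi B_0$, which is \eqref{eq:alena-principal-multiplier-overlap}.

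\emph{The sequences and the reduced rank.} The first line of \eqref{eq:alena-source-hodge-sequence} is now immediate. For the second, compose $\mathscr A_\xi$ with the quotient map to $\mathcal S/\operatorname{Im}\mathscr M_{B_0}$. This composite is surjective because $\mathcal S$ is spanned by the two images, and its kernel is $\mathscr A_\xi^{-1}(\operatorname{Im}\mathscr M_{B_0})=\mathbb RB_0$ by the overlap result. For any positive residual metric, orthogonal projection $\Pi^\perp_{B_0}$ onto the complement of $\operatorname{Im}\mathscr M_{B_0}$ in $\mathcal S$ realizes this quotient. This gives \eqref{eq:alena-principal-reduced-codazzi-rank} with rank $5-1=4$, independently of the chosen metric. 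The resulting rank triple is $(1,4,5)$.

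\emph{The selected branch.} Take $B_0=\beta Q_e$. Its eigenvalues are $2\beta/3$ and $-\beta/3$ (the latter twice), so it is non-degenerate and the above applies, identifying the quotient with $V_2/\mathbb RQ_e$. For the splitting \eqref{eq:alena-source-hodge-stabilizer-split}, decompose a traceless symmetric $B$ in blocks relative to $\mathbb Re\oplus e^\perp$: the $ee$ entry, the off-diagonal vector $v\perp e$, and a traceless symmetric block on $e^\perp$. These pieces are Frobenius-orthogonal with dimensions $1+2+2$, matching $\mathbb RQ_e$, $W_1(e)$ and $W_2(e)$. Rotations about $e$ act on $v$ as a vector in the plane, giving weight one, and on the traceless symmetric part of $e^\perp$ with weight two.

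The main obstacle is the kernel computation. It needs a careful contraction argument, and it is exactly where non-degeneracy of $B_0$ is used. The rest is bookkeeping.
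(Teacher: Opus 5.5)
Your proof is correct and takes essentially the paper's route. The paper Hodge-dualizes the antisymmetric index pair, writes a coincidence of images as $[\xi]_\times b=[\eta]_\times B_0$, and multiplies by $B_0^{-1}$; you instead contract with $v\perp\xi$ and compare ranks. In both arguments non-degeneracy of $B_0$ forces $\eta\parallel\xi$, symmetry and trace-freeness then give $b\propto B_0$, and the sequences, $\dim\mathcal S_{\xi,B_0}=7$, and the reduced rank $4$ follow by rank counting.

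There is one harmless sign slip. $\mathscr A_\xi b=-\mathscr M_{B_0}\eta$ reads $\xi\wedge b=-\eta\wedge B_0$, not $\xi\wedge b=\eta\wedge B_0$. The correct sign is the one your later step $\xi\wedge(b+\lambda B_0)=0$ actually uses, and the rank argument does not depend on it.
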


\begin{proof}
Injectivity follows directly from the alternating first index pair and trace-freeness. If $\mathscr A_\xi b=\mathscr M_{B_0}\eta$, multiplication by $B_0^{-1}$ reduces the equality to $[\xi]_\times(bB_0^{-1})=[\eta]_\times$; symmetry and trace-freeness then give $b=aB_0$ and $\eta=a\xi$. This proves \eqref{eq:alena-principal-multiplier-overlap}, and the two exact sequences and \eqref{eq:alena-principal-reduced-codazzi-rank} follow by rank. The stabilizer decomposition is the standard axial splitting of $\operatorname{Sym}^2_0(N_\Gamma)$ after the $Q_e$ line is removed.
\end{proof}

\begin{corollary}[Axial marking required for the source-Hodge bridge]
\label{cor:alena-source-hodge-axial-commutant}
On the aligned branch, complexification of \eqref{eq:alena-source-hodge-stabilizer-split} has weights $\pm1,\pm2$ and restricts to $2\mathbb C_\omega\oplus2\mathbb C_{\omega^2}$ under $C_3$, so
\begin{equation} \End_{C_3}\!\left(\mathcal P_{H,e}^{(2)}\otimes\mathbb C\right)\simeq M_2(\mathbb C)\oplus M_2(\mathbb C). \label{eq:alena-source-hodge-c3-commutant} \end{equation}
The real structure exchanges the character blocks. Full $SO(2)_e$ equivariance removes mixing between weights one and two and leaves two complex line amplitudes, reduced to $U(1)^2$ in the unitary case. A $C_3$ marking alone therefore cannot determine the source-Hodge amplitude ratio.
\end{corollary}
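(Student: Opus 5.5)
The statement concerns the quotient $\mathcal P_{H,e}^{(2)}=V_2/\mathbb RQ_e$ with its stabilizer decomposition \eqref{eq:alena-source-hodge-stabilizer-split}. The plan is to compute the representation theory of $SO(2)_e$ and of $C_3\subset SO(2)_e$ on this quotient, and then read off the commutants.

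\textbf{Weight decomposition.} By Proposition~\ref{prop:alena-principal-multiplier-schur-incidence}, rotations about $e$ act on $W_1(e)$ with weight one and on $W_2(e)$ with weight two. Each real plane is a rotation representation, so its complexification splits into characters $\pm1$ and $\pm2$. Explicitly, for $W_1$ one uses $e\odot(v_1\pm iv_2)$, and for $W_2$ one uses $(v_1\pm iv_2)\otimes(v_1\pm iv_2)$ with $v_1,v_2$ an oriented basis of $e^\perp$. Hence
\begin{equation}
\mathcal P_{H,e}^{(2)}\otimes\mathbb C\simeq\mathbb C_{1}\oplus\mathbb C_{-1}\oplus\mathbb C_{2}\oplus\mathbb C_{-2}.
\end{equation}

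\textbf{Restriction to $C_3$.} We restrict to $C_3=\langle g_e^+\rangle$, the rotation by $2\pi/3$ used in the text. The weight $k$ becomes the character $\omega^k$, so weights $1$ and $-2$ give $\omega$, and weights $-1$ and $2$ give $\omega^2$. No weight is divisible by three, which is why the trivial character is absent. Thus the $C_3$-module is $2\mathbb C_\omega\oplus2\mathbb C_{\omega^2}$, and Schur's lemma gives \eqref{eq:alena-source-hodge-c3-commutant}.

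\textbf{Real structure and full $SO(2)_e$ equivariance.} Complex conjugation induced from the real form sends weight $k$ to $-k$, so it maps the $\omega$-isotypic block $\{1,-2\}$ to the $\omega^2$-isotypic block $\{-1,2\}$; this exchanges the two $M_2$ factors. Imposing full $SO(2)_e$ equivariance makes the commutant diagonal in the four distinct weights, giving $\mathbb C^4$. Compatibility with the real structure pairs $k$ with $-k$, so the real commutant consists of two complex scalars, one on each real plane. Unitarity restricts these scalars to $U(1)^2$. The $C_3$ commutant, by contrast, contains the off-diagonal entries of each $M_2$, which mix weight $1$ with weight $-2$ (and $-1$ with $2$). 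These entries intertwine $W_1$ and $W_2$ in a $C_3$-equivariant, real-structure-compatible way, so they change the amplitude ratio between the two planes. Hence a $C_3$ marking alone leaves that ratio undetermined.

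\textbf{Expected obstacle.} The only delicate point is bookkeeping: pairing the weight $-2$ with $\omega$ rather than $\omega^2$, and checking that the real-structure-compatible part of the $C_3$ commutant still contains a genuine $W_1$–$W_2$ mixing term. I would verify this with a single explicit real map. In the $SO(2)_e$-real frame, the map $W_1\to W_2$ given by $e\odot v\mapsto$ (the symmetric square of $v$ twisted by conjugation) is $C_3$-invariant because $\omega^{-2}=\omega$, but it is not $SO(2)_e$-invariant.
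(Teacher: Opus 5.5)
Your proposal is correct and follows the reasoning the paper compresses into the statement itself, which it gives without a separate proof: weights $\pm1,\pm2$ on $W_1(e)\oplus W_2(e)$, restriction to $C_3$ via $\omega^{-2}=\omega$, Schur's lemma, and then a contrast. Under $C_3$ the unitary commutant is a real form of $M_2(\mathbb C)\oplus M_2(\mathbb C)$, which mixes $W_1$ and $W_2$; under $SO(2)_e$ it is only $U(1)^2$, which preserves the split. One small fix to your explicit witness: the $C_3$-equivariant, non-$SO(2)_e$-equivariant real map $W_1\to W_2$ must be linear, namely $z\mapsto\bar z$ in the complex coordinates of the weight-one and weight-two planes over $e^\perp$, rather than a ``symmetric square of $v$'', which is quadratic.
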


\begin{proposition}[Marked multiplicative source-to-finite descent]
\label{prop:alena-marked-multiplicative-descent}
Writing $\iota_1(m)=i[m]_\times/\sqrt2$ and $\iota_2(M)=M$ for the two components of the embedding in Lemma~\ref{lem:alena-filtered-a2-constitutive-comparison}, one has $(\iota_1(m)^2)_0=-\iota_2((m\odot m)_0)/2$. Hence the aligned static harmonic-square line is generated by the first source. A finite Rees square is identified with this line only if the actual first- and second-response maps $T_1,T_2$ satisfy the projective multiplicative compatibility
\begin{equation} \begin{array}{ccc} V_1\otimes V_1 & \xrightarrow{\ q_{\rm src}\ } & V_2\\ \downarrow T_1\otimes T_1 && \downarrow T_2\\ \mathcal A_1\otimes\mathcal A_1 & \xrightarrow{\ \mu_{\rm fin}\ } & \mathcal A_2 . \end{array} \label{eq:alena-multiplicative-source-finite-descent} \end{equation}
For $m\neq0$, the projective defect is
\begin{equation} \Delta_{\rm sq}(m)^2 = \|T_2q_{\rm src}(m)\|^2 - \frac{|\langle T_1(m)^2,T_2q_{\rm src}(m)\rangle|^2}{\|T_1(m)^2\|^2}. \label{eq:alena-square-descent-defect} \end{equation}
Thus $\Delta_{\rm sq}(m)=0$ is the square-line compatibility test; direction-independence of the minimizing proportionality factor is the corresponding degree-normalization test. Source Euler degree, finite Weyl character, and branch Taylor order are independent gradings. On a common marked axial branch the aligned Hodge lost line has character zero, whereas a finite first Rees generator $X\in\mathcal A_1$ has $X^2\in\mathcal A_2$. A grade-preserving descent therefore cannot identify these quotient kernels. A nonzero primitive finite second response also cannot be obtained solely from the linear image of the aligned generated source square, since its finite Rees subtraction would vanish.
\end{proposition}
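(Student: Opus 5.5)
The statement to prove is Proposition~\ref{prop:alena-marked-multiplicative-descent}. It splits into an algebraic identity, a definitional defect formula, and two negative statements about grade-preserving descent. I would treat them in that order.

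\emph{Square identity.} I would compute directly with $\iota_1(m)=i[m]_\times/\sqrt2$. Since $[m]_\times^2=m\otimes m-|m|^2{\bf1}$, we get $\iota_1(m)^2=-\tfrac12(m\otimes m-|m|^2{\bf1})$. Removing the trace gives $(\iota_1(m)^2)_0=-\tfrac12(m\otimes m)_0=-\tfrac12(m\odot m)_0$, which equals $-\iota_2((m\odot m)_0)/2$. Because $(m\odot m)_0=|m|^2Q_e$ with $e=m/|m|$, the aligned harmonic-square line $\mathbb RQ_e$ is generated by the first source. This is the aligned line of Proposition~\ref{prop:alena-cross-block-isotropy-rigidity}.

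\emph{Compatibility and defect.} Here I would read the square \eqref{eq:alena-multiplicative-source-finite-descent} projectively. The finite square line spanned by $\mu_{\rm fin}(T_1m\otimes T_1m)=T_1(m)^2$ is identified with the image line $T_2q_{\rm src}(m)$ exactly when the two vectors are proportional. Formula \eqref{eq:alena-square-descent-defect} is then the squared distance from $T_2q_{\rm src}(m)$ to $\mathbb C\,T_1(m)^2$, by the standard orthogonal-projection formula in the Hilbert--Schmidt metric. It vanishes if and only if $T_2q_{\rm src}(m)\in\mathbb C\,T_1(m)^2$, assuming $T_1(m)^2\neq0$; in the degenerate case I would treat $\Delta_{\rm sq}$ as undefined or take it to be the full norm. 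The minimizing coefficient is $\langle T_1(m)^2,T_2q_{\rm src}(m)\rangle/\|T_1(m)^2\|^2$. Requiring it to be constant in $m/|m|$ is exactly the degree-normalization test. The independence of the three gradings I would record as a remark: Euler degree comes from $\mathsf N_\partial$, the Weyl character from the $\mathcal A_a$ decomposition with $\mathcal A_a\mathcal A_b\subset\mathcal A_{a+b}$, and Taylor order from the branch expansion. None is defined in terms of another.

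\emph{Obstruction to grade-preserving descent.} On the axial branch I would use Proposition~\ref{prop:alena-principal-multiplier-schur-incidence}: the Hodge lost line is $\mathbb RB_0=\mathbb RQ_e$. The axial order-three element $g_e$ fixes $Q_e$, so this line has $C_3$ character zero. This is consistent with Corollary~\ref{cor:alena-source-hodge-axial-commutant}, where the quotient carries only the characters $\omega,\omega^2$. On the finite side, $X\in\mathcal A_1$ gives $X^2\in\mathcal A_2$, which has Weyl character two, nonzero modulo three. A descent preserving the character would have to send a character-zero kernel into $\mathcal A_0$, so it cannot match it with $\mathbb C X^2$.

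For the last claim, suppose $T_2$ is linear and the aligned generated square lies in the line $\mathbb RQ_e$. Then the finite Rees subtraction $T_2q_{\rm src}(m)-\lambda T_1(m)^2$ along that line is identically zero, and no nonzero primitive second response survives. The main obstacle is stating precisely what ``grade-preserving'' and ``Rees subtraction'' mean. I would fix these as commuting with the marked $C_3$ action through $\jmath_\Gamma$ and as the quotient by the square of the first filtration step, respectively. After that, each assertion reduces to a character count.
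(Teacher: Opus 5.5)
Your first three steps are sound, and they follow the reasoning that the statement itself encodes; the paper gives no separate proof. These are the identity via $[m]_\times^2=m\otimes m-|m|^2{\bf1}$, the reading of \eqref{eq:alena-square-descent-defect} as the Hilbert--Schmidt distance from $T_2q_{\rm src}(m)$ to $\mathbb C\,T_1(m)^2$, and the character mismatch between the $g_e$-fixed lost line $\mathbb RQ_e=\mathbb RB_0$ and $\mathbb CX^2\subset\mathcal A_2$ transported through $\kappa_\Gamma$. For the independence of gradings, give a witness rather than the remark that none is defined from another: inside the single Euler-degree-two module $V_2$, $Q_e$ has character zero while $W_1(e)\oplus W_2(e)$ carries $\omega,\omega^2$.

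\textbf{The last step fails as written.} The hypothesis you impose, that the generated source square lies in $\mathbb RQ_e$, holds automatically. Together with linearity of $T_2$ it does not give $T_2q_{\rm src}(m)-\lambda T_1(m)^2=0$: nothing prevents a linear $T_2:V_2\to\mathcal A_2$ from sending $Q_e$ to a primitive element of the form \eqref{eq:bl-primitive-rees-normal-form}. The vanishing you assert \emph{is} the compatibility $\Delta_{\rm sq}(m)=0$. So state it as the premise (the source square is the one identified with the finite Rees square), and then derive the conclusion from Theorem~\ref{thm:bl-intrinsic-rees-picard-geometry} instead of asserting it.

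Here is how the derivation goes. Take $T_1(m)=a_1X$ with $X$ unitary in $\mathcal A_1$, and $Y=T_2q_{\rm src}(m)\in\mathcal A_2$.
\begin{itemize}
\item $\operatorname{Ad}_X$ is Hilbert--Schmidt unitary and preserves $\mathcal A_2$ (since $X^\dagger\in\mathcal A_2$). Also $\{X\}'\cap\mathcal A_2=\mathbb C[X]\cap\mathcal A_2=\mathbb CX^2$. So $\mathsf P_0^X|_{\mathcal A_2}$ is the orthogonal projection onto $\mathbb CX^2$, and $\Delta_{\rm sq}(m)=\|Y_\perp\|_{\rm HS}$.
\item $\|[X,Y]\|_{\rm HS}^2=\sum_r|\omega^r-1|^2\|\mathsf P_r^XY\|_{\rm HS}^2=3\|Y_\perp\|_{\rm HS}^2$, so $\rho_{\rm RP}=\Delta_{\rm sq}(m)/(\sqrt3\,a_1^2)$.
\item Hence square-line compatibility forces $Y_\perp=\tfrac13[X^\dagger,[X,Y]]=0$ and $\rho_{\rm RP}=0$. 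By \eqref{eq:bl-rees-picard-algebra-dichotomy}, $C^\ast(X,Y)$ is then commutative, which contradicts the primitive requirement $Y_\perp\neq0$, $\rho_{\rm RP}=1$.
\item Conversely, a primitive $Y$ obtained from the source square has $\Delta_{\rm sq}(m)=\sqrt3\,a_1^2\neq0$.
\end{itemize}
If instead $T_2$ is character-preserving, the conclusion is immediate for a different reason: $T_2$ must kill the character-zero $Q_e$.

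The identity $\Delta_{\rm sq}(m)=\|Y_\perp\|_{\rm HS}$ is the missing link between your definition of Rees subtraction and the statement.
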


The multiplier criterion is local on the connected non-degenerate collar. Global existence additionally requires the vanishing of the periods of $\theta^B$. The realization theorem below uses exact multiplier closure on the selected punctured collar.

\subsection{Compact-leaf source and conditional thin-core extraction}
\label{subsec:non-empty-compact-leaf-current-residual}

An explicit compact-leaf source model is supplied by a warped collar. Let $\Sigma_g$ be a compact hyperbolic surface of genus $g\geq2$, with metric $\gamma$. On a frozen collar interval, let $s=\tanh\chi\in(0,1)$ be the leaf anisotropy and let $a(s)>0$ be the warp factor. The two-eigenvalue Codazzi first integrals contain a constant $F_0$. The warped-product curvature convention is the one of \cite{Besse2007}.\\
~\\
The normalized vorticity closure has the leaf equation
\begin{equation}
\Delta_\gamma\alpha
=
\sigma_\omega
\frac{2a(s)^2D_o^2}{c^2s},
\qquad
\sigma_\omega=\pm1,
\label{eq:alena-realization-leaf-vorticity-equation}
\end{equation}
where $D_o\neq0$ is the frozen vorticity-flux scale. The right-hand side has nonzero mean. A one-core regularization is obtained by choosing a non-negative mollifier $\rho_\varepsilon$ with unit integral and setting
\begin{equation}
\Delta_\gamma\alpha_\varepsilon
=
\sigma_\omega
\frac{2a(s)^2D_o^2}{c^2s}
+
q_\varepsilon(s)\rho_\varepsilon ,
\label{eq:alena-realization-regularized-core-equation}
\end{equation}
where
\begin{equation}
q_\varepsilon(s)
=
-\sigma_\omega
\frac{2a(s)^2D_o^2}{c^2s}
\operatorname{Area}_\gamma(\Sigma_g).
\label{eq:alena-realization-core-charge}
\end{equation}
The right-hand side of \eqref{eq:alena-realization-regularized-core-equation} has zero integral. The resulting compact Poisson problem is therefore solvable, uniquely after the additive constant has been fixed, by the standard elliptic result represented by \cite{GilbargTrudinger2001Elliptic}. If $\rho_\varepsilon\rightharpoonup\delta_p$, the limit has one distributional leaf source at $p$. Transport of $p$ along a timelike integral curve gives the local core.\\
~\\
Source-free will mean singular-source-free. The smooth zero-mode-compensating term in \eqref{eq:alena-realization-regularized-core-equation} is retained as part of the frozen leaf geometry. The principal boundary data are read after subtraction of a smooth local particular solution.\\
~\\
The same frozen sector has Codazzi gap
\begin{equation}
\Delta_C
=
-\frac{2F_0}{3-s}.
\label{eq:alena-realization-warped-codazzi-gap}
\end{equation}
For $F_0\neq0$ and $0<s<1$, the collar lies in the non-degenerate two-eigenvalue sector used by the projective-link construction.

\begin{proposition}[Resolved compact-leaf source collar]
\label{prop:alena-realization-compact-leaf-collar}
Let $\Sigma_g$ be a compact hyperbolic leaf with $g\geq2$. For every frozen range with $0<s<1$, $D_o\neq0$, and $F_0\neq0$, equation \eqref{eq:alena-realization-regularized-core-equation} has a unique smooth solution with zero mean. As $\rho_\varepsilon\rightharpoonup\delta_p$, its source converges to a one-core distributional closure. Away from the core, after subtraction of the smooth background, the collar is singular-source-free and has the nonzero gap \eqref{eq:alena-realization-warped-codazzi-gap}.
\end{proposition}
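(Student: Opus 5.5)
The plan is to treat the statement as a compact Poisson problem on the closed leaf, followed by a distributional limit and a local reading of the collar. Throughout, $s$, $a(s)$, $D_o$ and $F_0$ are frozen, so the right-hand side of \eqref{eq:alena-realization-regularized-core-equation} is a constant $\kappa_s:=\sigma_\omega 2a(s)^2D_o^2/(c^2s)$ plus the smooth term $q_\varepsilon(s)\rho_\varepsilon$. Both pieces are well defined, since $0<s<1$ and $a(s)>0$.

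\textbf{Step 1: solvability and uniqueness.} By \eqref{eq:alena-realization-core-charge} and $\int\rho_\varepsilon=1$, the integral of the right-hand side is $\kappa_s\operatorname{Area}_\gamma(\Sigma_g)+q_\varepsilon(s)=0$. On a compact connected Riemannian surface, $\Delta_\gamma$ is Fredholm and self-adjoint, and its kernel and cokernel are the constants. The Fredholm alternative in $H^1$, represented by \cite{GilbargTrudinger2001Elliptic}, therefore gives a weak solution, unique modulo constants. Imposing zero mean fixes that constant. Interior elliptic regularity, bootstrapped with the smooth right-hand side, makes the solution $C^\infty$. The hypotheses $D_o\neq0$ and $g\geq2$ are not needed for solvability. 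Hyperbolicity is used only so that the leaf is a fixed closed leaf with finite area and the warped collar is well defined.

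\textbf{Step 2: the one-core limit.} Write $\alpha_\varepsilon=\alpha_{\rm bg}+q(s)G_\varepsilon$. Here $\alpha_{\rm bg}$ is the mean-zero solution of $\Delta\alpha_{\rm bg}=\kappa_s+q(s)/\operatorname{Area}$, which vanishes identically on the right-hand side. The term $G_\varepsilon=\int G(\cdot,y)\rho_\varepsilon(y)\,dy$ is built from the mean-zero Green function $G$ of $\Delta_\gamma$. As $\rho_\varepsilon\rightharpoonup\delta_p$, $G_\varepsilon$ converges to $G(\cdot,p)$ in $L^1$ and in $W^{1,r}$ for $r<2$. It converges smoothly on compact sets avoiding $p$, because $G$ is smooth off the diagonal and $\operatorname{supp}\rho_\varepsilon$ concentrates at $p$ for a standard mollifier family. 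Hence the source converges in $\mathcal D'$ to $\kappa_s+q(s)\delta_p$. This is a single distributional point source with charge $q(s)\neq0$, which is where $D_o\neq0$ enters. Transporting $p$ along a timelike integral curve gives the core.

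\textbf{Step 3: the collar away from the core.} On a punctured neighbourhood of $p$, subtract the smooth local particular solution of $\Delta u=\kappa_s$. What remains is $q(s)G(\cdot,p)$ plus a harmonic term. Its only singular support is $p$, so the annulus is singular-source-free in the sense fixed above. The gap \eqref{eq:alena-realization-warped-codazzi-gap} is $-2F_0/(3-s)$, and $3-s>2$ for $0<s<1$, so $\Delta_C\neq0$ exactly when $F_0\neq0$. This places the collar in the non-degenerate two-eigenvalue sector.

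The main obstacle is Step 2. The difficulty is stating precisely in which topology the smooth solutions converge to the singular closure, and checking that the uniqueness normalisation (zero mean) is compatible with the limit. I would handle this through the Green-function representation, with its local logarithmic asymptotics $G(x,p)\sim-(2\pi)^{-1}\log d(x,p)$, rather than through a priori estimates.
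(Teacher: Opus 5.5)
Your proposal is correct and follows the paper's own argument: zero total charge of the right-hand side from \eqref{eq:alena-realization-core-charge}, then compact elliptic solvability with the zero-mean normalization, then weak convergence of the mollifier for the one-core distributional limit, and finally a direct reading of \eqref{eq:alena-realization-warped-codazzi-gap}, which is nonzero since $3-s>0$ and $F_0\neq0$; your Green-function representation only adds detail on the mode of convergence. One small slip in Step 3: after you subtract the local particular solution of $\Delta u=\kappa_s$, the remainder is $q(s)G(\cdot,p)-u$, which is $q(s)$ times a local fundamental solution plus a harmonic function, not $q(s)G(\cdot,p)$ plus a harmonic term (since $-u$ is not harmonic), but this does not affect the singular-source-free conclusion.
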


\begin{proof}
The zero-mean condition follows from \eqref{eq:alena-realization-core-charge}. Compact elliptic solvability gives the smooth solution. Weak convergence of the mollifier gives the distributional limit, and the gap statement is immediate from \eqref{eq:alena-realization-warped-codazzi-gap}.
\end{proof}

Smooth frozen-background terms do not alter the finite carrier when the singular associated-graded charges, their source grading, and the nonzero gap are unchanged. The support selection depends on the primitive link and the degree-resolved non-scalar source.\\
~\\
The compact-leaf model is calibrated to the current residual on a connected non-degenerate punctured collar. Let a smooth nonzero scalar $\phi_C$ make $\phi_CB_{\mu\nu}$ trace-adjusted Codazzi. Proposition~\ref{prop:alena-realization-current-codazzi-closure} then gives $\theta^B=d\log|\phi_C|$ and $\phi_C=C_\phi e^{f_B}$. For $p_\Lambda>0$ and a smooth $U$, choose positive initial data for
\begin{equation}
U(y)+y\left(\nabla^{(k)}_\mu U^\mu+U(\log p_\Lambda)\right)=0
\end{equation}
on a flow collar, with the required sign of $1-C_\phi e^{f_B}-y$. Setting $\zeta^2=y$ and $\mu_\zeta R_\omega:=1-C_\phi e^{f_B}-y$ gives
\begin{equation}
\phi_C=1-\zeta^2-\mu_\zeta R_\omega,
\qquad
\zeta^2>0,
\qquad
\nabla^{(k)}_\mu\left(p_\Lambda\zeta^2U^\mu\right)=0,
\end{equation}
and hence the current compatibility \eqref{eq:alena-realization-current-compatibility}. This calibration is used below as part of the resolved compact-leaf realization.

Under the regularity hypothesis stated below, the thin-core limit identifies the primitive worldline component. The compactness input is supplied by integral-current compactness \cite{FedererFleming1960IntegralCurrents}, in the geometric-measure form represented by \cite{Simon1983GeometricMeasureTheory}. Vortex concentration gives the comparison with \cite{BethuelBrezisHelein1994GinzburgLandau}; the Jacobian-current formulation is represented by \cite{JerrardSoner2002GinzburgLandau}. A smooth auxiliary Riemannian metric $h$ is fixed on the collar. All mass bounds and local flat compactness statements below are taken with respect to $h$, while timelikeness and orthogonal normal bundles are defined by the Lorentzian metric.

\begin{theorem}[Conditional thin-core extraction]
\label{thm:alena-thin-core-current-residual-origin}
Let $\{\mathcal C_\varepsilon\}$ be regularized Alena-Codazzi collars and let $T_\varepsilon$ be the integral $1$-current carried by their regularized cores. Assume:
\begin{enumerate}[label=(\roman*)]
\item the local masses of $T_\varepsilon$ and $\partial T_\varepsilon$ are uniformly bounded;
\item $\partial T_\varepsilon=0$ on the singular-source-free part of the collar;
\item the integral linking charge converges to a primitive degree-one charge;
\item the trace-adjusted Codazzi residual converges to zero in $H^{-1}_{\rm loc}$ away from every subsequential current limit;
\item the absolute frozen Codazzi gap is uniformly bounded below on compact subsets disjoint from the cores;
\item the component selected by the primitive linking charge is a regular timelike multiplicity-one component of the limiting current.
\end{enumerate}
After passage to a subsequence, $T_\varepsilon$ converges locally to an integral $1$-current $T$ without interior boundary on the source-free collar. The selected regular component is a timelike worldline $\Gamma$. Its oriented normal bundle gives \eqref{eq:pgl-projective-link}, and the limiting charge gives \eqref{eq:pgl-primitive-line}. The trace-adjusted Codazzi equation holds distributionally away from $\operatorname{spt}T$.
\end{theorem}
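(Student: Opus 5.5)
The argument is a compactness-plus-regularity argument for integral currents, followed by reading off the normal and charge data on the regular component.

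\textbf{Compactness.} Hypothesis (i) gives uniform local bounds on $\mathbf M(T_\varepsilon)+\mathbf M(\partial T_\varepsilon)$ with respect to the auxiliary metric $h$. The Federer--Fleming compactness theorem \cite{FedererFleming1960IntegralCurrents,Simon1983GeometricMeasureTheory} then yields a subsequence converging in the local flat topology to an integral $1$-current $T$. Boundary is continuous under flat convergence. Hypothesis (ii) states that $\partial T_\varepsilon=0$ on the singular-source-free part, so $\partial T=0$ in the interior of the source-free collar. No interior boundary remains.

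\textbf{Selected component and its normal data.} Hypothesis (iii) passes the integral linking charge to the limit. Flat convergence preserves linking numbers with small spheres disjoint from $\operatorname{spt}T$, since it preserves pairings with closed forms supported away from the limit; the limiting charge is therefore the primitive degree-one class. Hypothesis (vi) isolates the component carrying this charge as a regular, timelike, multiplicity-one component, which we take as $\Gamma$. Because $\Gamma$ is a smooth embedded timelike curve in the oriented, time-oriented branch, the Lorentzian metric gives $N\Gamma=(T\Gamma)^\perp$ as an oriented Euclidean rank-three bundle. Its unit sphere bundle is the boundary of the real blow-up, and the optical Codazzi identification of Section~\ref{subsec:primitive-gauss-local-data} gives \eqref{eq:pgl-projective-link}. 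The linking degree is $\pm1$, and the orientation fixed by positivity selects $+1$. Through $H^2(\mathbb R^4\setminus\mathbb R;\mathbb Z)\simeq\mathbb Z$ this degree is $c_1(L_\Gamma)=1$, and atomicity gives \eqref{eq:pgl-primitive-line}.

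\textbf{Distributional Codazzi equation.} Let $K$ be compact and disjoint from $\operatorname{spt}T$. Local Hausdorff-type convergence of supports does not follow from flat convergence, and I expect this to be the main obstacle. To handle it, I would use the mass bound together with hypothesis (v): the uniform gap keeps the collar in the non-degenerate two-eigenvalue sector on $K$, so the Codazzi coefficients $B$ and $\theta^B$ (via \eqref{eq:alena-realization-theta-b}) are defined uniformly there. For $\varepsilon$ small, $K$ is also disjoint from any subsequential limit, which is exactly the set to which hypothesis (iv) applies. On this set the trace-adjusted residual tends to $0$ in $H^{-1}_{\rm loc}$. The residual is affine in $B$ plus the bilinear term $\theta\wedge B$. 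The affine part passes to the limit directly. The bilinear term needs weak convergence of $B$ combined with strong convergence of $\theta$, or a uniform bound; I would obtain this from the gap bound and the flat-connection normalization. The limit residual then vanishes in $\mathcal D'(K)$, so the trace-adjusted Codazzi equation holds distributionally away from $\operatorname{spt}T$.

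If that product limit fails, the fallback is to read hypothesis (iv) directly as convergence of the residual functional itself, which needs no product limit.
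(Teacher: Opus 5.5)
Your proposal follows the paper's proof step for step. It uses Federer--Fleming compactness under (i), boundary continuity with (ii) to exclude interior boundary, (iii) and (vi) for the primitive charge and for the regular timelike component whose normal sphere is read as $\mathbb{CP}^1_\Gamma$, and (iv) for the distributional Codazzi equation away from $\operatorname{spt}T$; since (iv) is stated relative to the limit support, the Hausdorff convergence of the cores that you flagged as the main obstacle is not actually needed. One caution: your primary route for the $\theta\wedge B$ term does not work as stated, because a lower bound on the Codazzi gap gives no bounds or compactness for $B_\varepsilon$ or $\theta_\varepsilon$. The operative argument is therefore your fallback, reading (iv) as convergence of the residual itself, which is exactly how the paper uses it; the paper invokes (v) only to supply the two-eigenvalue splitting on the punctured collar that underlies the optical reading of the link.
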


\begin{proof}
Integral-current compactness gives a locally convergent subsequence. Boundary continuity gives the absence of interior boundary. The residual convergence gives the distributional Codazzi equation away from the limit. The selected regular multiplicity-one component is represented by a timelike worldline. The limiting primitive charge gives \eqref{eq:pgl-primitive-line}, and the uniform gap supplies the two-eigenvalue splitting on the punctured collar.
\end{proof}

\begin{figure}[!htbp]
\centering
\begin{tabular}{c}
regularized compact leaf $\Sigma_g$ with one concentrated core\\[0.5em]
$\Downarrow\ \varepsilon\to0$\\[0.5em]
timelike multiplicity-one current component $\Gamma$\\[0.5em]
$\Downarrow\ \text{normal blow-up}$\\[0.5em]
$S(N\Gamma)\simeq S^2_\Gamma\simeq\mathbb{CP}^1_\Gamma$\\[0.5em]
primitive linking charge $\longrightarrow L_\Gamma\simeq\mathcal O(1)$
\end{tabular}
\caption{Compact-leaf concentration, the thin-core worldline, and the resolved projective link.}
\label{fig:alena-compact-leaf-thin-core-link}
\end{figure}

\subsection{Normal Rees source and Euler grading}
\label{subsec:thin-core-moment-gauss-charges}

The normal source filtration of Section~\ref{sec:primitive-gauss-local-self-reconstruction} is realized geometrically by the associated-graded normal moments. Up to order two, its non-scalar pieces are a first moment $m\in N_\Gamma$ and a trace-free second moment $M_0\in\operatorname{Sym}^2_0(N_\Gamma)$; the scalar trace remains in the separated singlet channel.

\begin{definition}[Second-order moment-resolved source]
\label{def:alena-moment-resolved-source}
A defect source is second-order moment-resolved when a normalized core profile realizes the degree-one and degree-two normal Rees components by $m$ and $M_0$, and the singular degree-two coefficient of the principal trace-free Codazzi potential is $\kappa_2M_0$ for fixed $\kappa_2\neq0$.
\end{definition}

The harmonic realization of the two graded pieces is
\begin{equation}
\mathcal M(m,M_0)=\left(m\cdot\omega,\omega^TM_0\omega\right)\in V_1[1]\oplus V_2[2].
\label{eq:alena-moment-link-map}
\end{equation}
If $\mathsf N_{\rm mom}$ has weights $1,2$ on $N_\Gamma\oplus\operatorname{Sym}^2_0(N_\Gamma)$, equivariance of the homogeneous harmonic map gives
\begin{equation}
\mathcal M\circ\mathsf N_{\rm mom}=\mathsf N_\partial\circ\mathcal M.
\label{eq:alena-moment-degree-intertwining}
\end{equation}
Hence the normal moment filtration realizes the same Euler grading used by the Rees reconstruction; frozen transport preserves this homogeneous degree on a source-free annulus. The two-channel locus $m\neq0$, $M_0\neq0$ is open and dense, and a prescribed trace-free second moment is compatible with a positive core profile after adding a sufficiently large separated scalar trace. The resulting non-scalar two-jet is
\begin{equation}
\left(\operatorname{gr}J^{\leq2}_\perp\right)_{\rm ns}=\mathcal M(m,M_0)\in V_1[1]\oplus V_2[2].
\label{eq:alena-two-jet-realization}
\end{equation}
Thus \eqref{eq:alena-moment-degree-intertwining}, faithful central reconstruction, and the Borel-Weil cutoff identify the realized normal Rees degrees with the marked section levels of Theorem~\ref{thm:pgl-rank-five-support}.
\subsection{Gauss-local charges and transported boundary data}
\label{subsec:alena-gauss-local-charges}

Let $j_{\rm tr}^i$ be the principal normal reduction of \eqref{eq:alena-realization-current}. Extend the first moment $m$ in the frozen normal frame by $\nabla^\perp m=0$ and put $\mathbf j_1^i=m\,j_{\rm tr}^i$. Let $A_{ij}$ be the trace-free principal Codazzi representative.

\begin{lemma}[Principal current and Codazzi-Gauss charges]
\label{lem:alena-principal-codazzi-gauss-charges}
On a source-free principal normal annulus, $\nabla_i^\perp\mathbf j_1^i=0$. The $V_1$ current-flux charge is
\begin{equation}
Q_1(r)
=
\int_{S^2_r}
\mathbf j_1^i n_i\,dS
\in
N_\Gamma
\simeq
V_1 .
\label{eq:alena-v1-current-flux-charge}
\end{equation}
Writing $q_{\rm tr}(r)=\int_{S^2_r}j_{\rm tr}^in_i\,dS$, one has $Q_1(r)=m\,q_{\rm tr}(r)$ and
\begin{equation}
Q_1(r_1)-Q_1(r_0)
=
\int_{\Omega_{r_0,r_1}}
\nabla_i^\perp\mathbf j_1^i\,dV
=
0 .
\label{eq:alena-v1-current-flux-conservation}
\end{equation}
The trace-free Codazzi representative has a harmonic Hessian potential,
\begin{equation}
A_{ij}
=
\partial_i\partial_j u,
\qquad
\Delta u=0 .
\label{eq:alena-principal-hessian-potential}
\end{equation}
For $M\in\operatorname{Sym}^2_0(N_\Gamma)$, let $H_M(x)=M_{ij}x^ix^j$. The $V_2$ charge is determined by
\begin{equation}
\left\langle
Q_2(r),M
\right\rangle
=
\int_{S^2_r}
\left(
u\,\partial_nH_M
-
H_M\,\partial_nu
\right)dS .
\label{eq:alena-v2-surface-charge}
\end{equation}
It is independent of $r$, and $Q_2=0$ if and only if $M_0=0$.
\end{lemma}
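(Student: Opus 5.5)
The plan is to treat the two charges separately, since each reduces to a divergence theorem on the annulus $\Omega_{r_0,r_1}$ together with a source-free identity.

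\emph{The $V_1$ charge.} Because the extended moment is parallel, $\nabla^\perp m=0$, the product rule gives
\begin{equation*}
\nabla^\perp_i\mathbf j_1^i=(\nabla^\perp_i m)\,j^i_{\rm tr}+m\,\nabla^\perp_i j^i_{\rm tr}=m\,\nabla^\perp_i j^i_{\rm tr}.
\end{equation*}
The principal normal reduction of the conservation law \eqref{eq:alena-realization-current} gives $\nabla^\perp_ij^i_{\rm tr}=0$ wherever no singular source is present. This is the case on the source-free annulus, after the smooth background has been subtracted as in Proposition~\ref{prop:alena-realization-compact-leaf-collar}. Since $m$ is constant in the frozen frame, it factors out of the flux integral, so $Q_1(r)=m\,q_{\rm tr}(r)$. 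Stokes' theorem on $\Omega_{r_0,r_1}$ then yields \eqref{eq:alena-v1-current-flux-conservation}.

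\emph{The Hessian potential.} The principal trace-free representative $A_{ij}$ is symmetric and satisfies the frozen (flat principal) Codazzi equation $\partial_kA_{ij}=\partial_iA_{kj}$; the multiplier connection is locally exact by Proposition~\ref{prop:alena-realization-current-codazzi-closure}. I would argue in two applications of the Poincaré lemma.
\begin{enumerate}
\item For each fixed $j$, the row form $A_{ij}\,dx^i$ is closed, so $A_{ij}=\partial_iv_j$ locally.
\item Symmetry of $A$ then makes $v_j\,dx^j$ closed, so $v_j=\partial_ju$.
\end{enumerate}
The annulus $\mathbb R^3\setminus\{0\}$ is simply connected, so both steps are global there, which avoids any period obstruction. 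Trace-freeness of $A$ is exactly $\Delta u=0$, giving \eqref{eq:alena-principal-hessian-potential}.

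\emph{The $V_2$ charge.} $H_M$ is a harmonic quadratic because $M$ is trace-free. Green's second identity on $\Omega_{r_0,r_1}$ then gives
\begin{equation*}
\int_{\partial\Omega}(u\,\partial_nH_M-H_M\,\partial_nu)\,dS=\int_\Omega(u\,\Delta H_M-H_M\,\Delta u)\,dV=0,
\end{equation*}
so \eqref{eq:alena-v2-surface-charge} is independent of $r$.

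\emph{Nondegeneracy of $Q_2$.} I expand $u$ in spherical harmonics on the annulus,
\begin{equation*}
u=\sum_\ell\bigl(a_\ell r^\ell+b_\ell r^{-\ell-1}\bigr)Y_\ell .
\end{equation*}
By orthogonality, only the $\ell=2$ component pairs with $H_M=r^2Y_2^M$. The regular part $r^2$ has zero Wronskian against $H_M$, while the singular part $r^{-3}$ contributes $5\,b_2\cdot\langle Y_2,Y_2^M\rangle$. Thus $Q_2$ is, up to a fixed nonzero constant, the singular degree-two coefficient of $u$. By Definition~\ref{def:alena-moment-resolved-source}, that coefficient is $\kappa_2M_0$ with $\kappa_2\neq0$, so $Q_2=0$ if and only if $M_0=0$.

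The main obstacle I expect is this last identification. It requires fixing that the singular $\ell=2$ mode of the potential $u$, rather than of $A$, is what Definition~\ref{def:alena-moment-resolved-source} normalizes. The additive affine ambiguity $u\mapsto u+c+b\cdot x$ is harmless, since it lives in $\ell\le1$ and is orthogonal to $H_M$. I would also check that in the curved collar the statements hold at the frozen principal level, with curvature corrections of lower order not affecting the leading singular coefficient.
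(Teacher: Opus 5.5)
Your proposal is correct and follows the paper's proof essentially step for step. It uses parallel $m$ with principal current conservation and Stokes for $Q_1$, two Poincaré-lemma steps on the simply connected punctured space for the Hessian potential, Green's second identity for radius independence, and Definition~\ref{def:alena-moment-resolved-source} for nondegeneracy. Your spherical-harmonic Wronskian computation (regular and other-$\ell$ modes drop out, and the $r^{-3}$ mode gives the factor $2\ell+1=5$) makes explicit the paper's one-line claim that the pairing on the singular degree-two term is a nondegenerate multiple of the $V_2$ inner product. The Definition explicitly normalizes the potential $u$, so the obstacle you flag does not arise.
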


\begin{proof}
The principal current equation gives $\partial_i j_{\rm tr}^i=0$. Parallelity of $m$ gives $\nabla_i^\perp\mathbf j_1^i=0$, and Stokes' theorem gives \eqref{eq:alena-v1-current-flux-conservation}.\\
~\\
For fixed $k$, the principal Codazzi equation makes $A_{ik}dx^i$ closed. Since $H^1(\mathbb R^3\setminus\{0\})=0$, one obtains $A_{ik}=\partial_iv_k$. Symmetry gives $v_k=\partial_ku$, and trace-freeness gives $\Delta u=0$. Green's identity gives the radius independence of \eqref{eq:alena-v2-surface-charge}. By Definition~\ref{def:alena-moment-resolved-source}, the singular degree-two coefficient is $\kappa_2M_0$ with $\kappa_2\neq0$, and the pairing on this term is a non-degenerate multiple of the standard $V_2$ inner product.
\end{proof}

The first moment is computed with origin on the regular component selected by Theorem~\ref{thm:alena-thin-core-current-residual-origin}. A normal translation which changes this component changes the defect datum. On the locus $q_{\rm tr}\neq0$, the two charges are nonzero exactly when $m\neq0$ and $M_0\neq0$.\\
~\\
On a curved collar, $Q_1$ is compared by normal parallel transport. The $V_2$ charge is compared by the Green identity for the first-order Codazzi operator $\mathfrak C$ and its formal adjoint:
\begin{equation}
\int_{\Omega_{r_0,r_1}}
\left(
\langle\mathfrak CA,\Psi\rangle
-
\langle A,\mathfrak C^\ast\Psi\rangle
\right)
={}
\mathcal B_{S_{r_1}}(A,\Psi)
-
\mathcal B_{S_{r_0}}(A,\Psi).
\label{eq:alena-curved-codazzi-green-identity}
\end{equation}
If $\mathfrak CA=0$ and $\mathfrak C^\ast\Psi=0$, the boundary pairing is independent of the linking radius. The transported current law is $\mathsf P^\perp_{r_0\leftarrow r_1}Q_1(r_1)=Q_1(r_0)$.\\
~\\
The fixed charge pair generates the boundary-charge algebra used in Definition~\ref{def:pgl-relative-gauss-local-observable-algebra}. Lemma~\ref{lem:pgl-gauss-local-superselection} then gives central type projections in every finite sector representation. Thus the current flux realizes the $V_1$ sector and the Codazzi-Gauss pairing realizes the $V_2$ sector.

\begin{theorem}[Dual Veronese source-jet quotients]
\label{thm:alena-dual-veronese-source-jet}
On the aligned selected branch, let $e(t)\subset S^2$ be unit speed, with $e=e(0)$, $v=\dot e(0)$, $n=e\times v$, and signed geodesic curvature $\kappa_g$, so $\dot v=-e+\kappa_g n$. For $Q(e)=e\otimes e-{\bf1}/3$, write $Q(e(t))=Q_e+tA+t^2B+O(t^3)$. Then $A=e\otimes v+v\otimes e$ and $B=v\otimes v-e\otimes e+\kappa_g(e\otimes n+n\otimes e)/2$. The source-Hodge quotient removes the $Q_e$ component, giving $B_H=B+3Q_e/2$. The generated tangent-square line is $G=(A^2)_0$ and direct contraction gives $\langle B,G\rangle=0$, so the norm of $B$ is unchanged by its removal. With $\chi_H=\sqrt3\|B_H\|/\|A\|^2$ and $\chi_R=\sqrt3\|B\|/\|A\|^2$,
\begin{equation} \chi_H^2=\frac38(1+\kappa_g^2), \qquad \chi_R^2=\frac38(4+\kappa_g^2), \qquad \chi_R^2-\chi_H^2=\frac98. \label{eq:alena-dual-veronese-quotients} \end{equation}
Moreover,
\begin{equation} B_H= \left(v\otimes v-\frac12P_\perp\right) + \frac{\kappa_g}{2}(e\otimes n+n\otimes e), \qquad p_{\rm src}=\frac{1}{1+\kappa_g^2}, \qquad \Delta_{\rm src}=\frac{3\kappa_g^2}{(1+\kappa_g^2)^2}, \label{eq:alena-source-polarization-curvature} \end{equation}
where the two terms lie in $W_2(e)$ and $W_1(e)$ of \eqref{eq:alena-source-hodge-stabilizer-split}. The third jet satisfies
\begin{equation} Q^{(3)}=-(4+\kappa_g^2)A+3\kappa_g(v\otimes n+n\otimes v)+\dot\kappa_g(e\otimes n+n\otimes e), \label{eq:alena-veronese-third-jet} \end{equation}
and, after removal of the $A$ direction, $\|Q^{(3)}_{\perp A}/6\|^2=\kappa_g^2/2+\dot\kappa_g^2/18$. All Taylor jets in this theorem remain in the source module $V_2[2]$; the quantities $\chi_H$, $\chi_R$, and $\Delta_{\rm src}$ are therefore source branch-motion diagnostics rather than finite Rees-degree assignments.
\end{theorem}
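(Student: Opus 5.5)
The plan is a direct Frenet-type computation on $S^2\subset\mathbb R^3$, carried out in the moving orthonormal frame $(e,v,n)$.

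\emph{Jets of $Q$.} First I record the derivative rules. From $\dot e=v$ and $\dot v=-e+\kappa_g n$, orthonormality gives $\dot n=\dot e\times v+e\times\dot v=-\kappa_g v$. Then I expand $Q(e(t))=e(t)\otimes e(t)-{\bf1}/3$. The first derivative is $e\otimes v+v\otimes e$, which is $A$. The second derivative is $2v\otimes v+e\otimes\dot v+\dot v\otimes e$. Halving it gives
\begin{equation*}
B=v\otimes v-e\otimes e+\tfrac{\kappa_g}{2}(e\otimes n+n\otimes e).
\end{equation*}
Differentiating $\ddot Q$ once more, with the same rules, produces $Q^{(3)}$. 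The $e\otimes v$ terms collect into the coefficient $-(4+\kappa_g^2)$ of $A$. The $v\otimes n$ terms come with $3\kappa_g$, and the remaining term is $\dot\kappa_g(e\otimes n+n\otimes e)$.

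\emph{Source-Hodge quotient.} Next I take the $Q_e$-component of $B$. Since $\langle B,Q_e\rangle=-2/3$ and $\|Q_e\|^2=2/3$, the projection is $-Q_e$. I will reconcile this with the stated $B_H=B+3Q_e/2$ by checking it directly: $v\otimes v-e\otimes e+\tfrac32 Q_e=v\otimes v-\tfrac12 P_\perp$. This is trace-free on $e^\perp$ and annihilates $e$, so it lies in $W_2(e)$; the $\kappa_g$ term lies in $W_1(e)$. This is the decomposition of Proposition~\ref{prop:alena-principal-multiplier-schur-incidence}.

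The main obstacle is the normalization. Literal orthogonal removal of the $Q_e$ line gives $B+Q_e$, which is not traceless in the required sense. I therefore expect the intended operation to be the trace-free projection that sends $B$ to the $W_1\oplus W_2$ part. I will verify that $B_H$ is exactly the $W_1\oplus W_2$ component of $B$ written relative to the $\operatorname{Sym}^2$ splitting, and fix the normalizations by the norms computed below.

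\emph{Norms.} The relevant norms are:
\begin{itemize}
\item $\|A\|^2=2$, so $\|A\|^4=4$;
\item $\|v\otimes v-\tfrac12P_\perp\|^2=\tfrac12$;
\item $\|\tfrac{\kappa_g}{2}(e\otimes n+n\otimes e)\|^2=\kappa_g^2/2$.
\end{itemize}
Hence $\chi_H^2=3\|B_H\|^2/4=\tfrac38(1+\kappa_g^2)$. Similarly $\|B\|^2=2+\kappa_g^2/2$, which gives $\chi_R^2=\tfrac38(4+\kappa_g^2)$; the difference $9/8$ follows. Next, $A^2=e\otimes e+v\otimes v$, so $G=(A^2)_0=-Q_n$, using $e\otimes e+v\otimes v={\bf1}-n\otimes n$. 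Then $\langle B,G\rangle=0$, since $B$ has no $n\otimes n$ entry and $\operatorname{tr}B=0$.

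\emph{Polarization quantities.} I read $p_{\rm src}$ as the $W_2$-fraction $\|B_H^{W_2}\|^2/\|B_H\|^2=1/(1+\kappa_g^2)$. Then $\Delta_{\rm src}$ should be the natural two-weight polarization quantity, whose stated value is $3\kappa_g^2/(1+\kappa_g^2)^2=3p(1-p)$. I will match it to that normalization.

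\emph{Third jet.} For the third jet, $v\otimes n+n\otimes v$ and $e\otimes n+n\otimes e$ are orthogonal to $A$ and to each other, each with squared norm $2$. Therefore
\begin{equation*}
\|Q^{(3)}_{\perp A}\|^2=2(9\kappa_g^2+\dot\kappa_g^2).
\end{equation*}
Dividing by $36$ gives $\kappa_g^2/2+\dot\kappa_g^2/18$. Finally, every jet lies in $\operatorname{Sym}^2_0$ because $Q(e(t))$ does. This gives the closing statement that the jets remain in $V_2[2]$.
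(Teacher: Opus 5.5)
Your route is the same as the paper's: differentiate $Q(e(t))$ using the Frenet relations on $S^2$ and contract in the frame $(e,v,n)$. Most of your computations are correct: $\dot n=-\kappa_g v$, the jets $A$, $B$, $Q^{(3)}$, the norms $\|A\|^2=2$, $\|B_H\|^2=\tfrac12(1+\kappa_g^2)$ and $\|B\|^2=2+\kappa_g^2/2$, the identity $G=-Q_n$ with $\langle B,G\rangle=0$, the $W_2$-fraction reading of $p_{\rm src}$ with $\Delta_{\rm src}=3p(1-p)$, and the third-jet norm.

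The source-Hodge step, however, contains a false computation. You write $\langle B,Q_e\rangle=-2/3$, but this omits the contribution of $v\otimes v$. In fact $\langle v\otimes v,Q_e\rangle=-\tfrac13$ and $\langle -e\otimes e,Q_e\rangle=-\tfrac23$, while the $\kappa_g$ term is orthogonal to $Q_e$. Hence $\langle B,Q_e\rangle=-1$. Dividing by $\|Q_e\|^2=2/3$ gives projection coefficient $-3/2$. So literal orthogonal removal of the $Q_e$ line gives exactly $B+\tfrac32Q_e$, which is what the paper's proof states.

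The ``normalization obstacle'' you describe therefore does not exist. Your remark that $B+Q_e$ is ``not traceless'' is also wrong, since both $B$ and $Q_e$ are trace-free.

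Your fallback is not a different operation either. The spaces $\mathbb RQ_e$, $W_1(e)$, $W_2(e)$ carry distinct $SO(2)_e$ weights $0,1,2$, and they are mutually Frobenius-orthogonal (also directly: $e^{\mathsf T}Be=0$ and $(Be)\cdot u=0$ for $B\in W_2$, $u\perp e$). So the $W_1\oplus W_2$ component of $B$ is precisely its orthogonal projection onto $Q_e^\perp$.

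Replace that paragraph with the corrected inner product. You can check it against your own norms by Pythagoras: $\|B\|^2=\|B_H\|^2+\tfrac94\|Q_e\|^2=\tfrac12(1+\kappa_g^2)+\tfrac32$. This holds, whereas your coefficient $-1$ would violate it. With that fix the argument is complete and coincides with the paper's.
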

\begin{proof}
The formulas follow by differentiating $Q(e(t))$ and using the Frenet equation on $S^2$. Orthogonal projection onto $Q_e^\perp$ gives $B_H$. Direct contraction gives $\|Q_e\|^2=2/3$, $\|A\|^2=2$, $\|B_H\|^2=(1+\kappa_g^2)/2$, and $\|B\|^2=2+\kappa_g^2/2$, which yields \eqref{eq:alena-dual-veronese-quotients}; the stabilizer and cubic statements follow from the same frame decomposition.
\end{proof}

For a geodesic source-axis path, \eqref{eq:alena-dual-veronese-quotients} gives $\chi_H=\sqrt{3/8}$ and $\chi_R=\sqrt{3/2}$, while \eqref{eq:alena-veronese-third-jet} reduces to $Q^{(3)}=-4Q'$. Hence the pure geodesic selector orbit has no independent cubic source direction. A finite Picard polarization, finite Rees norm, or ordered cubic coefficient still requires the corresponding marked channel map.

\subsection{Torsor-admissible transport and spectral admissibility}
\label{subsec:alena-torsor-compatible-boundary-calibration}

The charge pair must be transported compatibly with the affine finite shadow before the torsor complex of \eqref{eq:pgl-torsor-complex} is obtained. This is an additional boundary condition on the realized source data.

\begin{definition}[Torsor-compatible boundary transport]
\label{def:alena-torsor-compatible-boundary-calibration}
An Alena-Codazzi collar is torsor-admissible when:
\begin{enumerate}[label=(\roman*)]
\item the $V_1$ charge \eqref{eq:alena-v1-current-flux-charge} has a compact unitary phase calibration on linking spheres;
\item the $V_2$ charge \eqref{eq:alena-v2-surface-charge} is represented in a transported unitary Green-adjoint frame;
\item the transported order-three normal-frame action, including the selected-axis representative when used, is marked to the translation in \eqref{eq:pgl-projective-color-torsor};
\item the relative charge transport descends to the unitaries $U_a$ in \eqref{eq:pgl-torsor-differential};
\item its completed low compression is compatible with \eqref{eq:pgl-vone-vtwo-low-symbol};
\item the edge transport and the low-high coupling are subcritical with respect to the separating Callias-Schur gap.
\end{enumerate}
The transport is torsor-compatible when \eqref{eq:pgl-torsor-closure-residual} vanishes. The flat subbranch additionally satisfies \eqref{eq:pgl-torsor-flat-residual}.
\end{definition}

The first two entries fix unitary boundary normalizations. The third and fourth give the finite edge descent. The fifth controls the Schur-visible low symbol. Centrality of the Wilson defect gives a well-defined structural torsor cycle; its value remains a finite holonomy datum.\\
~\\
For torsor-admissible data, \eqref{eq:alena-v1-current-flux-conservation} and \eqref{eq:alena-curved-codazzi-green-identity} induce the transports in \eqref{eq:pgl-torsor-differential}. The scalar and mixed responses are then those of \eqref{eq:pgl-torsor-laplacian} and \eqref{eq:pgl-mixed-curvature-norm}. The structural closure fixes centrality, while flatness selects the subbranch \eqref{eq:pgl-torsor-flat-residual}.\\
~\\
The analytic attachment requires a boundary-admissible normal representative with an isolated primitive cluster. The spectral projection is defined by a Riesz contour and persists under perturbations smaller than the spectral separation, in the standard sense of \cite{Kato1995}. The open normal problem is compared with the Callias mechanism \cite{Callias1978}; its geometric Fredholm form is represented by \cite{Anghel1993Callias}.\\
~\\
The primitive linking degree is locally constant under deformations which preserve the resolved transverse frame. The conditions $m\neq0$ and $M_0\neq0$ are open in the finite moment space. A nonzero Codazzi gap remains nonzero on a fixed compact subcollar under sufficiently small constrained perturbations. If the low-high coupling has norm below one half of the Callias-Schur separation, the rank of the isolated low projection remains fixed.\\
~\\
Persistence is taken inside the current-Codazzi class. An unrestricted neighborhood theorem for the multiplier would require solvability of the nonlinear map $(\varphi,B)\mapsto C(\varphi B)$ together with the period condition for \eqref{eq:alena-realization-theta-b}. Torsor admissibility remains a finite boundary-transport condition.

\subsection{Microscopic normal family and marked Schur descent}
\label{subsec:alena-microscopic-normal-family}

Let $\mathcal R_{\rm mic}$ collect the differentiable realization residuals preceding the torsor, completed Schur, and running closures. On a physical slice put $R_b:=D\mathcal R_{\rm mic}(\Phi_b)\Pi_{\rm ret}$ and
\begin{equation}
L_b=G_{\rm mic}^{1/2}R_b,\qquad N_+(b)=L_b^\dagger L_b=R_b^\dagger G_{\rm mic}R_b.
\label{eq:alena-microscopic-gram-family}
\end{equation}

\begin{lemma}[Equivariant residual metrics and exact zero modes]
\label{lem:alena-equivariant-residual-metric-exact-kernel}
Let the unitary residual representation decompose as $\mathcal W_{\rm res}=\bigoplus_\alpha V_\alpha\otimes M_\alpha$ with pairwise inequivalent irreducible $V_\alpha$. Every positive equivariant Hermitian residual metric has the form
\begin{equation}
G_{\rm mic}=\bigoplus_\alpha{\bf1}_{V_\alpha}\otimes g_\alpha,\qquad g_\alpha>0.
\label{eq:alena-equivariant-residual-metric-schur-form}
\end{equation}
In particular, a multiplicity-one channel carries an independent positive scalar normalization. For every $G_{\rm mic}>0$ one has
\begin{equation}
\ker N_+(b)=\ker R_b.
\label{eq:alena-exact-kernel-metric-independence}
\end{equation}
Hence, on a constant-rank exact zero-mode branch and with the retained-domain Hermitian structure fixed, $P_b=P_{\ker R_b}$ and the connection $P_b\,dP_b$ are independent of the target residual metric. For a general isolated near-zero cluster this independence is absent: if the Riesz contour is fixed,
\begin{equation}
\dot P=\frac{1}{2\pi i}\oint_\Gamma(z-N_+)^{-1}\dot N_+(z-N_+)^{-1}\,dz,
\qquad
\dot N_+=\dot R^\dagger G_{\rm mic}R+R^\dagger\dot G_{\rm mic}R+R^\dagger G_{\rm mic}\dot R.
\label{eq:alena-nearzero-riesz-metric-response}
\end{equation}
For multiplicity-one channels, put $A_\alpha=R_\alpha^\dagger R_\alpha$. If the $A_\alpha$ commute pairwise, simultaneous diagonalization shows that varying the positive channel weights changes eigenvalues but not the isolated eigenspaces away from crossings. The principal tensor block is more rigid, since Proposition~\ref{prop:alena-principal-clifford-rigidity} fixes the relative weight $h:\beta=3:1$. Under a common scaling $G_{\rm mic}\mapsto\lambda G_{\rm mic}$ with $R_b$ fixed, $N_+$ and $G_H$ have homogeneity degree $1$, $\mathcal J_X$ and the reduced pre-Gram jets have degree $1/2$, while $P$, $\mathcal B_X$, normalized spectra, and mixing shapes have degree $0$.
\end{lemma}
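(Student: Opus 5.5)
The plan splits the lemma into its four assertions: the Schur form of the metric, kernel independence, the near-zero Riesz response, and the homogeneity bookkeeping. Each reduces to finite-dimensional linear algebra.

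\emph{Metric form.} For the block form \eqref{eq:alena-equivariant-residual-metric-schur-form}, write $G_{\rm mic}$ as an intertwiner of the unitary residual representation. Schur's lemma says $\operatorname{Hom}_G(V_\alpha\otimes M_\alpha,V_\beta\otimes M_\beta)$ vanishes for $\alpha\neq\beta$ and equals ${\bf1}_{V_\alpha}\otimes\End(M_\alpha)$ for $\alpha=\beta$. Hermiticity and positivity of $G_{\rm mic}$ pass to each $g_\alpha$, since $\langle v\otimes x,G(v\otimes x)\rangle=\|v\|^2\langle x,g_\alpha x\rangle$. When $\dim M_\alpha=1$, $g_\alpha$ is a positive scalar, which gives the multiplicity-one statement.

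\emph{Exact kernel.} For \eqref{eq:alena-exact-kernel-metric-independence}, use $\langle x,N_+x\rangle=\|G_{\rm mic}^{1/2}R_bx\|^2$. Because $G_{\rm mic}^{1/2}$ is invertible, this vanishes iff $R_bx=0$. On a constant-rank branch, $\ker R_b$ is then a smooth subbundle of the retained domain. It depends only on $R_b$, and its orthogonal projection depends only on the fixed retained Hermitian structure. Hence $P_b$ and $P_b\,dP_b$ contain no dependence on $G_{\rm mic}$.

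\emph{Near-zero cluster.} For a general isolated cluster, differentiate the Riesz formula $P=(2\pi i)^{-1}\oint(z-N_+)^{-1}dz$ with the contour held fixed. This gives the first identity of \eqref{eq:alena-nearzero-riesz-metric-response}, and the product rule on $R^\dagger GR$ gives $\dot N_+$. To show that the independence genuinely fails, I would give a $2\times2$ example. The $\dot G$ term contributes $P\dot N_+Q$-type off-diagonal pieces, and these vanish only when $R$ maps the cluster and its complement into $\dot G$-orthogonal ranges. In the commuting multiplicity-one case, write $N_+=\sum_\alpha g_\alpha A_\alpha$. Simultaneous diagonalization of the $A_\alpha$ gives a common eigenbasis, so eigenvectors are weight independent and eigenvalues are linear in $g_\alpha$; this holds away from crossings, where eigenspaces are unambiguous. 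The principal tensor rigidity is quoted from Proposition~\ref{prop:alena-principal-clifford-rigidity}.

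\emph{Homogeneity.} For $G\mapsto\lambda G$, $N_+$ scales linearly, and so does any Gram $G_H$ built from it. Quantities defined through $L_b=G^{1/2}R_b$, such as the jets $\mathcal J_X$ and reduced pre-Gram jets, scale by $\lambda^{1/2}$. Riesz projections are unchanged: rescale the contour as $z\mapsto\lambda z$, which is legitimate because isolation is scale invariant. Consequently the quantities derived from $P$ alone ($\mathcal B_X$, normalized spectra, and mixing shapes) are degree zero.

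The main obstacle is the degree assignments for $\mathcal J_X$, $\mathcal B_X$ and $G_H$. These objects are not defined before this point, so I would state the assumption explicitly: they are built as linear, square-root, and projector-only functionals of $N_+$ and $L_b$, respectively. Given that assumption, the stated degrees follow.
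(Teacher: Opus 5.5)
Your proposal is correct and follows the same route as the paper's proof: Schur's lemma on the isotypic multiplicity spaces for the metric form, the identity $\langle x,N_+x\rangle=\|G_{\rm mic}^{1/2}R_bx\|^2$ for the kernel equality, and differentiation of the Riesz integral at fixed contour for the near-zero response. The paper does not argue the commuting-channel and homogeneity claims, and your checks of them agree with the later definitions $G_H=QN_+(0)Q$, $\mathcal J_X=\Pi_{\rm cok}(\partial_XL)_0P$, and $\mathcal B_X=Q(\partial_XP)P=-M^+A_X$, since $\Pi_{\rm cok}$ has degree $0$ and $M^+$ has degree $-1/2$.
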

\begin{proof}
Equation~\eqref{eq:alena-equivariant-residual-metric-schur-form} is Schur decomposition on the multiplicity spaces. Positivity of $G_{\rm mic}$ gives $\langle v,N_+v\rangle=\|G_{\rm mic}^{1/2}R_bv\|^2$, which proves \eqref{eq:alena-exact-kernel-metric-independence}. The last formula is the differentiated Riesz projection.
\end{proof}
Thus symmetry fixes the metric shape on each irreducible channel but does not fix a common normalization across inequivalent residual channels.\\
~\\
Let $P$ be an isolated zero projection of $N_+(0)$ with $L_0P=0$, set $Q={\bf1}-P$, $M=L_0Q$, and $G_H=M^\dagger M$. The Callias-Schur gap is invertibility of $G_H$ on the active high subspace. On the residual/retained sum, the canonical odd doubling is
\begin{equation}
\mathbb Q_b=
\begin{pmatrix}
0&L_b\\
L_b^\dagger&0
\end{pmatrix},
\qquad
\mathbb Q_b^2=
\begin{pmatrix}
L_bL_b^\dagger&0\\
0&N_+(b)
\end{pmatrix}.
\label{eq:alena-microscopic-doubled-operator}
\end{equation}
Thus $P$ belongs to the retained chiral block; the full kernel of $\mathbb Q_b$ is $\ker L_b^\dagger\oplus\ker L_b$. This stable doubling is kept separate from any rank-one Callias factor with positive asymptotic line $L_\Gamma$.\\
~\\

\begin{proposition}[Microscopic Moore-Penrose-Riesz splitting]
\label{prop:alena-microscopic-gram-riesz-splitting}
For a tangent direction $X$, put $A_X=(\partial_XL)_0P$. Since $M$ is injective on the active high subspace, $M^+=G_H^{-1}M^\dagger$ and
\begin{equation}
\Pi_{\rm ran}=MM^+,
\qquad
\Pi_{\rm cok}={\bf1}-MM^+,
\qquad
\mathcal J_X=\Pi_{\rm cok}A_X,
\qquad
\mathcal B_X=Q(\partial_XP)P=-M^+A_X.
\label{eq:alena-microscopic-kuranishi-riesz-data}
\end{equation}
Thus
\begin{equation}
A_X=\mathcal J_X-M\mathcal B_X,
\qquad
A_X^\dagger A_Y=\mathcal J_X^\dagger\mathcal J_Y+\mathcal B_X^\dagger G_H\mathcal B_Y.
\label{eq:alena-microscopic-gram-riesz-identity}
\end{equation}
The first term is the stationary Kuranishi-Schur form and $\mathcal B_X$ is the second fundamental form of the moving Riesz bundle.
\end{proposition}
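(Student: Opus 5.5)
The argument is finite-dimensional linear algebra built on the injectivity of $M$ on the active high subspace, which is the Callias-Schur gap hypothesis $G_H=M^\dagger M>0$ there. I would prove the statement in four steps.

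\emph{Step 1: the Moore-Penrose data.} Since $G_H$ is invertible on the active high subspace, $M^+=G_H^{-1}M^\dagger$ is a left inverse, $M^+M={\bf1}$ on $\operatorname{Ran}Q$. Then $MM^+$ is self-adjoint and idempotent, so it is the orthogonal projection $\Pi_{\rm ran}$ onto $\operatorname{Ran}M$, and $\Pi_{\rm cok}={\bf1}-MM^+$ projects onto $(\operatorname{Ran}M)^\perp=\ker M^\dagger$. This gives the first two formulas in \eqref{eq:alena-microscopic-kuranishi-riesz-data} at once.

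\emph{Step 2: the Riesz-bundle formula, which I expect to be the main obstacle.} It is the only step that uses the moving kernel rather than pure linear algebra. On the exact zero-mode branch $L_bP_b=0$, so by Lemma~\ref{lem:alena-equivariant-residual-metric-exact-kernel} $P_b$ is the kernel projection. Differentiating $L_bP_b=0$ at $b=0$ in the direction $X$ gives
\[
(\partial_XL)_0P+L_0\,\partial_XP=0.
\]
Multiplying on the right by $P$, and using $L_0=L_0Q$ (since $L_0P=0$), gives $A_X+MQ(\partial_XP)P=0$. Applying $M^+$ and using $M^+M=Q$ on the high subspace yields $\mathcal B_X=Q(\partial_XP)P=-M^+A_X$. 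The point to check is that $Q(\partial_XP)P$ really lies in the domain where $M^+M$ acts as the identity. If only an isolated near-zero cluster is assumed, I would instead use the Riesz formula \eqref{eq:alena-nearzero-riesz-metric-response} at the exact point and recover the same expression through the reduced resolvent $G_H^{-1}$.

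\emph{Step 3: the decomposition of $A_X$.} Split $A_X=\Pi_{\rm cok}A_X+MM^+A_X=\mathcal J_X-M\mathcal B_X$, using Step 2 for the second term. This is the first identity in \eqref{eq:alena-microscopic-gram-riesz-identity}.

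\emph{Step 4: the Gram identity.} Expand
\[
A_X^\dagger A_Y=(\mathcal J_X-M\mathcal B_X)^\dagger(\mathcal J_Y-M\mathcal B_Y).
\]
The cross terms vanish because $M^\dagger\Pi_{\rm cok}=M^\dagger-M^\dagger MM^+=M^\dagger-G_HG_H^{-1}M^\dagger=0$. The remaining term is $\mathcal B_X^\dagger M^\dagger M\mathcal B_Y=\mathcal B_X^\dagger G_H\mathcal B_Y$, which gives the second identity.

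Finally, I would read $\mathcal J_X$ as the obstruction or Kuranishi map, namely the Feshbach compression of Lemma~\ref{lem:fsr-penalty-descent} to first order. I would read $\mathcal B_X$ as the second fundamental form of $\operatorname{Ran}P$. These interpretations are standard and would only be noted, not proved.
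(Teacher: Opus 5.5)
Your proposal is correct and follows the argument the paper leaves implicit. That argument is Moore--Penrose algebra for $\Pi_{\rm ran},\Pi_{\rm cok}$ with the cross terms killed by $M^\dagger\Pi_{\rm cok}=0$, plus the first-order low/high equation projected to the active high space for $\mathcal B_X$, as in the proofs of the neighbouring two-jet and Riesz-graph propositions.

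One refinement: your primary Step 2 assumes kernel persistence $L_bP_b=0$, which the proposition does not. Your Riesz fallback is the general argument and closes at once: the reduced resolvent gives $Q(\partial_XP)P=-G_H^{-1}Q(\partial_XN_+)P$, and $L_0P=0$ gives $(\partial_XN_+)P=L_0^\dagger(\partial_XL)_0P=M^\dagger A_X$.
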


\begin{proposition}[Pre-Gram Kuranishi two-jet]
\label{prop:alena-pregram-kuranishi-two-jet}
Along a one-parameter branch write $L(t)=L_0+tL_1+\frac12t^2L_2+O(t^3)$. On a first-order unobstructed kernel branch, the reduced pre-Gram jets are
\begin{equation}
K_1=\Pi_{\rm cok}L_1P,
\qquad
K_2=\Pi_{\rm cok}\left(L_2P-2L_1M^+L_1P\right),
\label{eq:alena-pregram-kuranishi-two-jet}
\end{equation}
with $K_1=0$ for the second formula. Direct differentiation of \eqref{eq:alena-microscopic-gram-family} gives
\begin{equation}
PN_+'(0)P=0,
\qquad
PN_+''(0)P=2(L_1P)^\dagger(L_1P).
\label{eq:alena-gram-two-jet-information}
\end{equation}
Hence the Gram family retains the Hermitian quadratic combination of the first pre-Gram jet, while its oriented complex representative is retained only at the level of $L$.
\end{proposition}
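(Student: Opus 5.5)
Both displays in Proposition~\ref{prop:alena-pregram-kuranishi-two-jet} come from a Taylor expansion of $L(t)$ along the branch, compressed by the Moore--Penrose--Riesz data of Proposition~\ref{prop:alena-microscopic-gram-riesz-splitting}.

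\textbf{Step 1: the Gram jets \eqref{eq:alena-gram-two-jet-information}.} Expand the Gram family:
\[
N_+(t)=L(t)^\dagger L(t)=L_0^\dagger L_0+t(L_1^\dagger L_0+L_0^\dagger L_1)+\tfrac{t^2}{2}\bigl(L_2^\dagger L_0+2L_1^\dagger L_1+L_0^\dagger L_2\bigr)+O(t^3).
\]
Compress by $P$ and use $L_0P=0$, so also $PL_0^\dagger=0$. Every term containing $L_0$ next to $P$ then drops out. This gives $PN_+'(0)P=0$ and $PN_+''(0)P=2(L_1P)^\dagger(L_1P)$.

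\textbf{Step 2: the jets $K_1,K_2$ \eqref{eq:alena-pregram-kuranishi-two-jet}.} Track a moving retained frame $P(t)=P+t\mathcal B_1+\tfrac12t^2\mathcal B_2+\dots$, where $\mathcal B$ takes values in $Q$. The reduced operator is $\Pi_{\rm cok}(t)L(t)$ restricted to the frame.
\begin{itemize}
\item \emph{First order.} The first-order equation is $L_1P+M\mathcal B_1\in\operatorname{coker}$. Using $\mathcal B_1=-M^+L_1P$ from \eqref{eq:alena-microscopic-kuranishi-riesz-data}, this is solved with residual $\Pi_{\rm cok}L_1P=K_1$.
\item \emph{Second order.} The coefficient of $t^2/2$ in $L(t)$ applied to the corrected frame is $L_2P+2L_1\mathcal B_1+M\mathcal B_2$. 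The last term lies in the range of $M$, so $\Pi_{\rm cok}$ kills it.
\item Substituting $\mathcal B_1$ gives $\Pi_{\rm cok}(L_2P-2L_1M^+L_1P)$, which is $K_2$.
\end{itemize}

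\textbf{Main obstacle.} The hard part is justifying that the moving cokernel projection contributes nothing extra at second order. Its derivative produces a term $\dot\Pi_{\rm cok}L_1P$. This term vanishes precisely because $K_1=0$ on the unobstructed branch: then $L_1P$ lies in the range of $M$, and $\dot\Pi_{\rm cok}$ maps the range into the cokernel only through $L_1$-type corrections that are already of second order.

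I will check this in two ways: by differentiating $\Pi_{\rm cok}=\mathbf 1-M M^+$, and as a Feshbach/Schur complement, using Lemma~\ref{lem:fsr-penalty-descent} with the invertible block $G_H$.

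\textbf{Step 3: the concluding sentence.} Comparing the two results, $N_+$ sees only the Hermitian combination $(L_1P)^\dagger L_1P$. By \eqref{eq:alena-microscopic-gram-riesz-identity} this equals $K_1^\dagger K_1+\mathcal B^\dagger G_H\mathcal B$. It is invariant under unitary left multiplication of $L_1P$, so the oriented complex jet is recoverable only from $L$ itself.
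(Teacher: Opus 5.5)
Your Step~1 and the three bullets of Step~2 are the paper's proof. The Gram jets follow from $L_0P=0$, and $K_2$ follows by inserting the Moore--Penrose high correction $\mathcal B_1=-M^+L_1P$ into the second-order cokernel equation, where $\Pi_{\rm cok}$ annihilates $M\mathcal B_2$. In the paper's Lyapunov--Schmidt chart $\Pi_{\rm cok}$ is the fixed projection at $t=0$, so your bullet computation is already complete. The hypothesis $K_1=0$ makes $K_2$ the leading reduced jet, so that the reduced Gram starts at $t^4K_2^\dagger K_2/4$. It is not needed to derive the formula.

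The ``main obstacle'' paragraph should be dropped or rewritten, because its argument is wrong. Suppose you let the cokernel projection move, and set $F(t)=L(t)(P+t\mathcal B_1+\tfrac12t^2\mathcal B_2)$. Since $F(0)=0$, the only extra contribution to the second derivative of $\Pi_{\rm cok}(t)F(t)$ at $0$ is $2\dot\Pi_{\rm cok}(0)F'(0)$. Here $F'(0)=L_1P+M\mathcal B_1=\Pi_{\rm cok}L_1P=K_1$, so the extra term vanishes because $L_1P$ and $M\mathcal B_1$ cancel, not because $\dot\Pi_{\rm cok}L_1P$ is small.

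In fact the term you name does not vanish when $K_1=0$. Hold the high space fixed, so $\dot M=L_1Q$, and differentiate $\Pi_{\rm cok}=\mathbf 1-MM^+$. This gives $\dot\Pi_{\rm cok}y=-\Pi_{\rm cok}L_1M^+y$ for $y\in\operatorname{Ran}M$. Hence $\dot\Pi_{\rm cok}L_1P=-\Pi_{\rm cok}L_1M^+L_1P$, which has the same order and form as the quadratic term in $K_2$. Calling it ``already of second order'' cannot justify discarding it; the check you plan, differentiating $\mathbf 1-MM^+$, would expose this. What removes it is the compensating piece $\dot\Pi_{\rm cok}M\mathcal B_1=-\dot\Pi_{\rm cok}L_1P$, which holds when $K_1=0$. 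Step~3 is fine.
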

\begin{proof}
In the Moore-Penrose high gauge, the first-order high correction is $-M^+L_1P$; substitution in the second-order cokernel equation gives \eqref{eq:alena-pregram-kuranishi-two-jet}. Equation~\eqref{eq:alena-gram-two-jet-information} follows from $L_0P=0$.
\end{proof}

\begin{proposition}[Riesz-graph response and persistent-kernel closure]
\label{lem:alena-riesz-graph-observable-compression}
Let $O_b$ be an endomorphism of the retained domain and let $G_b:\operatorname{Ran}P\to\operatorname{Ran}Q$ be the local graph of $\operatorname{Ran}P_b$. With $W_G=(P+G)(1+G^\dagger G)^{-1/2}$,
\begin{equation} O_{\rm low}(b)=W_G^\dagger O_bW_G. \label{eq:alena-riesz-graph-observable-compression} \end{equation}
For a tangent direction $X$, $D_XG|_0=\mathcal B_X$ and
\begin{equation} D_XO_{\rm low}|_0=P(D_XO)|_0P+\mathcal B_X^\dagger QO_0P+PO_0Q\mathcal B_X. \label{eq:alena-riesz-graph-observable-first-jet} \end{equation}
If $P$ reduces $O_0$, the motion terms vanish and $R_1=PO_1P$; hence Riesz motion alone cannot generate the first finite response from a block-diagonal baseline, and a vanishing grade-one component of $PO_1P$ excludes that branch. Along one parameter write $G(t)=t\mathcal B+\frac12t^2\mathcal C+O(t^3)$ and $O(t)=O_0+tO_1+\frac12t^2O_2+O(t^3)$. Direct expansion gives
\begin{equation} \begin{aligned} O_{\rm low}''(0)={}&PO_2P+2\mathcal B^\dagger QO_1P+2PO_1Q\mathcal B+2\mathcal B^\dagger QO_0Q\mathcal B\\ &+\mathcal C^\dagger QO_0P+PO_0Q\mathcal C-\{\mathcal B^\dagger\mathcal B,PO_0P\}. \end{aligned} \label{eq:alena-riesz-graph-observable-second-jet} \end{equation}
If the exact kernel persists through second order, $K_1=K_2=0$ in \eqref{eq:alena-pregram-kuranishi-two-jet} and
\begin{equation} \mathcal B=-M^+L_1P,\qquad \mathcal C=-M^+\!\left(L_2P-2L_1M^+L_1P\right). \label{eq:alena-persistent-kernel-graph-jets} \end{equation}
Thus the first two compressed responses contain no independent graph-motion coefficient on this branch. If $O_0=c{\bf1}$ and $QO_1P=PO_1Q=0$, the graph terms in \eqref{eq:alena-riesz-graph-observable-second-jet} cancel and $O_{\rm low}''(0)=PO_2P$; a grade-two $PO_2P$ confined to $\mathbb CX^2$ then gives $Y_\perp=0$ after Rees subtraction.
\end{proposition}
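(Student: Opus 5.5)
The proof is a direct perturbative expansion of the graph isometry $W_G$ in powers of $G$, followed by substitution of the persistent-kernel jets from Proposition~\ref{prop:alena-pregram-kuranishi-two-jet}.

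\textbf{Step 1: identification of $D_XG$.} Write $P_b=W_GW_G^\dagger$ and note that $W_G$ is an isometry from $\operatorname{Ran}P$ onto $\operatorname{Ran}P_b$. This gives \eqref{eq:alena-riesz-graph-observable-compression}. Since $G$ takes values in $\operatorname{Hom}(\operatorname{Ran}P,\operatorname{Ran}Q)$, the first-order expansion is $P_b=P+G+G^\dagger+O(G^2)$. It follows that $Q(D_XP)P=D_XG$, and hence $D_XG|_0=\mathcal B_X$ by \eqref{eq:alena-microscopic-kuranishi-riesz-data}.

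\textbf{Step 2: first and second jets.} Expand $(1+G^\dagger G)^{-1/2}=P-\tfrac12G^\dagger G+O(G^4)$. Then
\[
W_G=P+G-\tfrac12G^\dagger G+O(t^3),
\]
with $G^\dagger G$ of order $t^2$. Inserting this and the expansion of $O(t)$ into $W_G^\dagger O W_G$ produces two things.
\begin{itemize}
\item The linear term is \eqref{eq:alena-riesz-graph-observable-first-jet}.
\item The quadratic coefficient, doubled, is \eqref{eq:alena-riesz-graph-observable-second-jet}. In it, the $-\tfrac12G^\dagger G$ pieces give the anticommutator $-\{\mathcal B^\dagger\mathcal B,PO_0P\}$, the cross terms $G^\dagger OG$ give $2\mathcal B^\dagger QO_0Q\mathcal B$, and the $\tfrac12t^2\mathcal C$ terms give the $\mathcal C$ contributions.
\end{itemize}

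\textbf{Step 3: the block-diagonal case.} If $P$ reduces $O_0$, then $QO_0P=PO_0Q=0$, so the motion terms in the first jet vanish and $R_1=PO_1P$. The exclusion statement is then immediate.

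\textbf{Step 4: persistent kernel.} If the kernel persists, $\operatorname{Ran}P(t)=\ker L(t)$ through second order. I would solve $L(t)(P+G(t))=0$ order by order.
\begin{itemize}
\item At first order, $L_0P=0$ together with $M=L_0Q$ gives $M\mathcal B=-L_1P$. This equation is solvable exactly when $K_1=0$, and then injectivity of $M$ gives $\mathcal B=-M^+L_1P$.
\item At second order, the coefficient equation is $\tfrac12M\mathcal C+L_1\mathcal B+\tfrac12L_2P=0$. Solvability is $K_2=0$. Substituting $\mathcal B$ gives $\mathcal C=-M^+(L_2P-2L_1M^+L_1P)$.
\end{itemize}
One point needs care. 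The graph gauge forces $\mathcal B,\mathcal C\in\operatorname{Ran}Q$, and this matches the Moore--Penrose choice because $M^+$ maps into the active high subspace. This is the same gauge as in the proof of Proposition~\ref{prop:alena-pregram-kuranishi-two-jet}, so \eqref{eq:alena-persistent-kernel-graph-jets} follows.

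\textbf{Step 5: $O_0=c{\bf1}$ with $O_1$ block-diagonal.} In this case $QO_0P=0$, and the $O_1$ cross terms vanish. The two remaining terms combine as
\[
2c\,\mathcal B^\dagger\mathcal B-c\{\mathcal B^\dagger\mathcal B,P\}=0,
\]
using $P\mathcal B^\dagger\mathcal B P=\mathcal B^\dagger\mathcal B$. Hence $O_{\rm low}''(0)=PO_2P$. If this lies in $\mathbb CX^2$, the Rees subtraction of the square line leaves $Y_\perp=0$.

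\textbf{Main obstacle.} The delicate part is the bookkeeping in the second-order expansion: tracking the $P/Q$ block placements, and confirming that the normalization factor contributes exactly the anticommutator with coefficient $-1$. Everything else is short.
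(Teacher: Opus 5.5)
Your proposal is correct and follows the paper's own argument. Both expand $(1+G^\dagger G)^{-1/2}={\bf1}-\tfrac12G^\dagger G+\dots$ to obtain the first and second compressed jets, and both solve the kernel equation $L(t)(P+G(t))=O(t^3)$ order by order, inverting through $M^+$ on the active high space. You simply write out the block bookkeeping and the $O_0=c{\bf1}$ cancellation that the paper leaves implicit.
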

\begin{proof}
Expanding $(1+G^\dagger G)^{-1/2}={\bf1}-t^2\mathcal B^\dagger\mathcal B/2+O(t^3)$ gives the first two response formulas. The kernel equation $L(t)(P+G(t))=O(t^3)$ gives \eqref{eq:alena-persistent-kernel-graph-jets} after projection to the active high space.
\end{proof}

\begin{proposition}[Finite first-response certificate]
\label{prop:alena-finite-first-response-certificate}
Let $R_1=D_XO_{\rm low}|_0$ be a marked family first response and put $\Pi_1^S(T)=\frac13\sum_{j=0}^2\omega^jS^jTS^{-j}$ and $a_1^2=\Tr(R_1^\dagger R_1)/3$. The two basis-independent gates are
\begin{equation} \mathfrak r_{\rm gr1}=\|R_1-\Pi_1^S(R_1)\|_{\rm HS},\qquad \mathfrak r_{\rm uni}=\|R_1^\dagger R_1-a_1^2{\bf1}\|_{\rm HS}. \label{eq:alena-first-response-gates} \end{equation}
If $a_1>0$ and both vanish, $X=R_1/a_1$ is unitary in $\mathcal A_1$, $X^3=(\det X){\bf1}$, and
\begin{equation} h_1:=\det X=\frac{\det R_1}{a_1^3}=\frac{\Tr(R_1^3)}{3a_1^3}. \label{eq:alena-first-response-holonomy} \end{equation}
Modulo unitary $\mathcal A_0$ conjugation, $h_1$ is the only continuous invariant of such a grade-one response.
\end{proposition}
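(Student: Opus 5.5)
The plan is to work entirely in the finite Heisenberg calculus of $\mathcal W_3\simeq M_3(\mathbb C)$, where the Weyl monomials $Z^aS^b$ form an orthogonal basis for the Hilbert-Schmidt pairing. First I would identify $\Pi_1^S$. Using $SZ S^{-1}=\omega^{-1}Z$, one checks that $S^jZ^aS^bS^{-j}=\omega^{-aj}Z^aS^b$. Averaging against $\omega^j$ then retains exactly the monomials with $a\equiv1$ (up to the sign convention for $\omega$), so $\Pi_1^S$ is the orthogonal projection onto $\mathcal A_1$. Hence $\mathfrak r_{\rm gr1}=0$ is equivalent to $R_1\in\mathcal A_1$. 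Both gates are manifestly independent of the basis: $\Pi_1^S$ is defined by conjugation by the marked shift, and the unitarity residual is built from a Hilbert-Schmidt norm and a trace.

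Next I would show that, when $a_1>0$ and both gates vanish, $X:=R_1/a_1$ satisfies $X^\dagger X={\bf1}$ and therefore is unitary in finite dimension. Its membership in $\mathcal A_1$ is inherited from $R_1$. For the cube relation, the grading $\mathcal A_a\mathcal A_b\subset\mathcal A_{a+b}$ gives $X^3\in\mathcal A_0=\mathbb C[S]$. It remains to show $X^3$ is scalar. I would write $X=ZD$ with $D\in\mathbb C[S]$, which is possible since $\mathcal A_1=Z\,\mathbb C[S]$. Commuting $D$ through $Z$ twists $D(S)$ to $D(\omega^{\pm1}S)$, so
\[
X^3=Z^3\,D(\omega^2S)\,D(\omega S)\,D(S)
\]
up to the ordering convention. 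In the Fourier frame of $S$, with eigenvalues $\omega^k$, this product is $\prod_k d_k$ on every eigenvector. So $X^3$ is scalar, and taking determinants gives $X^3=(\det X){\bf1}$. The formula $h_1=\det R_1/a_1^3$ follows from homogeneity. The trace formula comes from $\Tr(X^3)=3\det X$ together with $\Tr R_1^3=a_1^3\Tr X^3$.

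The main obstacle is the final uniqueness statement, that $h_1$ is the only continuous invariant modulo unitary $\mathcal A_0$-conjugation. I read the conjugating group as the unitary group of $\mathcal A_0=\mathbb C[S]$. This group is diagonal in the Fourier frame, where $X=ZD$ becomes a weighted cyclic shift with unimodular weights $d_0,d_1,d_2$.
\begin{itemize}
\item Conjugation by $\operatorname{diag}(u_k)$ changes each weight by a ratio $u_{k+1}/u_k$.
\item This preserves exactly the product $d_0d_1d_2$, which is $\det X$ up to the sign of the cyclic permutation.
\item The ratios can be chosen to match any two weight triples with the same product.
\end{itemize}
So orbits of unitary grade-one responses are classified by $h_1\in U(1)$, and any continuous conjugation-invariant function factors through $h_1$. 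The scale $a_1$ is fixed by the normalization. I would check carefully whether the paper intends $a_1$ to count as a separate invariant of $R_1$ or regards it as already exported. I would also verify the sign conventions relating $\det X$ to the weight product.
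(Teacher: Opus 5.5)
Your proposal is correct and follows the route the paper relies on. The paper gives no separate proof here, but it uses the same two ingredients: $\Pi_1^S$ is the $\operatorname{Ad}_S$-character projection onto $\mathcal A_1$, and a unitary $X\in\mathcal A_1$ reduces under unitary $\mathcal A_0$ conjugation to the normal form $\zeta Z$ with $\zeta^3=\det X$ (Theorem~\ref{thm:bl-intrinsic-rees-picard-geometry}). Your weight-ratio argument in the Fourier frame is exactly a proof of that normal form, and $a_1$ is, as you suspected, the separately exported scale.

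One step needs tightening. Taking determinants of $X^3=c{\bf1}$ only gives $c^3=(\det X)^3$, which leaves a $\mu_3$ ambiguity. Instead note that $\det Z=\omega^{0+1+2}=1$, since the Fourier-frame shift is an even $3$-cycle. Hence $\det X=\det D=\prod_k d_k=c$ exactly, and your sign worry disappears. Alternatively, $\Tr X=\Tr X^2=0$ together with Cayley--Hamilton gives $X^3=(\det X){\bf1}$ for every $X\in\mathcal A_1$.
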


The Moore-Penrose splitting and Proposition~\ref{prop:alena-pregram-kuranishi-two-jet} are the stationary pre-Gram and spectral forms of the same low/high elimination. In the associated Lyapunov-Schmidt chart write the reduced map as $K_{\rm red}(t)=\sum_{m\geq1}t^mK_m/m!$. On an analytic high-gap branch, persistence of the full initial kernel rank is equivalent to $K_{\rm red}(t)\equiv0$; if $K_m$ is the first nonzero reduced jet, the reduced Gram begins as $t^{2m}K_m^\dagger K_m/(m!)^2+O(t^{2m+1})$, so the lifted low eigenvalues start at order $t^{2m}$ on the nonzero spectral support of $K_m^\dagger K_m$. CP-sensitive and general Takagi data protected downstream therefore retain their pre-Hermitian representative before passage to $N_+$.

\begin{theorem}[Equivariant metric Gram-Riesz descent]
\label{thm:alena-equivariant-metric-gram-riesz-descent}
Let a marking-preserving frame transition satisfy
\begin{equation}
L_{b'}U_g=V_gL_b.
\label{eq:alena-microscopic-jacobian-equivariance}
\end{equation}
For a common isolated Riesz contour,
\begin{equation}
N_+(b')=U_gN_+(b)U_g^{-1},\qquad P_{b'}=U_gP_bU_g^{-1},\qquad G_H(b')=U_gG_H(b)U_g^{-1}.
\label{eq:alena-equivariant-normal-riesz-high-data}
\end{equation}
With compatible Hermitian covariant derivatives,
\begin{equation}
\mathcal J_{g_*X}=V_g\mathcal J_XU_g^{-1},\qquad \mathcal B_{g_*X}=U_g\mathcal B_XU_g^{-1}.
\label{eq:alena-equivariant-kuranishi-riesz-data}
\end{equation}
Hence the reduced quadratic form, projected connection, rank, spectrum, gap, trace invariants, and determinant holonomy descend up to marking-preserving unitary conjugation.
\end{theorem}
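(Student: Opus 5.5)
The plan is to derive everything from the single intertwining relation \eqref{eq:alena-microscopic-jacobian-equivariance}. First I will transport it to the Gram family. Taking adjoints gives $U_g^{-1}L_{b'}^\dagger=L_b^\dagger V_g^{-1}$. Here $U_g$ and $V_g$ are unitary for the retained-domain Hermitian structure and for the residual metric $G_{\rm mic}$; this is what marking-preserving and equivariance of the metric as in Lemma~\ref{lem:alena-equivariant-residual-metric-exact-kernel} will supply. Then
\[
N_+(b')=L_{b'}^\dagger L_{b'}=U_gL_b^\dagger V_g^{-1}V_gL_bU_g^{-1}=U_gN_+(b)U_g^{-1}.
\]
Since the resolvents satisfy $(z-N_+(b'))^{-1}=U_g(z-N_+(b))^{-1}U_g^{-1}$ for $z$ on the common contour, the Riesz integral gives $P_{b'}=U_gP_bU_g^{-1}$, and hence also $Q_{b'}=U_gQ_bU_g^{-1}$.

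Next I will handle the high data. Put $M_{b'}=L_{b'}Q_{b'}=V_gM_bU_g^{-1}$. This gives $G_H(b')=U_gG_H(b)U_g^{-1}$. Because $M^+=G_H^{-1}M^\dagger$, it also gives $M_{b'}^+=U_gM_b^+V_g^{-1}$. Consequently $\Pi_{\rm ran}$ and $\Pi_{\rm cok}$ are conjugated by $V_g$.

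For the first-order data in \eqref{eq:alena-equivariant-kuranishi-riesz-data}, I will differentiate the intertwining relation along $X$ using covariant derivatives compatible with $U_g$ and $V_g$. Compatibility means these derivatives commute with the frame action, so $(\partial_{g_*X}L)_{b'}=V_g(\partial_XL)_bU_g^{-1}$. It follows that $A_{g_*X}=V_gA_XU_g^{-1}$. Substituting into \eqref{eq:alena-microscopic-kuranishi-riesz-data} gives both identities, $\mathcal J_{g_*X}=\Pi_{\rm cok}'V_gA_XU_g^{-1}=V_g\mathcal J_XU_g^{-1}$ and $\mathcal B_{g_*X}=-U_gM^+A_XU_g^{-1}$.

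\emph{Main obstacle.} The step I expect to be delicate is precisely this compatibility: an $X$-dependent frame transition contributes an extra term $(\partial_XV_g)L_b$ and similar terms. I would treat those as connection terms absorbed into the covariant derivative. The term $(\partial U_g)P$ is projected out by $\Pi_{\rm cok}$ on the $\mathcal J$ side. On the $\mathcal B$ side it contributes only a gauge term of the form $[\,\cdot\,,P]$, and compatibility is assumed to kill it.

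Finally, the descent list will follow from invariance under unitary conjugation. The reduced form $\mathcal J^\dagger\mathcal J+\mathcal B^\dagger G_H\mathcal B$ from \eqref{eq:alena-microscopic-gram-riesz-identity} conjugates by $U_g$, since the $V_g$ factors cancel. The projected connection $P\,dP$ conjugates up to the same gauge term. Rank, spectrum, gap, traces, and $\det$ of the Wilson-type product are unitary invariants. Determinant holonomy changes only by $\det U_g\det U_g^{-1}=1$ around a closed loop.
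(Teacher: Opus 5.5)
Your proposal is correct and takes the same route as the paper. The paper's proof only cites unitary covariance of $L_b$, functional calculus for the Riesz projection on the common contour, and functoriality of the Moore--Penrose range/cokernel projections, noting the metric-equivariance link to Lemma~\ref{lem:alena-equivariant-residual-metric-exact-kernel}; you carry out these conjugation computations explicitly, and you correctly identify that the compatibility hypothesis is exactly what removes the gauge term $Q(\partial_XU_g)U_g^{-1}P$ from $\mathcal B_X$, a point the paper leaves implicit.
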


This follows from unitary covariance of $L_b$, functional calculus for the Riesz projection, and functoriality of the Moore-Penrose range/cokernel projections. The channelwise scalar statement is the multiplicity-one case of Lemma~\ref{lem:alena-equivariant-residual-metric-exact-kernel}; $\mathcal J_X$ controls stationary obstruction and $\mathcal B_X$ controls projected bundle motion.

\begin{definition}[Common Clifford-Picard high-gap completion]
\label{def:bl-common-clifford-picard-high-gap}
Let $D_{\rm ad}=P_{\mathfrak a_W}+2P_{\mathfrak a_C}$. Using the structural dictionary \eqref{eq:pgl-structural-integer-dictionary}, the normalized high-gap operator is
\begin{equation}
\widehat\Omega_{\rm adj}=\frac{1}{k_Yd_F}P_{\mathfrak a_W}+\frac{1}{k_Y}P_{\mathfrak a_C}=\frac3{80}P_{\mathfrak a_W}+\frac35P_{\mathfrak a_C}.
\label{eq:bl-common-adjoint-gap-operator}
\end{equation}
The completed high branch imposes
\begin{equation}
\left.G_H\right|_{\mathcal H_{H,{\rm adj}}}=M_H^2{\bf1}_{\rm cen}\otimes{\bf1}_{\rm par}\otimes\widehat\Omega_{\rm adj}^2,
\qquad
\mathscr D_{H,{\rm adj}}^F=M_H{\bf1}_{\rm cen}\otimes J_{\rm par}\otimes\widehat\Omega_{\rm adj},
\label{eq:bl-common-adjoint-dirac-high-block}
\end{equation}
where $J_{\rm par}$ exchanges the matched parity/chirality lines of Proposition~\ref{prop:bl-parity-paired-odd-high-lift}. The fermionic threshold subbranch additionally requires the standard four-dimensional Dirac kinetic lift. The absolute normalization, kinetic lift, and common scale $M_H$ remain completed data.
\end{definition}
Any unitary polarization action compatible with automorphisms of the reconstructed $\mathfrak{su}(2)\oplus\mathfrak{su}(3)$ preserves the two nonisomorphic simple ideals and therefore commutes with $\widehat\Omega_{\rm adj}$ and the high resolvent. Hence a dependence of the form $B_b=BR_b$ alone is Schur-invisible on this common high branch. A nonzero polarization discriminator requires a polarization-dependent coupling or denominator, an additional high sector, or interference of several paths.

For a principal source Hessian $H^{\rm pr}=(H_{ab})$ and faithful coefficient maps, put
\begin{equation}
\widehat H_{ab}=\iota_{F,a}H_{ab}\rho_{F,b}.
\label{eq:alena-faithful-principal-hessian-image}
\end{equation}
At the algebraic fidelity level, $\iota_{F,a}\mapsto\lambda_a\iota_{F,a}$ and $\rho_{F,a}\mapsto\lambda_a^{-1}\rho_{F,a}$ preserve $\rho_{F,a}\iota_{F,a}=\operatorname{id}_{V_a}$ and send $\widehat H_{ab}\mapsto(\lambda_a/\lambda_b)\widehat H_{ab}$. If the exported operator is required to remain self-adjoint in a fixed downstream Hermitian structure and its $V_1\leftrightarrow V_2$ block is nonzero, conjugation by $T=\lambda_1P_{V_1}+\lambda_2P_{V_2}$ is self-adjoint only when $|\lambda_1|=|\lambda_2|$. Positive fidelity rescalings therefore leave only a common scale; a complex relative phase is unitary presentation data. If $\rho_{F,a}\iota_{F,a}=\operatorname{id}_{V_a}$, Schur functoriality gives $\widehat H_{22}^{\rm red}=\iota_{F,2}H_{22}^{\rm red}\rho_{F,2}$.

\begin{definition}[Marked microscopic Schur realization]
\label{def:alena-marked-microscopic-schur-descent}
A microscopic family has marked Schur realization when its principal source Hessian is exported blockwise by \eqref{eq:alena-faithful-principal-hessian-image}; the Euler assignment and local fidelity are already fixed by \eqref{eq:alena-moment-degree-intertwining}, \eqref{eq:fsr-charge-fidelity-defect}, and Theorem~\ref{thm:pgl-rank-five-support}.
\end{definition}

\begin{proposition}[Marked current-Codazzi support chain]
\label{prop:alena-marked-codazzi-support-chain}
Under marked Schur realization and Proposition~\ref{prop:alena-principal-multiplier-schur-incidence}, the principal support descends as
\begin{equation}
V_2|V_1|V_2\longmapsto C|W_\sigma|C,\qquad 3|2|3.
\label{eq:alena-marked-codazzi-support-chain}
\end{equation}
The reverse weak block is the Hermitian adjoint of the oriented leg and introduces no second incidence.
\end{proposition}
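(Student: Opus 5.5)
The statement to prove is Proposition~\ref{prop:alena-marked-codazzi-support-chain}. The plan is to read the chain $V_2|V_1|V_2$ directly off the source-Hodge sequence \eqref{eq:alena-source-hodge-sequence}, and then export each slot blockwise through \eqref{eq:alena-faithful-principal-hessian-image}.

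\textbf{Source-level chain.} In Proposition~\ref{prop:alena-principal-multiplier-schur-incidence}, the combined symbol map $\Phi_{\xi,B_0}(b,\eta)=\mathscr A_\xi b+\mathscr M_{B_0}\eta$ has domain $V_2\oplus V_1$. Its two images overlap only in the line spanned by $\mathscr A_\xi B_0=\mathscr M_{B_0}\xi$. So the principal incidence consists of three pieces:
\begin{itemize}
\item a $V_2$ input (the Codazzi symbol $\mathscr A_\xi$ acting on $b$);
\item a $V_1$ connecting slot (the multiplier direction $\eta$, identified with the current charge $Q_1$ by Lemma~\ref{lem:alena-principal-codazzi-gauss-charges});
\item a $V_2$ output (the quotient $V_2/\mathbb RB_0$ reached after reduction modulo $\operatorname{Im}\mathscr M_{B_0}$, as in \eqref{eq:alena-principal-reduced-codazzi-rank}).
\end{itemize}
The Gram form $\Phi^\dagger G\Phi$ is then a block Hessian whose nonzero off-diagonal blocks are $V_2\leftrightarrow V_1$, and no direct $V_2\to V_2$ cross-incidence passes except through the shared line. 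This yields the pattern $V_2|V_1|V_2$.

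\textbf{Descent to the carrier.} Under marked Schur realization, each $H_{ab}$ is replaced by $\iota_{F,a}H_{ab}\rho_{F,b}$. By Lemma~\ref{lem:pgl-reconstructive-charge-sector-separation} and Theorem~\ref{thm:pgl-rank-five-support}, $\iota_{F,2}$ lands in the $p_2$ sector, which is supported on $C=E_3$. Likewise $\iota_{F,1}$ lands in $p_1$, supported on $W=E_2$. The Borel-Weil cutoff (Lemma~\ref{lem:pgl-marked-borel-weil-rank-bound}) and the multiplicity-one statement of Proposition~\ref{prop:fsr-finite-core-uniqueness} ensure that no other block can receive these images. Hence the chain becomes $C|W_\sigma|C$ with ranks $3|2|3$. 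Here $W_\sigma$ is the oriented copy of $W$ fixed by the orientation marking of the $V_1$ charge, that is, by the sign branch $\sigma$ coming from Proposition~\ref{prop:alena-cross-block-isotropy-rigidity} and Corollary~\ref{cor:alena-two-level-constitutive-selector}. Proposition~\ref{prop:pgl-low-block-rigidity} guarantees that integer principal channels act block-diagonally, so the mixed $\operatorname{Hom}(E_2,E_3)$ data enter only through these two exported legs.

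\textbf{Reverse leg.} The reverse block $W\to C$ is handled by self-adjointness. The exported operator must be Hermitian in the fixed downstream structure. By the fidelity-rescaling discussion preceding Definition~\ref{def:alena-marked-microscopic-schur-descent}, only a common scale $|\lambda_1|=|\lambda_2|$ survives, so $\widehat H_{12}=\widehat H_{21}^\dagger$. In addition, $\ker\Phi_{\xi,B_0}$ is the single line $\mathbb R(B_0,-\xi)$, and this supports only one incidence. Consequently the reverse weak block is the adjoint of the oriented leg and contributes no independent second incidence.

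\textbf{Main obstacle.} The step I expect to resist most is justifying that the output $V_2$ slot descends to the same block $C$ rather than to a second copy. I would argue this from degree fidelity \eqref{eq:pgl-jet-borel-weil-degree-residual}: the quotient $V_2/\mathbb RB_0$ still has Euler degree two by \eqref{eq:alena-moment-degree-intertwining}, so it must land in the unique degree-two sector, and multiplicity one then forces the two $C$ occurrences to be one block.
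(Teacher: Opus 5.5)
Your proposal is correct and follows the argument the paper leaves implicit around Definition~\ref{def:alena-marked-microscopic-schur-descent}. The paper works with the Gram Hessian of $\Phi_{\xi,B_0}$ on $V_2\oplus V_1$, eliminates the multiplier $V_1$ slot (the reduction modulo $\operatorname{Im}\mathscr M_{B_0}$), exports blockwise through $\iota_{F,a},\rho_{F,a}$ into the $p_2$ sector $C$ and the $p_1$ sector $W$, and uses self-adjointness of the exported operator, which leaves only a common fidelity scale, for the reverse leg.

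Your ``main obstacle'' is settled directly by the Schur functoriality $\widehat H_{22}^{\rm red}=\iota_{F,2}H_{22}^{\rm red}\rho_{F,2}$: both ends pass through the same pair $(\iota_{F,2},\rho_{F,2})$, so no degree-fidelity argument is needed. Two of your appeals are also unnecessary: $\ker\Phi_{\xi,B_0}=\mathbb R(B_0,-\xi)$ is the presentation-rescaling direction rather than an incidence count, and $\eta$ need not be identified with $Q_1$.
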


\begin{proposition}[Determinant-centered weak-leg connection and global principal lift]
\label{prop:alena-determinant-centered-weak-leg-connection}
For the phase coordinate $\vartheta=\vartheta_\Gamma$ with relative projected rate $r_\vartheta=a_W-a_C$, Proposition~\ref{prop:pgl-two-block-degree-calculus} gives the infinitesimal normalized rates
\begin{equation} (a_C,a_W)=r_\vartheta\left(-\frac25,\frac35\right). \label{eq:alena-determinant-centered-phase-weights} \end{equation}
For a global principal $U(1)$ with primitive angle $t$, determinant preservation requires $3a_C+2a_W=0$; up to common sign its primitive integer solution is $(-2,3)$. The relative character has weight five, so $\vartheta_\Gamma=5t$ and the minimal common character system is
\begin{equation} \rho_C(e^{it})=e^{-2it}{\bf1}_3,\qquad \rho_W(e^{it})=e^{3it}{\bf1}_2,\qquad X_{\rm fam}(t)\sim e^{5it}Z,\qquad \vartheta_\Gamma=5t. \label{eq:alena-global-principal-235-lift} \end{equation}
The carrier determinant is one and a family first-response representative carrying this character obeys $\det X=e^{3i\vartheta_\Gamma}$. For the anti-Hermitian projected coefficient,
\begin{equation} r_\vartheta=\frac{i}{6}\operatorname{Tr}_V\!\left(H_\Gamma^{\rm deg}A_\vartheta\right), \label{eq:alena-determinant-centered-rate-extraction} \end{equation}
since $\operatorname{Tr}_V((H_\Gamma^{\rm deg})^2)=30$. With $A_\vartheta^{\rm cen}=-(ir_\vartheta/5)H_\Gamma^{\rm deg}$ and $\Delta_\vartheta=\|A_\vartheta-A_\vartheta^{\rm cen}\|_{\rm HS}$, the centered unit branch is equivalently $r_\vartheta=1$ and $\|A_\vartheta\|_{\rm HS}^2=6/5$. If $F_C,F_W,F_F$ are the marked central curvatures, a common $(-2,3,5)$ connection must satisfy
\begin{equation}5F_C+2F_F=0,\qquad5F_W-3F_F=0.\label{eq:alena-common-235-curvature-gate}\end{equation}
On the boundary sphere these curvature residuals are also sufficient after the marked central transports are required to be integral line-bundle connections with a common primitive Chern class; their periods are then $(-2n,3n,5n)$. The determinant-centered $C/W$ relation is dependent. A fixed family phase leaves only a $\mu_5$ lift differing by a common central scalar on the rank-five carrier, hence invisible to local conjugation-valued gates.
\end{proposition}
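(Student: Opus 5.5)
The plan is to treat the statement as finite linear algebra on the marked centered carrier, using Proposition~\ref{prop:pgl-two-block-degree-calculus} and the centered generator $H_\Gamma^{\rm deg}=2P_C-3P_W$ of \eqref{eq:pgl-centered-borel-weil-degree}. Any determinant-centered infinitesimal phase acting by scalars $a_C,a_W$ on the blocks satisfies $\operatorname{Tr}_V=3a_C+2a_W=0$. Solving this together with $a_W-a_C=r_\vartheta$ gives $a_C=-2r_\vartheta/5$ and $a_W=3r_\vartheta/5$, which is \eqref{eq:alena-determinant-centered-phase-weights}. The same linear condition, imposed on integer weights of a global $U(1)$, has primitive solution $\pm(-2,3)$ because $\gcd(2,3)=1$. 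The relative character $C\to W$ then has weight $3-(-2)=5$, so $\vartheta_\Gamma=5t$.

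The family representative $X_{\rm fam}\sim e^{5it}Z$ comes from the weak-leg incidence of Proposition~\ref{prop:alena-marked-codazzi-support-chain}, read in the finite first-response frame of Proposition~\ref{prop:alena-finite-first-response-certificate}. The carrier determinant is $e^{-6it}e^{6it}=1$. For the family determinant, I will use $\det Z=1$ on the three-dimensional clock representation, since its eigenvalues are $1,\omega,\omega^2$. This gives $\det X=e^{15it}=e^{3i\vartheta_\Gamma}$, consistent with $X^3=(\det X)\mathbf 1$.

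The rate extraction is a projection onto the centered line. First, $\operatorname{Tr}_V((H^{\rm deg})^2)=3\cdot4+2\cdot9=30$, which agrees with $N(c+w)$. Next, for $A_\vartheta$ the anti-Hermitian projected coefficient, its component along the line is $\operatorname{Tr}(H A_\vartheta)/30\cdot H$. On the centered branch $A^{\rm cen}_\vartheta=i\,\mathrm{diag}(a_C,a_W)=-(ir_\vartheta/5)H$, and pairing with $H$ returns $r_\vartheta=\tfrac{i}{6}\operatorname{Tr}(HA_\vartheta)$, up to the sign convention to be checked. The HS-norm statement follows directly: $\|A^{\rm cen}\|^2=r_\vartheta^2\cdot30/25=6/5$ at $r_\vartheta=1$, and equality with the full norm is equivalent to $\Delta_\vartheta=0$, by Pythagoras for the orthogonal decomposition.

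For the curvature gate, a common $(-2,3,5)$ connection has central curvatures proportional to its weights, $F_C=-2F$, $F_W=3F$ and $F_F=5F$. Eliminating $F$ gives \eqref{eq:alena-common-235-curvature-gate}, and the $C/W$ relation $3F_C+2F_W=0$ is then a linear consequence.

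Sufficiency on $S^2$ is the main obstacle. Vanishing of these curvature residuals fixes the curvature forms only up to the common form $F$. I would argue that integral line-bundle connections are classified on $S^2$ by their Chern classes together with a flat part, and the flat part is trivial because $H^1(S^2)=0$. The residuals then force the periods to be proportional to $(-2,3,5)$, and the primitivity of the common Chern class makes them $(-2n,3n,5n)$ with $n\in\mathbb Z$. Finally, fixing the family phase $\vartheta_\Gamma=5t$ determines $t$ only modulo $2\pi/5$. The ambiguity $t\mapsto t+2\pi k/5$ multiplies $\rho_C$ and $\rho_W$ by $\zeta^{-2}$ and $\zeta^3$ with $\zeta^5=1$. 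These two factors coincide, since $\zeta^{-2}=\zeta^3$, so the ambiguity is a common central scalar on $V$ and is invisible to conjugation.
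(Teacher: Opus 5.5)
Your proposal is correct and follows essentially the route the paper intends. The paper gives no separate proof, only the inline justifications through Proposition~\ref{prop:pgl-two-block-degree-calculus}, $\operatorname{Tr}_V((H_\Gamma^{\rm deg})^2)=30$, and integrality on $S^2$, and your sign check closes as required, since $\tfrac{i}{6}\operatorname{Tr}_V\bigl(H_\Gamma^{\rm deg}\cdot(-\tfrac{ir_\vartheta}{5})H_\Gamma^{\rm deg}\bigr)=r_\vartheta$. In the sufficiency step, state that a unitary connection on $S^2$ is determined up to gauge by its curvature form, not just its Chern class, because the flat part vanishes when $H^1(S^2)=0$; integrality with $5c_C=-2c_F$ already forces $c_F=5n$, so $F_F/5$ is the curvature of a connection on $\mathcal O(n)$ whose $(-2,3,5)$ tensor powers recover the three marked transports.
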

The microscopic component $\mathcal B_\vartheta$ tests the common lift after torsor-compatible calibration. On a constant-rank exact zero-mode branch, Lemma~\ref{lem:alena-equivariant-residual-metric-exact-kernel} makes the projected rate independent of the target residual metric. The oriented outer-$C$ coefficient obeys
\begin{equation} \partial_\vartheta\arg c_{ZS^2}^{\rm neu}=-\frac25r_\vartheta. \label{eq:alena-outer-c-takagi-phase-attachment} \end{equation}
On the physical common-lift branch $r_\vartheta=1$ is therefore fixed together with the determinant character.
\subsection{Alena-Codazzi realization theorem}
\label{subsec:stability-realization-theorem}

\begin{definition}[Layered Alena-Codazzi realization criterion]
\label{def:alena-primitive-realization-class}
Write $\mathfrak R_{\rm AC}=(\mathfrak R_{\rm src},\mathfrak R_{\rm tr},\mathfrak R_{\rm sp})$. The source layer $\mathfrak R_{\rm src}$ consists of \eqref{eq:alena-realization-residual-lagrangian}, \eqref{eq:alena-realization-current}, \eqref{eq:alena-realization-amplitude-nondegeneracy}, the split conservation law \eqref{eq:alena-realization-split-conservation}, Proposition~\ref{prop:alena-realization-current-codazzi-closure}, the resolved compact-leaf and one-core thin-limit hypotheses, $mM_0q_{\rm tr}\neq0$, and relative Gauss locality with the two boundary charges fixed. The transport layer $\mathfrak R_{\rm tr}$ is torsor-compatible boundary transport with closure of \eqref{eq:pgl-torsor-closure-residual}. The spectral layer $\mathfrak R_{\rm sp}$ consists of an open Callias-Schur gap with subcritical low-high coupling, the covariance \eqref{eq:alena-microscopic-jacobian-equivariance}, marked Schur realization, and a non-degenerate frozen principal multiplier coefficient.
\end{definition}

\begin{theorem}[Alena-Codazzi realization of the primitive graded defect]
\label{thm:alena-realization-primitive-gauss-local-data}
If $\mathfrak R_{\rm src}$ holds, the limiting collar supplies a primitive graded defect datum of Definition~\ref{def:pgl-primitive-gauss-local-defect}, realizes entry A and the relative Gauss-local part of entry B, and preserves the source Euler grading by \eqref{eq:alena-moment-degree-intertwining}. After faithful marked visibility in entry B, minimal reconstruction gives \eqref{eq:pgl-rank-five-carrier} with degree fidelity. If $\mathfrak R_{\rm tr}$ also holds, the transported charge pair realizes the boundary Hodge-spectral module \eqref{eq:pgl-hodge-spectral-torsor-module}; the flat subbranch is \eqref{eq:pgl-torsor-flat-residual}. If $\mathfrak R_{\rm sp}$ holds, Theorem~\ref{thm:alena-equivariant-metric-gram-riesz-descent} supplies the isolated low bundle and its descended Riesz data, while Proposition~\ref{prop:alena-marked-codazzi-support-chain} gives the support chain \eqref{eq:alena-marked-codazzi-support-chain}.
\end{theorem}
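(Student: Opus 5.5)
The proof is an assembly argument, organized by the three layers of Definition~\ref{def:alena-primitive-realization-class}. Each later layer is used only after the earlier one has fixed its data, following the lexicographic order of Section~\ref{sec:filtered-self-reconstruction}.

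For the source layer $\mathfrak R_{\rm src}$, we start from Proposition~\ref{prop:alena-realization-compact-leaf-collar}, which gives the regularized one-core collar with nonzero gap \eqref{eq:alena-realization-warped-codazzi-gap}. The calibration following it, via Proposition~\ref{prop:alena-realization-current-codazzi-closure}, places this collar in the current-Codazzi class. Theorem~\ref{thm:alena-thin-core-current-residual-origin}, whose thin-limit hypotheses are part of $\mathfrak R_{\rm src}$, then yields the timelike worldline $\Gamma$, the projective link \eqref{eq:pgl-projective-link} and the atomic line \eqref{eq:pgl-primitive-line}. The moment map \eqref{eq:alena-moment-link-map} and the realization \eqref{eq:alena-two-jet-realization} supply the graded source $V_1[1]\oplus V_2[2]$. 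Lemma~\ref{lem:alena-principal-codazzi-gauss-charges} with $mM_0q_{\rm tr}\neq0$ gives two nonzero radius-independent charges, so $I(J)=\{1,2\}$. Together with the fixed boundary charges and relative Gauss locality, this verifies Definition~\ref{def:pgl-primitive-gauss-local-defect}, entry A, and the Gauss-local half of entry B. Lemma~\ref{lem:pgl-gauss-local-superselection} then makes the Casimir projections \eqref{eq:pgl-principal-casimir-projectors} central. Euler-grading preservation is exactly \eqref{eq:alena-moment-degree-intertwining}. Once faithful marked visibility is added, Lemma~\ref{lem:pgl-reconstructive-charge-sector-separation} and Theorem~\ref{thm:pgl-rank-five-support} give \eqref{eq:pgl-rank-five-carrier} with vanishing residual \eqref{eq:pgl-jet-borel-weil-degree-residual}.

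For the transport layer $\mathfrak R_{\rm tr}$, Definition~\ref{def:alena-torsor-compatible-boundary-calibration} provides unitary boundary frames (items (i)--(ii)). The conservation law \eqref{eq:alena-v1-current-flux-conservation} and the Green identity \eqref{eq:alena-curved-codazzi-green-identity} make these frames radius-independent, so the relative transport descends to edge unitaries $U_a$ on the torsor \eqref{eq:pgl-projective-color-torsor} (items (iii)--(iv)). This defines $d_\Gamma$ in \eqref{eq:pgl-torsor-differential}, and with it the module \eqref{eq:pgl-hodge-spectral-torsor-module}. Closure of \eqref{eq:pgl-torsor-closure-residual} makes the Wilson defect central, and the flat case is \eqref{eq:pgl-torsor-flat-residual}.

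For the spectral layer $\mathfrak R_{\rm sp}$, subcriticality of the low-high coupling keeps the Riesz projection isolated and of fixed rank, by Kato perturbation below half the Callias-Schur separation. The covariance \eqref{eq:alena-microscopic-jacobian-equivariance} feeds into Theorem~\ref{thm:alena-equivariant-metric-gram-riesz-descent}, which gives the descended $N_+$, $P$, $G_H$, $\mathcal J_X$ and $\mathcal B_X$. Marked Schur realization, together with the nondegenerate multiplier coefficient needed for Proposition~\ref{prop:alena-principal-multiplier-schur-incidence} with $B_0$ invertible, gives the support chain \eqref{eq:alena-marked-codazzi-support-chain} through Proposition~\ref{prop:alena-marked-codazzi-support-chain}.

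I expect the main obstacle to lie in the source step: checking that the limiting charges computed on the thin-core component are the charges entering Definition~\ref{def:pgl-relative-gauss-local-observable-algebra}. This requires the origin of $m$ to sit on the selected regular component, and the smooth background subtraction not to change the associated-graded charges. I would handle it by the remark after Lemma~\ref{lem:alena-principal-codazzi-gauss-charges} together with the statement that smooth frozen-background terms leave the singular graded charges and the gap unchanged.
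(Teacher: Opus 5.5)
Your proposal is correct and follows the paper's proof, which is the same three-layer assembly. For $\mathfrak R_{\rm src}$ it uses the current-Codazzi closure, the compact-leaf collar, thin-core extraction, the two-jet realization and the Gauss charges; for $\mathfrak R_{\rm tr}$, flux conservation, the curved Green identity and the torsor-admissibility definition; for $\mathfrak R_{\rm sp}$, Theorem~\ref{thm:alena-equivariant-metric-gram-riesz-descent} followed by Proposition~\ref{prop:alena-marked-codazzi-support-chain}. The links you spell out beyond the paper's terse proof are the ones the paper leaves to the surrounding text and the statement's own citations:
\begin{itemize}
\item centrality via the Gauss-local superselection lemma;
\item Kato stability of the Riesz projection under subcritical coupling;
\item invertibility of $B_0$ from the nondegenerate multiplier coefficient.
\end{itemize}
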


\begin{proof}
The source statement combines Proposition~\ref{prop:alena-realization-current-codazzi-closure}, Proposition~\ref{prop:alena-realization-compact-leaf-collar}, Theorem~\ref{thm:alena-thin-core-current-residual-origin}, \eqref{eq:alena-two-jet-realization}, and Lemma~\ref{lem:alena-principal-codazzi-gauss-charges}. The transport statement follows from \eqref{eq:alena-v1-current-flux-conservation}, \eqref{eq:alena-curved-codazzi-green-identity}, and Definition~\ref{def:alena-torsor-compatible-boundary-calibration}. The spectral statement is Theorem~\ref{thm:alena-equivariant-metric-gram-riesz-descent} followed by Proposition~\ref{prop:alena-marked-codazzi-support-chain}.
\end{proof}

The geometric realization supplies entry A, the relative Gauss-local part of entry B, the admissible finite transport, and the isolated Gram-Riesz bundle. Faithful central coefficient reconstruction remains the structural input in entry B. The finite conclusions are then those of Theorem~\ref{thm:pgl-primitive-graded-standard-model-reconstruction}; under marked Schur realization the principal multiplier Hessian exports the support chain \eqref{eq:alena-marked-codazzi-support-chain}. Completed matrix elements are evaluated only after this protected descent.

\subsection{Conditional low-sector attachment and Standard-Model channel structure}
\label{subsec:pgl-conditional-low-sector-attachment}

The finite objects above precede the choice of a low spectral representative. Physical attachment requires a boundary-admissible family, an isolated primitive window, exact finite-shadow covariance, and parity closure. On an isolated Borel-Weil block,
\begin{equation}
\nabla_{B,q}=P_q\,d\,P_q
\label{eq:bl-projected-connection}
\end{equation}
is the Berry-Wilczek-Zee connection \cite{Berry1984GeometricPhase} in the non-Abelian form of \cite{WilczekZee1984NonAbelianBerry}. Riesz stability is understood as in \cite{Kato1995}; the open normal comparison uses \cite{Callias1978} and \cite{Anghel1993Callias}. The equivariant Gram-Riesz descent of Theorem~\ref{thm:alena-equivariant-metric-gram-riesz-descent} makes the isolated projection, gap, projected curvature, and determinant holonomy basic on the protected quotient.\
~\\
For the odd completion the full graded Clifford module $\mathcal F_\Gamma:=\Lambda^\bullet V=F_\Gamma^{\rm even}\oplus F_\Gamma^{\rm odd}$ is retained before the locked projection; $F_\Gamma^{\rm even}$ in \eqref{eq:pgl-even-exterior-package} is the selected internal chiral package on the even endpoint. Let $\Gamma_V=(-1)^{N_C+N_W}$ act on $\mathcal F_\Gamma$ and put
\begin{equation}
\Gamma_{\rm lock}=\gamma^5_{\rm br}\otimes\Gamma_V.
\label{eq:bl-locked-chirality-exterior-parity}
\end{equation}
Let $\mathcal H_{\rm lock}^\pm$ be the two eigenspaces of $\Gamma_{\rm lock}$ and $P_\pm$ their projections. The completed finite odd operator is
\begin{equation}
Q_F=P_+\left(D_E+c(\Phi+\Phi^\dagger)+Q_{\rm hol}+Q_C\right)P_-.
\label{eq:bl-locked-dirac-operator}
\end{equation}
Odd insertions admit the superconnection organization of \cite{Quillen1985Superconnections}. On the retained $\varepsilon$ eigenspace the parity residual is
\begin{equation}
R_{\rm par}=\left.(S_2\otimes\Gamma_{\rm lock}-{\bf1})\right|_{\mathcal H_{\rm par}\otimes\mathcal H_{\rm lock}^\varepsilon}.
\label{eq:bl-weak-parity-closure-residual}
\end{equation}
Its closure selects one $\mu_2$ stabilizer character of the family module.

\begin{proposition}[Parity-paired odd high lift]
\label{prop:bl-parity-paired-odd-high-lift}
If the retained high lift has block-adjoint content $\mathfrak a_W\oplus\mathfrak a_C$ before parity compression, is odd for $\Gamma_{\rm lock}$, and preserves \eqref{eq:bl-weak-parity-closure-residual}, then
\begin{equation}
\mathcal H_{H,{\rm adj}}\simeq\mathcal H_{\rm cen}\otimes\mathbb C^2_{\rm par-lock}\otimes\left(\mathfrak a_W\oplus\mathfrak a_C\right),
\label{eq:bl-parity-paired-adjoint-high-carrier}
\end{equation}
with $\mathfrak a_W=\operatorname{End}_0(W)$ and $\mathfrak a_C=\operatorname{End}_0(C)$. Thus the three points of the family torsor \eqref{eq:pgl-projective-color-torsor} give three parity-paired Dirac-form adjoint lines in each channel.
\end{proposition}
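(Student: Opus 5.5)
The plan is to build the high carrier in three tensor factors and then check that the hypotheses leave no other choice. The starting point is the hypothesis that the retained high lift has block-adjoint content $\mathfrak a_W\oplus\mathfrak a_C$ before parity compression. So the internal factor is fixed as the adjoint of the reconstructed Levi, that is $\operatorname{End}_0(W)\oplus\operatorname{End}_0(C)$. These are the traceless endomorphism algebras of the two blocks of \eqref{eq:pgl-rank-five-carrier}, and they are exactly the adjoint carriers already named in Definition~\ref{def:bl-common-clifford-picard-high-gap}.

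\textbf{Family factor.} The high lift is attached after Theorem~\ref{thm:pgl-primitive-graded-standard-model-reconstruction} has fixed the family torsor, so it carries a representation of the family module. By \eqref{eq:pgl-central-response-carrier}, the central response carrier is $\ell^2(E[2]^\times)\simeq\mathbb C[\mathbb Z_3]$. Proposition~\ref{prop:fsr-finite-core-uniqueness} (multiplicity one on each labeled simple block, with unsourced copies deleted) then makes this factor appear exactly once. This produces the tensor factor $\mathcal H_{\rm cen}$, and one line per torsor point.

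\textbf{Parity factor.} Here I would use the oddness hypothesis. Because the lift is odd for $\Gamma_{\rm lock}$, it maps $\mathcal H_{\rm lock}^+$ to $\mathcal H_{\rm lock}^-$ and back. Its support therefore meets both eigenspaces equally, and each adjoint line comes with a partner of opposite locked chirality. Preserving \eqref{eq:bl-weak-parity-closure-residual} means the lift commutes with $S_2\otimes\Gamma_{\rm lock}$ on the retained $\varepsilon$ eigenspace. The consequence is that flipping $\Gamma_{\rm lock}$ forces a compensating flip of the $S_2$ character. On each family line the pair is therefore a two-dimensional space $\mathbb C^2_{\rm par-lock}$ spanned by the matched parity/chirality lines. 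The exchange operator $J_{\rm par}$ is then the off-diagonal odd map between them.

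\textbf{Minimality.} Commutant closure \eqref{eq:fsr-commutant-closure} together with Proposition~\ref{prop:fsr-finite-core-uniqueness} removes any extra multiplicity in the pair. Multiplying out the three factors gives \eqref{eq:bl-parity-paired-adjoint-high-carrier}. Counting then gives, for each channel $\mathfrak a_W$ and $\mathfrak a_C$, three torsor points each carrying one parity-paired Dirac-form pair. These are the three parity-paired Dirac-form adjoint lines in each channel claimed in the statement.

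\textbf{Main obstacle.} The delicate step is the parity factor. I must show that oddness plus the residual \eqref{eq:bl-weak-parity-closure-residual} gives exactly a rank-two pairing that is tensor-split from the family and adjoint factors. Two things could go wrong: a twisted pairing that mixes torsor points, or multiplicity greater than one. My first attempt is to note that $S_2$ commutes with the Coxeter shift $S_3$ by \eqref{eq:pgl-zsix-carrier-factorization}, and that $\Gamma_{\rm lock}$ acts trivially on the family and adjoint labels. If that holds, the pairing is diagonal in both the family and adjoint labels, and the multiplicity-one argument of Proposition~\ref{prop:fsr-finite-core-uniqueness} finishes the proof.
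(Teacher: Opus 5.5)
Your argument is correct and matches the reasoning the paper leaves implicit, since it gives no explicit proof of this proposition. That reasoning has three steps: the two-linearization envelope splits as $\mathbb C[\mathbb Z_3]\otimes\mathbb C[\mathbb Z_2]$ by \eqref{eq:pgl-zsix-carrier-factorization}, oddness for $\Gamma_{\rm lock}$ together with invariance under $S_2\otimes\Gamma_{\rm lock}$ forces the pairing $(\varepsilon,\varepsilon)\leftrightarrow(-\varepsilon,-\varepsilon)$ of parity character with locked chirality, and Proposition~\ref{prop:fsr-finite-core-uniqueness} removes extra multiplicity. Three small repairs are needed: equal rank on the two locked chiralities comes from invertibility of the gapped high block, not from oddness alone; the residual should be read as invariance under $S_2\otimes\Gamma_{\rm lock}$ on the unrestricted space, because an odd lift leaves $\mathcal H^\varepsilon_{\rm lock}$; and the tensor splitting is a statement about the carrier, so you need not show that the lift is diagonal in torsor labels.
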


Higher integer principal moments have zero direct compression by Proposition~\ref{prop:pgl-low-block-rigidity}; applying the reduced operator \eqref{eq:fsr-feshbach-reduction} therefore starts them at second order.
\begin{proposition}[Schur suppression of higher integer moments]
\label{prop:bl-higher-moment-schur-suppression}
Let $P$ be an isolated projection of $K_0=K_0^\dagger$, put $Q={\bf1}-P$, and assume that $QK_0Q$ is invertible. For $T_\ell=T_\ell^\dagger$ with $PT_\ell P=0$ and $\ell\geq3$, set $K(\varepsilon)=K_0+\varepsilon T_\ell$ and $g_{\rm CSch}=\|(QK_0Q)^{-1}\|^{-1}$. If $\|T_\ell\|<g_{\rm CSch}/2$ and $|\varepsilon|\leq1$, then
\begin{equation}
\delta K_{\rm eff}^{(\ell)}(\varepsilon)=-\varepsilon^2PT_\ell Q(QK_0Q)^{-1}QT_\ell P+O\!\left(\frac{|\varepsilon|^3\|T_\ell\|^3}{g_{\rm CSch}^2}\right),\qquad \|\delta K_{\rm eff}^{(\ell)}\|=O\!\left(\frac{|\varepsilon|^2\|T_\ell\|^2}{g_{\rm CSch}}\right).
\label{eq:bl-higher-moment-schur-suppression}
\end{equation}
\end{proposition}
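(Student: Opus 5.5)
The plan is to apply the Feshbach reduction \eqref{eq:fsr-feshbach-reduction} to $K(\varepsilon)=K_0+\varepsilon T_\ell$ and expand the inner resolvent in a Neumann series. Because $P$ is an isolated spectral projection of the self-adjoint $K_0$, it commutes with $K_0$. Hence $PK_0Q=QK_0P=0$, and the off-diagonal blocks of $K(\varepsilon)$ are exactly $\varepsilon PT_\ell Q$ and $\varepsilon QT_\ell P$. The hypothesis $PT_\ell P=0$, which Proposition~\ref{prop:pgl-low-block-rigidity} guarantees for $\ell\geq3$, kills the direct compression. We therefore obtain
\begin{equation*}
\mathfrak F_P(K(\varepsilon))=PK_0P-\varepsilon^2PT_\ell Q\bigl(QK_0Q+\varepsilon QT_\ell Q\bigr)^{-1}QT_\ell P,
\end{equation*}
and $\delta K_{\rm eff}^{(\ell)}(\varepsilon)$ is defined as the difference $\mathfrak F_P(K(\varepsilon))-PK_0P$.

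Next I would establish invertibility and a resolvent bound. Write $A=QK_0Q$, restricted to $\operatorname{Ran}Q$, so that $\|A^{-1}\|=g_{\rm CSch}^{-1}$, and write $B=\varepsilon QT_\ell Q$. Then $\|B\|\leq|\varepsilon|\|T_\ell\|<g_{\rm CSch}/2$, which gives $\|A^{-1}B\|<1/2$. Consequently $A+B=A({\bf1}+A^{-1}B)$ is invertible, with $(A+B)^{-1}=\sum_{k\geq0}(-A^{-1}B)^kA^{-1}$ and $\|(A+B)^{-1}\|\leq 2/g_{\rm CSch}$. This already gives the crude estimate $\|\delta K_{\rm eff}^{(\ell)}\|\leq 2\varepsilon^2\|T_\ell\|^2/g_{\rm CSch}$, which is the second claimed bound.

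For the leading term, I would isolate the $k=0$ term of the series, namely $-\varepsilon^2PT_\ell QA^{-1}QT_\ell P$. The tail is $(A+B)^{-1}-A^{-1}=-(A+B)^{-1}BA^{-1}$, and its norm is at most $(2/g_{\rm CSch})\,|\varepsilon|\|T_\ell\|\,(1/g_{\rm CSch})$. Sandwiching this between the two factors $\varepsilon PT_\ell Q$ and $\varepsilon QT_\ell P$ multiplies it by $\varepsilon^2\|T_\ell\|^2$. The result is a remainder bounded by $2|\varepsilon|^3\|T_\ell\|^3/g_{\rm CSch}^2$, which is the stated $O$-term.

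The only delicate point is interpretive rather than computational. One must check that the $Q$-block inverse is taken on $\operatorname{Ran}Q$, and that $g_{\rm CSch}$ bounds the perturbed block uniformly; the second-resolvent identity above handles the latter. The condition $|\varepsilon|\leq1$ is used only to ensure $|\varepsilon|\|T_\ell\|<g_{\rm CSch}/2$.
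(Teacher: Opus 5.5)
Your proposal is correct and follows the route the paper indicates: $P$ commutes with $K_0$ and $PT_\ell P=0$, so the Feshbach map \eqref{eq:fsr-feshbach-reduction} of $K(\varepsilon)$ has no first-order term, and your bound $\|(QK(\varepsilon)Q)^{-1}\|\le 2/g_{\rm CSch}$ combined with the second resolvent identity gives both stated estimates, with explicit constant $2$. The paper offers only the one-line remark that Feshbach reduction starts these moments at second order, so your argument is a complete, more explicit version of the same proof.
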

For $\ell=3$, let $T_3^{\rm cat}$ be the adjoint form of the top multiplication $R_{\Gamma,2}\otimes R_{\Gamma,4}\to R_{\Gamma,6}$, viewed as a map from $V_3\simeq R_{\Gamma,6}$ to $\operatorname{Hom}(C,R_{\Gamma,4})$. For a marked microscopic $V_3$ perturbation, let $\mathcal T_3^{\rm mic}:V_3\to\operatorname{Hom}(C,R_{\Gamma,4})$ denote its $C\to R_{\Gamma,4}$ low-high component. Since $C\simeq V_1$ and $R_{\Gamma,4}\simeq V_2$, Clebsch-Gordan gives
\begin{equation}
\dim\operatorname{Hom}_{SU(2)}\!\left(V_3,\operatorname{Hom}(C,R_{\Gamma,4})\right)=1,\qquad \mathcal T_3^{\rm mic}=a_3T_3^{\rm cat}.
\label{eq:pgl-vthree-categorical-microscopic-bridge}
\end{equation}
The scalar $a_3$ is not fixed by the structural reconstruction. With a fixed Hilbert-Schmidt normalization, its microscopic line test is $a_{3,*}=\langle T_3^{\rm cat},\mathcal T_3^{\rm mic}\rangle/\|T_3^{\rm cat}\|^2$ and $\Delta_3^2=\|\mathcal T_3^{\rm mic}\|^2-|\langle T_3^{\rm cat},\mathcal T_3^{\rm mic}\rangle|^2/\|T_3^{\rm cat}\|^2$; $\Delta_3=0$ is exactly the protected-line condition. More generally, if a retained one-dimensional intertwiner line is protected only under $m\sim\lambda m$ with $\lambda\in\mathbb C^\times$ and this scale is not downstream observable, the protected descent canonically fixes only the projective class $[m]$; a nonzero linear normalization requires an additional norm, unitary frame, determinant trivialization, or oriented primitive integral generator. Residual-metric rescaling realizes this ambiguity on the microscopic $V_3$ block while leaving its equivariant line, kernel, and protected carrier unchanged. The symmetric-power metric gives $7/5$ on the high $V_2$ partial trace and $7/3$ on the low $C$ trace; the Codazzi quotient and low Feshbach contraction therefore do not export $7/5$ as a physical $12$ numerator.\
Let $P_{\rm prim}$ be the isolated primitive locked projection before the finite-shadow action is attached and write $\operatorname{Ran}P_{\rm prim}\simeq\mathcal H_{\rm ch}\otimes M_{\rm prim}$. Finite-core closure gives
\begin{equation}
\dim M_{\rm prim}=1.
\label{eq:bl-primitive-rank-one-low-block}
\end{equation}
For the pre-torsor representative $K_{\rm pre}$, exact finite-shadow covariance is
\begin{equation}
R_{\rm cov}=\left([p_x,B],[S_6,B],[p_x,K_{\rm pre}],[S_6,K_{\rm pre}]\right)_{\substack{x\in\mathcal T_\Gamma^{(6)}\\B\in\mathfrak A_{\rm ch}}}.
\label{eq:pgl-finite-shadow-covariance-residual}
\end{equation}

\begin{proposition}[Locked central low-sector factorization]
\label{prop:bl-locked-central-factorization}
Assume exact finite-shadow covariance, \eqref{eq:bl-weak-parity-closure-residual}, and a finite-core primitive projection isolated by $\Delta_{\rm CSch}$. Then
\begin{equation}
\mathfrak A_{\rm low}^{(0)}=\mathfrak A_{\rm ch}\otimes P_\varepsilon\mathcal A_6P_\varepsilon,
\qquad
\operatorname{Ran}P_{\rm low}^{(0)}\simeq\mathcal H_{\rm ch}\otimes\mathcal H_{\rm cen},
\label{eq:bl-generated-locked-low-algebra}
\end{equation}
In the fixed pre-torsor frame, the family matrix units are generated by the point projections $p_x$ and the shift $S_6$. Their commutant is scalar on the six-cycle factor, so the restriction of $K_{\rm pre}$ to the isolated primitive channel has the form $K_{\rm prim}\otimes{\bf1}_6$ for a unique primitive operator $K_{\rm prim}$. After parity compression, functional calculus gives $P_{\rm low}^{(0)}=P_{\rm prim}\otimes{\bf1}_{\mathcal H_{\rm cen}}$. A central perturbation below half the isolating gap preserves
\begin{equation}
\rank P_{\rm low}=\rank P_{\rm prim}\,\dim\mathcal H_{\rm cen}.
\label{eq:bl-low-rank-factorization}
\end{equation}
\end{proposition}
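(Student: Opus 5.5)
The plan is to treat the statement as a finite-dimensional double-commutant computation followed by Riesz stability, in three steps.

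\emph{Step one: the locked low algebra.} Start from the isolated primitive channel $\operatorname{Ran}P_{\rm prim}\simeq\mathcal H_{\rm ch}\otimes M_{\rm prim}$, where $\dim M_{\rm prim}=1$ by \eqref{eq:bl-primitive-rank-one-low-block}. Exact finite-shadow covariance $R_{\rm cov}=0$ in \eqref{eq:pgl-finite-shadow-covariance-residual} says that every $B\in\mathfrak A_{\rm ch}$ and the pre-torsor representative $K_{\rm pre}$ commute with the point projections $p_x$ ($x\in\mathcal T_\Gamma^{(6)}$) and with $S_6$. The operators $p_x$ and $S_6$ generate all matrix units of $\ell^2(\mathcal T_\Gamma^{(6)})$: indeed $S_6^kp_x$ sends $\delta_x$ to $\delta_{x+k}$. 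Hence they generate $\mathcal A_6\simeq M_6(\mathbb C)$ of Proposition~\ref{prop:pgl-finite-core-full-shadow-representation}, and its commutant on the six-cycle factor is $\mathbb C{\bf1}_6$. The generated algebra is therefore $\mathfrak A_{\rm ch}\otimes\mathcal A_6$, and any operator commuting with $\mathcal A_6$ has the form $X\otimes{\bf1}_6$. Applied to $K_{\rm pre}$ restricted to the primitive channel, this gives $K_{\rm prim}\otimes{\bf1}_6$. Uniqueness of $K_{\rm prim}$ follows because $X\mapsto X\otimes{\bf1}_6$ is injective.

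\emph{Step two: parity compression.} The parity residual \eqref{eq:bl-weak-parity-closure-residual} vanishes on the retained eigenspace. This forces $S_2$ to act as $\varepsilon$ there, since $\Gamma_{\rm lock}$ acts as a fixed sign on $\mathcal H^\varepsilon_{\rm lock}$. The retained family factor is therefore $\operatorname{Ran}P_\varepsilon$ with $P_\varepsilon=({\bf1}+\varepsilon S_2)/2$. Using \eqref{eq:pgl-parity-character-corner}, compression replaces $\mathcal A_6$ by $P_\varepsilon\mathcal A_6P_\varepsilon\simeq M_3(\mathbb C)$ and $\mathcal H^{(6)}_\Gamma$ by $\mathcal H_{\rm fam}^{(\varepsilon)}\simeq\mathcal H_{\rm cen}$, the latter by \eqref{eq:pgl-central-response-carrier}. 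This yields \eqref{eq:bl-generated-locked-low-algebra}. The compressed operator is then $K_{\rm prim}\otimes{\bf1}_{\mathcal H_{\rm cen}}$. The Riesz projection of a tensor product with the identity is the tensor product of the Riesz projection with the identity, since the resolvent factorizes. Hence $P^{(0)}_{\rm low}=P_{\rm prim}\otimes{\bf1}_{\mathcal H_{\rm cen}}$, with rank $\rank P_{\rm prim}\cdot3$.

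\emph{Step three: the central perturbation.} Take the perturbation central, hence commuting with the family factor, and of norm below $\Delta_{\rm CSch}/2$. No eigenvalue then crosses the Riesz contour drawn at half the gap, in the standard sense of \cite{Kato1995}. The Riesz projection varies continuously in norm, so its rank is constant along the linear path, which gives \eqref{eq:bl-low-rank-factorization}. The same bound is used in Proposition~\ref{prop:alena-projector-first-selector-certificate} and in the persistence remark of Subsection~\ref{subsec:alena-torsor-compatible-boundary-calibration}.

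The main obstacle is the compatibility between parity compression and the tensor factorization in step two. We must check that $S_2$, and hence $P_\varepsilon$, commutes with $K_{\rm prim}\otimes{\bf1}$, and that the locked chirality factor does not mix with the chiral factor. I would settle this first by noting two facts. First, $S_2=S_6^3$ lies in $\mathcal A_6$, so step one already makes it commute with every operator of the form $X\otimes{\bf1}_6$. Second, $\Gamma_{\rm lock}$ acts by a fixed sign on $\mathcal H^\varepsilon_{\rm lock}$. Given these, the compression is just restriction to $\mathcal H_{\rm ch}\otimes\operatorname{Ran}P_\varepsilon$.
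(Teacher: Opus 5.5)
Your proposal is correct and takes essentially the paper's approach. The paper's proof only cites the parity corner \eqref{eq:pgl-parity-character-corner}, exact covariance, and finite-core uniqueness, and you spell out the same three ingredients: the scalar commutant of the matrix units $S_6^kp_x$, compression by $P_\varepsilon=({\bf1}+\varepsilon S_2)/2$ with $S_2=S_6^3\in\mathcal A_6$, and multiplicity one via $\dim M_{\rm prim}=1$ and $\mathcal H_{\rm fam}^{(\varepsilon)}\simeq\mathcal H_{\rm cen}$. One point to state explicitly: the half-gap rank persistence in your step three needs the isolated operator to be self-adjoint or normal, as in the paper's Gram setting, because for a non-normal operator a perturbation smaller than half the spectral gap can still push eigenvalues across the contour.
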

The factorization follows from the parity corner \eqref{eq:pgl-parity-character-corner}, exact covariance, and finite-core uniqueness.

\begin{corollary}[Weak-parity closure]
\label{cor:bl-weak-parity-rank-obstruction}
The retained parity factor is one-dimensional. Under Proposition~\ref{prop:bl-locked-central-factorization}, each simple primitive channel therefore has
\begin{equation}
\rank P_{{\rm low},{\rm ch}}=3.
\label{eq:bl-rank-three-torsor-low-cluster}
\end{equation}
\label{cor:bl-rank-three-torsor-low-cluster}
\end{corollary}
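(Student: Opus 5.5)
The statement to prove is Corollary~\ref{cor:bl-weak-parity-rank-obstruction}. The plan is to combine three ingredients: the parity corner \eqref{eq:pgl-parity-character-corner}, the rank factorization \eqref{eq:bl-low-rank-factorization} of Proposition~\ref{prop:bl-locked-central-factorization}, and the primitive multiplicity statement \eqref{eq:bl-primitive-rank-one-low-block}.

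\emph{Step 1: the retained parity factor is one-dimensional.} Closure of the parity residual \eqref{eq:bl-weak-parity-closure-residual} means that $S_2\otimes\Gamma_{\rm lock}$ acts as ${\bf1}$ on the retained space. On a fixed $\Gamma_{\rm lock}$ eigenspace $\mathcal H_{\rm lock}^\varepsilon$ this forces $S_2=\varepsilon$. By \eqref{eq:pgl-zsix-carrier-factorization}, $S_2$ generates the $\mathbb C[\mathbb Z_2]$ factor of $\mathcal H_\Gamma^{(6)}$. Its $\varepsilon$-eigenspace in that factor is therefore the range of $P_\varepsilon=({\bf1}+\varepsilon S_2)/2$ restricted to $\mathbb C[\mathbb Z_2]$, which is a line. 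The complementary character is excluded because its parity Laplacian eigenvalue is $4\nu_2>0$. Consequently $P_\varepsilon\mathcal H_\Gamma^{(6)}\simeq\mathbb C[\mathbb Z_3]\otimes\mathbb C\simeq\mathcal H_{\rm fam}^{(\varepsilon)}$, whose dimension is $3$ by \eqref{eq:pgl-induced-family-module}. Via \eqref{eq:pgl-central-response-carrier} this identifies $\dim\mathcal H_{\rm cen}=3$.

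\emph{Step 2: a simple primitive channel has rank one.} Let $\mathcal H_{\rm ch}$ be a single simple summand, so that $\dim\mathcal H_{\rm ch}=1$ after the charge algebra $\mathfrak A_{\rm ch}$ is split into its minimal central projections. By \eqref{eq:bl-primitive-rank-one-low-block} the multiplicity is $\dim M_{\rm prim}=1$. Hence the restriction of $P_{\rm prim}$ to that channel has rank $1$. This step uses Proposition~\ref{prop:fsr-finite-core-uniqueness}: faithful commutant closure gives multiplicity one on each labeled block.

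\emph{Step 3: combine.} Substituting into \eqref{eq:bl-low-rank-factorization} gives
\[
\rank P_{{\rm low},{\rm ch}}=1\cdot\dim\mathcal H_{\rm cen}=3.
\]
Stability under central perturbations below half the isolating gap is already part of Proposition~\ref{prop:bl-locked-central-factorization}, so the rank persists off the unperturbed point.

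The main obstacle is Step 1, specifically excluding a residual two-dimensional parity factor $\mathbb C^2_{\rm par}$ inside the low cluster. Proposition~\ref{prop:bl-parity-paired-odd-high-lift} shows that the two parity lines do survive in the \emph{high} lift. So I must argue that the locked projection $P_\pm$, together with residual closure, compresses the low sector onto $P_\varepsilon\mathcal A_6P_\varepsilon$ and not onto all of $\mathcal A_6$. The first thing to try is to note that $S_2$ commutes with $K_{\rm pre}$ by \eqref{eq:pgl-finite-shadow-covariance-residual}, since $S_2=S_6^3$. Then $P_\varepsilon$ commutes with the isolated low projection, and functional calculus yields $P_{\rm low}^{(0)}P_\varepsilon=P_{\rm low}^{(0)}$ on the retained eigenspace.
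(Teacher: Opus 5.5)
Your proposal is correct and follows the paper's route. Closure of \eqref{eq:bl-weak-parity-closure-residual} forces $S_2=\varepsilon$, so the family factor is the rank-three parity corner $P_\varepsilon\mathcal H_\Gamma^{(6)}\simeq\mathcal H_{\rm fam}^{(\varepsilon)}\simeq\mathcal H_{\rm cen}$ of \eqref{eq:pgl-parity-character-corner}, and the rank factorization \eqref{eq:bl-low-rank-factorization} together with $\dim M_{\rm prim}=1$ from \eqref{eq:bl-primitive-rank-one-low-block} gives $1\cdot 3=3$. The obstacle you flag is already handled, since Proposition~\ref{prop:bl-locked-central-factorization} assumes the parity closure and outputs the compressed corner $P_\varepsilon\mathcal A_6P_\varepsilon$. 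One correction to Step~2: the rank in \eqref{eq:bl-rank-three-torsor-low-cluster} counts the family multiplicity of a simple channel, so the input you need is $\dim M_{\rm prim}=1$. The claim $\dim\mathcal H_{\rm ch}=1$ is not justified, because a minimal central projection of $\mathfrak A_{\rm ch}$ cuts out a full simple block, which need not be one-dimensional.
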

On the exact locked branch with a primitive line factor, $E_{\rm low}\simeq E_{\rm prim}\otimes\mathbb C^3$ and $\det E_{\rm low}\simeq E_{\rm prim}^{\otimes3}$. If edge unitaries are written $U_a=h_aW_a$ with $\det W_a=1$, determinant data determine the scalar cycle only through $(h_2h_1h_0)^3$. A determinant-only reconstruction therefore retains a $\mu_3$ ambiguity unless the primitive line and its orientation are kept.\\
~\\
The statement concerns the rank and spectral transport class of the isolated bundle; its determinant-line reading is represented by \cite{BismutFreed1986EllipticFamiliesII}. Let $E_+\to S_\Gamma^2$ be a rank-three positive asymptotic eigenbundle of a normal Dirac-Callias lift. Boundary reduction \cite{Callias1978} in the geometric form of \cite{Anghel1993Callias} gives the index from $c_1(E_+)$. Smooth complex rank-three bundles over $S^2$ are classified by their clutching degree, and the determinant map induces the corresponding isomorphism on $\pi_1$. Hence the primitive determinant condition gives
\begin{equation}
\det E_+\simeq L_\Gamma^{\otimes3},\qquad L_\Gamma\simeq\mathcal O(1)
\quad\Longrightarrow\quad
E_+\simeq_{\rm smooth}L_\Gamma^{\oplus3},\qquad [E_+]=3[L_\Gamma],\qquad |\operatorname{Ind}D_{\rm Cal}|=3.
\label{eq:bl-callias-determinant-line-index}
\end{equation}
The smooth isomorphism is noncanonical and does not fix the connection or finite-shadow action. A $C_3$-equivariant comparison is $E_+^{\rm reg}=L_\Gamma\otimes\mathbb C[C_3]$, realizable by the standard Hopf/hedgehog asymptotic mass; it has $\operatorname{Ind}_{C_3}D_{\rm Cal}^{\rm model}=[1]\oplus[\chi]\oplus[\chi^2]$ and ordinary index three. The microscopic target is therefore the regular equivariant index, together with a common-base gapped Riesz-Callias identification. An invertible $C_3$-equivariant heavy neutral denominator preserves this rank and character content.
\begin{corollary}[No additional light family on the isolated branch]
\label{cor:bl-no-additional-light-family}
Under the fixed Callias-Schur stratum and an invertible denominator in \eqref{eq:bl-neutral-schur-complement}, the rank-three active factor is unchanged by Schur elimination.
\end{corollary}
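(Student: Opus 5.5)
The plan is to treat the corollary as a perturbative consequence of Proposition~\ref{prop:bl-locked-central-factorization} and Corollary~\ref{cor:bl-weak-parity-rank-obstruction}: Schur elimination is a spectral-isolation statement, and rank is a locally constant invariant of isolated Riesz projections. I would work in the following order.

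\textbf{Step 1: the eliminated block is exactly invertible.} Write the full operator on the fixed Callias-Schur stratum in the splitting $P+Q={\bf1}$. Here $P$ projects onto the active rank-three factor, so $\rank P=3$ by \eqref{eq:bl-rank-three-torsor-low-cluster}. The projection $Q$ covers the heavy neutral sector together with the active high subspace. The hypothesis of an invertible denominator in \eqref{eq:bl-neutral-schur-complement} makes $QKQ$ invertible. The Callias-Schur gap makes $G_H$ invertible on the high part. Hence the Feshbach map \eqref{eq:fsr-feshbach-reduction} is defined exactly, not just perturbatively.

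\textbf{Step 2: isospectrality of the Feshbach map.} I would invoke the standard property that, for $z$ in a neighbourhood of the isolated low window with $Q(K-z)Q$ invertible, one has
\[
\ker(K-z)\simeq\ker\mathfrak F_P(K-z).
\]
The isomorphism is $\psi\mapsto P\psi$, with inverse $\phi\mapsto\phi-(Q(K-z)Q)^{-1}QKP\phi$. Consequently the eliminated sector contributes no eigenvalue inside the low contour. The rank of the Riesz projection of $K$ on that contour therefore equals the rank of the corresponding reduced problem on $\operatorname{Ran}P$. That rank is at most $\rank P=3$.

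\textbf{Step 3: equivariance and the lower bound.} By exact finite-shadow covariance, $C_3$-equivariance of the denominator, and the factorization \eqref{eq:bl-low-rank-factorization}, the reduced operator commutes with the retained family action. It therefore has the form $K^{\rm red}_{\rm prim}\otimes{\bf1}_{\mathcal H_{\rm cen}}$, in the same way as the commutant argument of Proposition~\ref{prop:bl-locked-central-factorization}. The three-dimensional factor $\mathcal H_{\rm cen}$ thus survives intact, with $\rank P_{\rm prim}=1$ by \eqref{eq:bl-primitive-rank-one-low-block}. Combined with the upper bound of Step~2, this gives rank exactly three. The regular character content $[1]\oplus[\chi]\oplus[\chi^2]$ of the Callias index \eqref{eq:bl-callias-determinant-line-index} is preserved as well.

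\textbf{Step 4: stability along the stratum.} Take the subcritical low-high coupling, of norm below half the separation, and apply the Kato argument already used for \eqref{eq:bl-low-rank-factorization}. This shows that the Riesz projection varies continuously along the stratum, so its rank stays constant.

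\textbf{Main obstacle.} The delicate point is the energy dependence of the Schur complement. The genuine eigenvalue problem is nonlinear in $z$, while the corollary speaks of the linear reduced operator at a fixed reference point. I would bound $\partial_z\mathfrak F_P$ by $\|PKQ\|^2/g^2$ via \eqref{eq:bl-higher-moment-schur-suppression}. This bound shows that the nonlinear eigenvalue branches stay inside the low contour. The count of these branches then agrees with the count at the reference point by a homotopy $z$-independent argument, for example a Rouché or argument-principle count on $\det\mathfrak F_P(K-z)$.
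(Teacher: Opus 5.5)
Your Step~3 fails, and it is the step you use to close the count. The commutant argument of Proposition~\ref{prop:bl-locked-central-factorization} forces scalarity on the family factor only because the \emph{pre-torsor} representative $K_{\rm pre}$ is assumed to commute with both the point projections $p_x$ and the shift $S_6$, that is, with all family matrix units, as in \eqref{eq:pgl-finite-shadow-covariance-residual}. The Schur-reduced neutral operator $K_\nu^{\rm eff}=A_L-D_\nu^TH^{-1}D_\nu$ is built from completed family blocks that do not satisfy this. A $C_3$-equivariant denominator such as $R_N^{\rm circ}$ commutes with $S$ but not with the $p_x$, so it is at best circulant. The Dirac bridge $D_\nu=m_D\Sigma_\nu U_\nu^\dagger$ is generically not even circulant. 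If your conclusion $K^{\rm red}=K^{\rm red}_{\rm prim}\otimes{\bf1}_{\mathcal H_{\rm cen}}$ with $\rank P_{\rm prim}=1$ held, the three light states would be exactly degenerate with undetermined mixing. That contradicts the distinct Takagi values in \eqref{eq:bl-determinant-suppressed-neutral-takagi-factorization}, on which the PMNS branch rests. Your remark on preserved character content has the same defect: it needs $C_3$-equivariance of $D_\nu$ and $A_L$, which is not a hypothesis. The regular character content belongs to the carrier, not to the reduced operator.

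You also use the wrong spectral tools for this block. The neutral block $K_N$ is complex symmetric, so its light states are small Takagi (singular) values, not eigenvalues near zero. Neither a Riesz eigenvalue count nor the Kato half-gap rule applies to a non-normal matrix: $\left(\begin{smallmatrix}1&i\\ i&-1\end{smallmatrix}\right)$ is nilpotent but has singular values $0$ and $2$. To keep your route, run Steps~2 and~4 on the Hermitian dilation $\left(\begin{smallmatrix}0&K_N\\ K_N^\dagger&0\end{smallmatrix}\right)$, whose zero-energy Feshbach reduction is exactly the dilation of $K_\nu^{\rm eff}$. A Weyl-type comparison with the decoupled limit $D_\nu=0$ under subcritical coupling then gives exactly three light singular values, and Step~3 can be dropped. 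Separately, invertibility of $H$ and of $G_H$ does not by itself make your combined $QKQ$ invertible; you must also control the coupling between those two sectors.

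The paper's own justification is the one-line remark before the corollary, and it needs much less. The corollary concerns the carrier $\operatorname{Ran}P_{\rm low}\simeq\mathcal H_{\rm ch}\otimes\mathcal H_{\rm cen}$. Its rank three is fixed by Proposition~\ref{prop:bl-locked-central-factorization} and Corollary~\ref{cor:bl-weak-parity-rank-obstruction} before any completed family block enters. Schur elimination with invertible $H=M_RR_N$ then produces $K_\nu^{\rm eff}$ as an operator on that same factor. Via $y=-H^{-1}D_\nu x$ one has $\ker K_N\simeq\ker K_\nu^{\rm eff}$, and $\det K_N=\det H\,\det K_\nu^{\rm eff}$ as in \eqref{eq:bl-neutral-schur-complement}, so the eliminated sector adds no light mode. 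A $C_3$-equivariant denominator leaves the factor's regular character content untouched. The fixed Callias-Schur stratum keeps the rank constant as in \eqref{eq:bl-low-rank-factorization}. No scalarity of the reduced operator is used, and none is available.
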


The exterior bidegree $(N_C,N_W)$ gives a compact incidence calculus for the remaining finite channels. Elementary $C,C^\ast,W,W^\ast$ insertions are the linear Clifford-odd arrows; Majorana and determinant-contact classes are the corresponding even composite incidences.
\begin{table}[!htbp]
\caption{Exterior-incidence classes of the separated low channels.}
\label{tab:bl-odd-bridge-classification}
\label{tab:bl-channel-filtration}
\centering
\small
\setlength{\tabcolsep}{3pt}
\renewcommand{\arraystretch}{1.14}
\begin{tabular}{p{0.18\textwidth}p{0.22\textwidth}p{0.22\textwidth}p{0.30\textwidth}}
\toprule
Class & Incidence & Charges & Reading\\
\midrule
color odd & $C\oplus C^\ast$ & $\Delta Y=\mp1/3$ & colored odd channel\\
weak direct & $W\oplus W^\ast$ & $\Delta Y=\pm1/2$ & quark, charged-lepton, and neutral Dirac blocks\\
Majorana & neutral singlet pair or two weak factors & $\Delta(B-L)=\pm2$ & heavy denominator or active correction\\
doubled top form & determinant class $2(3,2)$ & \shortstack[l]{$\Delta Y=0$\\$\Delta(B-L)=0$} & separated contact sector\\
\bottomrule
\end{tabular}
\end{table}
The four-fermion baryon-violating components of the doubled top-form class are the $QQQL$ and $u^cu^cd^ce^c$ types and satisfy
\begin{equation}
\Delta B=\Delta L=\pm1.
\label{eq:bl-top-form-baryon-lepton-change}
\end{equation}
Thus $\Delta B=2$ is outside this dimension-six class. The unique color-singlet, $B-L$-preserving linear odd class is $W\oplus W^\ast$; a second independent weak-doublet channel would require an additional protected source or downstream datum by Proposition~\ref{prop:fsr-no-unsourced-multiplicity}. The standard Dirac/Majorana/top-form distinction is represented by \cite{BaezHuerta2010GUTAlgebra}.\
~\\
At this stage the carrier, global representation, structural rank-three family module, parity linearization, and channel grading are separated. Physical rank-three attachment remains conditional on covariance and the isolated gap; matrix elements of \eqref{eq:bl-locked-dirac-operator}, singular values, neutral denominator, and contact coefficients remain completed data.
The conditional low-sector factorization above supplies the analytic attachment of the parity-linearized module of Theorem~\ref{thm:pgl-primitive-graded-standard-model-reconstruction}; Table~\ref{tab:bl-channel-filtration} records the corresponding structural incidences. The status separation is summarized below.

The structural/conditional separation above is retained throughout the transferred problem; the consolidated output-status and falsification ledger is Table~\ref{tab:pgl-derived-conditional-completed-outputs}.

The realization above supplies the geometric and source data in entry A and the relative Gauss-local part of entry B, together with the conditional transported and isolated low data used in the physical attachment. The preserved moment grading is identified with the projective carrier level by Theorem~\ref{thm:pgl-rank-five-support}. The transferred spectral and quantitative problem is treated in Section~\ref{sec:bl-filtered-schur-completion}.

\section{Schur-Feshbach Low-Energy Response, Quantitative Branches, and Falsification}
\label{sec:bl-filtered-schur-completion}
\label{sec:quantitative-branch-tests}

The primitive carrier, exterior module, family torsor, and canonical two-linearization order-six envelope are fixed in Section~\ref{sec:primitive-gauss-local-self-reconstruction}. Section~\ref{sec:alena-codazzi-realization} supplies the current-Codazzi realization, the microscopic Gram family \eqref{eq:alena-microscopic-gram-family}, and the conditional physical family factor. The isolated splitting supplies the Schur-Feshbach data used by Lemma~\ref{lem:fsr-penalty-descent}; all low/high eliminations below are its operator-theoretic specializations. A homological-transfer organization is available only on branches carrying the additional deformation-retract data stated there. The ordered interface \eqref{eq:fsr-schur-internal-filtration} and the $B-L$ and exterior-bidegree filtration of Table~\ref{tab:bl-channel-filtration} determine which transferred terms are retained.\\
~\\
Marked-Schur admissibility fixes the support before completed matrix coefficients are introduced. Equation~\eqref{eq:pgl-vthree-categorical-microscopic-bridge} fixes the $V_3$ low-high direction up to its reduced microscopic scalar. Unless a later step changes status explicitly, all transferred objects inherit the provenance of their inputs as recorded in Table~\ref{tab:bl-independent-quantitative-relations}.

\subsection{Completed low operator and protected Schur descent}
\label{subsec:completed-low-operator}
\label{subsec:weak-bridge-lock-tangents}
\label{subsec:contact-kuranishi-equation}

The low operator is \eqref{eq:bl-locked-dirac-operator} on the isolated module of Proposition~\ref{prop:bl-locked-central-factorization}. Lemma~\ref{lem:fsr-penalty-descent} supplies its high correction and higher-source reduction, while \eqref{eq:alena-microscopic-gram-riesz-identity} separates stationary obstruction from projected bundle motion. Masses are read from the resulting singular values, gaps, or reduced Hessians.\\
~\\
The finite Yukawa blocks are compressed matrix elements of the zero-order part of the locked operator:
\begin{equation}
(Y_{AB})_{ij}
=
\left\langle
\psi^A_i,
P_{\rm low}
\left(
c(\Phi+\Phi^\dagger)
+
Q_{\rm hol}
+
Q_C
\right)
P_{\rm low}
\psi^B_j
\right\rangle .
\label{eq:bl-yukawa-matrix-elements}
\end{equation}
The exterior package fixes the possible pairs $(A,B)$; the entries in \eqref{eq:bl-yukawa-matrix-elements} are completed-branch data. In a local orthonormal low-mode frame, \eqref{eq:bl-projected-connection} has matrix coefficients $A_{ij}=\langle\psi_i,d\psi_j\rangle$. These are the projected gauge coefficients of the selected carrier. The determinant reduction of the projected connection is the projection to \eqref{eq:pgl-local-anomaly-kernel}.\\
~\\
On the determinant-nondegenerate quark branch, the pre-Hermitian invariant is
\begin{equation}
\Xi_{\rm str}:=e^{i\theta_{\rm QCD}}\frac{\det(Y_uY_d)}{|\det(Y_uY_d)|}=e^{i\bar\theta}.
\label{eq:bl-strong-basic-character}
\end{equation}
Hermitian direct data do not reconstruct this phase: Rees-Picard variables use the relative basis response, while \eqref{eq:bl-strong-basic-character} depends on the sectorwise determinant sum.

\begin{proposition}[Generated strong determinant line]
\label{def:bl-closed-generated-determinant-completion}
Let $\mathcal L_{\rm str}^{\rm gen}$ be the determinant line generated by the completed quark family-index datum and its canonical differential-index transgressions, represented by \cite{FreedLott2010DifferentialKIndex}. An external flat translate gives $\mathcal L_{\rm str}^{(\alpha)}=\mathcal F_\alpha\otimes\mathcal L_{\rm str}^{\rm gen}$ and changes the reconstruction class. If $\mathcal L_{\rm str}^{\rm gen}$ has a connection-preserving real structure, its holonomy reduces to $\{\pm1\}$; the remaining sign is the mod-two determinant/Pfaffian orientation class. If it has a gauge-invariant parallel trivialization, Proposition~\ref{prop:pgl-determinant-descent} gives $\bar\theta=0\pmod{2\pi}$. The chirality orientation of the finite Clifford module does not by itself provide either global structure. The neutral Pfaffian package below is only an algebraic comparison and does not compute the color-family orientation sign.
\end{proposition}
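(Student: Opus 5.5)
The proposition bundles four claims. (a) An external flat translate changes the reconstruction class. (b) A connection-preserving real structure reduces the holonomy to $\{\pm1\}$. (c) A parallel gauge-invariant trivialization forces $\bar\theta=0\pmod{2\pi}$. (d) Neither the Clifford chirality orientation nor the neutral Pfaffian package supplies these global structures. I would prove them in that order, working throughout with the determinant line bundle $\mathcal L_{\rm str}^{\rm gen}$ over the completed quark parameter space. I take it with its Quillen/Bismut--Freed connection, as in \cite{BismutFreed1986EllipticFamiliesII}. Its differential refinement is fixed by the Freed--Lott transgression forms \cite{FreedLott2010DifferentialKIndex}. The phase $\Xi_{\rm str}$ of \eqref{eq:bl-strong-basic-character} is then the holonomy/section reading of this line. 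It combines the sectorwise determinant phases with the anomaly shift of $\theta_{\rm QCD}$ under chiral rotations.

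\emph{Flat translates.} The differential index class determines curvature and holonomy only up to the generated datum. Tensoring with a flat line $\mathcal F_\alpha$ leaves the curvature unchanged and multiplies the holonomy by the character $\alpha\in\operatorname{Hom}(\pi_1,U(1))$. A nontrivial $\alpha$ therefore changes $\Xi_{\rm str}$ along some loop while preserving every local invariant. The translate $\mathcal L_{\rm str}^{(\alpha)}$ is not produced by the generated datum. By the descent convention of Proposition~\ref{prop:fsr-protected-output-descent}, it is therefore a different reconstruction class.

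\emph{Real structure.} A connection-preserving antilinear isometry $\sigma$ with $\sigma^2=1$ commutes with parallel transport. Every holonomy $h\in U(1)$ therefore satisfies $h=\bar h$, which gives $h=\pm1$. The surviving sign is the $w_1$ class of the real form, i.e.\ the mod-two determinant/Pfaffian orientation.

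\emph{Parallel trivialization.} Such a trivialization makes the curvature vanish and all holonomies trivial. Gauge invariance lets it descend through the gauge quotient. Proposition~\ref{prop:pgl-determinant-descent} then removes the vertical curvature obstruction and the gauge-loop holonomy. Evaluating the trivializing section against $\det(Y_uY_d)e^{i\theta_{\rm QCD}}$ fixes $\Xi_{\rm str}=1$, i.e.\ $\bar\theta\in2\pi\mathbb Z$.

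\emph{Negative statements.} I would use a counterexample, and this is where I expect the main obstacle. The finite chirality $\Gamma_V$ on $\Lambda^\bullet V$ is fixed pointwise and is invariant under every flat translate. So $\mathcal L_{\rm str}^{(\alpha)}$ and $\mathcal L_{\rm str}^{\rm gen}$ carry the same finite chirality orientation yet have different holonomy. The orientation therefore cannot determine either a real structure or a trivialization. The delicate point is making this rigorous: one must exhibit a base with nontrivial $\pi_1$ on which $\alpha$ is compatible with the completed family. A circle of chiral phase rotations should suffice. Similarly, the neutral Pfaffian lives on the lepton/neutral block, and its sign is insensitive to the color-family determinant. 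It therefore fixes no orientation of the quark line, which is why it is only an algebraic comparison.
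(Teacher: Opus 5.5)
Parts (a) and (b) follow the chain the statement itself indicates. The paper gives no separate proof beyond these standard facts and Proposition~\ref{prop:pgl-determinant-descent}, and your identification of the surviving sign with $w_1$ of the real form is a correct sharpening.

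\textbf{The gap is the last inference of (c).} A parallel trivialization $s_0$ over a connected base is unique only up to a constant phase. Moreover, $e^{i\theta_{\rm QCD}}\det(Y_uY_d)$ is a determinant section, not in general a parallel one. Its ratio to $s_0$ is therefore a function that need not be constant, and rescaling $s_0$ shifts it by an arbitrary constant phase. So ``evaluating the trivializing section'' cannot produce $\Xi_{\rm str}=1$.

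The conclusion $\bar\theta\equiv0$ comes instead from reading $\Xi_{\rm str}=e^{i\bar\theta}$ in \eqref{eq:bl-strong-basic-character} as the gauge-loop holonomy of $\mathcal L_{\rm str}^{\rm gen}$. That holonomy is precisely what Proposition~\ref{prop:pgl-determinant-descent} says a gauge-invariant parallel trivialization removes. You already state this removal in the preceding sentence, so conclude from trivial holonomy and drop the pairing step. The alternative is to supply an explicit normalization of $s_0$, which the hypothesis does not give. Also fix this holonomy reading in your setup paragraph instead of the ambiguous ``holonomy/section''; the same reading makes (b) give $\bar\theta\in\{0,\pi\}$.

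\textbf{For (d), the flat-translate counterexample aims at the wrong object.} By your own part (a), $\mathcal L_{\rm str}^{(\alpha)}$ is not a generated line. At best it shows that no construction natural in $\Gamma_V$ alone supplies the structure, and it forces you to build a base with suitable $\pi_1$, which is the obstacle you anticipated.

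A direct argument stays inside generated lines:
\begin{itemize}
\item The finite chirality $\Gamma_V$ is structural and independent of the completed quark data $(Y_u,Y_d,\theta_{\rm QCD})$.
\item $\bar\theta$ is not fixed by the structural data; the text before the proposition notes that Hermitian direct data do not reconstruct it.
\item For completed data with $\bar\theta\notin\{0,\pi\}$, the contrapositives of (b) and (c) show that $\mathcal L_{\rm str}^{\rm gen}$ has neither a real structure nor a parallel trivialization, although its finite chirality orientation is unchanged.
\end{itemize}
No flat translate or auxiliary circle is needed. Your remark on the neutral Pfaffian is adequate as stated.
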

~\\
After the central factor \eqref{eq:pgl-central-response-carrier} has been isolated, a direct Hermitian family block $H_f=Y_fY_f^\dagger$ has first family-motion part $H_f^{(1)}=H_f^{\rm circ}+\Gamma_f$ with $\Gamma_f=\Pi_{\rm noncirc}(H_f^{(1)})$.
Here $H_f^{\rm circ}$ is the transported degree-zero part, while $H_f-H_f^{(1)}$ contains higher Schur terms not used in the first family-motion test. The projection is the one in \eqref{eq:pgl-mixed-curvature-norm}. Thus only $\Gamma_f$ is tested by the mixed torsor curvature \eqref{eq:pgl-mixed-curvature-norm} at first order. The vertex-resolved charged spectral representative is reduced before this transport decomposition.\\
~\\
The completed branch is represented by the finite Schur-Kuranishi map $\mathcal F_{\rm fin}(b)=P_{\rm obs}(K_{\rm eff}(b))=0$ after gauge, diffeomorphism, and unitary redundancies have been removed. Its filtered protected interface is
\begin{equation}
\mathcal P_r(b)=\left(j_b^r\mathcal F_{\rm fin},\mathcal M_{\rm pre\text{-}Schur},\mathcal D_r(b),\Delta_{\rm low}(b)\right),
\qquad
\mathcal D_{r_{\rm spec}}=\mathcal S_{\rm dir},\quad
\mathcal D_{r_{\rm Sch}}=[P_{\rm low},\nabla_{\rm low}].
\label{eq:bl-protected-schur-interface}
\end{equation}
Here $\mathcal M_{\rm pre\text{-}Schur}$ collects the boundary-sector, $B-L$, exterior-bidegree, finite-shadow, and torsor markings; $\Delta_{\rm low}$ is the isolating Riesz gap. The earlier spectral depth retains no independent family orientation.

\begin{corollary}[Protected Schur descent]
\label{prop:bl-schur-no-phantom-reduction}
At a zero of the finite Schur-Kuranishi map in a fixed reconstruction class, Proposition~\ref{prop:fsr-protected-output-descent} retains the channels generated by source transport or \eqref{eq:fsr-feshbach-reduction} and required by \eqref{eq:bl-protected-schur-interface}; presentation coordinates outside that closure are removed.
\end{corollary}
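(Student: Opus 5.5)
The plan is to obtain the corollary as the Schur-depth instance of Proposition~\ref{prop:fsr-protected-output-descent}, applied to the filtered interface \eqref{eq:bl-protected-schur-interface}. Fix a zero $b_0$ of $\mathcal F_{\rm fin}$ in the given reconstruction class, and work on the quotient by gauge, diffeomorphism and unitary redundancies. The argument has three steps.

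\emph{Step 1: the admissible poset.} Consider protected subobjects of the low carrier at $b_0$. These are subspaces of channels stable under the source transport (the projected connection \eqref{eq:bl-projected-connection} and the torsor edge unitaries $U_a$) and under the Feshbach map \eqref{eq:fsr-feshbach-reduction}. They must also preserve the markings $\mathcal M_{\rm pre\text{-}Schur}$ and leave the data $j^r_{b_0}\mathcal F_{\rm fin}$, $\mathcal D_r$, $\Delta_{\rm low}$ unchanged. The key point is intersection closure, so that Proposition~\ref{def:fsr-minimal-reconstruction-initial-object} applies. Preservation of markings and of central projections is an intersection-stable condition. Stability under a fixed family of operators is intersection-stable as well. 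The least admissible object is then the Moore closure of the channels generated by source transport together with $\mathfrak F_P$.

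\emph{Step 2: the closure is well defined.} Write $P=P_{\rm low}$. Because $QKQ$ is invertible on the isolated stratum, $\mathfrak F_P(K)$ is defined. Its image lies in $P(\cdot)P$. Theorem~\ref{thm:alena-equivariant-metric-gram-riesz-descent} makes $P$, the gap and the descended Riesz data covariant under marking-preserving unitaries. Hence the generated closure is itself basic on the protected quotient, and the retained channel set does not depend on presentation.

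\emph{Step 3: removing the remaining coordinates.} A coordinate outside the closure is, by construction, not generated by the source. It must also be shown not to be required downstream: its deletion should leave $\mathcal P_r(b_0)$ invariant. For the jet component, the retained channels already contain the Feshbach image, so $j^r\mathcal F_{\rm fin}$ factors through them. For $\mathcal D_{r_{\rm Sch}}=[P_{\rm low},\nabla_{\rm low}]$, the projected connection acts within the transport-generated span. The gap $\Delta_{\rm low}$ is a spectral invariant of $K_{\rm eff}$ restricted to $\operatorname{Ran}P$, so it is unchanged. With these facts, Proposition~\ref{prop:fsr-protected-output-descent} permits deleting every unsourced irreducible copy and every presentation coordinate. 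Finite-core uniqueness (Proposition~\ref{prop:fsr-finite-core-uniqueness}) then gives multiplicity one on the remainder.

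\emph{Main obstacle.} The hardest step is the deletion claim in Step 3 for the Schur depth. Higher Schur terms $H_f-H_f^{(1)}$ could in principle feed an outside channel back into the retained block through $Q(QKQ)^{-1}Q$. To rule this out, I would first treat channels in the high complement. There, membership of the Feshbach image in the closure is definitional, so no feedback escapes it. For channels orthogonal to both $P$ and the transport-generated high span, I would use block-diagonality (Proposition~\ref{prop:pgl-low-block-rigidity}) to show that they do not couple to $P$ at any order. Failing that, I would fall back on the lexicographic ordering \eqref{eq:fsr-schur-internal-filtration}: a channel absent at an earlier closed depth cannot reappear as an independent later channel.
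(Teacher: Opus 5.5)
Your overall route is the paper's: the corollary has no separate argument there. It is Proposition~\ref{prop:fsr-protected-output-descent} read at the Schur depth, with the retained set taken to be the Moore closure of Proposition~\ref{def:fsr-minimal-reconstruction-initial-object}. Basicness of the Riesz data comes, as you say, from Theorem~\ref{thm:alena-equivariant-metric-gram-riesz-descent}. Two remarks on Step~1. Intersection closure is a standing hypothesis of the admissible class in the paper, so you do not need to verify it. As written, your verification is also not well posed: $\mathfrak F_P$ acts on operators, not on subspaces of channels, so ``subspaces stable under the Feshbach map'' has no meaning until you specify which channels the reduction generates.

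That unspecified point is exactly where Step~3 breaks. Your closure is the image of $\mathfrak F_P$ (which lies in $P(\cdot)P$) plus the transport-generated span. With that closure, the claim that $j^r\mathcal F_{\rm fin}$ factors through the retained channels is false. The reduced operator and its jets depend on the mediating high data $QKP$ and $(QKQ)^{-1}$, so deleting a high channel reached by them changes $K_{\rm eff}$. The same objection applies to your $\Delta_{\rm low}$ claim, which presupposes that $K_{\rm eff}$ is unchanged.

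Your remedy for the ``main obstacle'' would also fail. Proposition~\ref{prop:pgl-low-block-rigidity} gives only zero \emph{direct} compression of higher integer moments onto the rank-five carrier. Proposition~\ref{prop:bl-higher-moment-schur-suppression} exhibits the second-order coupling $-\varepsilon^2PT_\ell Q(QK_0Q)^{-1}QT_\ell P$, and the paper actually retains it: the $V_3$ datum in \eqref{eq:bl-charged-lepton-vthree-schur-response} and \eqref{eq:pgl-vthree-categorical-microscopic-bridge}. So such channels do not decouple at all orders. The fallback through \eqref{eq:fsr-schur-internal-filtration} or \eqref{eq:fsr-filtered-degree-fidelity} does not give decoupling either; it only stops an already removed coordinate from re-entering as an independent channel.

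The repair is definitional, not analytic. ``Channels generated by \eqref{eq:fsr-feshbach-reduction}'' must include the high channels through which the reduction acts, namely the span reached from $\operatorname{Ran}P$ by $QKP$ and $(QKQ)^{-1}$. In any case, whatever the jet component of \eqref{eq:bl-protected-schur-interface} depends on is ``required''. Proposition~\ref{prop:fsr-protected-output-descent} retains what is generated \emph{or} required, so every feedback channel through $Q(QKQ)^{-1}Q$ is kept by construction. Deletion then touches only copies that are neither generated nor required, and removing those leaves $\mathcal P_r(b_0)$ unchanged by definition. No decoupling estimate is needed, and none is available.
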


\begin{figure}[!htbp]
\centering
\begin{tabular}{>{\centering\arraybackslash}p{0.27\textwidth}>{\centering\arraybackslash}p{0.27\textwidth}>{\centering\arraybackslash}p{0.27\textwidth}}
$K_{LL}$ & $K_{LH}$ & $K_{HH}$\\[0.5em]
\multicolumn{3}{c}{$\Downarrow\quad$ reduced operator \eqref{eq:fsr-feshbach-reduction}}\\[0.8em]
$\Delta(B-L)=0$ direct and active-neutral blocks & low-high bridge and neutral incidence & neutral denominator, parity-paired adjoint high block, and doubled top-form contact
\end{tabular}
\caption{Completed reduced-resolvent architecture on the structurally fixed low carrier. The common adjoint high block is normalized by Definition~\ref{def:bl-common-clifford-picard-high-gap}.}
\label{fig:bl-low-energy-schur-architecture}
\end{figure}
The reduced completed branch contains the channels forced by the retained observables, the finite filtration, and the gap-mediated compression; auxiliary coordinates outside the least protected Schur interface are removed by Proposition~\ref{prop:bl-schur-no-phantom-reduction}.

\subsection{Marked neutral affine cell and determinant coordinate}

The finite neutral Hessian is the Gram form \eqref{eq:fsr-residual-gram-hessian} after the reduction \eqref{eq:fsr-reduced-penalty-hessian}, with the normalized Clifford trace \eqref{eq:pgl-canonical-clifford-trace-metric}. In the Hermitian marked neutral realization, let $W_\sigma$ be the selected weak orientation and $W_{-\sigma}=W_\sigma^\ast$. The angular trace reservoir is $\mathcal R_\angle=C\oplus W_\sigma\oplus W_{-\sigma}$ and the radial reservoir is $\mathcal R_r=C\oplus W_\sigma$. The angular support is the marked neutral reading of the full degree-seven decomposition \eqref{eq:pgl-norm-seven-theta-pushforward}; the radial reservoir is the selected rank-five carrier with one weak orientation and is not a full theta section space. Let $P_C^\angle$, $P_{W_\sigma}^\angle$, and $P_C^r$ denote the corresponding block projections. Marked Schur realization fixes the incidence
\begin{equation}
\mathfrak I_{\rm neu}=\left(P_C^\angle,P_C^\angle\otimes P_{W_\sigma}^\angle,P_C^\angle\otimes P_C^r\right).
\label{eq:bl-neutral-incidence-closure}
\end{equation}
The reservoir ranks are $\rank\mathcal R_\angle=c+2w=T^\ast$ and $\rank\mathcal R_r=c+w=r$. This reservoir identification is part of the marked neutral realization and carries no independent scalar coefficient. Their normalized traces therefore give
\begin{equation}
A_0=\frac{c}{T^\ast},\qquad B_0=\frac{cw}{(T^\ast)^2},\qquad C_0=\frac{c^2}{T^\ast r};
\quad (c,w)=(3,2):\quad (A_0,B_0,C_0)=\left(\frac37,\frac6{49},\frac9{35}\right).
\label{eq:bl-neutral-cell-leading-weights}
\end{equation}
Thus \eqref{eq:bl-neutral-incidence-closure} gives the three factors in \eqref{eq:bl-neutral-cell-leading-weights} without an additional trace coefficient. The independent rank identity obtained from Corollary~\ref{prop:pgl-positive-degree-finite-shadow} and Corollary~\ref{cor:pgl-rank-one-hodge-eisenstein-lift} remains $n_H-T^\ast=(n-1)r$; at primitive degree, $n_H=T^\ast=7$ is the rank-side compatibility of Corollary~\ref{cor:pgl-galois-rank-compatibility}.\\
~\\
By \eqref{eq:pgl-kronecker-moment-frame}, the scalar determinant tangent is the normalized multiplication moment direction of the centered line:
\begin{equation}
\delta S_{\rm sc}=uP_C-\frac32uP_W=\frac u2H_\Gamma^{\rm deg}.
\label{eq:bl-unimodular-first-schur}
\end{equation}
Accordingly, before an absolute normalization of $u$ is chosen, the neutral affine cell is
\begin{equation}
A=A_0-\frac32u,\qquad B=B_0,\qquad C=C_0+u.
\label{eq:bl-neutral-cell-corrected-entries}
\end{equation}
\begin{lemma}[Affine-cell elimination]
\label{lem:bl-affine-cell-elimination}
The cell \eqref{eq:bl-neutral-cell-corrected-entries} is one-dimensional affine. Linear functionals annihilating its tangent are constant, while ratios of affine functionals are fractional-linear in $u$; eliminating $u$ between two such ratios therefore gives a Möbius relation.
\end{lemma}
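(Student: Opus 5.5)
The plan is to treat the cell \eqref{eq:bl-neutral-cell-corrected-entries} as the image of the affine map
\[
\mathbb R\ni u\longmapsto (A,B,C)=(A_0,B_0,C_0)+u\,\tau,\qquad \tau:=\left(-\tfrac32,0,1\right),
\]
and to read off all three statements of the lemma from this parametrization.

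\textbf{Step 1: one-dimensional affine.} The vector $\tau$ is nonzero, so the map is injective. Its image is a line in $\mathbb R^3$ through the base point $(A_0,B_0,C_0)$ of \eqref{eq:bl-neutral-cell-leading-weights}. That gives the affine dimension one.

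\textbf{Step 2: annihilating functionals are constant.} Take a linear functional $\ell(A,B,C)=\alpha A+\beta B+\gamma C$. Along the cell,
\[
\ell(u)=\ell(A_0,B_0,C_0)+u\,\ell(\tau)=\ell(A_0,B_0,C_0)+u\left(\gamma-\tfrac32\alpha\right).
\]
So $\ell$ annihilates the tangent, i.e.\ $\ell(\tau)=0$, exactly when $\gamma=\tfrac32\alpha$. In that case $\ell$ is constant on the cell. I will also note that $B$ is one such invariant, and so is $2A+3C$, which equals $2A_0+3C_0$.

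\textbf{Step 3: ratios are fractional-linear.} Let $\ell_1,\ell_2$ be affine functionals. Each restricts to the cell as $a_i+b_iu$, so the ratio $\ell_1/\ell_2$ restricts to $(a_1+b_1u)/(a_2+b_2u)$. This is fractional-linear in $u$ wherever $a_2+b_2u\neq0$.

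\textbf{Step 4: elimination gives a Möbius relation.} Take two ratios $\rho=M_1(u)$ and $\sigma=M_2(u)$, with coefficient matrices $M_1,M_2\in GL_2$ when nondegenerate. Then $u=M_1^{-1}(\rho)$, and substituting gives $\sigma=(M_2M_1^{-1})(\rho)$. This is again fractional-linear, because fractional-linear maps compose through $2\times2$ matrix multiplication. Clearing denominators, the relation can be written as the bilinear equation
\[
p\,\rho\sigma+q\,\rho+r\,\sigma+s=0 .
\]

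\textbf{Main obstacle: degenerate cases.} The only real difficulty is degeneracy, which I will handle first.
\begin{itemize}
\item If some $\det M_i=0$, that ratio is constant in $u$ and carries no information about $u$, so elimination is vacuous for it.
\item If $M_2M_1^{-1}$ is singular, the relation degenerates to $\sigma=\mathrm{const}$.
\item The excluded poles are the points where $a_2+b_2u=0$.
\end{itemize}
I will state these cases explicitly, so that "Möbius relation" means the nondegenerate generic case and the degenerate ones are covered by Step 2.
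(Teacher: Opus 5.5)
Your proof is correct. The paper gives no separate proof of this lemma; it treats the statement as immediate from writing the cell as $(A_0,B_0,C_0)+u\,(-\tfrac32,0,1)$, which is exactly your argument, and your invariants $B$ and $2A+3C$ match the $B$ and $A+\tfrac32C$ the paper uses next. One tidy-up: with $M_1$ invertible, $M_2M_1^{-1}$ is singular only when $M_2$ is, so your second degenerate bullet is already covered by the first. That constant-ratio case (the ratio of the two invariants) is what yields the mass sum rule \eqref{eq:bl-neutral-bosonic-mass-sum-rule}, while the nondegenerate case gives relations such as \eqref{eq:bl-neutral-central-parameter-elimination}.
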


The zero-tangent determinant coordinate is fixed by the rank dictionary,
\begin{equation}
s_{\rm det}(0)=\frac{B_0}{A_0+B_0}=\frac{w}{c+3w}=\frac29,
\label{eq:bl-neutral-cell-zero-seed}
\end{equation}
\begin{remark}[Constitutive spectral witness]
\label{rem:bl-constitutive-spectral-neutral-witness}
Corollary~\ref{cor:alena-two-level-constitutive-selector} supplies the independent spectral data $c_{\rm sp}:=\rank P_3=3$, $w_{\rm sp}:=\operatorname{mult}(4/7)=2$, $r_{\rm sp}:=\rank({\bf1}-P_3)=5$, and $T_{\rm sp}^\ast:=(4/7-3/7)^{-1}=7$. Substitution of this packet into \eqref{eq:bl-neutral-cell-leading-weights} and \eqref{eq:bl-neutral-cell-zero-seed} reproduces the same zero-tangent neutral values obtained from marked incidence. The comparison is a constitutive-spectral witness; microscopic provenance requires a downstream map from $(P_3,P_{V_1})$ to the marked incidence \eqref{eq:bl-neutral-incidence-closure}.
\end{remark}
The finite determinant-shadow coordinate along the line is
\begin{equation}
s_{\rm det}=\frac{B}{A+B}.
\label{eq:bl-determinant-shadow-seed}
\end{equation}
Because \eqref{eq:bl-unimodular-first-schur} is scalar on the family factor and has no torsor clock degree, a source-generated first-order family-shape response requires an additional noncentral direct component; otherwise the family-shape coordinates start at quadratic Schur order.\\
~\\
At an unspecified positive neutral-cell scale $v_{\rm EW}^{\rm prim}$, the zero-remainder masses are read as
\begin{equation}
\begin{aligned}
m_\gamma^{\rm prim}&=0,&m_W^{\rm prim}&=\frac{v_{\rm EW}^{\rm prim}}2\sqrt A,\\
m_Z^{\rm prim}&=\frac{v_{\rm EW}^{\rm prim}}2\sqrt{A+B},&m_h^{\rm prim}&=v_{\rm EW}^{\rm prim}\sqrt C.
\end{aligned}
\label{eq:bl-neutral-cell-mass-reading}
\end{equation}\
~\\
Comparing \eqref{eq:bl-neutral-cell-mass-reading} with the standard tree-level mass convention gives $A=g_{\rm cell}^2$, $B=g_{\rm cell}'{}^2$, and $C=2\lambda_{\rm cell}$. At $u=0$, \eqref{eq:bl-neutral-cell-leading-weights} therefore gives $g_{\rm cell}^2=3/7$, $g_{\rm cell}'{}^2=6/49$, $\lambda_{\rm cell}=9/70$, $\sin^2\theta_{\rm cell}=2/9$, and $\lambda_{\rm cell}/g_{\rm cell}^2=3/10$. These are neutral-cell matching parameters before the completed gauge kinetic normalization; relations of gauge and Higgs couplings provide the spectral-action comparison \cite{ChamseddineConnes1997SpectralAction}.
By Lemma~\ref{lem:bl-affine-cell-elimination}, $B$ and $A+\frac32C$ are the independent linear invariants of the tangent in \eqref{eq:bl-neutral-cell-corrected-entries}. Eliminating $u$ and the common scale from \eqref{eq:bl-neutral-cell-mass-reading} gives
\begin{equation}
15(m_h^{\rm prim})^2=266(m_Z^{\rm prim})^2-306(m_W^{\rm prim})^2.
\label{eq:bl-neutral-bosonic-mass-sum-rule}
\end{equation}
This relation inherits marked neutral descent and the tracial affine cell, but no absolute normalization of $u$, $v_{\rm EW}^{\rm prim}$, or $S_{\rm prim}$.\\
~\\
To keep the neutral oscillation shape on the same one-parameter dimensionless branch, impose
\begin{equation} S_{\rm prim}^{-1}=16u=d_Fu. \label{eq:bl-primitive-schur-unit} \end{equation}
The coefficient $1/d_F$ has a concrete sufficient microscopic mechanism: the trace-preserving scalar expectation on $F_\Gamma^{\rm even}$ sends a rank-one full-gauge-singlet projector to ${\bf1}/d_F$; factorization of the primitive Schur transfer through the unique trivial line $\Lambda^0V$ then yields \eqref{eq:bl-primitive-schur-unit}. This rank-one factorization remains a realization test. Writing $S_{\rm prim}=4\pi^2\kappa_\Omega$, the benchmark $u_{\rm prim}=1/(64\pi^2)$ is equivalent to the additional microscopic action normalization $\kappa_\Omega=1$; topology and the Eisenstein lattice do not fix this coefficient.

On this fixed benchmark subbranch the rank-split determinant contribution is $1/(144\pi^2)$ and
\begin{equation} v_{\rm EW}^{\rm prim}=2\sqrt2\,M_{\rm Pl}\exp\!\left(-S_{\rm prim}-\frac{1}{144\pi^2}\right)\sqrt{\frac{C_0}{C_0+u_{\rm prim}}}. \label{eq:bl-primitive-electroweak-scale} \end{equation}\
The leading exponential is $2\sqrt2\,M_{\rm Pl}e^{-4\pi^2}$ on this subbranch; absolute scale remains a separate normalization problem.
The tree-level electromagnetic reading is
\begin{equation}\alpha_{\rm cell}=\frac{AB}{4\pi(A+B)}=\frac{\sqrt2G_F^{\rm prim}(m_W^{\rm prim})^2}{\pi}\left(1-\frac{(m_W^{\rm prim})^2}{(m_Z^{\rm prim})^2}\right).\label{eq:bl-neutral-cell-alpha-gf}\end{equation}
At $u=0$, \eqref{eq:bl-neutral-cell-leading-weights} gives the exact matching identity $\alpha_{\rm cell}(0)^{-1}=42\pi=NT^\ast\pi$. With $G_F^{\rm prim}=[\sqrt2(v_{\rm EW}^{\rm prim})^2]^{-1}$, \eqref{eq:bl-neutral-cell-alpha-gf} is a $G_F$-scheme identity; the electroweak convention is summarized in \cite{ParticleDataGroup2025RPP}, while conversion to $\alpha(0)$ belongs to radiative matching at $\mu_{\rm cell}$.

\subsection{Direct spectral response, Rees-Picard paths, and family polarization}

\subsubsection{Direct Hodge rank ladder}

By \eqref{eq:pgl-hypercharge-degree} and $Q=T_3+Y$, the weak factor splits as $W=W_0\oplus W_+$ with charges $0,1$. For a normalized partial isometry $T_+:W_0\to W_+$ and $T_-=T_+^\dagger$,
\begin{equation}
F_0={\bf1},\qquad F_1={\bf1}-T_-T_+,\qquad F_2={\bf1}-T_-T_+-T_+T_-,\qquad (\rank F_0,\rank F_1,\rank F_2)=(5,4,3).
\label{eq:bl-direct-hodge-rank-ladder}
\end{equation}
The normalized surviving trace is $\rho_k=(5-k)/5$. For a positive three-family mass vector, use the scale-free shape quotient $Q_f:=\sum_i m_{f,i}/(\sum_i\sqrt{m_{f,i}})^2$. The direct Hodge rank completion identifies the Hodge ratio in Proposition~\ref{prop:pgl-finite-weyl-calculus} by $e^{-\Sigma_{\rm H}}=\rho_{k_f}^{-1}$, and hence
\begin{equation}
Q_f=\frac{10-k_f}{3(5-k_f)}.
\label{eq:bl-direct-hodge-rank-reading}
\end{equation}
Equivalently, with $R=\rank F_0=5$, \eqref{eq:bl-direct-hodge-rank-reading} is $Q_f=\frac13(1+R/\rank F_{k_f})$. For a positive three-family mass vector, $1/3\leq Q_f\leq1$; hence the surviving integer rank must satisfy $3\leq\rank F_{k_f}\leq5$, so the admissible positive rank readings are exactly $3,4,5$. The sector readout $k_\ell=0$, $k_d=1$, $k_u=2$ exhausts them and gives $Q_\ell=2/3$, $Q_d=3/4$, and $Q_u=8/9=\frac13(1+k_Y)$; this rank-to-shape identification remains a motivated direct-sector completion.\\
~\\
The same lost/surviving/ambient ranks $(1,4,5)$ occur upstream in the real source sequence \eqref{eq:alena-source-hodge-sequence}, while Corollary~\ref{cor:alena-selector-hodge-eisenstein-rigidity} reconstructs the full numerical packet $(1,4,5;7)$ independently from the constitutive projector and fixes the compatible norm-seven Eisenstein class up to orientation. Corollary~\ref{cor:pgl-rank-one-hodge-eisenstein-lift} gives the matching finite commensurate packet and $\mathfrak q:=q_H/p_H=4$. The source quotient is compared with the complex finite flag only after complexification. The remaining marked problem is the amplitude/operator incidence that realizes the structural ratio as a common-to-relative first-jet coefficient.

\subsubsection{Source-generated quark transport and Schur-Berry motion}

The direct quark blocks are two compressions of the same graded source class. Their common leading symbol lies in $\mathcal A_0=\mathbb C[S]$, while relative response lies in $\mathcal A_1\oplus\mathcal A_2$ by \eqref{eq:pgl-vone-vtwo-low-symbol-degrees}; Picard translation has degree zero by the Weyl product rule above. For a branch direction $X$, the microscopic detector $V_X=Q_b(\partial_XN_+)P_b$ splits by Proposition~\ref{prop:alena-microscopic-gram-riesz-splitting}: its Moore-Penrose range component moves the Riesz bundle and the Berry-Wilczek-Zee connection \cite{WilczekZee1984NonAbelianBerry}, while the cokernel component enters the stationary penalty. Its finite family reading is the non-circulant projection \eqref{eq:pgl-mixed-curvature-norm}.\
~\\
After removal of the common circulant baseline, let $\Gamma_u,\Gamma_d$ be the first Hermitian non-circulant corrections and put $J_{\rm cen}:=\Im\operatorname{Tr}([\Gamma_u,\Gamma_d]^3)$. Its normalization to the physical invariant follows the convention of \cite{Jarlskog1985}. On the canonical clock-shift representative, Proposition~\ref{prop:pgl-finite-weyl-calculus} and \eqref{eq:pgl-torsor-spectral-discriminant} give the oriented-discriminant identity $J_{\rm cen}^2=(9/16)\operatorname{Disc}\chi_{\Delta_\theta}$; the spectrum fixes the magnitude and the torsor orientation fixes the sign. The physical one-phase closure is
\begin{equation}
J_q=s_{12}s_{23}s_{13}c_{12}c_{23}c_{13}^2\sin(3\vartheta_\Gamma).
\label{eq:bl-ckm-phase-closure}
\end{equation}
The torsor phase is fixed after the three magnitudes; \eqref{eq:pgl-torsor-closure-residual} is used here, while \eqref{eq:pgl-torsor-flat-residual} remains a separate subbranch. The zero-total-degree rule also gives the general cubic dependence on $\Im\det A$ for $A\in\mathcal A_1$.

For a simple-spectrum circulant $H_f^{(0)}$ and first correction $\Gamma_f$, the first reduced-resolvent/Sylvester basis response is
\begin{equation}
(\Theta_f)_{ij}=\frac{(\Gamma_f)_{ij}}{h^{(0)}_{f,j}-h^{(0)}_{f,i}},\qquad i\neq j,\qquad (\Theta_f)_{ii}=0.
\label{eq:bl-first-order-family-basis-change}
\end{equation}
Higher basis jets are obtained by iterating the same inverse Sylvester map; an independent source second jet enters linearly at its own Rees order. Since $H_{f,0}\in\mathcal A_0$, this map preserves the Weyl grading, while commutators add it. Simple denominators therefore change magnitudes and possibly the orientation sign without introducing a continuous phase.\
~\\
If all simple-spectrum sectoral blocks move along a source orbit by the same marking-preserving unitary covariance, $H_f(gx)=U(g)H_f(x)U(g)^\dagger$, the common factor cancels from every relative basis. Mixing moduli and Jarlskog invariants are then constant on that orbit. Physical source-axis dependence therefore requires a fixed marking not co-rotated with the source or sector-dependent susceptibilities. Conversely, independent up/down source phases are unnecessary: if $\Gamma_d=\lambda\Gamma_u=\lambda\Gamma$, the leading relative generator has entries $K_{ij}=c_{ij}\Gamma_{ij}$ with $c_{ij}=\lambda/\Delta^d_{ij}-1/\Delta^u_{ij}$, and the leading CP invariant is proportional to $c_{12}c_{23}c_{13}\Im(\Gamma_{12}\Gamma_{23}\Gamma_{13}^\ast)$. A shared source can therefore carry CP response through one rephasing-invariant triangle phase and unequal sectoral susceptibilities. After a common finite marking is fixed, the common-to-relative first-jet hypothesis is tested without a basis choice by $q_*=\langle J_{\rm rel},J_{\rm com}\rangle_{\rm HS}/\|J_{\rm rel}\|_{\rm HS}^2$ and $\Delta_q^2=\|J_{\rm com}\|_{\rm HS}^2-|\langle J_{\rm rel},J_{\rm com}\rangle_{\rm HS}|^2/\|J_{\rm rel}\|_{\rm HS}^2$. The condition $\Delta_q=0$ is required before $q_*$ is compared with the finite ratio $q_H/p_H=4$.\\
~\\
The relative basis is $V_{\rm CKM}={\bf1}+\Theta_d-\Theta_u+O(\Gamma^2)$ with the standard conventions of \cite{Cabibbo1963}, \cite{KobayashiMaskawa1973}, and \cite{Wolfenstein1983}. If $\delta_f$ is the smallest leading gap, \eqref{eq:bl-first-order-family-basis-change} gives
\begin{equation}
\|V_{\rm CKM}-{\bf1}\|_{\rm F}\leq\frac{\|\Gamma_d\|_{\rm F}}{\delta_d}+\frac{\|\Gamma_u\|_{\rm F}}{\delta_u}+O_{\delta_u,\delta_d}((\|\Gamma_u\|_{\rm F}+\|\Gamma_d\|_{\rm F})^2).
\label{eq:bl-ckm-entropic-susceptibility-bound}
\end{equation}
The reduced source-generated class is fixed before minimization, so a lower-valuation unsourced edge is absent by Proposition~\ref{prop:fsr-no-unsourced-multiplicity}. Equations~\eqref{eq:pgl-mixed-curvature-norm} and \eqref{eq:pgl-family-entropy-hessian} read the same non-circulant Frobenius norm, while the physical basis susceptibility retains the gap factors in \eqref{eq:bl-ckm-entropic-susceptibility-bound}. Entanglement-based flavor selection gives an independent comparison \cite{ThalerTrifinopoulos2025FlavorEntanglement}.
\subsubsection{Rees-Picard hierarchy and arithmetic CP completion}

With no independently sourced lower-valuation $1\leftrightarrow3$ edge, the positive direct valuations are
\begin{equation} v_{12}=1,\qquad v_{23}=2,\qquad v_{13}=3. \label{eq:bl-minimal-schur-torsor-valuations} \end{equation}
To retained order, the associated-graded Lie algebra is the three-dimensional Heisenberg algebra: the weighted generators $\epsilon L_{12}$, $\epsilon^2L_{23}$, and $\epsilon^3L_{13}$ have the single visible bracket $[\epsilon L_{12},\epsilon^2L_{23}]=\epsilon^3L_{13}$, while the remaining brackets start beyond order three. Hence $s_{12}=O(\epsilon)$, $s_{23}=O(\epsilon^2)$, $s_{13}=O(\epsilon^3)$ and, by \eqref{eq:bl-ckm-phase-closure}, $J_q=O(\epsilon^6)$. Degree-zero Picard insertions preserve this filtration. This is Froggatt-Nielsen-type power counting \cite{FroggattNielsen1979Hierarchy}, with Rees valuation replacing the additional flavor charge and without a separate messenger sector.\\
~\\
Let $R_1=a_1X\in\mathcal A_1$ with $X^\dagger X={\bf1}$ and let $Y\in\mathcal A_2$. Since $\operatorname{Ad}_X^3={\bf1}$, put $\mathsf P_r^X=\frac13\sum_{j=0}^2\omega^{-rj}\operatorname{Ad}_X^j$.

\begin{theorem}[Intrinsic finite Rees-Picard geometry]
\label{thm:bl-intrinsic-rees-picard-geometry}
\label{lem:bl-rees-picard-norm-reduction}
Every unitary $X\in\mathcal A_1$ is conjugate by a unitary in $\mathcal A_0$ to $\zeta Z$, with $\zeta^3=\det X$; changing the cube root is a degree-zero $C_3$ gauge. The generated line is $\mathsf P_0^X(\mathcal A_2)=\mathbb CX^2$ and
\begin{equation} Y_\perp=(\mathsf P_1^X+\mathsf P_2^X)Y=\frac13[X^\dagger,[X,Y]],\qquad \rho_{\rm RP}=\frac{\|[X,Y]\|_{\rm HS}}{\|R_1\|_{\rm HS}^2}. \label{eq:bl-intrinsic-rees-picard-reduction} \end{equation}
For $Y_\perp\neq0$, put $\mathcal C_X=(\operatorname{Ad}_X-\operatorname{Ad}_X^{-1})/(i\sqrt3)$, $p=\frac12(1+\langle Y_\perp,\mathcal C_XY_\perp\rangle/\|Y_\perp\|^2)$, and $\Delta_{\rm pol}=3p(1-p)$. Moreover,
\begin{equation} \rho_{\rm RP}=0\iff C^\ast(X,Y)\simeq\mathbb C^3,\qquad \rho_{\rm RP}>0\iff C^\ast(X,Y)=M_3(\mathbb C). \label{eq:bl-rees-picard-algebra-dichotomy} \end{equation}
Weyl orthogonality gives
\begin{equation} c_1=1,\qquad c_2=\sqrt{\frac{1+\rho_{\rm RP}^2}{3}},\qquad c_3=\frac{\rho_{\rm RP}}3, \label{eq:bl-rees-picard-coefficients} \end{equation}
For a tangent scale $\sigma_{12}>0$, write $\sigma_{23}=c_2\sigma_{12}^2$ and $\sigma_{13}=c_3\sigma_{12}^3$. Eliminating $\rho_{\rm RP}$ gives
\begin{equation} \left(\frac{\sigma_{23}}{\sigma_{12}^2}\right)^2-3\left(\frac{\sigma_{13}}{\sigma_{12}^3}\right)^2=\frac13. \label{eq:bl-ckm-rho-free-rees-relation} \end{equation}
If $Y_r=\mathsf P_r^XY\neq0$, then $X^{-2}Y_r$ reconstructs the corresponding Weyl shift up to central phase. In particular, $Y_\perp\neq0$, $\rho_{\rm RP}=1$, and $\Delta_{\rm pol}=0$ force, after degree-zero gauge,
\begin{equation} Y_\perp=a_1^2e^{i\theta_2}X^2S^\varepsilon,\qquad \varepsilon\in\{1,2\}. \label{eq:bl-primitive-rees-normal-form} \end{equation}
Consequently every primitive second response must satisfy
\begin{equation}Y_\perp^\dagger Y_\perp=a_1^4{\bf1},\quad \Tr Y_\perp=\Tr(Y_\perp^2)=0,\quad Y_\perp^3=(\det Y_\perp){\bf1},\quad [X,Y_\perp]^\dagger[X,Y_\perp]=3a_1^4{\bf1}.\label{eq:bl-primitive-second-response-fingerprint}\end{equation}
These basis-free identities are necessary fast falsifiers before full polarization; the projective primitive line and its modulus require no integral input, while the remaining phase is a marked refinement.
\end{theorem}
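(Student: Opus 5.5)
The proof stays inside the finite Heisenberg algebra $\mathcal W_3\simeq M_3(\mathbb C)$ with its $\mathbb Z_3$-grading $\mathcal A_0\oplus\mathcal A_1\oplus\mathcal A_2$.

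\emph{Step 1 (normal form).} Write a unitary $X\in\mathcal A_1$ as $X=Zf(S)$ with $f(S)\in\mathcal A_0=\mathbb C[S]$ unitary. In the Fourier frame, $S$ is diagonal and $Z$ is the cyclic shift. Conjugating by a diagonal unitary $D\in\mathcal A_0$ multiplies the three cycle entries by ratios of $D$-entries, so it can make all entries equal to a common $\zeta$ with $\zeta^3=\det X$. The cube-root choice is changed by conjugation with $S$, since $SZS^{-1}=\omega^{\pm1}Z$; this is the degree-zero $C_3$ gauge.

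\emph{Step 2 (splitting of $\mathcal A_2$).} Work with $X=\zeta Z$. Then $\operatorname{Ad}_X=\operatorname{Ad}_Z$ acts on the basis $Z^2S^b$ of $\mathcal A_2$ by $\omega^{b}$, up to orientation. Hence $\mathsf P_0^X(\mathcal A_2)=\mathbb CZ^2=\mathbb CX^2$, and $\mathsf P_1^X+\mathsf P_2^X$ projects onto $\operatorname{span}\{Z^2S,Z^2S^2\}$. The commutator identity follows from
\[
[X^\dagger,[X,Y]]=2Y-\operatorname{Ad}_XY-\operatorname{Ad}_X^{-1}Y,
\]
which acts as $2-\omega^b-\omega^{-b}\in\{0,3,3\}$. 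The algebra dichotomy then follows. If $Y_\perp=0$, then $Y\in\mathbb CX^2$, so $C^\ast(X,Y)=C^\ast(Z)$ is abelian and isomorphic to $\mathbb C^3$. If $Y_\perp\neq0$, the spectral components $Y_r=\mathsf P_r^XY$ lie in the algebra. Each $Y_r$ is a multiple of $X^2S^{\pm}$, so $X^{-2}Y_r$ recovers $S$ up to a phase. Then $C^\ast(Z,S)=M_3(\mathbb C)$ by irreducibility of the Heisenberg representation. This also gives the Weyl-shift reconstruction claim.

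\emph{Step 3 (coefficients and the $\rho$-free relation).} Interpret $c_k$ as normalized Hilbert–Schmidt norms of the degree-$k$ responses, namely $R_1$, $R_1^2+Y$ and the commutator term. Weyl monomials are orthogonal with $\|Z^aS^b\|_{\rm HS}^2=3$. Pythagoras then gives $c_2^2=(1+\rho_{\rm RP}^2)/3$, where the $X^2$-line contributes $1/3$ and the orthogonal part $Y_\perp$ contributes $\rho_{\rm RP}^2/3$, using $\|[X,Y]\|^2=3\|Y_\perp\|^2$. The commutator gives $c_3=\rho_{\rm RP}/3$. Eliminating $\rho_{\rm RP}$ gives $c_2^2-3c_3^2=1/3$, which is \eqref{eq:bl-ckm-rho-free-rees-relation}.

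I expect Step 3 to be the main obstacle. The statement does not fix the normalization conventions for $c_2$ and $c_3$, so I would pin them to the gauge-invariant quantities in \eqref{eq:bl-intrinsic-rees-picard-reduction} and check consistency at $\rho_{\rm RP}=0$.

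\emph{Step 4 (primitive normal form and fingerprint).} Write $Y_\perp=\alpha Z^2S+\beta Z^2S^2$ and note $\mathcal C_X=\operatorname{diag}(\pm1)$ on these two lines. Then $p=|\alpha|^2/(|\alpha|^2+|\beta|^2)$ or its complement, so $\Delta_{\rm pol}=0$ forces one component to vanish. The condition $\rho_{\rm RP}=1$ then fixes $|\cdot|=a_1^2$, which gives \eqref{eq:bl-primitive-rees-normal-form}. The fingerprint identities are direct Weyl computations. The element $Y_\perp$ is a scaled unitary Weyl monomial of nonzero total degree, so it and its square are traceless, its cube is scalar, and the commutator relation holds because $[X,Y_\perp]$ is a multiple $(1-\omega^{\pm1})$ of a unitary.
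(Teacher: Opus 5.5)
Your Steps 1, 2 and 4 are correct. The diagonal gauge in the Fourier frame, the eigen-decomposition of $\operatorname{Ad}_X$ on $\mathcal A_2$, and the identity $[X^\dagger,[X,Y]]=(2-\operatorname{Ad}_X-\operatorname{Ad}_X^{-1})Y$ all go through. So do the irreducibility argument for the $C^\ast$ dichotomy and the polarization and fingerprint checks. These are the direct Weyl computations the statement intends; the paper gives no proof beyond ``Weyl orthogonality gives''.

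The gap is in Step 3, at $c_3$. Take the convention that reproduces $c_2$, namely $c_k=\|T_k\|_{\rm HS}/\|R_1\|_{\rm HS}^k$. With that convention the commutator term does not give $\rho_{\rm RP}/3$. You have $\|[X,Y]\|_{\rm HS}=|1-\omega|\,\|Y_\perp\|_{\rm HS}=\sqrt3\,\|Y_\perp\|_{\rm HS}$ and $\|Y_\perp\|_{\rm HS}=\sqrt3\,\rho_{\rm RP}a_1^2$. Hence $\|[R_1,Y]\|_{\rm HS}/\|R_1\|_{\rm HS}^3=3\rho_{\rm RP}a_1^3/(3\sqrt3\,a_1^3)=\rho_{\rm RP}/\sqrt3$. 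Eliminating $\rho_{\rm RP}$ would then give $c_2^2-c_3^2=1/3$, not the stated relation with the factor $3$ in front of the cubic term.

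The stated value $c_3=\rho_{\rm RP}/3$ comes from the product $R_1Y_\perp$ instead, equivalently the commutator with the factor $|1-\omega|=\sqrt3$ divided out. Indeed $\|R_1Y_\perp\|_{\rm HS}/\|R_1\|_{\rm HS}^3=a_1\|Y_\perp\|_{\rm HS}/(3\sqrt3\,a_1^3)=\rho_{\rm RP}/3$. You therefore need to fix the cubic coefficient explicitly in this normalization and justify that choice. As written, the claim that ``the commutator gives'' $\rho_{\rm RP}/3$ is off by a factor $\sqrt3$. Your proposed consistency check at $\rho_{\rm RP}=0$ cannot catch this, since $c_3$ vanishes there under every normalization.

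There is also a smaller slip in the same step. The degree-two response must be $R_1^2+Y_\perp$, i.e.\ after Rees subtraction of the generated line, not $R_1^2+Y$. If $\mathsf P_0^XY=y_0X^2\neq0$ is kept, the $X^2$ line contributes $|a_1^2+y_0|^2/(3a_1^4)$ instead of $1/3$. Then the formula for $c_2$ fails as well.
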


\begin{corollary}[Integral primitive Rees refinement]
\label{lem:bl-rees-picard-path-reduction}
\label{cor:bl-primitive-integral-theta-descent-criterion}
If the primitive line in \eqref{eq:bl-primitive-rees-normal-form} preserves the protected Eisenstein lattice, then $e^{i\theta_2}\in\mu_6$ and
\begin{equation} \frac{\det[X,Y]}{a_1^6(\det X)^3}\in\{+3\sqrt3\,i,-3\sqrt3\,i\}. \label{eq:bl-primitive-integral-commutator-quantization} \end{equation}
Integrality therefore quantizes the residual orientation; cycle-trivial local CP remains zero.
\end{corollary}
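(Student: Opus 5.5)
The plan is to work entirely in the normal form supplied by Theorem~\ref{thm:bl-intrinsic-rees-picard-geometry}. By that theorem, after a degree-zero unitary gauge in $\mathcal A_0$ we may take $X=\zeta Z$ with $\zeta^3=\det X$. On the primitive branch we also have $Y_\perp=a_1^2e^{i\theta_2}X^2S^\varepsilon$ with $\varepsilon\in\{1,2\}$, as in \eqref{eq:bl-primitive-rees-normal-form}. Both the quantity $\det[X,Y]/(a_1^6(\det X)^3)$ and the phase class of $e^{i\theta_2}$ modulo the cube-root ambiguity are conjugation-invariant, so computing in this gauge is harmless. The argument splits into two steps: first the arithmetic constraint on $e^{i\theta_2}$, then the determinant evaluation.

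\emph{Step 1 (phase quantization).} The hypothesis that the primitive line preserves the protected Eisenstein lattice will be read through Remark~\ref{def:fsr-integral-mge-refinement} and Proposition~\ref{prop:pgl-regular-family-central-lift}. By the reconstruction statement after \eqref{eq:bl-primitive-rees-normal-form}, the operator $a_1^{-2}X^{-2}Y_\perp=e^{i\theta_2}S^\varepsilon$ is a unit-modulus multiple of a Weyl shift. It must act on the integral regular lattice $L_{\rm fam}\simeq\mathbb Z[C_3]$, which contains $\Lambda_E\simeq\mathbb Z[\omega]$ as its augmentation part, and it is unitary, hence an automorphism of that lattice. I will use the unit computation $\mathbb Z[C_3]^\times=\{\pm1,\pm S,\pm S^2\}\simeq C_6$, stated before Proposition~\ref{prop:alena-principal-multiplier-schur-incidence}. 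Equivalently, I will use that the scalar automorphisms of $\mathbb Z[\omega]$ form $\mu_6$. Together with the freedom to change the cube root $\zeta\mapsto\omega\zeta$, which multiplies $X^2$ by $\omega^2$ and is a degree-zero $C_3$ gauge, this forces $e^{i\theta_2}\in\mu_6$.

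I expect this step to be the main obstacle. One must check that the central phase of the lattice-preserving operator is genuinely constrained, rather than being absorbed into the non-torsion central determinant phase that Proposition~\ref{prop:pgl-regular-family-central-lift} allows for $X$. The intended resolution is as follows. That proposition makes the lattice constraint apply to the projective cycle part. Dividing $Y_\perp$ by $X^2$ removes exactly the central factor $\zeta^2$, so the ratio $e^{i\theta_2}S^\varepsilon$ is the integral object. Its phase is therefore a root of unity in $\mu_6$, not a continuous parameter.

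\emph{Step 2 (determinant evaluation).} The component $Y-Y_\perp=\mathsf P_0^XY$ lies in $\mathbb CX^2$ and commutes with $X$, so $[X,Y]=[X,Y_\perp]$. The Weyl relation $ZS=\omega SZ$ gives $XS^\varepsilon-S^\varepsilon X=(\omega^\varepsilon-1)S^\varepsilon X$. Hence
\[
[X,Y]=a_1^2e^{i\theta_2}(\omega^\varepsilon-1)X^2S^\varepsilon X .
\]
Since $\det S=1$ (an even cyclic permutation), taking determinants yields
\[
\det[X,Y]=a_1^6e^{3i\theta_2}(\omega^\varepsilon-1)^3(\det X)^3 .
\]
A direct computation gives $(\omega-1)^3=3\sqrt3\,i$ and $(\omega^2-1)^3=-3\sqrt3\,i$. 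By Step~1, $e^{3i\theta_2}\in\mu_6^3=\{\pm1\}$. Therefore the ratio lies in $\{\pm3\sqrt3\,i\}$, which is \eqref{eq:bl-primitive-integral-commutator-quantization}. As a consistency check, this agrees with the fingerprint $[X,Y_\perp]^\dagger[X,Y_\perp]=3a_1^4{\bf1}$ of \eqref{eq:bl-primitive-second-response-fingerprint}, since $|\det|=3^{3/2}a_1^6$.

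Finally, the statement about the cycle-trivial branch will be deduced from \eqref{eq:bl-ckm-phase-closure} together with the cycle-trivial frame of \eqref{eq:pgl-torsor-flat-residual}. There the orientation is only a discrete sign, and the local CP invariant built from the torsor phase vanishes.
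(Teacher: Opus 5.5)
Your Step~2 is correct and is the natural computation for this statement. Since $\mathsf P_0^XY\in\mathbb CX^2$ commutes with $X$, you get $[X,Y]=[X,Y_\perp]=a_1^2e^{i\theta_2}(\omega^\varepsilon-1)X^2S^\varepsilon X$. The normalized ratio is therefore $e^{3i\theta_2}(\omega^\varepsilon-1)^3=\pm3\sqrt3\,i\,e^{3i\theta_2}$, and it lies in $\{\pm3\sqrt3\,i\}$ as soon as $e^{i\theta_2}$ is an Eisenstein unit.

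Two small corrections to Step~1.
\begin{itemize}
\item The cube-root freedom plays no role. The element $e^{i\theta_2}S^\varepsilon=a_1^{-2}X^{-2}Y_\perp$ is unchanged by every unitary conjugation in $\mathcal A_0=\mathbb C[S]$; for example, conjugation by $S$ multiplies $X^2$ and $Y_\perp$ by the same factor $\omega^2$. So $e^{i\theta_2}$ is already an invariant of the pair, and nothing needs adjusting.
\item Your two ``equivalent'' readings of lattice preservation are not equivalent. Membership of $e^{i\theta_2}S^\varepsilon$ in $\mathbb Z[C_3]^\times$ forces $e^{i\theta_2}=\pm1$. The unit condition on the rank-one $\mathbb Z[\omega]$-module $\Lambda_E$, with $S$ acting by $\omega^{\pm1}$, gives $e^{i\theta_2}\in\mu_6$. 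Either reading suffices, because only $e^{3i\theta_2}=\pm1$ enters the determinant.
\end{itemize}

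The genuine gap is the final clause, ``cycle-trivial local CP remains zero.'' You justify it through $J_q\propto\sin(3\vartheta_\Gamma)$ in \eqref{eq:bl-ckm-phase-closure} and the flat residual \eqref{eq:pgl-torsor-flat-residual}. That is the torsor-holonomy CP, which a cycle-trivial branch switches off by assumption. Your argument does not use the quantization you just proved, and it says nothing about the local primitive two-jet response.

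The ``therefore'' in the statement refers to the fact that the quantized value is purely imaginary. The local CP-odd cubic is $\mathcal J_{2\rm j}=\frac32\operatorname{Re}\bigl(\det[X,Y]/(\det X)^3\bigr)$, as in \eqref{eq:bl-primitive-two-jet-commutator-cubic}. Your Step~2 gives $\det[X,Y]/(\det X)^3=\pm3\sqrt3\,i\,a_1^6$, whose real part vanishes, so $\mathcal J_{2\rm j}=0$. Replace your last paragraph with this identification.
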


Source provenance is separate. Under a marked multiplicative map with line amplitudes $\lambda_1,\lambda_2$,
\begin{equation} \rho_{\rm RP}=\frac{|\lambda_2|}{|\lambda_1|^2}\chi_R=\frac{|\lambda_2|}{|\lambda_1|^2}\sqrt{\frac38(4+\kappa_g^2)}. \label{eq:bl-rees-picard-descent-scalar} \end{equation}
On the primitive finite branch $\rho_{\rm RP}=1$ this gives
\begin{equation} r_{\rm src}:=\frac{|\lambda_2|}{|\lambda_1|^2}=\sqrt{\frac{8}{3(4+\kappa_g^2)}}\leq\sqrt{\frac23}=c_2, \label{eq:bl-primitive-source-contraction} \end{equation}
with equality exactly on the geodesic source branch. Amplitude-preserving multiplicativity is therefore excluded on this primitive provenance branch.
\begin{corollary}[Conditional source-Hodge jet folding]
\label{cor:bl-source-hodge-jet-folding}
Assume a marked map is conformal on the three directions in \eqref{eq:alena-veronese-third-jet} and sends $A=Q'$ to $J_{\rm rel}$ and $-Q^{(3)}$ to $J_{\rm com}$. Then
\begin{equation} q_*=4+\kappa_g^2,\qquad \delta_q:=\frac{\Delta_q^2}{\|J_{\rm rel}\|_{\rm HS}^2}=9\kappa_g^2+\dot\kappa_g^2,\qquad q_*r_{\rm src}^2=\frac83. \label{eq:bl-source-hodge-folding-closure} \end{equation}
Equivalently, scalar folding is feasible exactly when $0<r_{\rm src}^2\leq2/3$ and $\delta_q\geq24/r_{\rm src}^2-36$, with $\kappa_g^2=8/(3r_{\rm src}^2)-4$ and $\dot\kappa_g^2=\delta_q-24/r_{\rm src}^2+36$. The zero-defect corner forces $r_{\rm src}^2=2/3$ and $q_*=4$. This is a marked folding test with source Taylor and Hodge gradings kept distinct.
\end{corollary}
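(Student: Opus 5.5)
The plan is to reduce everything to Frobenius inner products of the three source directions appearing in \eqref{eq:alena-veronese-third-jet}, using only the conformality hypothesis. Write $A=Q'=e\otimes v+v\otimes e$, $D_1:=v\otimes n+n\otimes v$ and $D_2:=e\otimes n+n\otimes e$. By Theorem~\ref{thm:alena-dual-veronese-source-jet},
\[
-Q^{(3)}=(4+\kappa_g^2)A-3\kappa_g D_1-\dot\kappa_g D_2 .
\]
I would first record that $A,D_1,D_2$ are pairwise Frobenius-orthogonal with $\|A\|^2=\|D_1\|^2=\|D_2\|^2=2$.
\begin{itemize}
\item $D_1\in W_2(e)$, whereas $A,D_2\in W_1(e)$ in the splitting \eqref{eq:alena-source-hodge-stabilizer-split}.
\item Inside $W_1(e)$, the vectors $v$ and $n$ are orthogonal, so $\langle A,D_2\rangle=0$.
\end{itemize}
This is the same computation that produces $\|Q^{(3)}_{\perp A}/6\|^2=\kappa_g^2/2+\dot\kappa_g^2/18$ in that theorem, and it serves as a consistency check.

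Conformality on $\operatorname{span}\{A,D_1,D_2\}$ means the marked map $T$ multiplies all these inner products by a common factor $\lambda^2>0$. With $J_{\rm rel}=TA$ and $J_{\rm com}=-TQ^{(3)}$, linearity and orthogonality give
\[
\langle J_{\rm rel},J_{\rm com}\rangle=(4+\kappa_g^2)\lambda^2\|A\|^2,
\qquad
\|J_{\rm rel}\|^2=\lambda^2\|A\|^2,
\]
so $q_*=4+\kappa_g^2$. The defect $\Delta_q^2$ is the squared norm of the component of $J_{\rm com}$ orthogonal to $J_{\rm rel}$, which is
\[
\lambda^2\bigl(9\kappa_g^2\|D_1\|^2+\dot\kappa_g^2\|D_2\|^2\bigr).
\]
Dividing by $\|J_{\rm rel}\|^2=2\lambda^2$ gives $\delta_q=9\kappa_g^2+\dot\kappa_g^2$; the common factor $\lambda$ cancels, which is the point of the conformality assumption. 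Combining $q_*=4+\kappa_g^2$ with $r_{\rm src}^2=8/(3(4+\kappa_g^2))$ from \eqref{eq:bl-primitive-source-contraction} yields $q_*r_{\rm src}^2=8/3$.

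For the feasibility statement I would invert these relations. The contraction relation gives $\kappa_g^2=8/(3r_{\rm src}^2)-4$, and then $\dot\kappa_g^2=\delta_q-9\kappa_g^2=\delta_q-24/r_{\rm src}^2+36$. A real source path with these invariants exists exactly when both quantities are nonnegative. This is the stated pair of inequalities $0<r_{\rm src}^2\le 2/3$ and $\delta_q\ge 24/r_{\rm src}^2-36$. For the converse, one takes any unit-speed curve on $S^2$ with prescribed $\kappa_g$ and $\dot\kappa_g$ at $t=0$. The zero-defect corner $\delta_q=0$ forces $\kappa_g=\dot\kappa_g=0$, hence $r_{\rm src}^2=2/3$ and $q_*=4$.

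The only step needing care is the orthogonality bookkeeping, in particular $\langle A,D_2\rangle=0$ and the exact normalizations of $D_1,D_2$. A wrong factor there would change the coefficients $9$ and $1$ in $\delta_q$. I would check these against the third-jet norm already stated in Theorem~\ref{thm:alena-dual-veronese-source-jet}.
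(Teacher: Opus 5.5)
Your proposal is correct and is essentially the derivation the paper intends; the corollary is stated without a written proof, and the natural argument is yours. You decompose $-Q^{(3)}$ orthogonally along $A$, $v\otimes n+n\otimes v$ and $e\otimes n+n\otimes e$ using Theorem~\ref{thm:alena-dual-veronese-source-jet}, let the conformal factor cancel in $q_*$ and $\delta_q$, combine with \eqref{eq:bl-primitive-source-contraction}, and then invert for the feasibility cone and the zero-defect corner. Your normalization bookkeeping (three pairwise Frobenius-orthogonal directions, each of squared norm $2$) checks out against the stated third-jet norm.
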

Physical relative mixing further requires $[H_u,H_d]\neq0$, and quark CP requires $\det[H_u,H_d]\neq0$.

\begin{definition}[Hodge-Eisenstein matrix-completion branch]
\label{def:bl-hodge-eisenstein-matrix-completion}
Let $L_{ij}=E_{ij}-E_{ji}$ and let $c_2,c_3$ be given by \eqref{eq:bl-rees-picard-coefficients}. Before assigning a common-to-relative amplitude, put
\begin{align} A_u(\epsilon)&=(\mathfrak q-\tfrac12)\epsilon L_{12}+(\mathfrak t-\tfrac{c_2}{2})\epsilon^2L_{23}-\tfrac{\mathfrak h}{2}\epsilon^3L_{13},\notag\\ A_d(\epsilon)&=(\mathfrak q+\tfrac12)\epsilon L_{12}+(\mathfrak t+\tfrac{c_2}{2})\epsilon^2L_{23}+\tfrac{\mathfrak h}{2}\epsilon^3L_{13}. \end{align}
Set $V_{\rm HE}=e^{-A_u}e^{A_d}$. On the small positive branch, $\epsilon$ is fixed by $s_{12}(V_{\rm HE}(\epsilon))=\lambda_{\rm cen}$; the unit linear $12$ coefficient makes this reparametrization locally unique. The physical complex CKM class is attached only after \eqref{eq:bl-ckm-phase-closure}.
\end{definition}

\begin{proposition}[Hodge-Eisenstein common-jet closure]
\label{prop:bl-hodge-eisenstein-unitary-completion}
The leading $13$ retention and vanishing connected order-four $23$ correction give
\begin{equation} \mathfrak t=c_2(\mathfrak q-3+\mathfrak q^{-1})+6c_3(1-\mathfrak q^{-1}),\qquad \mathfrak h=c_2(1-\tfrac1{2\mathfrak q})-c_3(2-\tfrac3{\mathfrak q}). \label{eq:bl-hodge-eisenstein-common-jets} \end{equation}
Put $D=c_2-6c_3$ and $H=\mathfrak h-c_2+2c_3$. For $D\neq0$,
\begin{equation} \mathcal R_H:=2H(\mathfrak t+3c_2-6c_3+2H)+c_2D=0,\qquad \mathfrak q_H=-\frac{D}{2H}. \label{eq:bl-hodge-higher-jet-residual} \end{equation}
If $H\neq0$, this is the unique finite higher-jet amplitude; at $D=0$, closure requires $H=0$ and gives $\mathfrak q_H=2+\mathfrak t/c_2$. The microscopic gate is
\begin{equation} \mathcal R_H=0,\qquad \Delta_q=0,\qquad q_*=\mathfrak q_H, \label{eq:bl-hodge-multijet-attachment} \end{equation}
before comparison with $q_H/p_H=4$. On the primitive integral branch, if the blind value is $4$, \eqref{eq:bl-hodge-eisenstein-common-jets} gives the completed coefficients and
\begin{align} s_{23}&=c_2\lambda_{\rm cen}^2(1+\lambda_{\rm cen}^2/3+O(\lambda_{\rm cen}^4)),\notag\\ s_{13}&=c_3\lambda_{\rm cen}^3(1+\kappa_{13}^{\rm HE}\lambda_{\rm cen}^2+O(\lambda_{\rm cen}^4)), \label{eq:bl-hodge-eisenstein-ckm-expansion} \end{align}
where $\kappa_{13}^{\rm HE}=185/192-203\sqrt6/384$.
\end{proposition}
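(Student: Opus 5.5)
The plan is a direct Baker--Campbell--Hausdorff expansion of $V_{\rm HE}=e^{-A_u}e^{A_d}$ in the filtration of Definition~\ref{def:bl-hodge-eisenstein-matrix-completion}. First I will assign Rees weights $1,2,3$ to $L_{12},L_{23},L_{13}$ as in \eqref{eq:bl-minimal-schur-torsor-valuations}. I will then write $A_d=\Sigma+\tfrac12\Delta$ and $-A_u=-\Sigma+\tfrac12\Delta$, with common part
\begin{equation*}
\Sigma=\mathfrak q\epsilon L_{12}+\mathfrak t\epsilon^2L_{23}
\end{equation*}
and relative part
\begin{equation*}
\Delta=\epsilon L_{12}+c_2\epsilon^2L_{23}+\mathfrak h\epsilon^3L_{13}.
\end{equation*}
The BCH series of $\log V_{\rm HE}$ then starts with $\Delta$ at first order. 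The second-order term is $\tfrac12[-A_u,A_d]=\tfrac12[\Delta,\Sigma]$ up to sign. The third- and fourth-order terms are nested brackets of $\Sigma$ and $\Delta$. Using $[L_{12},L_{23}]=L_{13}$, $[L_{23},L_{13}]=L_{12}$ and $[L_{13},L_{12}]=L_{23}$ (up to the $\mathfrak{so}(3)$ sign convention), I will collect the coefficients of $L_{13}$ at order $\epsilon^3$ and of $L_{23}$ at order $\epsilon^4$. These give two polynomial equations in $(\mathfrak t,\mathfrak h)$ with $\mathfrak q$ as a parameter.

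The first condition, leading $13$ retention, requires the $\epsilon^3L_{13}$ coefficient of $\log V_{\rm HE}$ to equal $c_3$ times the cube of the $12$ coefficient. Combined with the identification of $s_{12},s_{23},s_{13}$ from the matrix entries, this should be linear in $\mathfrak h$ and $\mathfrak t$. The second condition, vanishing of the connected order-four $23$ correction, is the requirement that the $\epsilon^4L_{23}$ term equal $c_2\epsilon^4$ times the appropriate square correction; it is also linear once the first relation is used.

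Solving this $2\times2$ linear system should give \eqref{eq:bl-hodge-eisenstein-common-jets}. The $\mathfrak q^{-1}$ terms should come from the quotient needed to normalize the $12$ coefficient to unit size. I will then check directly that substituting $D=c_2-6c_3$ and $H=\mathfrak h-c_2+2c_3$ into \eqref{eq:bl-hodge-eisenstein-common-jets} makes $\mathcal R_H$ factor so that its vanishing is equivalent to $\mathfrak q=-D/(2H)$. To see this, I will eliminate $\mathfrak q$: from the $\mathfrak h$ equation, $2\mathfrak qH=-c_2+3c_3\cdot(\ldots)$. The degenerate case $D=0$ then reduces the $\mathfrak t$ relation to $\mathfrak t=c_2(\mathfrak q-2)$. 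The gate \eqref{eq:bl-hodge-multijet-attachment} is definitional and only juxtaposes this with Corollary~\ref{cor:bl-source-hodge-jet-folding}.

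For the expansion \eqref{eq:bl-hodge-eisenstein-ckm-expansion}, I will specialize to $\rho_{\rm RP}=1$. Theorem~\ref{thm:bl-intrinsic-rees-picard-geometry} then gives $c_2=\sqrt{2/3}$ and $c_3=1/3$, and I will set $\mathfrak q=4$. Next I will compute $V_{\rm HE}$ to order $\epsilon^5$ by exponentiating the BCH logarithm in the $3\times3$ defining representation. From this I will read off $s_{12}=\epsilon+a\epsilon^3+\dots$ and invert to express $\epsilon$ as a series in $\lambda_{\rm cen}$. Substituting into $s_{23}$ and $s_{13}$ should give $1/3$ and $\kappa_{13}^{\rm HE}$. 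The $\sqrt6$ in $\kappa_{13}^{\rm HE}$ should arise from the products $c_2c_3$ and $c_2^{-1}c_3$.

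The main obstacle is the order-five bookkeeping: fifth-order BCH terms together with the nonlinear map from the logarithm to the sine entries. I will first avoid full BCH by multiplying the two exponentials truncated at total weight five directly. I will then cross-check numerically at small $\epsilon$.
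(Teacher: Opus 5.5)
Your split into the common part $\Sigma$ and the relative part $\Delta$ is the right engine, as is the BCH bookkeeping: $X+Y=\Delta$, $[X,Y]=[\Delta,\Sigma]$, and the third-order term is $-\tfrac16[\Sigma,[\Delta,\Sigma]]$. The gap is that you impose the two closure conditions on the wrong objects. Write $\log V_{\rm HE}=\epsilon L_{12}+c_2\epsilon^2L_{23}+\kappa_3\epsilon^3L_{13}+\kappa_4\epsilon^4L_{23}+O(\epsilon^5)$. Your brackets give $\kappa_3=\mathfrak h+\tfrac12(\mathfrak t-c_2\mathfrak q)$ and $\kappa_4=\mathfrak q(\tfrac12\mathfrak h+\tfrac16(\mathfrak t-c_2\mathfrak q))$.

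Both conditions concern the CKM sines, which are matrix entries of $V_{\rm HE}$, and exponentiation mixes the log coefficients. Rodrigues' formula gives:
\begin{itemize}
\item $(V_{\rm HE})_{12}=\epsilon-\epsilon^3/6+O(\epsilon^5)$;
\item $(V_{\rm HE})_{13}=(\kappa_3+\tfrac12c_2)\epsilon^3+O(\epsilon^5)$;
\item $(V_{\rm HE})_{23}=c_2\epsilon^2+(\kappa_4-\tfrac16c_2-\tfrac12\kappa_3)\epsilon^4+O(\epsilon^6)$.
\end{itemize}

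Leading $13$ retention is therefore $\kappa_3=c_3-\tfrac12c_2$. This is exactly the log coefficient recorded in Corollary~\ref{lem:bl-hodge-eisenstein-common-relative-identifiability}. It is not $\kappa_3=c_3$ as you state, and your version shifts $\mathfrak h$ by $c_2/2$. The connected order-four condition is that the $\epsilon^4$ coefficient of $(V_{\rm HE})_{23}$, viewed as a function of $\epsilon$, vanishes; equivalently $\kappa_4=\tfrac12c_3-\tfrac1{12}c_2$. The $\lambda_{\rm cen}^2/3$ in \eqref{eq:bl-hodge-eisenstein-ckm-expansion} then comes entirely from inverting $\lambda_{\rm cen}=\epsilon-\epsilon^3/6$. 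Your phrase ``$c_2\epsilon^4$ times the appropriate square correction'' does not pin this condition down.

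The same misreading shows in your account of the $\mathfrak q^{-1}$ terms. The $12$ coefficient of $\log V_{\rm HE}$ is already exactly $\epsilon$ through order four, so no normalization of it enters. The $\mathfrak q^{-1}$ appears because $\kappa_4$ carries an overall factor $\mathfrak q$, while its target $\tfrac12c_3-\tfrac1{12}c_2$ comes from the exponential map and does not. If you impose both conditions on the logarithm instead, you get $\mathfrak t=c_2\mathfrak q+6c_3$ and $\mathfrak h=-2c_3$. That has no $\mathfrak q^{-1}$ and no $\lambda_{\rm cen}^2/3$ correction in $s_{23}$, so your final check would fail.

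With the corrected conditions the rest of your plan goes through. The identities $2\mathfrak qH=-D$ and $\mathfrak t+3c_2-6c_3=c_2\mathfrak q+D/\mathfrak q$ give $\mathcal R_H=0$, the formula for $\mathfrak q_H$, and the $D=0$ case. For $\kappa_{13}^{\rm HE}$ you need BCH only through fourth order, since fifth-order brackets start at $\epsilon^6$ because $[\Delta,\Sigma]=O(\epsilon^3)$.
\begin{itemize}
\item The $\epsilon^5L_{13}$ log coefficient is $\kappa_5=-\tfrac16\mathfrak h\mathfrak q^2-\tfrac1{24}(\mathfrak q^2-\tfrac14)(\mathfrak t-c_2\mathfrak q)$.
\item Then $(V_{\rm HE})_{13}=c_3\epsilon^3+(\kappa_5+\tfrac1{12}c_3)\epsilon^5+O(\epsilon^7)$.
\item Hence $\kappa_{13}^{\rm HE}=\tfrac12+(\kappa_5+\tfrac1{12}c_3)/c_3$.
\end{itemize}
At $\mathfrak q=4$, $c_2=\sqrt{2/3}$, $c_3=1/3$ this equals $185/192-203\sqrt6/384$, with the $\sqrt6$ entering through $c_2/c_3$.
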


\begin{corollary}[Relative-logarithm no-go and cubic split]
\label{lem:bl-hodge-eisenstein-common-relative-identifiability}
Baker-Campbell-Hausdorff and \eqref{eq:bl-hodge-eisenstein-common-jets} give
\begin{equation} \log V_{\rm HE}=\epsilon L_{12}+c_2\epsilon^2L_{23}+(c_3-c_2/2)\epsilon^3L_{13}+O(\epsilon^4). \label{eq:bl-hodge-relative-log-normal-form} \end{equation}
The common amplitude cancels through cubic order. Hence relative CKM data cannot determine $\mathfrak q$ at this order, and the cubic coefficient is the exact retained/generated sum.
\end{corollary}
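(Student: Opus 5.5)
The plan is to apply the Baker--Campbell--Hausdorff series to $V_{\rm HE}=e^{-A_u}e^{A_d}$ after splitting the two generators of Definition~\ref{def:bl-hodge-eisenstein-matrix-completion} into a common part and a relative part. Write
\begin{equation*}
A_u=\mathcal C-\tfrac12\mathcal R,\qquad A_d=\mathcal C+\tfrac12\mathcal R,
\end{equation*}
\begin{equation*}
\mathcal C=\mathfrak q\epsilon L_{12}+\mathfrak t\epsilon^2L_{23},\qquad
\mathcal R=\epsilon L_{12}+c_2\epsilon^2L_{23}+\mathfrak h\epsilon^3L_{13}.
\end{equation*}
With $X=-A_u$ and $Y=A_d$ we have $X+Y=\mathcal R$. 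Bilinearity gives $[X,Y]=[-\mathcal C+\tfrac12\mathcal R,\mathcal C+\tfrac12\mathcal R]=-[\mathcal C,\mathcal R]$. Hence
\begin{equation*}
\log V_{\rm HE}=\mathcal R-\tfrac12[\mathcal C,\mathcal R]+(\text{brackets of length}\geq3).
\end{equation*}

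The next step is to order the terms by powers of $\epsilon$. Each generator appearing in $\mathcal C$ or $\mathcal R$ carries at least one power of $\epsilon$, and the only degree-one generator is $L_{12}$.
\begin{itemize}
\item A bracket of length three or more can reach order $\epsilon^3$ only if it is built from three $L_{12}$'s. Such brackets vanish, so all higher brackets are $O(\epsilon^4)$.
\item At order $\epsilon^2$, the commutator $[\mathcal C,\mathcal R]$ contributes only $\mathfrak q[L_{12},L_{12}]=0$. The quadratic term is therefore $c_2\epsilon^2L_{23}$, and the linear term is $\epsilon L_{12}$.
\item At order $\epsilon^3$, the commutator gives $(\mathfrak q c_2-\mathfrak t)\epsilon^3[L_{12},L_{23}]$.
\end{itemize}

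To evaluate that bracket, I use the matrix-unit convention $L_{ij}=E_{ij}-E_{ji}$. A direct computation gives $L_{12}L_{23}=E_{13}$ and $L_{23}L_{12}=E_{31}$, so $[L_{12},L_{23}]=L_{13}$. The cubic coefficient of $\log V_{\rm HE}$ is therefore $\mathfrak h+\tfrac12(\mathfrak t-\mathfrak q c_2)$.

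Now substitute the common-jet solution \eqref{eq:bl-hodge-eisenstein-common-jets}. First,
\begin{equation*}
\tfrac12(\mathfrak t-\mathfrak qc_2)=c_2\left(-\tfrac32+\tfrac1{2\mathfrak q}\right)+3c_3\left(1-\mathfrak q^{-1}\right).
\end{equation*}
Adding $\mathfrak h=c_2\left(1-\tfrac1{2\mathfrak q}\right)-c_3\left(2-\tfrac3{\mathfrak q}\right)$, the $\mathfrak q^{-1}$ terms cancel in both the $c_2$ and the $c_3$ coefficients. The sum is $c_3-c_2/2$, which proves \eqref{eq:bl-hodge-relative-log-normal-form}.

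Independence of $\mathfrak q$ then follows at each order. Orders one and two never involve $\mathcal C$. Order three involves $\mathfrak q$ only through the combination that has just cancelled. So the relative logarithm, and hence $V_{\rm HE}$ through $O(\epsilon^3)$, cannot detect $\mathfrak q$. The cubic coefficient splits as the retained part $\mathfrak h$ plus the generated part $-\tfrac12[\mathcal C,\mathcal R]_{13}$.

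The main point to guard is that no hidden order-$\epsilon^3$ contribution comes from the higher BCH terms, such as $\tfrac1{12}[X,[X,Y]]$. The degree count above shows that these need three copies of $L_{12}$ and therefore vanish. The only other places to be careful are the sign convention for $[L_{12},L_{23}]$ and the sign of $[X,Y]$.
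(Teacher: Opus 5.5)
Your proposal is correct and follows the paper's route. The paper's argument is simply the Baker--Campbell--Hausdorff expansion of $e^{-A_u}e^{A_d}$ followed by substitution of \eqref{eq:bl-hodge-eisenstein-common-jets}, and your common/relative splitting, degree count, and cancellation of the $\mathfrak q^{-1}$ terms carry this out, leaving $c_3-c_2/2$.

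One interpretive nuance concerns the ``retained/generated'' split. Since $\tfrac12\bigl(\epsilon L_{12}+c_2\epsilon^2L_{23}\bigr)^2$ contributes $+\tfrac12c_2\epsilon^3$ to the $13$ entry, one has $(V_{\rm HE})_{13}=c_3\epsilon^3+O(\epsilon^4)$. The split is therefore more naturally read as the retained coefficient $c_3$ plus the exponentiation-generated $-c_2/2$, both $\mathfrak q$-free. Your reading, $\mathfrak h$ plus the commutator piece, uses two terms that each depend on $\mathfrak q$.
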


If the blind value in \eqref{eq:bl-hodge-multijet-attachment} is $4$ and the marked incidence is a rank-one $C_3$-equivariant $\mathbb Z[\omega]$-linear map whose complexification is multiplication by $4u$ with $u\in\mu_6$, then
\begin{equation} \operatorname{SNF}(H_{\rm inc})=\operatorname{diag}(4,4),\qquad \operatorname{coker}H_{\rm inc}\simeq(\mathbb Z/4\mathbb Z)^2. \label{eq:bl-hodge-four-smith-test} \end{equation}
A different Smith type excludes this literal rank-one Eisenstein-scalar mechanism.

A common Rees rephasing leaves the Picard projectors invariant and sends $[X,Y]\mapsto e^{3i\phi}[X,Y]$. Thus $\rho_{\rm RP}$, $p$, $\Delta_{\rm pol}$, and the normalized commutator cubic are independent of the common determinant phase.

\begin{proposition}[Primitive cubic and integral CP separation]
\label{prop:bl-primitive-cubic-phase-factorization}
The primitive CP-odd cubic is
\begin{equation}
\mathcal J_{2\rm j}
=
\frac32\operatorname{Re}\!\left(\frac{\det[X,Y]}{(\det X)^3}\right)
=
\frac12\operatorname{Re}\!\left(\frac{\Tr([X,Y]^3)}{(\det X)^3}\right).
\label{eq:bl-primitive-two-jet-commutator-cubic}
\end{equation}
On the strict integral branch of Corollary~\ref{cor:bl-primitive-integral-theta-descent-criterion}, \eqref{eq:bl-primitive-integral-commutator-quantization} is purely imaginary, so cycle-trivial transport gives $\mathcal J_{2\rm j}=0$. A nonzero physical CP phase therefore requires residual/torsor holonomy or another primitive response.
\end{proposition}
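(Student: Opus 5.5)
The plan is to prove the two equalities in \eqref{eq:bl-primitive-two-jet-commutator-cubic} separately, and then specialize to the integral branch.

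\emph{Second equality.} This needs only an elementary trace identity. Put $M=[X,Y]$. Then $\Tr M=0$, because $M$ is a commutator. For a $3\times3$ matrix with elementary symmetric functions $e_1,e_2,e_3$, Newton's identity reads $\Tr(M^3)=e_1^3-3e_1e_2+3e_3$. With $e_1=0$ this becomes $\Tr(M^3)=3\det M$. Dividing by $(\det X)^3$ and taking real parts turns $\tfrac32\operatorname{Re}$ of the determinant ratio into $\tfrac12\operatorname{Re}$ of the trace ratio. This step needs neither the grading nor the primitive normal form.

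\emph{First equality and the primitive normal form.} I read $\mathcal J_{2\rm j}$ as the CP-odd cubic of the primitive two-jet $(X,Y)$. Two facts from the Rees-Picard theory make the expression well defined:
\begin{itemize}
\item By Theorem~\ref{thm:bl-intrinsic-rees-picard-geometry}, the generated line $\mathbb CX^2$ commutes with $X$, so $[X,Y]=[X,Y_\perp]$.
\item By the rephasing remark preceding the proposition, a common Rees rephasing sends $[X,Y]\mapsto e^{3i\phi}[X,Y]$ and $X\mapsto e^{i\phi}X$. The ratio $\det[X,Y]/(\det X)^3$ is therefore invariant under this rephasing. It is also invariant under unitary $\mathcal A_0$ conjugation.
\end{itemize}
So I may use the gauge in which $X=\zeta Z$ with $\zeta^3=\det X$, and $Y_\perp=a_1^2e^{i\theta_2}X^2S^\varepsilon$ by \eqref{eq:bl-primitive-rees-normal-form}.

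The Weyl relation $ZS=\omega SZ$ then gives
\[
[X,Y_\perp]=a_1^2e^{i\theta_2}\zeta^3\,(1-\omega^{-\varepsilon})\,S^\varepsilon
\]
up to the orientation convention for $\varepsilon$. Since $(S^\varepsilon)^3=\mathbf 1$ and $\det S^\varepsilon=1$,
\[
\frac{\det[X,Y]}{a_1^6(\det X)^3}=e^{3i\theta_2}(1-\omega^{-\varepsilon})^3 .
\]
This is consistent with \eqref{eq:bl-primitive-second-response-fingerprint}, since $|1-\omega^{\pm1}|^2=3$.

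\emph{Integral branch.} Corollary~\ref{cor:bl-primitive-integral-theta-descent-criterion} gives $e^{i\theta_2}\in\mu_6$, hence $e^{3i\theta_2}=\pm1$. Also $(1-\omega^{\pm1})^3=\mp3\sqrt3\,i$ is purely imaginary. Together these give \eqref{eq:bl-primitive-integral-commutator-quantization}, so the real part vanishes.

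\emph{Cycle-trivial transport.} On cycle-trivial transport, no holonomy phase multiplies $\det X$ relative to the integral frame. Hence $\mathcal J_{2\rm j}=0$. Any nonzero CP phase must then enter through a non-$\mu_6$ phase: either residual/torsor holonomy contributing $e^{3i\vartheta}$, or an additional primitive response outside \eqref{eq:bl-primitive-rees-normal-form}.

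\emph{Main obstacle.} The delicate step is the sign and orientation bookkeeping in $[X,Y_\perp]$: the factor $\zeta^3$, and whether the phase is $\omega^{\varepsilon}$ or $\omega^{-\varepsilon}$. The vanishing of the real part survives either choice, but matching the stated normalization requires fixing these conventions carefully.
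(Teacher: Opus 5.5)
Your proposal is correct and follows the route the paper relies on; the paper gives no separate proof. The second equality is the traceless identity $\Tr(M^3)=3\det M$ for $M=[X,Y]$, and the vanishing on the strict integral branch is the purely imaginary value in \eqref{eq:bl-primitive-integral-commutator-quantization}, which you additionally re-derive from the normal form \eqref{eq:bl-primitive-rees-normal-form} as $e^{3i\theta_2}(1-\omega^{-\varepsilon})^3$ with $e^{3i\theta_2}=\pm1$. Two cosmetic points: the first expression for $\mathcal J_{2\rm j}$ is a definition rather than an equality to prove, and $\mathcal J_{2\rm j}=0$ follows from that purely imaginary ratio alone, with no further holonomy bookkeeping needed.
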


The direct calculations below use the primitive polarized finite branch $\rho_{\rm RP}=1$ and $\Delta_{\rm pol}=0$; Corollary~\ref{cor:bl-primitive-integral-theta-descent-criterion} is needed only for its discrete integral orientation. Along the neutral affine tangent put
\begin{equation} \kappa_{\rm SB}(u)=1+\frac{20}{3}u,\qquad \lambda_{\rm cen}(u)=\frac{\kappa_{\rm SB}(u)B}{A+\kappa_{\rm SB}(u)B}. \label{eq:bl-central-shadow-seed} \end{equation}
Using \eqref{eq:bl-neutral-cell-corrected-entries} and $r_{WZ}=(m_W^{\rm prim}/m_Z^{\rm prim})^2$ gives the explicit one-parameter curve
\begin{equation} \lambda_{\rm cen}(u)=\frac{4(20u+3)}{54-67u},\qquad r_{WZ}(u)=\frac{\frac37-\frac32u}{\frac37-\frac32u+\frac6{49}}. \label{eq:bl-common-u-curve} \end{equation}
At $u=0$, \eqref{eq:bl-neutral-cell-zero-seed} gives $s_{\rm det}(0)=\lambda_{\rm cen}(0)=2/9$. Eliminating $u$ gives
\begin{equation} \lambda_{\rm cen}=\frac{507s_{\rm det}-80}{67+360s_{\rm det}}. \label{eq:bl-neutral-central-parameter-elimination} \end{equation}
and, using \eqref{eq:bl-neutral-cell-mass-reading},
\begin{equation} \lambda_{\rm cen}=\frac{427-507r_{WZ}}{427-360r_{WZ}},\qquad \frac{(m_h^{\rm prim})^2}{(m_Z^{\rm prim})^2}=\frac{14(831\lambda_{\rm cen}+100)}{15(169-120\lambda_{\rm cen})}. \label{eq:bl-neutral-direct-scale-free-sum-rules} \end{equation}
The physical small-$u$ domain used below is $0\leq u<10/49$, equivalently $B_0<A(u)$, so the ordered first two neutral amplitudes are preserved.
\begin{proposition}[One-calibration identifiability]
\label{prop:bl-one-calibration-identifiability}
On every connected branch in this domain,
\begin{equation} \lambda_{\rm cen}'(u)=\frac{5124}{(54-67u)^2}>0,\qquad r_{WZ}'(u)=-\frac{196}{(18-49u)^2}<0. \label{eq:bl-common-u-monotonicity} \end{equation}
Hence either scalar fixes a unique $u$; equivalently $u=6(9\lambda_{\rm cen}-2)/(67\lambda_{\rm cen}+80)$. Once one observable is declared as the calibration, the other and the Higgs-to-$Z$ ratio in \eqref{eq:bl-neutral-direct-scale-free-sum-rules} are fixed without retuning.
\end{proposition}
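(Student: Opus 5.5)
The plan is to verify everything directly from the explicit rational curve \eqref{eq:bl-common-u-curve}, after first rederiving that curve from its definitions. Substitute \eqref{eq:bl-neutral-cell-corrected-entries} and \eqref{eq:bl-central-shadow-seed} into $\lambda_{\rm cen}=\kappa_{\rm SB}B/(A+\kappa_{\rm SB}B)$ and clear the common denominator $98$. This gives $98\kappa_{\rm SB}B=12+80u$ and $98A=42-147u$. Hence $\lambda_{\rm cen}(u)=(80u+12)/(54-67u)$, which is the first formula in \eqref{eq:bl-common-u-curve}.

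For the second formula, $r_{WZ}=A/(A+B)$ by \eqref{eq:bl-neutral-cell-mass-reading}. The same scaling by $98$ gives $r_{WZ}(u)=(42-147u)/(54-147u)=(14-49u)/(18-49u)$.

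Next, differentiate both curves with the quotient rule. For $\lambda_{\rm cen}$ the numerator is $80(54-67u)+67(80u+12)=4320+804=5124$. The $u$-terms cancel, so the sign of the derivative is constant. For $r_{WZ}$ the numerator is $-49\cdot18+49\cdot14=-196$ over $(18-49u)^2$. This yields exactly \eqref{eq:bl-common-u-monotonicity}.

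Then check the domain. The poles of the two curves are at $u=54/67$ and $u=18/49$, and both exceed $10/49$. So on $0\le u<10/49$ both functions are smooth, and their derivatives have fixed strict signs. Each is therefore strictly monotone, hence injective, on every connected sub-branch, and either scalar determines $u$ uniquely. The explicit inverse comes from solving $\lambda(54-67u)=80u+12$ linearly for $u$, which gives $u=6(9\lambda_{\rm cen}-2)/(67\lambda_{\rm cen}+80)$. On the image of the domain, $\lambda_{\rm cen}>0$, so the denominator $67\lambda_{\rm cen}+80$ is nonzero.

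Finally, for the no-retuning clause, compose with the inverse map. Once $\lambda_{\rm cen}$ (or $r_{WZ}$) is declared as the calibration, $u$ is fixed. Then the other scalar follows from \eqref{eq:bl-common-u-curve}, equivalently from the first relation in \eqref{eq:bl-neutral-direct-scale-free-sum-rules}. The Higgs-to-$Z$ ratio follows from the second relation there, which depends only on $\lambda_{\rm cen}$.

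No step is a real obstacle. The only point needing care is confirming that the poles lie outside the stated domain, and that the rederived curve matches \eqref{eq:bl-common-u-curve} with the normalizations of \eqref{eq:bl-neutral-cell-leading-weights}.
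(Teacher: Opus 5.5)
Your proposal is correct and follows the same route the paper takes. The paper gives no separate proof: it relies on the explicit curve \eqref{eq:bl-common-u-curve}, the quotient-rule derivatives with constant numerators $5124$ and $-196$, the location of the poles at $54/67$ and $18/49$ beyond $10/49$, the linear-fractional inversion, and composition with \eqref{eq:bl-neutral-direct-scale-free-sum-rules}. Your arithmetic is correct throughout, including the simplification $r_{WZ}=(14-49u)/(18-49u)$ and the inverse $u=6(9\lambda_{\rm cen}-2)/(67\lambda_{\rm cen}+80)$.
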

~\\
With $\rho_{\rm RP}=1$, exact exponentiation after \eqref{eq:bl-hodge-multijet-attachment} gives the CKM magnitudes in Table~\ref{tab:bl-ckm-numerical-benchmarks}; \eqref{eq:bl-ckm-rho-free-rees-relation} remains independent of this normalization.\\
~\\
The selector and torsor spectra are compatible readings of the same oriented norm-seven $\mu_6$ orbit. The physical holonomy still requires the marked choice that identifies the family transport with the least scalar twist of this orbit. On that branch the one-phase lift gives
\begin{equation}
\begin{aligned}
e^{i\delta_{\rm CKM}^{\rm min}}&=\frac{143+180\sqrt3\,i}{343},
&\delta_{\rm CKM}^{\rm min}&=\pi-6\arccos\!\frac{c+w}{2\sqrt{c+2w}},\\
\cos\delta_{\rm CKM}^{\rm min}&=\frac{143}{343},
&\sin\delta_{\rm CKM}^{\rm min}&=\frac{180\sqrt3}{343},
\qquad \vartheta_\Gamma^{\rm min}=\frac13\delta_{\rm CKM}^{\rm min}=21.787^\circ.
\end{aligned}
\label{eq:bl-minimal-ckm-phase}
\end{equation}
Galois conjugation preserves $N_E(\xi)$ and the unoriented torsor spectrum while interchanging the two noncentral traces; it sends $\delta_{\rm CKM}^{\rm min}\mapsto-\delta_{\rm CKM}^{\rm min}$ and hence $J_q\mapsto-J_q$ in \eqref{eq:bl-ckm-phase-closure}. The same branch gives the $J_q$ value recorded in Table~\ref{tab:bl-ckm-numerical-benchmarks}.
The branch inherits the statuses in Table~\ref{tab:bl-independent-quantitative-relations}; Corollary~\ref{cor:pgl-norm-seven-theta-weil-reading} identifies its holonomy as the archimedean reading of the same $\xi$ whose ramified Weyl reading supplies the discrete neutral factor.

For invertible direct and neutral susceptibilities, exact triality is tested basis-freely by the scalarity residual of $\mathcal S_NP\mathcal S_D^{-1}$. Its vanishing is equivalent to $\mathcal S_NP=\kappa\mathcal S_D$ and implies
\begin{equation}
\mathcal S_N^\dagger\mathcal S_N
=|\kappa|^2P(\mathcal S_D^\dagger\mathcal S_D)P^{-1}.
\label{eq:bl-triality-singular-spectrum}
\end{equation}
Hence all singular values scale by $|\kappa|$ and $\kappa^3=\det\mathcal S_N/\det\mathcal S_D$; determinant-only agreement is insufficient.\\
~\\
\subsubsection{Charged-lepton vertex polarization and Hodge response}
\label{subsec:charged-lepton-balance}

The charged-lepton block is the color-singlet direct channel $L_L\leftrightarrow e_L^c$ in Table~\ref{tab:bl-channel-filtration}. At the direct spectral depth of \eqref{eq:bl-protected-schur-interface}, the minimal sourced branch retains the vertex-resolved spectrum and no independent charged edge-transport coefficient. Its spectral representative is selected by
\begin{equation}
R_{\rm vert}^{\ell}=H_\ell-\mathbb E_{\rm vert}(H_\ell),
\qquad
R_{\rm vert}^{\ell}=0,
\label{eq:bl-charged-lepton-vertex-residual}
\end{equation}
so that $H_{\ell,0}\in\mathbb C[Z]$ after an auxiliary torsor origin has been chosen. This is the vertex polarization fixed above; a mass prediction additionally requires the completed direct block.\\
~\\
Let $x_\ell$ be the positive charged-lepton amplitude vector, $x_{\ell,r}=\sqrt{m_{\ell,r}}$ for $r\in\mathbb Z_3$.
After removal of the overall scale, its real vertex coordinates may be written in the standard $\mathbb Z_3$-symmetric Koide parametrization as
\begin{equation}
x_{\ell,r}=C_\ell\left(a_\ell+\cos\left(\phi_\ell+\frac{2\pi r}{3}\right)\right),
\qquad
r\in\mathbb Z_3 .
\label{eq:bl-charged-lepton-torsor-coordinate}
\end{equation}
Put $\mathfrak X_0=\sum_r x_{\ell,r}$ and $\mathfrak X_1=\sum_r x_{\ell,r}\omega^{-r}$. Then $\mathfrak X_0=3C_\ell a_\ell$ and $\mathfrak X_1=(3/2)C_\ell e^{i\phi_\ell}$, so $(C_\ell,a_\ell,\phi_\ell)$ is the real polar parametrization of the two independent Fourier coefficients. The balance and clock are the two components of
\begin{equation}
\mathcal Z_\ell=\log\frac{\mathfrak X_0}{\sqrt2\,\mathfrak X_1},
\qquad
\mathfrak B_\ell=2\operatorname{Re}\mathcal Z_\ell,
\qquad
\phi_\ell=-\operatorname{Im}\mathcal Z_\ell.
\label{eq:bl-charged-lepton-complex-response}
\end{equation}
The factor $\sqrt2$ is the real multiplicity of the nontrivial character pair in the finite-Heisenberg Parseval identity above. The scale-free residual is
\begin{equation}
R_{\rm bal}^{\ell}=\mathfrak B_\ell-\Sigma_\ell,
\label{eq:bl-charged-lepton-balance-residual}
\end{equation}
where $\Sigma_\ell$ is the completed Schur correction to the invariant/complement norm balance. Proposition~\ref{prop:pgl-finite-weyl-calculus} gives directly
\begin{equation}
2a_\ell^2=e^{\Sigma_\ell}.
\label{eq:bl-charged-lepton-balance-condition}
\end{equation}
Hence $\Sigma_\ell=0$ gives $a_\ell=1/\sqrt2$. Equivalently, $Q_\ell:=\sum_r m_{\ell,r}/(\sum_r\sqrt{m_{\ell,r}})^2$ equals $2/3$ exactly when $|\mathfrak X_0|^2=2|\mathfrak X_1|^2$. The Koide value $Q_\ell=2/3$ \cite{Koide1983LeptonMassFormula} and its geometric interpretation \cite{Foot1994KoideGeometry} are therefore the standard zero-correction balance reproduced by the Hodge-Parseval decomposition. The companion $\mathbb Z_3$ phase value $2/9$, attributed to the Brannen-Rosen parametrization and summarized in \cite{Zenczykowski2013KoideZ3}, is used below as a separate clock-lift reference. Near the balance branch, $a_\ell=1/\sqrt2+\delta a_\ell$ gives $\Sigma_\ell=2\sqrt2\,\delta a_\ell+O(\delta a_\ell^2)$.\\
~\\
For a positive $H_\ell=Y_\ell Y_\ell^\dagger$ with simple eigenvalues $h_r$ and normalized eigenvectors $u_r$, put $x_\ell=\mathfrak s(H_\ell)$ with $\mathfrak s(H)_r=\lambda_r(H)^{1/4}$. Standard eigenvalue perturbation gives
\begin{equation}
\left(D\mathfrak s_{H_\ell}[\delta H_\ell]\right)_r
=
\frac{\langle u_r,\delta H_\ell u_r\rangle}{4h_r^{3/4}}.
\end{equation}
Writing $\delta\mathfrak X_0=\sum_r(D\mathfrak s_{H_\ell}[\delta H_\ell])_r$ and $\delta\mathfrak X_1=\sum_r\omega^{-r}(D\mathfrak s_{H_\ell}[\delta H_\ell])_r$, differentiation of \eqref{eq:bl-charged-lepton-complex-response} gives $D\mathcal Z_\ell=\delta\mathfrak X_0/\mathfrak X_0-\delta\mathfrak X_1/\mathfrak X_1$. Define
\begin{equation}
\operatorname{Tr}_{\rm bal}^{(H_\ell)}(\delta H_\ell)=2\operatorname{Re}D\mathcal Z_\ell[\delta H_\ell],
\qquad
\operatorname{Tr}_{\rm clk}^{(H_\ell)}(\delta H_\ell)=-\operatorname{Im}D\mathcal Z_\ell[\delta H_\ell]=D\phi_\ell[\delta H_\ell].
\label{eq:bl-charged-lepton-clock-response}
\end{equation}
These are the paired linear responses of the same Fourier coordinate.\\
~\\
Let a Hermitian perturbation $T_3$ of type $V_3$ satisfy Proposition~\ref{prop:bl-higher-moment-schur-suppression}, with its Schur compression represented in the charged-lepton Hermitian block. At $\Sigma_\ell(0)=0$ and fixed $\phi_{\ell,0}$,
\begin{align}
\Sigma_\ell(\varepsilon)
&=
-\varepsilon^2\operatorname{Tr}_{\rm bal}^{(H_{\ell,0})}
\left(PT_3Q(K_{HH}^{(0)})^{-1}QT_3P\right)+O(\varepsilon^3),
\notag\\
\phi_\ell(\varepsilon)-\phi_{\ell,0}
&=
-\varepsilon^2\operatorname{Tr}_{\rm clk}^{(H_{\ell,0})}
\left(PT_3Q(K_{HH}^{(0)})^{-1}QT_3P\right)+O(\varepsilon^3).
\label{eq:bl-charged-lepton-vthree-schur-response}
\end{align}
Writing $\mathcal S_\ell^H$ for the Hermitian completed Schur datum gives $\Sigma_\ell=\operatorname{Tr}_{\rm bal}^{(H_{\ell,0})}(\mathcal S_\ell^H)+O(\|\mathcal S_\ell^H\|^2)$ and $\phi_\ell-\phi_{\ell,0}=\operatorname{Tr}_{\rm clk}^{(H_{\ell,0})}(\mathcal S_\ell^H)+O(\|\mathcal S_\ell^H\|^2)$. The Brannen-Rosen value $2/9$ is retained below only as an external clock-lift reference. No identification of $\phi_{\ell,0}$ with the neutral determinant coordinate in \eqref{eq:bl-neutral-cell-zero-seed} is imposed. Since \eqref{eq:bl-unimodular-first-schur} has zero torsor-clock degree, the neutral determinant coordinates respond at first order in the scalar tangent, while the minimal charged clock responds only through the noncentral Schur datum in \eqref{eq:bl-charged-lepton-vthree-schur-response}. The overall scale $C_\ell$ and the charged clock origin remain independent.

\subsection{Neutral determinant, Schur response, and PMNS}
\label{subsec:neutral-schur-complement-pfaffian}

The neutral sector is the $|\Delta(B-L)|=2$ Schur problem of Table~\ref{tab:bl-channel-filtration}. Let $D_\nu$ be the neutral Dirac bridge, $R_N$ the heavy singlet family denominator, $M_R$ its scale, and $A_L$ the active Majorana correction.
\begin{equation}
K_N
=
\begin{pmatrix}
A_L & D_\nu^T\\
D_\nu & M_RR_N
\end{pmatrix}.
\label{eq:bl-neutral-block}
\end{equation}
The Majorana entries are taken complex symmetric, $A_L^T=A_L$ and $R_N^T=R_N$, so that $K_N^T=K_N$. Their physical neutral bases are defined by Takagi factorization. The heavy scale is separated from its family shape by the convention $|\det R_N|=1$, so that $M_R=|\det(M_RR_N)|^{1/3}$. This convention removes the rescaling redundancy between $M_R$ and $R_N$ and carries no additional spectral prediction.\\
~\\
The quadratic half-flux normalization used in the minimal neutral branch is
\begin{equation}
S_M
=
\frac14S_{\rm prim},
\qquad
M_R
=
M_{\rm Pl}e^{-S_M}.
\label{eq:bl-pfaffian-majorana-scale}
\end{equation}
The unreduced Planck mass convention $M_{\rm Pl}=1.22089\times10^{19}\,{\rm GeV}$ is used for the completed numerical reading. The fraction in \eqref{eq:bl-pfaffian-majorana-scale} is a completed geometric scale assumption beyond the carrier theorem. With $S_{\rm prim}=4\pi^2$, this scale is $M_R=M_{\rm Pl}e^{-\pi^2}$. It does not define a holomorphic square root of $L_\Gamma$: a line $M$ with $M^{\otimes2}\simeq L_\Gamma$ would imply $2c_1(M)=1$ in $H^2(\mathbb{CP}^1,\mathbb Z)$, which has no integral solution.\\
~\\
The Pfaffian, seesaw, and Takagi readings are collected by the same complex-symmetric Schur datum. Put $H:=M_RR_N$ and $E_{\rm sp}=\left(\begin{smallmatrix}0&1\\-1&0\end{smallmatrix}\right)$.
\begin{proposition}[Symmetric Schur-Pfaffian package]
\label{lem:bl-neutral-schur-takagi-response}
For invertible $H$, let $\mathbb A_N=E_{\rm sp}\otimes K_N$ and $\mathbb A_H=E_{\rm sp}\otimes H$. Then $\mathbb A_N$ and $\mathbb A_H$ are skew, $\operatorname{Pf}(\mathbb A_N)=-\det K_N$, $\operatorname{Pf}(\mathbb A_H)=-\det H$, and the protected reduction is
\begin{equation}
K_\nu^{\rm eff}=A_L-D_\nu^TH^{-1}D_\nu,
\qquad
\det K_\nu^{\rm eff}=\frac{\operatorname{Pf}(\mathbb A_N)}{\operatorname{Pf}(\mathbb A_H)},
\qquad
m_1m_2m_3=\left|\frac{\operatorname{Pf}(\mathbb A_N)}{\operatorname{Pf}(\mathbb A_H)}\right|.
\label{eq:bl-neutral-schur-complement}
\end{equation}
Along a branch parameter $t$, write $H(t)=\sum_{n\geq0}t^nH_n$, $D_\nu(t)=\sum_{n\geq0}t^nD_n$, $A_L(t)=\sum_{n\geq0}t^nA_n$, $H(t)^{-1}=\sum_{n\geq0}t^nQ_n$, and $K_\nu^{\rm eff}(t)=\sum_{n\geq0}t^nK_{\nu,n}$. Then
\begin{equation}
Q_0=H_0^{-1},\qquad Q_n=-Q_0\sum_{r=1}^nH_rQ_{n-r},\qquad K_{\nu,n}=A_n-\sum_{i+j+k=n}D_i^TQ_jD_k.
\label{eq:bl-neutral-schur-recursion}
\end{equation}
If $U_0^TK_{\nu,0}U_0=\operatorname{diag}(\sigma_i)$ has simple positive Takagi values and $E:=U_0^TK_{\nu,1}U_0$, the off-diagonal generator satisfies
\begin{equation}
\operatorname{Re}\Omega_{ij}=-\frac{\operatorname{Re}E_{ij}}{\sigma_i-\sigma_j},\qquad
\operatorname{Im}\Omega_{ij}=-\frac{\operatorname{Im}E_{ij}}{\sigma_i+\sigma_j},\qquad i\neq j.
\label{eq:bl-neutral-takagi-sylvester-response}
\end{equation}
Degenerate Takagi subspaces are resolved before this formula is applied.
\end{proposition}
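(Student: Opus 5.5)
The plan has four parts: the Pfaffian identities, the Schur complement and determinant ratio, the series recursion, and the Takagi perturbation.

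\emph{Pfaffians.} Take $K_N^T=K_N$. Then $(E_{\rm sp}\otimes K_N)^T=E_{\rm sp}^T\otimes K_N^T=-E_{\rm sp}\otimes K_N$, so $\mathbb A_N$ is skew. In the block ordering the matrix is $\left(\begin{smallmatrix}0&K\\-K&0\end{smallmatrix}\right)$. The standard formula $\operatorname{Pf}\left(\begin{smallmatrix}0&B\\-B^T&0\end{smallmatrix}\right)=(-1)^{n(n-1)/2}\det B$ applies with $B=K$. Here $n=6$ for $K_N$ and $n=3$ for $H$, and in both cases the sign is $-1$. This gives $\operatorname{Pf}(\mathbb A_N)=-\det K_N$ and $\operatorname{Pf}(\mathbb A_H)=-\det H$. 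Symmetry of $H$ follows from $R_N^T=R_N$, so the same argument covers $\mathbb A_H$.

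\emph{Schur complement and determinant ratio.} I would take $P$ to be the active block and $Q$ the heavy block. Lemma~\ref{lem:fsr-penalty-descent}, equation \eqref{eq:fsr-feshbach-reduction}, then gives $K_\nu^{\rm eff}=A_L-D_\nu^TH^{-1}D_\nu$. The block factorization yields $\det K_N=\det H\,\det K_\nu^{\rm eff}$, so the Pfaffian ratio equals $\det K_\nu^{\rm eff}$. Write the Takagi form as $K_\nu^{\rm eff}=U\operatorname{diag}(m_i)U^T$ with $U$ unitary. Since $|\det U|=1$, we get $|\det K_\nu^{\rm eff}|=m_1m_2m_3$.

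\emph{Series recursion.} Match coefficients of $t^n$ in $H(t)H(t)^{-1}={\bf1}$. Order zero gives $Q_0=H_0^{-1}$. For $n\ge1$, the coefficient condition $\sum_{r=0}^nH_rQ_{n-r}=0$ solves to $Q_n=-Q_0\sum_{r\ge1}H_rQ_{n-r}$. Substituting the three series into $A_L-D^TH^{-1}D$ and collecting the triple Cauchy product gives the formula for $K_{\nu,n}$.

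\emph{Takagi perturbation.} This step needs the most care.
\begin{itemize}
\item Write the perturbed Takagi unitary as $U(t)=U_0({\bf1}+t\Omega+O(t^2))$ with $\Omega$ anti-Hermitian. Require $U(t)^TK(t)U(t)$ to be diagonal and real at first order.
\item With $\Sigma=\operatorname{diag}(\sigma_i)$, the order-one term is $E+\Omega^T\Sigma+\Sigma\Omega$. Its off-diagonal entries must vanish.
\item Anti-Hermiticity gives $\Omega^T=-\bar\Omega$. The $(i,j)$ entry is therefore $E_{ij}-\bar\Omega_{ij}\sigma_j+\sigma_i\Omega_{ij}$, where I use $\bar\Omega_{ji}=-\Omega_{ij}$ to write $(\Omega^T\Sigma)_{ij}=\Omega_{ji}\sigma_j=-\bar\Omega_{ij}\sigma_j$.
\item Split $\Omega_{ij}=x+iy$. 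The real part of the condition reads $\operatorname{Re}E_{ij}+x(\sigma_i-\sigma_j)=0$. The imaginary part reads $\operatorname{Im}E_{ij}+y(\sigma_i+\sigma_j)=0$.
\end{itemize}
These two equations give \eqref{eq:bl-neutral-takagi-sylvester-response}. Simple positive Takagi values guarantee that both denominators $\sigma_i-\sigma_j$ and $\sigma_i+\sigma_j$ are nonzero. The main obstacle is handling the sign of $\Omega^T$ correctly. I would check it on a $2\times2$ example.
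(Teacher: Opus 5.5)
Your proposal is correct and follows the same four-step route as the paper's proof: the block-ordered Pfaffian sign $(-1)^{n(n-1)/2}$ with $n=6$ and $n=3$, Schur block elimination, coefficient comparison in $H(t)H(t)^{-1}={\bf1}$, and first-order differentiation of the Takagi relation. Your bookkeeping with $\Omega^T=-\bar\Omega$ yields exactly the stated real and imaginary denominators. One point is left implicit: $K_\nu^{\rm eff}$, and hence $E$, are symmetric because $A_L^T=A_L$ and $H^T=H$. This is what justifies the Takagi form and makes the $(i,j)$ and $(j,i)$ off-diagonal conditions coincide.
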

\begin{proof}
The Pfaffian identities are the product-orientation identities for $E_{\rm sp}\otimes K$; block elimination gives the Schur and determinant formulas, coefficient comparison gives the inverse recursion, and differentiation of the Takagi relation gives the final response.
\end{proof}
In the pure type-I subbranch $A_L=0$, \eqref{eq:bl-neutral-schur-complement} reduces to
\begin{equation}
m_1m_2m_3=\frac{|\det D_\nu|^2}{M_R^3}\qquad(|\det R_N|=1).
\label{eq:bl-typeI-neutrino-determinant-scale}
\end{equation}
The half-flux scale fixes the denominator scale, while an absolute light-neutrino reading additionally requires the determinant shape of the neutral Dirac bridge. The skew doubling is auxiliary: its paired singular values come from the two-dimensional $E_{\rm sp}$ factor and impose no additional Takagi degeneracy.\\
~\\
The normalized determinant-suppressed type-I branch used below takes $R_N={\bf1}$ and $A_L=0$. Before imposing \eqref{eq:bl-primitive-schur-unit}, put $\Sigma_\nu=\operatorname{diag}(B_0,A,1+S_{\rm prim}^{-1})$. In the charged-vertex/heavy-singlet basis its Takagi-compatible Dirac orientation is fixed by
\begin{equation} D_\nu=m_D\Sigma_\nu U_\nu^\dagger,\qquad r_\nu:=\frac{|\det D_\nu|}{m_D^3}=B_0A(u)(1+S_{\rm prim}^{-1}). \label{eq:bl-determinant-suppressed-neutral-dirac-shape} \end{equation}
On the common lock, the fixed benchmark value $u=1/(64\pi^2)$ gives $r_\nu=0.05351$. Substitution in \eqref{eq:bl-neutral-schur-complement} gives
\begin{equation}
K_\nu^{\rm eff}
=
-\frac{m_D^2}{M_R}
U_\nu^\ast\Sigma_\nu^2U_\nu^\dagger,
\qquad
m_i
=
\frac{m_D^2}{M_R}(\Sigma_\nu)_{ii}^2.
\label{eq:bl-determinant-suppressed-neutral-takagi-factorization}
\end{equation}
The common sign in \eqref{eq:bl-determinant-suppressed-neutral-takagi-factorization} is absorbed by a common Takagi column phase and does not affect the oscillation quotient. Equation~\eqref{eq:bl-determinant-suppressed-neutral-dirac-shape} is a normalized neutral-shape and orientation completion; no rank reduction is imposed.

\begin{corollary}[Scale-free neutral oscillation shape]
\label{cor:bl-scale-free-neutral-oscillation-shape}
On the determinant-suppressed type-I branch,
\begin{equation} \mathcal R_\nu:=\frac{\Delta m_{21}^2}{\Delta m_{31}^2}=\frac{A^4-B_0^4}{(1+S_{\rm prim}^{-1})^4-B_0^4}. \label{eq:bl-scale-free-neutral-oscillation-shape} \end{equation}
Under the common lock \eqref{eq:bl-primitive-schur-unit},
\begin{equation} \mathcal R_\nu(u)=\frac{\left(\frac37-\frac32u\right)^4-\left(\frac6{49}\right)^4}{(1+16u)^4-\left(\frac6{49}\right)^4}, \label{eq:bl-neutral-oscillation-common-u} \end{equation}
which is strictly decreasing for $0\leq u<10/49$. Eliminating $u$ with $r:=m_W^2/m_Z^2$ gives the calibration-free cross-sector test
\begin{equation}\mathcal R_\nu=\frac{6^4[r^4-(1-r)^4]}{(273-337r)^4-6^4(1-r)^4}.\label{eq:bl-wz-neutrino-calibration-free}\end{equation}
Thus any one scalar anchor predicts both the remaining neutral-cell ratios and $\mathcal R_\nu$ on the additional lock; without \eqref{eq:bl-primitive-schur-unit}, the neutrino shape remains an independent completed parameter.
\end{corollary}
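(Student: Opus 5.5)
The plan is to read the three neutral light masses directly off the Takagi factorization \eqref{eq:bl-determinant-suppressed-neutral-takagi-factorization}. Then the proof reduces to algebraic substitution in the neutral affine cell \eqref{eq:bl-neutral-cell-corrected-entries}, followed by one monotonicity check and one elimination.

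\emph{Step 1: the oscillation quotient.} On the branch $R_N={\bf1}$, $A_L=0$, equation \eqref{eq:bl-determinant-suppressed-neutral-takagi-factorization} gives
\[
m_i=\frac{m_D^2}{M_R}(\Sigma_\nu)_{ii}^2,\qquad \Sigma_\nu=\operatorname{diag}\bigl(B_0,A,1+S_{\rm prim}^{-1}\bigr).
\]
Hence $m_i^2\propto(\Sigma_\nu)_{ii}^4$ with the common factor $m_D^4/M_R^2$, and this factor cancels in the ratio. The overall sign is absorbed by a Takagi column phase, as noted before the statement, so it plays no role.

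We need the ordering $B_0<A<1+S_{\rm prim}^{-1}$ to label the masses as $m_1,m_2,m_3$.
\begin{itemize}
\item The inequality $B_0<A$ is exactly the physical domain $0\le u<10/49$.
\item The inequality $A<1$ holds because $A\le 3/7$, while $1+S_{\rm prim}^{-1}>1$.
\end{itemize}
With this labeling, $\Delta m_{21}^2$ and $\Delta m_{31}^2$ are proportional to $A^4-B_0^4$ and $(1+S_{\rm prim}^{-1})^4-B_0^4$, which gives \eqref{eq:bl-scale-free-neutral-oscillation-shape}.

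\emph{Step 2: the common-$u$ form and monotonicity.} Substituting $A=3/7-\frac32u$, $B_0=6/49$ and the lock $S_{\rm prim}^{-1}=16u$ from \eqref{eq:bl-primitive-schur-unit} yields \eqref{eq:bl-neutral-oscillation-common-u}. On $[0,10/49)$ the two factors move in opposite directions.
\begin{itemize}
\item The numerator is positive, because $A>B_0>0$. It is strictly decreasing, because $A$ is strictly decreasing and positive, so $A^4$ is strictly decreasing.
\item The denominator is positive and strictly increasing.
\end{itemize}
A positive decreasing function divided by a positive increasing function is strictly decreasing. Taking the derivative of $\mathcal R_\nu(u)$ explicitly is unnecessary.

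\emph{Step 3: eliminating $u$.} From the mass reading \eqref{eq:bl-neutral-cell-mass-reading}, $r=A/(A+B_0)$, so
\[
A=\frac{B_0\,r}{1-r}=\frac{6r}{49(1-r)}.
\]
Inverting the affine relation gives $u=\frac23\bigl(\frac37-A\bigr)$, and then
\[
1+16u=\frac{39}{7}-\frac{64}{49}\,\frac{r}{1-r}=\frac{273-337r}{49(1-r)}.
\]
Writing $B_0=6(1-r)/\bigl(49(1-r)\bigr)$ puts all three entries over the common denominator $49(1-r)$. The factor $49^4(1-r)^4$ then cancels between numerator and denominator, leaving \eqref{eq:bl-wz-neutrino-calibration-free}.

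The final remark of the statement follows from Proposition~\ref{prop:bl-one-calibration-identifiability}, which shows that any single scalar anchor fixes $u$ uniquely.

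The only point needing care is the ordering in Step 1, which identifies which squared mass differences appear in the quotient. The paper supplies it through the domain restriction $B_0<A$. The remaining steps are routine rational algebra, and the bookkeeping in the elimination of Step 3 is the most error-prone part.
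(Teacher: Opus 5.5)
Your proposal is correct and follows the route the paper intends; the corollary is stated there without a separate proof, as a direct substitution. You read $m_i\propto(\Sigma_\nu)_{ii}^2$ off the Takagi factorization, insert the affine cell and the lock $S_{\rm prim}^{-1}=16u$, and eliminate $u$ via $r=A/(A+B_0)$; your monotonicity argument and the common-denominator step $1+16u=(273-337r)/(49(1-r))$ both check out, with one cosmetic caveat: at $u=0$ under the lock you have $1+S_{\rm prim}^{-1}=1$, not $>1$, though the chain $B_0<A\le 3/7<1\le 1+S_{\rm prim}^{-1}$ still gives the ordering you need.
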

~\\
The charged-lepton Hermitian block determines a unitary basis $U_e$, while the complex symmetric effective neutral block \eqref{eq:bl-neutral-schur-complement} determines its Takagi basis $U_\nu$. The lepton mixing matrix is the relative basis
\begin{equation}
U_{\rm PMNS}
=
U_e^\dagger U_\nu .
\label{eq:bl-pmns-relative-basis}
\end{equation}
A common Riesz frame cancels from \eqref{eq:bl-pmns-relative-basis}: if $U_e=RT_e$ and $U_\nu=RT_\nu$, then $U_{\rm PMNS}=T_e^\dagger T_\nu$. Thus only the relative sectoral Hermitian/Takagi responses enter the mixing matrix. PMNS compares the $B-L$ preserving charged vertex polarization with the $B-L$ breaking neutral Schur basis. The charged baseline \eqref{eq:bl-charged-lepton-vertex-residual} and the circulant neutral denominator select the complementary polarizations of the $A_2$ phase plane in \eqref{eq:pgl-a2-phase-plane}; their leading relative frame is the real Fourier frame, with susceptibility controlled by the resolved neutral gaps.\\
~\\
At oscillation depth, right multiplication $U_\nu\mapsto U_\nu D_\eta$ by a diagonal phase matrix leaves the oscillation probabilities invariant. The neutrinoless-double-beta effective mass,
\begin{equation}
m_{\beta\beta}
=
\left|
\sum_i m_i(U_{\rm PMNS})_{ei}^2
\right|,
\label{eq:bl-neutrinoless-double-beta-effective-mass}
\end{equation}
depends on the relative column phases. These phases are therefore not basic on the oscillation-only mixing quotient and must be retained when \eqref{eq:bl-neutrinoless-double-beta-effective-mass} is protected downstream.\\
~\\
A useful diagnostic denominator on the central carrier is the circulant Majorana shape
\begin{equation}
R_N^{\rm circ}
=
m{\bf 1}
-
r(S+S^\dagger).
\label{eq:bl-circulant-majorana-denominator}
\end{equation}
Its eigenvalues are $m-2r,m+r,m+r$. If $m=2r+\delta$ with $0<\delta\ll r$, then
\begin{equation}
(R_N^{\rm circ})^{-1}
=
\frac{1}{3r+\delta}{\bf 1}
+
\frac13
\left(
\frac1\delta
-
\frac{1}{3r+\delta}
\right)J,
\qquad
J_{ab}=1 .
\label{eq:bl-circulant-majorana-inverse}
\end{equation}
The scale-shape convention gives $\delta(3r+\delta)^2=1$ for this positive circulant representative. With $x=\delta/r$, the two parameters reduce to the single dimensionless softness $x$, with $r=[x(3+x)^2]^{-1/3}$. The denominator is diagonal in the transport frame and its exact double eigenspace is the $A_{2,\mathbb R}$ plane of \eqref{eq:pgl-a2-phase-plane}. A small clock-even perturbation $Z+Z^\dagger$ resolves the canonical real Fourier frame of Proposition~\ref{prop:pgl-finite-weyl-calculus}. For the physical column ordering, $U_{\rm TBM}$ places the democratic vector in the second column, and $L_{ij}=E_{ij}-E_{ji}$ is taken in this ordered basis.

\begin{lemma}[TBM torsor Lie closure]
\label{lem:bl-tbm-torsor-lie-closure}
Put $H_0=H_S(0)$. The operators in \eqref{eq:pgl-hermitian-clock-shift} satisfy
\begin{equation}
U_{\rm TBM}^{\mathsf T}GU_{\rm TBM}=-L_{13},
\qquad
U_{\rm TBM}^{\mathsf T}\left(-\frac{\sqrt2}{3}[K_Z,H_0]\right)U_{\rm TBM}=L_{23},
\qquad
U_{\rm TBM}^{\mathsf T}\left(\frac{\sqrt2}{3}[[K_Z,H_0],G]\right)U_{\rm TBM}=L_{12}.
\label{eq:bl-tbm-torsor-lie-channels}
\end{equation}
Hence $G$ and $[K_Z,H_0]$ generate $\mathfrak{so}(3)$ on the real Fourier frame.
\end{lemma}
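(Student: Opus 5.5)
The plan is a direct finite computation in the standard basis $e_0,e_1,e_2$ of $\mathcal H_{\rm cen}$, with the ordered frame fixed first. I would write $S$ as the cyclic shift $Se_a=e_{a+1}$ and $Z=\operatorname{diag}(1,\omega,\omega^2)$, so that $ZS=\omega SZ$ as in the Heisenberg reduction. Then \eqref{eq:pgl-hermitian-clock-shift} gives
$K_Z=\tfrac{\sqrt3}{2}\operatorname{diag}(0,1,-1)$, $H_0=S+S^{\mathsf T}$ (the real symmetric circulant with zero diagonal and unit off-diagonal entries), and $G=(S-S^{\mathsf T})/\sqrt3$ (a real antisymmetric circulant). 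For $U_{\rm TBM}$ I would take the real orthogonal matrix whose second column is the democratic vector $(1,1,1)/\sqrt3$. Its first column is $(2,-1,-1)/\sqrt6$ and its third is $(0,1,-1)/\sqrt2$, with signs still to be fixed. The first task is to fix these column signs so that the first identity holds exactly.

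For the first identity, observe that $G$ is real antisymmetric and annihilates $(1,1,1)$, since circulants have row sums equal to their column sums. So $G$ is an infinitesimal rotation about the democratic axis. In the TBM frame it is therefore a multiple of $L_{13}$. The multiple is read off from $\langle f_1,Gf_3\rangle$ with $f_1,f_3$ the first and third columns, and the normalization $1/\sqrt3$ is designed to make it $\pm1$. The column orientations are then chosen to give $-L_{13}$.

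For the second identity, compute $[K_Z,H_0]$ directly. Because $K_Z$ is diagonal and $H_0$ has unit off-diagonal entries, the $(a,b)$ entry is $(k_a-k_b)(H_0)_{ab}$, where $k_a$ are the diagonal entries of $K_Z$. This gives a real antisymmetric matrix with entries $\pm\sqrt3/2$ and $\pm\sqrt3$. Conjugating by $U_{\rm TBM}$ and multiplying by $-\sqrt2/3$ should isolate the $L_{23}$ entry. Here I must check two things. First, the $(1,2)$ and $(1,3)$ entries vanish in the TBM frame, which is the orthogonality of the rotation axis $U_{\rm TBM}^{\mathsf T}\operatorname{ax}([K_Z,H_0])$ to $f_2$ and $f_3$. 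Second, the coefficient is exactly $1$ with the signs already fixed in the first step.

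The third identity I would not compute from scratch. Conjugation by the orthogonal matrix $U_{\rm TBM}$ is a Lie algebra isomorphism, so
$$U^{\mathsf T}[[K_Z,H_0],G]U=\Bigl[-\tfrac{3}{\sqrt2}L_{23},\,-L_{13}\Bigr]=\tfrac{3}{\sqrt2}[L_{23},L_{13}].$$
With the convention $L_{ij}=E_{ij}-E_{ji}$ one has $[L_{23},L_{13}]=L_{12}$, because $L_{23}L_{13}=E_{21}$ and $L_{13}L_{23}=E_{12}$. Multiplying by $\sqrt2/3$ therefore gives $L_{12}$. The generation statement follows immediately: $L_{12},L_{13},L_{23}$ form a basis of $\mathfrak{so}(3)$ in the real Fourier frame, so $G$ and $[K_Z,H_0]$ generate it.

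The main obstacle is the second identity. The sign and normalization there depend simultaneously on three choices: the direction of $S$, the sign of the third column of $U_{\rm TBM}$, and the choice $\omega=e^{2\pi i/3}$ rather than $\omega^2$ in $K_Z$. These must be fixed compatibly with the first identity. I would resolve this by checking the single matrix element $\langle f_2,[K_Z,H_0]f_3\rangle$ after fixing the orientation via $G$. If its sign comes out wrong, that indicates the transposed shift convention. The norm check $\|[K_Z,H_0]\|_{\rm HS}^2=9$, matching $\|(3/\sqrt2)L_{23}\|_{\rm HS}^2$, confirms the magnitude independently of signs.
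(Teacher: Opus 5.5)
Your route is the paper's: evaluate $G$ and $[K_Z,H_0]$ in the real Fourier frame to get the first two identities, then obtain the third as their commutator, using that conjugation by the orthogonal $U_{\rm TBM}$ preserves brackets. The lemma does go through this way, but two points need repair.

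First, nothing here is yours to choose. $U_{\rm TBM}$ is fixed in \eqref{eq:bl-pmns-rotation-convention}, with exactly the column signs you wrote. $S$ is the forward shift forced by $ZS=\omega SZ$, and $\omega=e^{2\pi i/3}$. So the identities must be verified for these data, not arranged by choosing signs.

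Your sign bookkeeping also misfires. The $(1,3)$ entry of $U_{\rm TBM}^{\mathsf T}GU_{\rm TBM}$ fixes only the product of the signs of $f_1$ and $f_3$. The $(2,3)$ entry of the second identity is governed by the signs of $f_2,f_3$ and by $\omega$, so the first step does not fix the second. Your fallback of passing to the transposed shift cannot affect the second identity at all: $K_Z$ and $H_0=S+S^{\mathsf T}$ are invariant under $S\leftrightarrow S^{\mathsf T}$, so that swap would only flip $G$ and break the first identity.

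The explicit check settles everything. Take $f_1=(2,-1,-1)^{\mathsf T}/\sqrt6$, $f_2=(1,1,1)^{\mathsf T}/\sqrt3$, $f_3=(0,1,-1)^{\mathsf T}/\sqrt2$, and put $C=[K_Z,H_0]$. Then:
\begin{itemize}
\item $Gf_2=0$ and $Gf_3=-f_1$, which gives $U_{\rm TBM}^{\mathsf T}GU_{\rm TBM}=-L_{13}$.
\item $Cf_1=0$, so the first row and column of $U_{\rm TBM}^{\mathsf T}CU_{\rm TBM}$ vanish.
\item $Cf_3=-\sqrt{3/2}\,(1,1,1)^{\mathsf T}$, so $\langle f_2,Cf_3\rangle=-3/\sqrt2$, and after multiplying by $-\sqrt2/3$ you get exactly $L_{23}$.
\end{itemize}

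Second, your intermediate products have the wrong sign. In fact $L_{23}L_{13}=-E_{21}$ and $L_{13}L_{23}=-E_{12}$. With the values you wrote, the commutator would be $E_{21}-E_{12}=-L_{12}$, which contradicts your conclusion. With the correct products, $[L_{23},L_{13}]=L_{12}$, and the third identity and the generation statement follow as you say.
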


\begin{proof}
The first two identities are the real-Fourier readings of \eqref{eq:pgl-hermitian-clock-shift}; the third is their commutator.
\end{proof}

Since $G$ fixes the invariant line $P_{\bf 1}\mathcal H_{\rm cen}$, a neutral correction contained in the circulant algebra preserves the democratic PMNS column in the minimal charged vertex branch. With the democratic vector assigned to the second column, this gives
\begin{equation}
|(U_{\rm PMNS})_{e2}|^2
=
\frac13,
\qquad
\sin^2\theta_{12}
=
\frac{1}{3\cos^2\theta_{13}}.
\label{eq:bl-pmns-circulant-obstruction}
\end{equation}
The restricted pure-circulant repair therefore does not reproduce the central solar and reactor values of \cite{Esteban2024NuFIT60}. The required non-circulant real rotation channels are fixed by Lemma~\ref{lem:bl-tbm-torsor-lie-closure} without adjoining a new family generator: $G$ supplies the $13$ direction, the normalized detector commutator supplies the $23$ direction, and their commutator supplies the $12$ direction.\\
~\\
The finite Lie closure fixes the available real $13$, $23$, and $12$ directions. In the same ordered real-Fourier frame, direct evaluation gives $(U_{\rm TBM}^{\mathsf T}K_ZU_{\rm TBM})_{23}=1/\sqrt2$, equal to the zero-balance factor $a_\ell$ from \eqref{eq:bl-charged-lepton-balance-condition}. This $1/\sqrt2$ coefficient is the finite-module content of the $23$ normalization; it is written as $\Xi_{23}=\sqrt{3/8}$ only in the convention of \eqref{eq:bl-minimal-neutral-lie-branch}. Its physical use still requires unit microscopic descent. The other channel amplitudes are not fixed by Lie closure alone.

Transposition preserves each low-symbol space $\mathcal A_r$ by Proposition~\ref{prop:pgl-finite-weyl-calculus}. Hence, for a heavy inverse in $\mathcal A_0$ and degree-pure $D_1\in\mathcal A_1$, $D_2\in\mathcal A_2$, the mixed symmetric Schur contraction $D_1^{\mathsf T}H^{-1}D_2+D_2^{\mathsf T}H^{-1}D_1$ lies in $\mathcal A_0$. A non-circulant cubic $12$ response is therefore not fixed by low-symbol degree alone.

\begin{proposition}[Canonical Eisenstein neutral relay]
\label{prop:bl-affine-neutral-relay}
Let $K\in M_2(\mathbb Z)$ have odd trace $t$ and discriminant $-3$. Then $J=2K-t{\bf1}$ obeys $J^2=-3{\bf1}$ and $W=(J-{\bf1})/2$ obeys $W^2+W+{\bf1}=0$. Since $\mathbb Z[\omega]$ is a PID, one integral marked basis puts $W=W_0:=\left(\begin{smallmatrix}0&-1\\1&-1\end{smallmatrix}\right)$ up to orientation, so every cell on the same marked relay is $K=a{\bf1}+W_0$ with $a=(t+1)/2$ and $\det K=a^2-a+1$. The central affine map
\begin{equation} \Phi(K)=K+\frac{\det K-\Tr K}{2}{\bf1} \label{eq:bl-affine-neutral-relay} \end{equation}
preserves the same $W_0$ and sends $(\Tr K,\det K)=(5,7)\mapsto(7,13)\mapsto(13,43)$. Thus one passed integral seed fixes the noncentral matrix form of the full relay; later cells carry only the central trace update.
\end{proposition}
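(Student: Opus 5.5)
The plan is to work directly with $2\times2$ integer matrices: use Cayley--Hamilton to get $W$, use the class number of $\mathbb Z[\omega]$ to normalize it, and then check the affine map by hand.

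\textbf{Step 1: $J$ and $W$.} Let $n=\det K$. The discriminant condition says $t^2-4n=-3$. By Cayley--Hamilton, $K^2=tK-n{\bf1}$. Put $J=2K-t{\bf1}$. Then
\[
J^2=4K^2-4tK+t^2{\bf1}=(t^2-4n){\bf1}=-3{\bf1}.
\]
Since $t$ is odd, $J\equiv{\bf1}\pmod 2$ entrywise. Hence $W=(J-{\bf1})/2=K-\frac{t+1}{2}{\bf1}$ is integral. Expanding,
\[
4(W^2+W+{\bf1})=(J-{\bf1})^2+2(J-{\bf1})+4{\bf1}=J^2+3{\bf1}=0 .
\]
This gives $K=a{\bf1}+W$ with $a=(t+1)/2$.

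\textbf{Step 2: normal form for $W$.} The relation $W^2+W+{\bf1}=0$ makes $\mathbb Z^2$ a module over $\mathbb Z[\omega]$, with $\omega$ acting by $W$. This module is torsion-free of rank one, hence a fractional ideal. Because $\mathbb Z[\omega]$ is a PID, it is free of rank one. A generator $v$ gives the integral basis $(v,Wv)$, and in that basis $W$ is the companion matrix $W_0$. The orientation of this basis is determined only up to sign, which is why the normal form holds only up to orientation. One could equally invoke Lemma~\ref{lem:pgl-minimal-trace-cyclic-companion}.

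For the determinant, the characteristic polynomial of $W$ is $x^2+x+1$, so
\[
\det(a{\bf1}+W)=a^2+a\operatorname{Tr}W+\det W=a^2-a+1 .
\]

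\textbf{Step 3: the relay map $\Phi$.} $\Phi$ adds a scalar, namely $(n-t)/2$ times ${\bf1}$, so it leaves $W=K-a{\bf1}$ unchanged. It is therefore enough to follow $a$. We have $t=2a-1$ and $n=a^2-a+1$, so $n-t=a^2-3a+2$. Since $a^2-3a$ is even, $(n-t)/2$ is an integer and $\Phi(K)$ is integral. The new central coefficient is
\[
a'=a+\frac{a^2-3a+2}{2}=\frac{a^2-a+2}{2}.
\]
The new trace and determinant are
\[
t'=t+(n-t)=n,\qquad n'=a'^2-a'+1 .
\]
Because $t'-4n'$ comes out as $-3$ automatically, the discriminant stays $-3$. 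Integrality of $\Phi(K)$ makes $t'$ odd, so the hypotheses are preserved under iteration.

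\textbf{Numerical checks.} Starting from $a=3$ gives $(t,n)=(5,7)$. Then $a'=4$, giving $(7,13)$. Then $a''=7$, giving $(13,43)$. At each step the new trace equals the previous determinant, as Step 3 predicts.

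\textbf{Main obstacle.} The only non-routine step is the PID/cyclic-vector normalization in Step 2. In particular one has to say precisely what ``on the same marked relay'' means: a single marked integral basis works for every cell, since $\Phi$ never changes $W$. The PID fact is standard, and the remaining steps are direct calculation.
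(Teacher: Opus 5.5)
Your proof is correct and follows essentially the same route as the paper, whose argument is the one compressed into the statement itself: Cayley--Hamilton gives $J^2=-3{\bf1}$ and $W^2+W+{\bf1}=0$, the PID property of $\mathbb Z[\omega]$ (equivalently Lemma~\ref{lem:pgl-minimal-trace-cyclic-companion}) brings $W$ to $W_0$ up to orientation, and $\Phi$ only shifts the central coefficient, $a\mapsto(a^2-a+2)/2$. Your explicit check that $(\det K-\Tr K)/2=(a-1)(a-2)/2$ is an integer, so the relay stays in $M_2(\mathbb Z)$, is a small addition the paper leaves implicit; just fix the typo $t'-4n'$, which should read $t'^2-4n'=-3$.
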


The ordered real-Fourier convention is
\begin{equation}U_{\rm TBM}=\begin{pmatrix}\sqrt{2/3}&1/\sqrt3&0\\-1/\sqrt6&1/\sqrt3&1/\sqrt2\\-1/\sqrt6&1/\sqrt3&-1/\sqrt2\end{pmatrix},\qquad R_{ij}(\theta)=\exp(\theta L_{ij}),\label{eq:bl-pmns-rotation-convention}\end{equation}
with $L_{ij}$ as in Definition~\ref{def:bl-hodge-eisenstein-matrix-completion}; the phased column rotation has $(R_{23})_{23}=\sin\alpha\,e^{-i\phi}$ and $(R_{23})_{32}=-\sin\alpha\,e^{i\phi}$, with diagonal $23$ entries $\cos\alpha$.
\begin{definition}[Normalized neutral three-channel branch]
\label{def:bl-normalized-neutral-one-seed-branch}
For physical same-marking descent, the relay normalizations are the consecutive marked trace ratios $\Xi_{12}:=\Tr K_1/\Tr K_0$ and $\Xi_{13}:=\Tr K_2/\Tr K_1$; Proposition~\ref{prop:bl-affine-neutral-relay} then gives $7/5$ and $13/7$. The retained neutral basis, expressed in the charged-vertex eigenframe, uses \eqref{eq:bl-tbm-torsor-lie-channels},
\begin{equation} U_\nu=U_{\rm TBM}R_{12}(\beta_\nu)R_{13}(\gamma_\nu)R_{23}(\alpha_\nu,\phi_\nu),\quad \alpha_\nu=\frac{2}{\sqrt3}\Xi_{23}\lambda_{\rm cen},\quad \gamma_\nu=-\Xi_{13}\lambda_{\rm cen}^2,\quad \beta_\nu=-\Xi_{12}\lambda_{\rm cen}^3,\quad \phi_\nu=\frac{4\pi}{3}-\frac25r_\vartheta\vartheta_\Gamma. \label{eq:bl-minimal-neutral-lie-branch} \end{equation}
The target values are
\begin{equation} (\Xi_{23},\Xi_{13},\Xi_{12})_{\rm target}=\left(\sqrt{\frac38},\frac{13}{7},\frac75\right). \label{eq:bl-neutral-three-channel-normalization} \end{equation}
The $23$ target has finite-module normalization. The $12$ and $13$ targets are simultaneous conditional outputs of the trace-ratio descent when the physical marked cells pass the relay operator test; no separate continuous relay coefficients are then available.
\end{definition}

\begin{proposition}[Pre-PMNS neutral channel tomography]
\label{prop:bl-neutral-pre-pmns-tomography}
If $\lambda_{\rm cen}$ is fixed independently by \eqref{eq:bl-neutral-direct-scale-free-sum-rules}, the neutral Lie jets give
\begin{equation} \widehat\Xi_{23}=\frac{\sqrt3\,\alpha_\nu}{2\lambda_{\rm cen}},\qquad \widehat\Xi_{13}=-\frac{\gamma_\nu}{\lambda_{\rm cen}^2},\qquad \widehat\Xi_{12}=-\frac{\beta_\nu}{\lambda_{\rm cen}^3}. \label{eq:bl-neutral-channel-tomography} \end{equation}
Thus the three targets are tested before Takagi exponentiation. On the affine-relay branch they additionally obey the necessary scalar gate
\begin{equation} \widehat\Xi_{12}\widehat\Xi_{13}=\frac{13}{5}. \label{eq:bl-neutral-relay-product-test} \end{equation}
On the unit-$23$ branch,
\begin{equation} 3\sin^2\theta_{12}^{\rm PMNS}+2\sin^2\theta_{13}^{\rm PMNS}=1-2\sqrt2\,\Xi_{12}\lambda_{\rm cen}^3+O(\lambda_{\rm cen}^4),\qquad J_{\rm PMNS}=\frac{\lambda_{\rm cen}}{6\sqrt3}\sin\phi_\nu+O(\lambda_{\rm cen}^3). \label{eq:bl-pre-pmns-sum-rules} \end{equation}
The first relation tests the cubic $12$ channel independently of $\Xi_{13}$ and $\phi_\nu$ at the displayed order.
\end{proposition}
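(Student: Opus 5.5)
The tomography formulas \eqref{eq:bl-neutral-channel-tomography} come from inverting the three amplitude assignments of \eqref{eq:bl-minimal-neutral-lie-branch}. Once $\lambda_{\rm cen}$ is fixed independently through \eqref{eq:bl-neutral-direct-scale-free-sum-rules}, and is nonzero on the domain $0\le u<10/49$ by Proposition~\ref{prop:bl-one-calibration-identifiability}, each Lie jet $\alpha_\nu,\gamma_\nu,\beta_\nu$ is a fixed monomial in $\lambda_{\rm cen}$ times one normalization $\Xi_{ij}$. Solving for $\Xi_{ij}$ therefore gives $\widehat\Xi_{ij}$ directly. No Takagi exponentiation is needed, because the jets are read from the ordered factorization $U_\nu=U_{\rm TBM}R_{12}R_{13}R_{23}$ before the physical basis is formed. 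For the gate \eqref{eq:bl-neutral-relay-product-test}, I would use Definition~\ref{def:bl-normalized-neutral-one-seed-branch}: on the affine-relay branch $\widehat\Xi_{12}\widehat\Xi_{13}=(\Tr K_1/\Tr K_0)(\Tr K_2/\Tr K_1)=\Tr K_2/\Tr K_0$. Proposition~\ref{prop:bl-affine-neutral-relay} gives the traces $5\mapsto7\mapsto13$, so the product is $13/5$. The point is that this product is independent of the intermediate cell.

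For the first sum rule, work in the charged-vertex eigenframe, so that $U_{\rm PMNS}=U_\nu$ by \eqref{eq:bl-pmns-relative-basis}. The key observation is that $R_{23}(\alpha_\nu,\phi_\nu)$ acts unitarily on columns $2,3$. Hence $|U_{e2}|^2+|U_{e3}|^2$ is already fixed after $U_{\rm TBM}R_{12}R_{13}$, independently of $\alpha_\nu$ and $\phi_\nu$. This quantity equals $c_{13}^2s_{12}^2+s_{13}^2$, so
\begin{equation*}
3s_{12}^2+2s_{13}^2=3\bigl(|U_{e2}|^2+|U_{e3}|^2\bigr)+3s_{12}^2s_{13}^2-s_{13}^2 .
\end{equation*}
The electron row of $U_{\rm TBM}$ is $(\sqrt{2/3},1/\sqrt3,0)$. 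Applying $R_{12}(\beta_\nu)$ gives $3|e_2|^2=(\sqrt2\sin\beta_\nu+\cos\beta_\nu)^2=1+2\sqrt2\beta_\nu+O(\beta_\nu^2)$. Then $R_{13}(\gamma_\nu)$ makes $e_3=O(\gamma_\nu)=O(\lambda_{\rm cen}^2)$, so $s_{13}^2=O(\lambda_{\rm cen}^4)$ and every correction term is $O(\lambda_{\rm cen}^4)$. Substituting $\beta_\nu=-\Xi_{12}\lambda_{\rm cen}^3$ gives the displayed relation, with neither $\Xi_{13}$ nor $\phi_\nu$ entering, as claimed.

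For $J_{\rm PMNS}$, I would expand $U_{\rm TBM}R_{23}(\alpha_\nu,\phi_\nu)$ to first order in $\alpha_\nu$, discarding $\beta_\nu=O(\lambda_{\rm cen}^3)$ and $\gamma_\nu=O(\lambda_{\rm cen}^2)$. Their contributions to the quartic invariant $\Im(U_{e1}U_{\mu2}U_{e2}^\ast U_{\mu1}^\ast)$ start at higher order: the $R_{13}$ term enters only with an additional $\alpha_\nu$. The first-order term is $\alpha_\nu\sin\phi_\nu$ times a fixed TBM entry product, which I expect to be $1/(6\sqrt2)\cdot(\ldots)$. On the unit-$23$ branch $\Xi_{23}=\sqrt{3/8}$, so $\alpha_\nu=\lambda_{\rm cen}/\sqrt2$, and this should produce $\lambda_{\rm cen}\sin\phi_\nu/(6\sqrt3)$. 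Because $R_{23}$ is unitary the next correction is even in $\alpha_\nu$ relative to the leading term, which I expect gives the $O(\lambda_{\rm cen}^3)$ remainder.

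The main obstacle is this last bookkeeping step. The phase convention of the rotation \eqref{eq:bl-pmns-rotation-convention} and the column ordering, with the democratic vector in column two, determine the sign and the exact TBM prefactor. One must also verify that cross terms involving $\gamma_\nu$ do not enter at order $\lambda_{\rm cen}^2$, so the $O(\lambda_{\rm cen}^3)$ claim needs an explicit check. I would carry out the rephasing-invariant computation entrywise in the ordered real-Fourier frame.
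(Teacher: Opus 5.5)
Your inversion of \eqref{eq:bl-minimal-neutral-lie-branch} is correct. So is the $13/5$ gate. Your observation that $R_{23}(\alpha_\nu,\phi_\nu)$ leaves $|U_{e2}|^2+|U_{e3}|^2$ unchanged is the right route to the first sum rule. The displayed relation is true, but your error-term step fails.

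\emph{The reactor bound.} Write $(e_1,e_2,e_3)$ for the electron row after $R_{12}R_{13}$, as you do. You infer $s_{13}^2=O(\lambda_{\rm cen}^4)$ from $e_3=O(\gamma_\nu)$. But $e_3$ is the entry before $R_{23}$ acts, whereas $s_{13}^2=|U_{e3}|^2$ is read afterwards. The $23$ rotation feeds the democratic entry into column three:
\begin{equation*}
U_{e3}=e_2\sin\alpha_\nu e^{-i\phi_\nu}+e_3\cos\alpha_\nu,\qquad e_2=1/\sqrt3+O(\lambda_{\rm cen}^3).
\end{equation*}
Hence $s_{13}^2=\alpha_\nu^2/3+O(\lambda_{\rm cen}^3)$, which is only $O(\lambda_{\rm cen}^2)$. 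This matches $\sin^2\theta_{13}\approx0.022$ on the frozen branch rather than something of size $\lambda_{\rm cen}^4$. Your bound on the correction $s_{13}^2(3s_{12}^2-1)$ is therefore unsupported.

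The missing idea is that the second factor is also small. From $U_{e2}=e_2\cos\alpha_\nu-e_3\sin\alpha_\nu e^{i\phi_\nu}$ one gets:
\begin{itemize}
\item $|U_{e2}|^2=\tfrac13\cos^2\alpha_\nu+O(\lambda_{\rm cen}^3)$;
\item $c_{13}^2=1-\alpha_\nu^2/3+O(\lambda_{\rm cen}^3)$;
\item hence $3s_{12}^2-1=-\tfrac23\alpha_\nu^2+O(\lambda_{\rm cen}^3)$.
\end{itemize}
The product of two $O(\lambda_{\rm cen}^2)$ factors is $O(\lambda_{\rm cen}^4)$. Only with this step do you get the claimed independence at the displayed order. $\Xi_{13}$ then enters only through $3|e_3|^2\simeq2\gamma_\nu^2$, and $\phi_\nu$ only through the correction product, both beyond $\lambda_{\rm cen}^3$.

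\emph{The Jarlskog term.} Your outline for $J_{\rm PMNS}$ is right but not carried out, and the reason you give for the remainder is not the operative one. Direct evaluation at $\beta_\nu=0$ gives exactly
\begin{equation*}
J_{\rm PMNS}=\frac{\cos\gamma_\nu\sin\alpha_\nu\cos\alpha_\nu}{\sqrt6}\Bigl(\tfrac13\cos^2\gamma_\nu-\sin^2\gamma_\nu\Bigr)\sin\phi_\nu .
\end{equation*}
Three consequences follow.
\begin{itemize}
\item The TBM prefactor is $1/(3\sqrt6)$ per unit $\alpha_\nu$, not a $1/(6\sqrt2)$-type factor. It becomes $\lambda_{\rm cen}/(6\sqrt3)$ once $\alpha_\nu=\lambda_{\rm cen}/\sqrt2$ on the unit-$23$ branch.
\item The term linear in $\gamma_\nu$ cancels, so there is no $O(\lambda_{\rm cen}^3)$ contribution from $\gamma_\nu$ at all.
\item The absence of an $\alpha_\nu^2$ correction comes from the explicit factor $\sin\alpha_\nu\cos\alpha_\nu=\alpha_\nu+O(\alpha_\nu^3)$, not from unitarity of $R_{23}$.
\end{itemize}
The $\beta_\nu$ dependence is controlled by reality. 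At $\alpha_\nu=0$ the matrix $U_\nu=U_{\rm TBM}R_{12}R_{13}$ is real orthogonal for every $\phi_\nu$, so $J_{\rm PMNS}$ vanishes identically there. Hence $\beta_\nu$ can enter only at $O(\alpha_\nu\beta_\nu)=O(\lambda_{\rm cen}^4)$. Together these give the stated $O(\lambda_{\rm cen}^3)$ remainder.
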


The $4\pi/3$ offset is the ramified Weyl-cocycle phase fixed by \eqref{eq:pgl-ramified-theta-weil-readings}; in the unsymmetrized clock-shift frame it is the $\omega^2$ coefficient of the oriented $ZS^2$ term in \eqref{eq:pgl-hermitian-detector-commutator}. The second term in $\phi_\nu$ is the integrated outer-$C$ attachment \eqref{eq:alena-outer-c-takagi-phase-attachment}. The directions in \eqref{eq:bl-tbm-torsor-lie-channels} and this determinant-centered phase weight are fixed before the microscopic descent factors implicit in \eqref{eq:bl-neutral-three-channel-normalization} are specified. The transpose-grading observation in the neutral branch keeps the cubic response at completed Jacobian depth. On the physical common-lift branch, $r_\vartheta=1$ follows from \eqref{eq:alena-global-principal-235-lift}; then \eqref{eq:bl-minimal-neutral-lie-branch} and \eqref{eq:bl-minimal-ckm-phase} give $\phi_\nu=4\pi/3-(2/15)\delta_{\rm CKM}^{\rm min}$, with no independent continuous leptonic CP seed. The CP invariant is read from the standard rephasing-invariant product $J_{\rm PMNS}=\Im(U_{e1}U_{\mu2}U_{e2}^\ast U_{\mu1}^\ast)$.\\
~\\
In the finite-Heisenberg reading, the quark blocks perturb the same transport polarization, while the selected lepton branch compares the vertex and Fourier-dual polarizations. The inverse Sylvester response \eqref{eq:bl-first-order-family-basis-change} supplies the remaining spectral susceptibility: direct gaps suppress CKM motion, while the neutral character doublet is soft before clock-even resolution. This remains conditional on the stated polarization and neutral completions.\\
~\\
A symmetric perturbation $\delta K_N$ induces the skew perturbation $E_{\rm sp}\otimes\delta K_N$ of Proposition~\ref{lem:bl-neutral-schur-takagi-response}. Preservation of the neutral quadratic-form class therefore requires no additional family-space skew constraint; the remaining restrictions are the $B-L$ filtration, the retained gaps, and the chosen determinant orientation.\\
~\\
The projective-color expectation \eqref{eq:pgl-circulant-conditional-expectation} may also be applied to a Hermitian representative of the effective neutral block, for example $K_\nu^{\rm eff}(K_\nu^{\rm eff})^\dagger$. The resulting entropy measures the non-circulant component of this Hermitian representative, while the basis response is controlled by the resolved neutral gaps. Hence the direct and neutral sectors use the same sectoral entropy with different spectral susceptibilities: large direct gaps suppress CKM motion through \eqref{eq:bl-ckm-entropic-susceptibility-bound}, whereas the soft denominator in \eqref{eq:bl-circulant-majorana-inverse} can amplify PMNS motion. Outside the normalized type-I shape \eqref{eq:bl-determinant-suppressed-neutral-dirac-shape}, the shapes of $D_\nu$, $R_N$, and $A_L$ remain completed data. The relative-basis reading of the normalized three-channel branch is fixed by \eqref{eq:bl-minimal-neutral-lie-branch}.

\subsection{Radial, contact, and common scale completion}
\label{subsec:bl-filtration}
\label{subsec:neutral-pfaffian-radial-decay-diagnostics}
\label{subsec:contact-scale-falsification-tests}

\subsubsection{Radial and decay diagnostics}

The neutral bosonic cell gives a separate radial test. Let $\delta A:=\partial_{\log v}A$, $\delta B:=\partial_{\log v}B$, and $\delta C:=\partial_{\log v}C$ at the reference point. If the finite neutral cell is radially locked, these derivatives vanish and the standard $hWW$ and $hZZ$ normalizations are recovered. A trace-neutral leakage direction satisfies $\delta A+\delta B+\delta C=0$. The corresponding first radial coupling modifiers are
\begin{equation}
\kappa_W-1
=
\frac12\frac{\delta A}{A},
\qquad
\kappa_Z-1
=
\frac12\frac{\delta A+\delta B}{A+B}.
\label{eq:bl-radial-leakage-kappas}
\end{equation}
If radial motion is tangent to the same affine $u$ line, then $\delta A=-3\dot u/2$, $\delta B=0$ and the leakage amplitude cancels: $\kappa_Z-1=r_{WZ}(\kappa_W-1)$. This parameter-free relation is a conditional completed-cell test. Additional scalar directions invisible to $P_{\rm obs}$ and the filtered labels are removed by Proposition~\ref{prop:bl-schur-no-phantom-reduction}; projected-curvature stress remains a separate completed comparison.\\
~\\
Fermion masses and Yukawa hierarchies remain singular values of the compressed weak-bridge blocks \eqref{eq:bl-yukawa-matrix-elements}. A vortex-Yukawa relation, if imposed, is therefore a sectoral branch condition on those matrix elements. In a locked scalar island one may record the branch condition as $y_f=\cosh\varphi_f-1$ and $g_{hff}=m_f/v$ at zero radial leakage. A nonzero $\partial_{\log v}\varphi_f$ is a sectoral leakage diagnostic; the exterior channel remains fixed.\\
~\\
Decay channels are read as allowed punctures of the Riesz gap. For a parent island $i$ and a final channel $F$, the kinematic excess is
\begin{equation}
\Delta_{i\to F}
=
m_i-\sum_{a\in F}m_a .
\label{eq:bl-decay-gap}
\end{equation}
A channel is open only when \eqref{eq:bl-decay-gap} is non-negative and the corresponding exterior, weak-bridge, or contact matrix element is nonzero. Standard Higgs decay formulae and loop form factors are used only as comparison readings of the bridge-resolvent calculation \cite{Djouadi2008Higgs}.

\subsubsection{Doubled top-form contact and scale conversion}

The contact sector is the doubled top-form class in Table~\ref{tab:bl-channel-filtration}. Its baryon-violating component obeys \eqref{eq:bl-top-form-baryon-lepton-change}; the remaining contractions preserve $B$ and $L$ separately. The contact coefficient is a completed Schur-Kuranishi datum separated from the one-Higgs bridge and from the Majorana denominator. The separation is preserved by Proposition~\ref{prop:bl-schur-no-phantom-reduction}.\\
~\\
A direct tunnelling assignment $M_X=M_{\rm Pl}\exp(-S_{\rm prim})$ would put the baryon-violating contact near the electroweak scale and is therefore excluded as a proton-decay reading. The proton-decay comparison convention is the standard GUT effective-operator one \cite{Senjanovic2010ProtonDecayGUT}. The doubled top-form assignment places the contact at the tower scale. For an order-one finite coefficient in its baryon-violating component and $M_X\simeq M_{\rm Pl}$, the standard dimension-six estimate gives $\tau_p\sim10^{48}\,{\rm yr}$. Under the alternative completed-branch assignment $M_X=M_{\rm Pl}/(2\pi)^2\simeq3\times10^{17}\,{\rm GeV}$, the same estimate gives $\tau_p\sim10^{41}\,{\rm yr}$. Here $(2\pi)^2$ is used as the tower-period normalization; no identification with the exponential action reading of $S_{\rm prim}$ is used beyond their equality under \eqref{eq:bl-primitive-schur-unit}. An observed $p\to e^+\pi^0$ signal near $10^{35}\,{\rm yr}$ would rule out the minimal doubled-top-form contact assignment.\\
~\\
The trace normalizations \eqref{eq:pgl-nonabelian-index-traces} and \eqref{eq:pgl-hypercharge-normalization-five-thirds} fix the relative finite normalization of the three gauge factors. A common coupling prediction additionally requires a matching scale, the kinetic normalization of the completed gauge block, and the heavy threshold spectrum generated by the same boundary-cycle completion. The $SU(3)$ coupling therefore belongs to the completed scale test together with the electroweak couplings, and its low-energy value is left to this common matching problem.\\
~\\
Threshold conversion and running are applied after the finite branch and its heavy denominators have been fixed. The matching map is evaluated on the same Schur-compressed operator, with heavy-field elimination organized by the covariant derivative expansion of \cite{HenningLuMurayama2018CDE}. One-loop matching across separated mass scales is represented by \cite{DittmaierSchuhmacherStahlhofen2021}. Entries generated by one heavy boundary-cycle completion inherit one matching prescription within the same reconstruction class; their masses and threshold positions need not coincide. Independent sector-by-sector threshold functions constitute an additional completed input.\\
~\\
A threshold-free common-kinetic diagnostic may be evaluated before the heavy spectrum is specified. With one common gauge-block kinetic coefficient, \eqref{eq:pgl-nonabelian-index-traces} identifies the $SU(2)$ and $SU(3)$ matching conditions, while $\alpha_1$ is taken in the canonical $k_Y=5/3$ normalization of \eqref{eq:pgl-hypercharge-normalization-five-thirds}. Using the one-loop Standard-Model coefficients $b_i=(41/10,-19/6,-7)$ and the reference electroweak and strong-coupling inputs of \cite{ParticleDataGroup2025RPP}, the central crossings are $\mu_{23}=9.7\times10^{16}\,{\rm GeV}$ from $\alpha_2=\alpha_3$ and $\mu_{12}=1.0\times10^{13}\,{\rm GeV}$ from $\alpha_1=\alpha_2$. At $\mu_{23}$ one finds $\alpha_1^{-1}-\alpha_2^{-1}=-10.6$, about $23\%$ of $\alpha_2^{-1}$. The threshold-free one-scale subbranch is therefore excluded at this order.\\
~\\
In the fermionic kinetic subbranch of Definition~\ref{def:bl-common-clifford-picard-high-gap}, Proposition~\ref{prop:bl-parity-paired-odd-high-lift} supplies $f=3$ Dirac-form adjoint pairs. The structural dictionary \eqref{eq:pgl-structural-integer-dictionary} gives $\Delta b_i=4T(\operatorname{adj}_i)$, hence $(\Delta b_1,\Delta b_2,\Delta b_3)=(0,8,12)$, $M_2=3M_H/80$, and $M_3=3M_H/5$.
At the Standard-Model crossing $\mu_{23}$, preservation of $\alpha_2=\alpha_3$ gives
\begin{equation}
\frac{\mu_{23}}{M_3}=d_F^{w/(c-w)}=16^2,\qquad \frac{\mu_{23}}{M_2}=d_F^{c/(c-w)}=16^3,\qquad \Delta_{12}^{\rm th}=\frac{48\log2}{\pi}=10.5905.
\label{eq:bl-common-adjoint-threshold-identity}
\end{equation}
The exponents are the solution of the one-loop matching equation; for the selected split they become integers because $c-w=1$. Once the normalized high-gap and kinetic completion is imposed, the gap coefficient and $153.6=d_F^2/k_Y$ are dependent readings of the same structural integers. The central inputs give $M_H^{\rm RG}=\mu_{23}/153.6=6.3151\times10^{14}\,{\rm GeV}$ and the independent half-flux completion gives $M_R=6.3148\times10^{14}\,{\rm GeV}$. Propagating the current $\alpha_s(M_Z)$ and $\hat s_Z^2$ input uncertainties summarized in \cite{ParticleDataGroup2025RPP} gives an input-only one-loop logarithmic width $\sigma_{\log M_H}\simeq0.107$, equivalently $M_H^{\rm RG}\simeq6.32^{+0.71}_{-0.64}\times10^{14}\,{\rm GeV}$ at one standard deviation. The central offset from $M_R$ is about $0.007$ of this input-only width, so the proximity is retained as a common-scale comparison; two-loop running, finite matching, and threshold uncertainties remain outside \eqref{eq:bl-common-adjoint-threshold-identity}.\\
~\\
For a local no-refit audit, let $\mathcal C$ collect observables used in calibration and $\mathcal T$ those reserved for testing. At a regular point $p_0$, first-order constancy of $\mathcal T$ along the calibration fiber is equivalent to $\ker D\mathcal C(p_0)\subseteq\ker D\mathcal T(p_0)$, or equivalently $\operatorname{row}D\mathcal T(p_0)\subseteq\operatorname{row}D\mathcal C(p_0)$. The coordinate-invariant defect $r_{\rm def}:=\rank\binom{D\mathcal C}{D\mathcal T}-\rank D\mathcal C$ counts the remaining first-order test sensitivities; $r_{\rm def}=0$ is local no-refit identifiability provided the test observables were not themselves used in calibration. At the effective level an unrestricted $M_3(\mathbb C)$ family coefficient space is flavor-surjective, so quantitative flavor comparisons acquire content only after the microscopic or completed coefficient restrictions recorded below have been frozen.
\subsection{Quantitative status and numerical readings}
\label{subsec:minimal-branch-numerical-benchmarks}

The quantitative relations are evaluated after spectral-before-mixing descent and reduced Schur closure. Reference conventions follow \cite{ParticleDataGroup2025RPP} and \cite{Esteban2024NuFIT60}; the final frozen-branch comparison uses the 2026 Particle Data Group review \cite{ParticleDataGroup2026RPP}, with \cite{NuFIT2025v61} retained for oscillation-fit cross-checks. The ledger separates structural data, marked descent, normalized completion, and absolute scale. One value of $u$ is used throughout every $u$-dependent numerical row, and downstream readings inherit the status of every required upstream gate.

\begingroup
\scriptsize
\setlength{\tabcolsep}{3pt}
\renewcommand{\arraystretch}{1.16}
\begin{longtable}{p{0.23\textwidth}p{0.20\textwidth}p{0.44\textwidth}}
\caption{Consolidated provenance and falsification ledger.}
\label{tab:bl-quantitative-branch-accounting}\label{tab:bl-calibration-ledger}\label{tab:bl-independent-quantitative-relations}\label{tab:pgl-derived-conditional-completed-outputs}\label{tab:bl-generator-falsification-map}\label{tab:bl-falsification-tests}\label{tab:bl-quantitative-status}\\
\toprule
Datum or output & Status & Decisive requirement or falsifier\\
\midrule
\endfirsthead
\toprule
Datum or output & Status & Decisive requirement or falsifier\\
\midrule
\endhead
primitive defect, $A_2$, minimal $3+2$ carrier, global $\mathbb Z_6$ & structural & primitive source grades, faithful marked support, determinant and exterior trace identities\\
three-point family torsor and integral marking & structural/projective & Proposition~\ref{prop:pgl-regular-family-central-lift}; non-torsion determinant phase is central, not a literal lattice automorphism\\
constitutive projector and $(1,4,5;7)$ packet & exact certificates with marked provenance & \eqref{eq:alena-projector-cm-certificate}, source Hodge sequence, and norm-seven arithmetic must agree\\
first family response & microscopic gate & \eqref{eq:alena-first-response-gates}; central-normalized projective part must match the marked regular $C_3$ cycle\\
primitive second response/Rees polarization & microscopic/projective gate & require \eqref{eq:bl-primitive-second-response-fingerprint}, $Y_\perp\neq0$, $\rho_{\rm RP}=1$, and $\Delta_{\rm pol}=0$\\
source-Hodge attachment & conditional blind gate & \eqref{eq:bl-source-hodge-folding-closure} plus the feasibility cone before comparison with four\\
physical common $2{:}3{:}5$ connection & conditional curvature gate & \eqref{eq:alena-common-235-curvature-gate}; failure cannot be repaired by $u$\\
neutral Eisenstein seed and $23,13,12$ descent & conditional marked relay & Proposition~\ref{prop:bl-affine-neutral-relay} and \eqref{eq:bl-neutral-channel-tomography}; one seed fixes the noncentral relay form\\
one-$u$ neutral/direct branch & one calibration plus no-refit tests & Proposition~\ref{prop:bl-one-calibration-identifiability} and \eqref{eq:bl-wz-neutrino-calibration-free}; the $u$-$S_{\rm prim}$ lock remains conditional\\
Majorana and absolute scales & completed subbranches & unresolved Majorana phases, $\kappa_\Omega$, kinetic matching, and dimensionful normalization remain separate\\
equivariant Callias/global attachment & global conditional & target index character $1\oplus\chi\oplus\chi^2$ and a common-base gapped Riesz-Callias deformation\\
\bottomrule
\end{longtable}
\endgroup
The arithmetic packet refines only the corresponding finite rows. Absolute normalization and global Callias attachment remain separate; the projected rate and family determinant phase are linked only after the physical common $2{:}3{:}5$ lift is realized.\\
~\\
For the direct Hodge completion, a reference common-scale conversion of the PDG quark-mass inputs \cite{ParticleDataGroup2025RPP} gives $Q_d^{\rm obs}=0.74803\pm0.00280$ and $Q_u^{\rm obs}=0.88763\pm0.00232$, to be compared with $3/4$ and $8/9$ without assuming exact renormalization-group invariance. Using the observed CKM angles as leading Rees proxies, the left-hand side of \eqref{eq:bl-ckm-rho-free-rees-relation} is $0.36065\pm0.02858$, while the Hodge-Eisenstein unitary completion gives $0.36683$ against the structural value $1/3$. The corresponding leading inversions are $\rho_{\rm RP}^{(23)}=1.024\pm0.037$ and $\rho_{\rm RP}^{(13)}=0.983\pm0.025$; they share $s_{12}$ and remain diagnostics rather than independent reconstructions once Proposition~\ref{prop:bl-hodge-eisenstein-unitary-completion} is imposed.
\subsubsection{Common-$u$ numerical ledger}

The numerical illustration freezes the reference value $u=1/(64\pi^2)$ as a benchmark before any comparison is made. In the one-calibration reading, this value is replaced by the unique $u$ obtained from the predeclared calibration observable, and every $u$-dependent row is recomputed together. Table~\ref{tab:bl-neutral-numerical-benchmarks} therefore records one frozen branch rather than a collection of independently adjusted sectoral values. Absolute quantities whose normalization requires additional input remain marked separately.

\begingroup
\scriptsize
\setlength{\tabcolsep}{3pt}
\renewcommand{\arraystretch}{1.18}
\begin{longtable}{p{0.17\textwidth}p{0.24\textwidth}p{0.34\textwidth}p{0.15\textwidth}}
\caption{Frozen-branch numerical ledger. The same $u$ is used in every $u$-dependent row. Gaussian pulls are shown only when the one-dimensional uncertainty used in the comparison is displayed; otherwise a central-value or fit diagnostic is reported.}
\label{tab:bl-neutral-numerical-benchmarks}\label{tab:bl-low-denominator-cp-tower}\label{tab:bl-ckm-numerical-benchmarks}\label{tab:bl-ckm-unitarity-triangle}\label{tab:bl-charged-lepton-balance-diagnostic}\label{tab:bl-pmns-lie-benchmarks}\\
\toprule
Quantity & Frozen branch & Reference & Comparison / status\\
\midrule
\endfirsthead
\toprule
Quantity & Frozen branch & Reference & Comparison / status\\
\midrule
\endhead
$m_W/m_Z$ & $0.881372$ & PDG 2026: $0.88136\pm0.00015$ & tree-level matching: $+0.08\sigma$\\
$m_h/m_Z$ & $1.373422$ & PDG 2026: $1.37222\pm0.00121$ & tree-level matching: $+0.99\sigma$\\
$\mathcal R_\nu$ & $0.0296554$ & NuFIT 6.1: $0.029558$ & \shortstack[l]{central offset $+9.74\times10^{-5}$\\fit diagnostic}\\
CKM angles & \shortstack[l]{$s_{12}=0.2250096$\\$s_{23}=0.0420361$\\$s_{13}=0.0037353$} & \shortstack[l]{PDG 2026: $s_{12}=0.22517$\\$s_{23}=0.04189$\\$s_{13}=0.003763$} & \shortstack[l]{central offsets\\$-1.60\times10^{-4}$\\$+1.46\times10^{-4}$\\$-2.77\times10^{-5}$}\\
CKM CP & \shortstack[l]{$J_q=3.1262\times10^{-5}$\\$\delta_{\rm CKM}=65.360^\circ$} & \shortstack[l]{PDG 2026\\$J_q=(3.16^{+0.13}_{-0.11})\times10^{-5}$\\$\delta_{\rm CKM}=1.154\pm0.025$ rad} & \shortstack[l]{$-0.31\sigma$\\$-0.53\sigma$}\\
CKM triangle & \shortstack[l]{$\bar\rho=0.16074$\\$\bar\eta=0.34986$\\$\sin2\beta=0.71030$} & \shortstack[l]{PDG 2026\\$\bar\rho=0.1576$\\$\bar\eta=0.3556$\\$\sin2\beta=0.710$} & \shortstack[l]{central offsets\\$+0.00314$\\$-0.00574$\\$+0.00030$}\\
PMNS angles & \shortstack[l]{$\sin^2\theta_{12}=0.30890$\\$\sin^2\theta_{13}=0.02241$\\$\sin^2\theta_{23}=0.47033$} & \shortstack[l]{PDG 2026\\$\sin^2\theta_{12}=0.307\pm0.012$\\$\sin^2\theta_{13}=0.0217\pm0.0007$\\$\sin^2\theta_{23}=0.533\pm0.018$} & \shortstack[l]{$+0.16\sigma$\\$+1.01\sigma$\\$\Delta_{23}=-0.06267$; octant diagnostic}\\
PMNS $\delta_{\rm PMNS}$ & $207.13^\circ$ & PDG 2026: $1.21^{+0.19}_{-0.22}\pi$ & \shortstack[l]{central displacement $-10.7^\circ$\\likelihood diagnostic}\\
\midrule
CP tower $n=1,7$ & $180.000^\circ,\ 65.360^\circ$ & primitive commensurate scan & \shortstack[l]{$n=1$ nonsimple\\$n=7$ selected}\\
CP tower $n\ge13$ & \shortstack[l]{$n=13,19$: $96.613^\circ,39.521^\circ$\\$n=31,37$: $126.310^\circ,28.290^\circ$\\$n=43,49$: $134.465^\circ,49.279^\circ$} & higher simple classes & arithmetic diagnostic\\
charged shape & \shortstack[l]{$a_\ell=0.70710912$, $Q_\ell=0.66666446$\\$\phi_\ell=0.22222476$} & balance and clock closures & $u$-independent diagnostic\\
charged $m_\tau$ & \shortstack[l]{$m_\tau^{\rm bal}=1776.9690\,{\rm MeV}$\\$m_\tau^{\rm clk}=1776.9665\,{\rm MeV}$} & separation $2.5\,{\rm keV}$ & inherited closures\\
PMNS $J_{\rm PMNS}$ & $-0.01539$ & $|J|/J_{\max}=0.456$ & dependent reading\\
PMNS absolute $\beta$ masses & \shortstack[l]{$1.016\le m_{\beta\beta}\le4.250\,{\rm meV}$\\$m_\beta=8.97\,{\rm meV}$} & unresolved Majorana phases & \shortstack[l]{phase envelope\\$3.89\,{\rm meV}$ point subbranch}\\
\bottomrule
\end{longtable}
\endgroup

\subsubsection{Common-$u$ central-family and CKM readings}
\label{subsec:central-family-seed-ckm-benchmarks}

The central cell is the one-dimensional curve \eqref{eq:bl-common-u-curve}. For the frozen reference value $u=1/(64\pi^2)$, \eqref{eq:bl-central-shadow-seed} gives
\begin{equation} \lambda_{\rm cen}=0.2250096 . \label{eq:bl-numerical-central-lambda} \end{equation}
In the one-calibration reading, a predeclared scalar observable instead fixes $u$ uniquely by Proposition~\ref{prop:bl-one-calibration-identifiability}; no second value of $u$ is introduced in the CKM or neutral sectors.\\
~\\
The CKM hierarchy is then assembled from independent structural layers. The valuations \eqref{eq:bl-minimal-schur-torsor-valuations} and Lemma~\ref{lem:bl-rees-picard-norm-reduction} fix the tangent hierarchy, $\rho_{\rm RP}=1$ fixes the primitive norm ratio, and \eqref{eq:bl-hodge-eisenstein-packet} supplies $q_H/p_H=4$ before Proposition~\ref{prop:bl-hodge-eisenstein-unitary-completion} gives the unitary magnitude representative. The commensurate $n=7$ class is preserved through Proposition~\ref{prop:pgl-primitive-commensurate-torsor-spectrum} and \eqref{eq:bl-ckm-phase-closure}; Proposition~\ref{prop:bl-primitive-cubic-phase-factorization} isolates the remaining primitive coefficient phase. For \eqref{eq:bl-numerical-central-lambda}, $J_q/\lambda_{\rm cen}^6=0.240882$, while the entropic susceptibility remains CP-even.\\
~\\
The low-denominator commensurate scan in Table~\ref{tab:bl-low-denominator-cp-tower} is exhaustive for $n\leq49$; $n=1$ is the nonsimple endpoint and $n=7$ is the least-denominator simple branch. The same matrix completion gives $A=0.830271$, $R_b=0.385015$, and $\sin2\beta=0.71030$, whereas direct identification of the Rees path coefficients would give $0.73012$ at the same phase. The difference isolates the magnitude-completion step. The branch value $|V_{cb}|=0.0420361$ may also be compared with the inclusive and exclusive determinations summarized in \cite{ParticleDataGroup2025RPP}; the angle comparison of \cite{HFLAVSummer2025} is retained as a secondary flavor diagnostic. Neither comparison enters the determination of $u$. Proposition~\ref{prop:bl-schur-no-phantom-reduction} excludes an observable-invisible lower-valuation $1\leftrightarrow3$ repair.

\subsubsection{Charged-lepton balance benchmark}
\label{subsec:charged-lepton-balance-diagnostic}

The charged-lepton diagnostic evaluates \eqref{eq:bl-charged-lepton-balance-residual} on $L_L\leftrightarrow e_L^c$, independently of the quark relative-basis problem and neutral Majorana denominator. Using the PDG masses \cite{ParticleDataGroup2025RPP} with torsor ordering $(\tau,e,\mu)$ gives $\Sigma_\ell^{\rm obs}=(0.66\pm1.52)\times10^{-5}$ and $\phi_\ell-2/9=(2.54\pm6.26)\times10^{-6}$. These probe \eqref{eq:bl-charged-lepton-balance-condition} and \eqref{eq:bl-charged-lepton-clock-response}, while the response-order separation follows from the scalar tangent \eqref{eq:bl-neutral-cell-corrected-entries} and the noncentral Schur response \eqref{eq:bl-charged-lepton-vthree-schur-response}.\\
~\\
The zero-correction balance $\Sigma_\ell=0$ and the clock-lift reference $\phi_\ell=2/9$, using the same inputs $m_e,m_\mu$, give $m_\tau^{\rm bal}=1776.9690\,{\rm MeV}$ and $m_\tau^{\rm clk}=1776.9665\,{\rm MeV}$, separated by $2.5\,{\rm keV}$. Simultaneous exact closure is not imposed. A proposed super tau-charm threshold study reaches projected $1\,{\rm keV}$ precision \cite{Fu2024Ditauonium}; the complementary residuals are $\Sigma_\ell(m_\tau^{\rm clk})=4.26\times10^{-7}$ and $\phi_\ell(m_\tau^{\rm bal})-2/9=-1.75\times10^{-7}$.

\subsubsection{Neutral determinant-scale and PMNS diagnostics}

The half-flux scale completion \eqref{eq:bl-pfaffian-majorana-scale} gives $M_R=6.31\times10^{14}\,{\rm GeV}$. In the unit-determinant pure type-I subbranch of \eqref{eq:bl-typeI-neutrino-determinant-scale}, $|\det D_\nu|^{1/3}=v_F/\sqrt2=174.10\,{\rm GeV}$ gives $(m_1m_2m_3)^{1/3}=0.0480\,{\rm eV}$. Combining this prefactor with the normal-ordering splittings of \cite{NuFIT2025v61} gives $m_1\simeq0.0409\,{\rm eV}$ and $\sum_i m_i\simeq0.1474\,{\rm eV}$.\\
~\\
For flat $\Lambda$CDM, the DESI+CMB analysis \cite{DESI2025Cosmology} reports the $95\%$ limits $\sum m_\nu<0.072\,{\rm eV}$ for the prior $\sum m_\nu>0$ and $\sum m_\nu<0.113\,{\rm eV}$ for the hierarchy-motivated prior $\sum m_\nu>0.059\,{\rm eV}$. The unit-determinant subbranch lies above both limits. With the same oscillation splittings, the two limits translate through \eqref{eq:bl-typeI-neutrino-determinant-scale} to $|\det D_\nu|/m_D^3\lesssim0.22$ and $\lesssim0.64$, respectively, where $m_D=v_F/\sqrt2$.\\
~\\
The normalized branch \eqref{eq:bl-determinant-suppressed-neutral-dirac-shape} gives $r_\nu=0.05351$. With $m_D=174.10\,{\rm GeV}$ and the half-flux scale, the Takagi factorization \eqref{eq:bl-determinant-suppressed-neutral-takagi-factorization} gives
\begin{equation}m_1=0.720\,{\rm meV},\qquad m_2=8.72\,{\rm meV},\qquad m_3=50.46\,{\rm meV},\qquad \sum_i m_i=0.05990\,{\rm eV}.\label{eq:bl-determinant-suppressed-neutrino-masses}\end{equation}
The corresponding squared differences are $7.55\times10^{-5}\,{\rm eV}^2$ and $2.55\times10^{-3}\,{\rm eV}^2$. Independently of their common scale, \eqref{eq:bl-scale-free-neutral-oscillation-shape} gives $\mathcal R_\nu=0.0296554$, while the zero-correction limit $u=0$, $S_{\rm prim}^{-1}=0$ gives $0.0335187$; the NuFIT 6.1 normal-ordering central ratio is $0.029558$ \cite{NuFIT2025v61}. The adjoint threshold identity \eqref{eq:bl-common-adjoint-threshold-identity} independently fixes $M_H^{\rm RG}$ in the same scale region, while \eqref{eq:bl-pfaffian-majorana-scale} fixes $M_R$. The normalized Dirac orientation remains completed neutral data; the pure type-I determinant identity is \eqref{eq:bl-typeI-neutrino-determinant-scale}. The seesaw comparison is represented by \cite{Senjanovic2005SeesawGUT}.\\
~\\
For the relative basis, \eqref{eq:bl-minimal-neutral-lie-branch} is evaluated with the same frozen $u$ through \eqref{eq:bl-numerical-central-lambda}, the phase \eqref{eq:bl-minimal-ckm-phase}, and the common-lift rate $r_\vartheta=1$. The resulting PMNS reading is listed in Table~\ref{tab:bl-pmns-lie-benchmarks}; \cite{NuFIT2025v61} is retained as the dedicated oscillation-fit comparison.

\begin{proposition}[Lower-octant rigidity of the normalized neutral branch]
\label{prop:bl-global-lower-octant}
On the domain $0\leq u<10/49$, \eqref{eq:bl-common-u-curve} gives $2/9\leq\lambda_{\rm cen}<347/494$. With \eqref{eq:bl-pmns-rotation-convention}, \eqref{eq:bl-minimal-neutral-lie-branch}, and the fixed phase from \eqref{eq:bl-minimal-ckm-phase}, let $v:=R_{12}(\beta_\nu)R_{13}(\gamma_\nu)R_{23}(\alpha_\nu,\phi_\nu)e_3$, $A:=-v_1/\sqrt6+v_2/\sqrt3$, and $B:=v_3/\sqrt2$. Then $|U_{\tau3}|^2-|U_{\mu3}|^2=-4B\operatorname{Re}A$. Outward-rounded interval evaluation on 50 equal subintervals of the stated $\lambda_{\rm cen}$ range gives $|U_{\tau3}|^2-|U_{\mu3}|^2>3.4\times10^{-3}$, hence $\sin^2\theta_{23}<1/2$ throughout the branch.
\end{proposition}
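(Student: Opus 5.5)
The proof splits into three parts: the $\lambda_{\rm cen}$ range, the algebraic identity for the octant difference, and a certified interval evaluation.

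\textbf{Range of $\lambda_{\rm cen}$.} By Proposition~\ref{prop:bl-one-calibration-identifiability}, $\lambda_{\rm cen}(u)=4(20u+3)/(54-67u)$ is strictly increasing on $0\leq u<10/49$. So the range is obtained from the endpoints. At $u=0$ the value is $12/54=2/9$. At $u=10/49$ the value is
\[
\frac{4(200/49+3)}{54-670/49}=\frac{4\cdot347/49}{1976/49}=\frac{1388}{1976}=\frac{347}{494}.
\]
The right endpoint is excluded, which gives the half-open interval. On this interval the angles $\alpha_\nu,\gamma_\nu,\beta_\nu$ of \eqref{eq:bl-minimal-neutral-lie-branch} are explicit polynomials in $\lambda_{\rm cen}$, with $\Xi$ fixed by \eqref{eq:bl-neutral-three-channel-normalization}. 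The phase $\phi_\nu=4\pi/3-(2/15)\delta_{\rm CKM}^{\rm min}$ is a fixed constant by \eqref{eq:bl-minimal-ckm-phase} with $r_\vartheta=1$. Hence the whole branch is a one-variable analytic family in $\lambda_{\rm cen}$.

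\textbf{The identity.} Write $U_{\rm PMNS}$ as $U_{\rm TBM}$ times the rotation product; in the charged-vertex eigenframe this is the relative basis. Its third column is then $U_{\rm TBM}v$. Reading off rows two and three of \eqref{eq:bl-pmns-rotation-convention}:
\[
U_{\mu3}=-\frac{v_1}{\sqrt6}+\frac{v_2}{\sqrt3}+\frac{v_3}{\sqrt2}=A+B,\qquad U_{\tau3}=A-B.
\]
The real rotations $R_{12}$ and $R_{13}$ act on $R_{23}e_3=(0,\sin\alpha\,e^{-i\phi},\cos\alpha)^{T}$. This keeps $v_3$ real, so $B$ is real while $A$ may be complex. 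Therefore
\[
|A-B|^2-|A+B|^2=-4B\operatorname{Re}A .
\]
This step is routine algebra.

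\textbf{Interval evaluation.} Split $[2/9,347/494]$ into 50 equal subintervals. On each one, evaluate $-4B\operatorname{Re}A$ in outward-rounded interval arithmetic, using enclosures of $\sin$ and $\cos$ and of the constant phase. Then check that the lower bound exceeds $3.4\times10^{-3}$.

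It is convenient to expand $v$ explicitly. For example, $v_3=-\sin\gamma\sin\alpha\cos\phi\cdots+\cos\gamma\cos\alpha$, up to sign conventions, and similar formulas hold for $v_1,v_2$. This keeps interval dependency wrapping small.

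I expect the main obstacle to be this certified evaluation. The margin may be tight near the right endpoint, where $\lambda_{\rm cen}\approx0.70$ and the angles are no longer small. If 50 naive subintervals fail to certify there, I would first try a mean-value (centered) form: bound the derivative in $\lambda_{\rm cen}$ on each piece. The statement fixes 50 equal pieces, so the enclosure method itself must be sharp enough.

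Since $-4B\operatorname{Re}A>0$, we get $|U_{\tau3}|^2>|U_{\mu3}|^2$. In the standard parametrization $\sin^2\theta_{23}=|U_{\mu3}|^2/(1-|U_{e3}|^2)$, and unitarity of the third column gives $|U_{\mu3}|^2<(1-|U_{e3}|^2)/2$. Hence $\sin^2\theta_{23}<1/2$ on the whole branch.
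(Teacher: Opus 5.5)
Your proposal follows the paper's route: you get the $\lambda_{\rm cen}$ range from the endpoints of the monotone curve, derive the identity from rows two and three of $U_{\rm TBM}$ with $v_3$ real, run the outward-rounded evaluation on 50 equal pieces, and conclude the octant from unitarity of the third column. Two corrections. First, $v_3=\cos\gamma_\nu\cos\alpha_\nu$ exactly, because $R_{12}$ fixes the third coordinate and $R_{13}$ mixes it only with the first coordinate of $R_{23}e_3$, which vanishes; there is no $\phi_\nu$-dependent term, so your sample formula would corrupt the enclosure. Second, the tight margin is not at the right endpoint, where the difference is about $0.22$, but in the interior near $\lambda_{\rm cen}\approx0.43$; there the $\sin\gamma_\nu\cos\alpha_\nu$ and $\sin\alpha_\nu\cos\phi_\nu$ contributions to $\operatorname{Re}A$ nearly cancel, and the difference drops to about $1.6\times10^{-2}$.
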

The reference point has $\sin^2\theta_{23}=0.47033$; an upper-octant determination would therefore falsify this fixed neutral realization rather than select another calibration. The joint T2K+NOvA likelihood \cite{T2KNOvA2025Joint} is retained as the current octant comparison.\\
~\\
The oscillation branch leaves two Majorana column phases unresolved. For the masses in \eqref{eq:bl-determinant-suppressed-neutrino-masses} and the fixed PMNS moduli, the exact polygon bound gives $1.016\,{\rm meV}\le m_{\beta\beta}\le4.250\,{\rm meV}$, while the phase-independent beta-decay mass is $m_\beta=8.97\,{\rm meV}$. The no-extra-Takagi-phase choice gives the interior point $m_{\beta\beta}=3.89\,{\rm meV}$.

\subsection{Output status and falsification}
\label{subsec:status-falsification}

Table~\ref{tab:bl-independent-quantitative-relations} records the decisive chain: projector and $R_1$, the primitive second-response fingerprint and Rees polarization, source-Hodge feasibility, the curvature-level $2{:}3{:}5$ gate, and one canonical Eisenstein neutral seed with unit marked descent. After these gates are frozen, Proposition~\ref{prop:bl-one-calibration-identifiability} permits one scalar calibration of $u$ and \eqref{eq:bl-wz-neutrino-calibration-free} gives a direct cross-sector no-refit test. The numerical ledger is Table~\ref{tab:bl-neutral-numerical-benchmarks}; absolute normalization and equivariant Riesz-Callias attachment remain separate.

\section{Discussion and Conclusions}
\label{sec:conclusions}
\label{sec:discussion}

\subsection{Results and reconstruction mechanism}

The mechanism can be read from the geometric defect outward. A codimension-three timelike defect supplies an oriented two-sphere link, the primitive positive line, and the two non-scalar source grades $V_1[1]\oplus V_2[2]$. Atomicity fixes the primitive line degree, while Proposition~\ref{prop:pgl-source-a2-identification} identifies the active normal module with the principal $V_1\oplus V_2$ decomposition of compact $A_2$. The Coxeter plane then supplies the equianharmonic lattice used by the finite arithmetic construction.\\
~\\
Protected reconstruction asks for the least finite carrier that represents both marked grades faithfully. Under the intersection-closure and faithful-central-marking assumptions, the Borel-Weil cutoff gives $E_3\oplus E_2\simeq N_\Gamma^{\mathbb C}\oplus S(N_\Gamma)$, while Theorem~\ref{thm:pgl-equianharmonic-theta-realization} supplies the independent theta-rank check. Determinant and Clifford closure give the $S(U(3)\times U(2))$ Levi form, its global $\mathbb Z_6$ quotient, the even exterior package, and $k_Y=5/3$. The quotient $[E_\omega/\mu_6]$ and Proposition~\ref{prop:pgl-s632-mw-realization} supply the three-point family torsor; Corollary~\ref{cor:pgl-norm-seven-theta-weil-reading} selects the first simple norm-seven commensurate class.\\
~\\
The realization layer is operator-first. The isolated projector and $R_1$ are tested before the primitive second-response fingerprint and Rees polarization; source-Hodge feasibility and the curvature residual \eqref{eq:alena-common-235-curvature-gate} then test whether the same microscopic branch carries the finite arithmetic. The regular integral family cycle and norm-seven determinant phase are algebraically compatible as the projective and central parts of one lift; their physical marked identification remains a first-response test. Proposition~\ref{prop:bl-affine-neutral-relay} reduces the neutral $12/13$ descent to one marked Eisenstein seed and central updates.\\
~\\
Once these gates pass, one universal scalar $u$ is fixed once and all remaining local dimensionless outputs are evaluated without sectoral retuning. The relation \eqref{eq:bl-wz-neutrino-calibration-free} makes part of this content calibration-free, while Proposition~\ref{prop:bl-global-lower-octant} gives a branch-level atmospheric falsifier. Absolute action normalization and the equivariant Riesz-Callias identification remain separate. Thus the mechanism runs from Lorentzian defect geometry to a protected graded source, the minimal $3+2$ carrier, finite response gates, one marked family/neutral arithmetic, one frozen $u$, and downstream CKM, neutral, and PMNS tests; failure of an upstream residual terminates the branch.
\subsection{One-calibration quantitative test}

The numerical test is summarized by the single frozen-$u$ ledger in Table~\ref{tab:bl-ckm-numerical-benchmarks}. If $u$ is fixed experimentally, the designated calibration observable is removed from the prediction scorecard and every remaining $u$-dependent row is evaluated without retuning. The displayed reference uses the fixed benchmark value $u=1/(64\pi^2)$ and compares current dimensionless outputs with the 2026 PDG values; the dedicated oscillation-fit comparison remains \cite{NuFIT2025v61}. Reactor and solar discrimination is represented by \cite{DayaBay2023Theta13} and \cite{JUNO2026FirstOsc}, with prospective octant and CP sensitivity represented by \cite{HyperKamiokande2026Sensitivity} and \cite{DUNE2022Sensitivity}.\\
~\\
On the determinant-suppressed mass branch the unresolved Majorana phases give $1.016\,{\rm meV}\le m_{\beta\beta}\le4.250\,{\rm meV}$ and $m_\beta=8.97\,{\rm meV}$; $3.89\,{\rm meV}$ is the no-extra-Takagi-phase point. The common adjoint threshold identity \eqref{eq:bl-common-adjoint-threshold-identity} and the neutral half-flux scale remain absolute completed readings outside the one-$u$ dimensionless scorecard.

\subsection{Relation to geometric and spectral constructions}

The curvature reconstruction used here is closest in spirit to the algebraic program initiated by Rainich \cite{Rainich1925Electrodynamics}, while the Codazzi input is compared with the structural analysis of \cite{ReberTerek2024}. The trace-adjusted Codazzi coefficient adds a graded boundary source and finite reconstruction step to that geometric setting.\\
~\\
A complementary geometric comparison is supplied by the first-order separation of \cite{Plebanski1977} and the projective-link twistor setting of \cite{MasonWoodhouse1996}. These comparisons concern the realization layer; the finite internal package is selected by the boundary grading and protected reconstruction.\\
~\\
In almost-commutative geometry, the internal algebra is introduced as spectral data in \cite{ConnesLott1991ParticleModelsNCG}, with dynamics organized by the spectral action of \cite{ChamseddineConnes1997SpectralAction}. Dynamical principal-bundle models provide a geometric comparison \cite{Helein2025DynamicalPrincipalBundle}. Here the finite algebra and its integral grading are outputs of the primitive boundary reconstruction.\\
~\\
Modular-invariant flavor models provide a closer comparison for the family sector. In their minimal formulation, Yukawa couplings are modular forms of a modulus $\tau$, with finite modular groups constraining the flavor representations \cite{Feruglio2019ModularNeutrino}. Generalized CP can restrict CP violation to the modulus sector \cite{NovichkovPenedoPetcovTitov2019ModularCP}. The equianharmonic fixed point $\tau=\omega$ carries a residual $\mathbb Z_3^{ST}$ symmetry; stabilization near this point can generate fermion hierarchies \cite{NovichkovPenedoPetcov2022ModulusStabilisation}, and modular $A_4$ models provide explicit quark realizations in its neighbourhood \cite{PetcovTanimoto2023TauOmega}. The present reconstruction shares the equianharmonic CM point and its order-three action, while $E_\omega$, the rank-three torsor, and the finite carrier are obtained from the defect boundary data rather than from a modulus-dependent Yukawa sector. At the orbifold level the comparison remains structural: the modular curve has signature $(2,3,\infty)$, whereas $[E_\omega/\mu_6]$ is tubular of type $(6,3,2)$.

\subsection{Further directions}

The remaining local programme is now narrow: compute an actual $R_1$, verify its central-normalized regular $C_3$ projective class, test \eqref{eq:bl-primitive-second-response-fingerprint} before full Rees polarization, evaluate \eqref{eq:alena-common-235-curvature-gate} and the source-Hodge feasibility cone, and extract one physical neutral seed in the fixed Eisenstein class of Proposition~\ref{prop:bl-affine-neutral-relay}. The optional zero-input programme is separate: rank-one Clifford scalarization must derive the factor $1/d_F$, while a microscopic action theorem must fix $\kappa_\Omega=1$.\\
~\\
The global target is the regular equivariant Riesz-Callias class rather than index three alone. A common-base gapped comparison of the determinant line, Riesz bundle and Callias kernel is required; boundary and APS tools are represented by \cite{Grubb1996FunctionalCalculus}, \cite{AtiyahPatodiSinger1975I}, \cite{BoossBavnbekWojciechowski1993DiracBoundary}, and \cite{Shi2017APSCallias}, while line operators are compared with \cite{Tong2017LineOperators}. Einstein-Weyl and minitwistor extensions \cite{JonesTod1985Minitwistor}, \cite{CalderbankPedersen2000EinsteinWeyl} remain optional. The filtered-functor problem is still open: \cite{Ziegler2015FilteredFiberFunctors}, \cite{Beilinson1978CoherentSheaves}, \cite{ChenKrause2009WeightedProjectiveLines}, and \cite{Schiffmann2004QuantumLoopAlgebras} provide comparison frameworks for relating tubular degrees to the retained Rees brackets. Experimental tests remain downstream of these gates and of the single frozen calibration.

%%%%%%%%%%%%%%%%%%%%%%%%%%%%%%%%%%%%%%%%%%%%%%%%%%
\section*{Statements}
~\\
Author has no relevant financial or non-financial interests to disclose.\\
~\\
Author did not receive support from any organization for the submitted work.\\
~\\
All data, symbolic computations, numerical evaluations, and plotting routines used in this article are contained in the accompanying supplementary materials, where applicable.\\
~\\
During the preparation of this manuscript, the author used generative AI tools for language editing, formatting, consistency checks, and organization of selected passages. These tools were not used to generate research data, perform the scientific analysis, or draw the conclusions. All mathematical statements, citations, and scientific claims were reviewed and verified by the author, who takes full responsibility for the final manuscript.\\

\bibliography{ATCodazzi}

\end{document}